\documentclass[11pt,a4paper,aps]{article}
\usepackage{etex}
\usepackage{inputenc}
\usepackage[T1]{fontenc}
\usepackage{lmodern}					% harmonie des polices
\usepackage{textcomp}				% table de symboles
\usepackage{verbatim}			% pour faire des commentaires par block avec begin{comment}...end{comment}
\usepackage[english]{babel}
\usepackage[toc,page]{appendix} 		% pour faire des annexes
\usepackage{cite}
\bibliographystyle{naturemag}
\usepackage{amsmath,					% packages mathmatiques
			amsthm,
			amsfonts,
			amssymb, 
			mathtools}
\usepackage{enumitem} 
\usepackage{todonotes}

\usepackage[all]{xy}
\usepackage{tikz-cd} 
\usetikzlibrary{tikzmark}
\usetikzlibrary{arrows}

\usepackage{graphicx}				% pour inclure des figures (.pdf, .jpg)
\usepackage{tabularx}				% pour inclure des tableaux (careful)

\usepackage{pict2e}

\makeatletter
\DeclareRobustCommand{\loplus}{\mathbin{\mathpalette\dog@lsemi{+}}}
\DeclareRobustCommand{\lotimes}{\mathbin{\mathpalette\dog@lsemi{\times}}}
\DeclareRobustCommand{\roplus}{\mathbin{\mathpalette\dog@rsemi{+}}}
\DeclareRobustCommand{\rotimes}{\mathbin{\mathpalette\dog@rsemi{\times}}}

\newcommand{\dog@rsemi}[2]{\dog@semi{#1}{#2}{-90,90}}
\newcommand{\dog@lsemi}[2]{\dog@semi{#1}{#2}{270,90}}
\newcommand{\dog@semi}[3]{%
  \begingroup
  \sbox\z@{$\m@th#1#2$}%
  \setlength{\unitlength}{\dimexpr\ht\z@+\dp\z@\relax}%
  \makebox[\wd\z@]{\raisebox{-\dp\z@}{%
    \begin{picture}(1,1)
    \linethickness{\variable@rule{#1}}
    \roundcap
    \put(0.4,0.425){\makebox(0,0){\raisebox{\dp\z@}{$\m@th#1#2$}}}
    \put(0.4,0.4){\arc[#3]{0.5}}
    \end{picture}%
  }}%
  \endgroup
}
\newcommand{\variable@rule}[1]{%
  \fontdimen8  
  \ifx#1\displaystyle\textfont3\else
    \ifx#1\textstyle\textfont3\else
      \ifx#1\scriptstyle\scriptfont3\else
        \scriptscriptfont3\relax
  \fi\fi\fi
}
\makeatother

\usepackage{authblk}   % pour les auteurs

\definecolor{pur}{RGB}{255,102,102}

\usepackage[hmargin=2.2cm,vmargin=2.5cm]{geometry}		% adapter les dimensions de la page
\setlength{\textwidth}{455pt}
\setlength{\textheight}{686pt} 
\setlength{\parskip}{1ex plus0.2ex minus0.2ex}
\setlength{\parindent}{0.6cm}
\numberwithin{equation}{section}

\usepackage[colorlinks]{hyperref}			% liens dans le .pdf

\newcommand{\bigslant}[2]{{\left.\raisebox{.2em}{$#1$}\middle/\raisebox{-.2em}{$#2$}\right.}}

\newcommand{\dd}{\mathrm{d}}

%\newcommand{\S}{\Sigma}

%\theoremheaderfont{\upshape\bfseries}                                          % options titre
%\theorembodyfont{\itshape}                                                     % options texte
\newtheorem{definition}{Definition}[section]
\newtheorem*{definition*}{Definition}
\newtheorem{theoreme}[definition]{Theorem}
\newtheorem*{theoreme*}{Theorem}
\newtheorem{proposition}[definition]{Proposition}

\newtheorem{lemme}[definition]{Lemma}
\newtheorem*{lemme*}{Lemma}

\newtheorem{corollaire}[definition]{Corollary}
\newtheorem*{corollaire*}{Corollary}

\theoremstyle{remark}
\newtheorem*{remarque}{Remark}
\newtheorem*{remarques}{Remarks}
\newtheorem{example}{Example}
\newtheorem*{prooftheo*}{Proof of the Theorem}
%\newframedtheorem{Theoreme}[theoreme]{Th?or?me}
%\newframedtheorem{Axiome}[axiome]{Axiome}
%\newframedtheorem{Propriete}[propriete]{Propri?t?}
%\newframedtheorem{Corollaire}[corollaire]{Corollaire}

%\theorembodyfont{\upshape}                                                         % caract?res droits dans le corps du texte
\newtheorem*{convention}{Convention}
%\newtheorem{note}{Note}
%\newtheorem{representation}{Repr?sentation graphique}

%\theoremstyle{nonumberplain}                                                        % pas de num?rotation
%%\theoremstyle{nonumberbreak}                                                 % pas de numerotation + saut de ligne
%\newtheorem{vocabulaire}{Vocabulaire.}
%\newtheorem{exemple}{Exemple}
%%\newframedtheorem{Exemple}[exemple]{Exemple}
%%\newframedtheorem{Resume}{R?sum?.}
%\theoremheaderfont{\itshape}                                                        % titre en italique
%%\newtheorem{remarque}{Remarks.} I DO NOT UNDERSTAND WHY THIS CAUSES PROBLEMS

%\newtheorem{remarquebis}{Remark (bis).}
%\newtheorem{remarqueter}{Remark (ter).}
%\newtheorem{preuve_nue}{D?monstration.}
%\theoremsymbol{\ensuremath\square}                                         % symbole (carr?) en fin de preuve
%\newtheorem{demonstration}{Proof.}
%%\newtheorem{preuve}{Demonstration.}
%%\newtheorem{proof}{Proof.}

\title{Tensor hierarchies and Leibniz algebras}

\author[]{Sylvain Lavau\thanks{lavau@math.univ-lyon1.fr}}
\affil[]{\small \emph{IMJ-PRG, Universit\'e Paris Diderot, Paris, France.}}
\date{}

\begin{document}

\maketitle

\abstract{
Tensor hierarchies are algebraic objects that emerge in gauging procedures in supergravity models, and that present a very deep and intricate relationship with Leibniz (or Loday) algebras. In this paper, we show that one can canonically associate a tensor hierarchy to any Loday algebra. By formalizing the construction that is performed in supergravity, we build this tensor hierarchy explicitly. We show that this tensor hierarchy can be canonically equipped with a differential graded Lie algebra structure that coincides with the one that is found in supergravity theories. %was defined independently by Jakob Palmkvist.

\bigskip
\noindent
\emph{Key words: Tensor hierarchies, embedding tensor, Leibniz algebras}

\bigskip
\noindent AMS 2010 Classification: 17B70, 58A50, 83E50}

\tableofcontents

\section{Introduction}

Tensor hierarchies form a class of objects found in supergravity theories, which emerge as compactifications of superstring theories \cite{Henning2005, Trigiante, Henning08ter}. These theoretical models have the particularity of being \emph{ungauged}, i.e. the 1-form fields are not even minimally coupled to any other fields. In the late nineties, a vast amount of new compactifications techniques was discovered, and this led to \emph{gauged} supergravities. One passes from ungauged to gauged models by promoting a suitable Lie subalgebra $\mathfrak{h}$ of the Lie algebra $\mathfrak{g}$ of global symmetries to being a \emph{gauge algebra}. %generate the local symmetries of the theories.
The choice of such a Lie subalgebra is made through a $\mathfrak{h}$-covariant linear map $\Theta$, called the \emph{embedding tensor}, that relates the space $V$ in 1-form fields take values, to the Lie algebra of symmetries $\mathfrak{g}$. The existence of such a map $\Theta:V\to\mathfrak{g}$ is conditioned by a linear and a quadratic constraint. If they are satisfied, this map uniquely defines the Lie subalgebra $\mathfrak{h}$.

The main difference with classical gauge field theories is that the the 1-form fields do not take values in this Lie subalgebra, but in some $\mathfrak{g}$-module $V$. This implies that the transformations of the 2-form field strengths are not covariant in the usual sense. To solve this problem, physicists add a set of 2-form fields coupled to the 2-form field strengths to compensate for the lack of covariance. However, this lack of covariance is now transferred to the 3-form field strengths associated to the 2-form fields. This requires to add a set of 3-form fields to compensate this lack of covariance, etc. Thus, one can  build a hierarchy of $p$-form fields, for $p\geq2$, coupled to the $p$-form field strengths associated to the $p-1$-forms, to eventually ensure covariance of the Lagrangian. This \emph{tensor hierarchy} is a priori infinite but in supergravity theories, it is bounded by the dimension of space-time. The original presentation of the construction of the tensor hierarchy has been given in \cite{Henning08, Henning08bis}.

Recent developments towards the direction of giving a mathematical framework for this construction has been attempted \cite{palmkvist1, palmkvist, monpapier, kotovstrobleibniz, Palmer, Saemann}. In particular, the clearest construction of the tensor hierarchy up-to-date was performed in \cite{palmkvist1} by Jakob Palmkvist: his original `top-down' approach involves Borcherds algebras, which are generalizations of Kac-Moody algebras. Thus, he can equip the tensor hierarchy with a differential graded Lie algebra structure, that he calls a \emph{tensor hierarchy algebra}. He then applies his abstract construction to the general framework developed in supergravity models in \cite{palmkvist}.  
 %This last paper was a fundamental source of inspiration for our work.
 Unfortunately, the relationship between this powerful construction and the step-by-step construction made by the physicist is not apparent.
 We propose in the present paper to provide a  `bottom-up' construction of the tensor hierarchy algebra, that sticks to the construction performed by physicists \cite{Henning08, Henning08bis}, and that moreover coincides with the structure defined by Jakob Palmkvist in \cite{palmkvist}. This proves that both the formal `top-down' and the computational `bottom-up' approaches to the tensor hierarchy algebra match what the construction which is done by physicists in gauging procedures in supergravity.

In the present paper, the construction of the tensor hierarchy algebra is based on the observation that gauging procedures in supergravity theories involve Leibniz algebras instead of mere Lie algebras.  A \emph{Leibniz} (or \emph{Loday}) \emph{algebra} is a generalization of a Lie algebra, where the product  is not necessarily skew-symmetric anymore. The Jacobi identity is modified in consequence: the corresponding identity -- the \emph{Leibniz identity} -- epitomizes the derivation property of the product on itself. This new notion was originally  defined by Jean-Louis Loday in the early nineties, see for example \cite{Loday}.  %, and sometimes Leibniz algebras are called \emph{Loday algebras} to pay tribute to his work on this topic. 
The inner product of any Leibniz algebra can be split into its symmetric and its skew-symmetric part. The skew-symmetric bracket usually does not satisfy the Jacobi identity, emphasizing that Leibniz algebras are non-trivial generalizations of Lie algebras. %However, even if the skew-symmetric bracket is not a Lie bracket, one may find a way of lifting it to a \emph{$L_\infty$-algebra structure} on some adapted graded vector space. There already exist  such lifts in the mathematical folklore (see Example \ref{legame} for example), but they are not convincing, in particular regarding the case where the Leibniz algebra is a mere Lie algebra. We propose in this paper to define a natural notion for such objects, that we call \emph{Lie $n$-extensions} of Leibniz algebras, and to show that any Leibniz algebra induces such extensions. The proof involves novel ideas coming from theoretical physics, and most notably it relies on establishing a strong link between Leibniz algebras and tensor hierarchies.

The original motivation of this paper was actually to provide a systematic construction of the tensor hierarchies from a Leibniz algebras perspective. The link between embedding tensors and Leibniz algebras was known for a few years and has been investigated in \cite{kotovstrobleibniz}.  
% We construct a tensor hierarchy by following rather natural hypothesis, that physicists are often assuming. 
The construction of a tensor hierarchy form the Leibniz algebra perspective crucially relies on the observation that, given a Lie algebra $\mathfrak{g}$ and a $\mathfrak{g}$-module $V$, an embedding tensor $\Theta:V\to\mathfrak{g}$ -- as defined in supergravity models -- induces a Leibniz algebra structure on the $V$. Reciprocally, any Leibniz algebra $V$ gives rise to an embedding tensor taking values in the quotient of $V$ by its center. More precisely, we show that tensor hierarchies are actually built from the data of a Lie algebra $\mathfrak{g}$, a $\mathfrak{g}$-module $V$ that carries a Leibniz algebra structure, and an embedding tensor $\Theta:V\to\mathfrak{g}$ that is compatible with both structures.  We call such triples of objects \emph{Lie-Leibniz triples}.

Most  of the paper is dedicated to show that any Lie-Leibniz triple induces a unique -- up to equivalence -- tensor hierarchy algebra.
The process of constructing the tensor hierarchy algebra canonically associated to a Leibniz algebra $V$ goes as follows: given a Lie-Leibniz triple $\mathcal{V}=(\mathfrak{g},V,\Theta)$, we first define what is called a \emph{stem associated to $V$}, which can be seen as the skeleton of the tensor hierarchy induced by $\mathcal{V}$. We then define the notion of `robust' stem, and we show 1. that a Lie-Leibniz triple induces a robust stem, which is unique up to morphisms of stems, and 2. that the robust stems associated to $\mathcal{V}$ are in one-to-one correspondence with the tensor hierarchy algebras associated to the same triple. Then, we deduce that the tensor hierarchy algebra induced by $\mathcal{V}$ are equivalent up to dgLa morphisms. This can be summarized in the following diagram:

\begin{center}
\begin{tikzpicture}
\matrix(a)[matrix of math nodes, 
row sep=5em, column sep=3em, 
text height=1.5ex, text depth=0.25ex] 
{\text{\parbox{3cm}{\raggedleft Lie-Leibniz triple}}&\text{robust stem}& \text{tensor hierarchy}\\}; 
%\path[right hook->](a-2-2) edge node[below right]{$\rho$} (a-1-3); 
\path[->](a-1-1) edge node{} (a-1-2); 
\path[->](a-1-2) edge [bend left] node{} (a-1-3);
\path[->](a-1-3) edge [bend left] node{} (a-1-2);
\end{tikzpicture}
\end{center}

In the last few years, a renewal of interest in $L_\infty$-algebras has soared in the supergravity community, in relation with the formalism of tensor hierarchies \cite{Henning2018, Hohm, Cagnacci, Cederwall, Henning2019, monpapier}. These $L_\infty$-algebras would in some sense encode the field strength of the model. Until now, the only way one could determine the content of some given field strength would be to compute the Bianchi identity of lowest from-degree in which it appears. Hence, if there is a more convenient way of determining the content of the field strengths, without computing the Bianchi identities would be of some help for physicists. The tensor hierarchy algebra defined in Palmkvist's papers \cite{palmkvist, palmkvist1} and in the present paper are differential graded Lie algebras. Hence, by some mathematical subtleties, there may be some ways of defining the desired $L_\infty$-algebra from the tensor hierarchy algebra. This might be possible by using a result by Fiorenza and Manetti \cite{Fiorenza} (and found again later by Getzler \cite{getzler}) but this is still under investigation. That is why in this paper we will focus on the construction of the tensor hierarchy algebra, and we will not discuss how to induce a $L_\infty$-algebra from it.

 %The fact that both construction coincide, though very probable given the last developments made by Jakob Palmkvist and Martin Cederwall in \cite{Cederwall2}, is still conjectural. % that seems to coincide with the one given in \cite{palmkvist}. %, and we show how it can be equipped with the specific differential graded Lie algebra structure that is close to the definition given in \cite{palmkvist}.

%One can show however that this bracket could be extended to an \emph{$L_\infty$-algebra structure} on some graded vector space that has to be adequately chosen. 
 %The problem of finding such \emph{$L_\infty$-extensions} of Leibniz algebras is the object of the

The first part of this paper presents the mathematical tools that are used in the second part. Section \ref{tensorsection} presents the embedding tensor as defined by the physicists, whereas Section \ref{leibnizalgebras} provides the basic notions on Leibniz algebras, and their relationship with the embedding tensor, which leads to the crucial notion of Lie-Leibniz triple in Definition \ref{def:lieleibniz}.
 %Then we discuss the importance of the symmetric part of the Leibniz product in section \ref{factorizing} and
 Then we discuss some important properties of Lie-Leibniz triples in Section \ref{someproperties}, before concluding the first part of the paper by elementary notions on graded geometry in Section \ref{graded}.

 Then, these mathematical tools are used through the entire second part, through some technical degree-juggling sessions. This part starts with the definition of tensor hierarchy algebras (see Definition \ref{def:tensoralgebra}), and a short explanation of the proof that any Lie-Leibniz triple induces a tensor hierarchy algebra.
 Section \ref{tower} introduces the notion of $i$-stems associated to Lie-Leibniz triples, and contains Theorem \ref{lemmestrand}, which is an existence statement for $i$-stems:
 
 \begin{theoreme*}
Let $i\in\mathbb{N}$ and let $\mathcal{V}=(\mathfrak{g},V,\Theta)$ be a Lie-Leibniz triple admitting a  $i$-stem $\mathcal{U}=(U,\delta,\pi,\mu)$. %for some $i\in\mathbb{N}^*$. If $\mathrm{Ker}\big(\pi\big|_{S^2(U)_i}\big)\neq0$, then $\mathcal{U}$ is not robust. 
Then there exists a $(i+1)$-stem whose $i$-truncation is $\mathcal{U}$. %and such that $\pi_i$ is injective and $\mathrm{Im}(\pi_i)=\mathrm{Ker}\big(\pi|_{S^2(U)_i}\big)$.
\end{theoreme*}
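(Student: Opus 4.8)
The plan is to carry out the extension explicitly, adding a single new graded piece in degree $i+1$ and then checking that every axiom defining a stem survives. The requirement that the $i$-truncation return $\mathcal{U}$ forces all of the data $(U,\delta,\pi,\mu)$ in degrees $\leq i$ to be kept verbatim, so the whole content of the statement is to produce the new component $U_{i+1}$, the restriction $\delta|_{U_{i+1}}\colon U_{i+1}\to U_i$, and the components of the bracket $\mu$ having one argument in degree $i+1$. Since this is an existence statement, I am free to exhibit one convenient such extension rather than classify all of them.

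First I would fix the correct candidate for $U_{i+1}$. The guiding principle is the physical one recalled in the introduction: each new layer of the hierarchy is introduced precisely to cancel the failure of covariance appearing one step below. Translated into the language of the stem, this means $U_{i+1}$ must surject, through $\delta$, onto the degree-$i$ obstruction, that is, onto the part of $\ker(\delta|_{U_i})$ not already in the image of the existing structure. I would therefore build $U_{i+1}$ from an appropriate kernel inside $S^{\bullet}(U)$ (or the relevant symmetric/tensor product of the lower pieces), large enough that $\delta|_{U_{i+1}}$ reaches every obstruction class. Once this space and this map are fixed, the identity $\delta^2=0$ in the top degree reduces to the inclusion $\mathrm{im}(\delta|_{U_{i+1}})\subseteq\ker(\delta|_{U_i})$, which holds by construction.

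With the underlying space chosen, the new brackets are essentially forced: demanding that $\delta$ remain a graded derivation of $\mu$ determines the components of $\mu$ touching degree $i+1$ in terms of the inductively given data. The remaining verifications split into the derivation property $\delta\circ\mu=\mu\circ(\delta\otimes\mathrm{id})\pm\mu\circ(\mathrm{id}\otimes\delta)$ extended to the new degree, the compatibility of $\mu$ with the projection $\pi$, and the graded Jacobi (Leibniz) identity for $\mu$. The first two collapse, after expansion, onto identities already valid one degree below by the induction hypothesis, combined with the defining constraints of the Lie-Leibniz triple $\mathcal{V}=(\mathfrak{g},V,\Theta)$---namely the Leibniz identity on $V$ and the linear and quadratic constraints on $\Theta$.

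The step I expect to be the main obstacle is the graded Jacobi identity for the extended bracket. Unlike $\delta^2=0$, it does not hold for free: its failure at the new level is recorded by an obstruction element sitting in degree $i$. I would dispose of it by the standard two-step obstruction argument. First, using the validity of the Jacobi identity in degrees $\leq i$ together with the derivation property of $\delta$, one shows that this obstruction is $\delta$-closed. Second, one invokes the exactness built into the choice of $U_{i+1}$ to conclude that a $\delta$-closed obstruction in degree $i$ is automatically $\delta$-exact, hence can be reabsorbed by readjusting the new components of $\mu$. The delicate point is that a single choice of $U_{i+1}$ must simultaneously trivialise both the differential obstruction of the previous step and this Jacobi obstruction; verifying that the obstruction-absorbing space produced above is already large enough for both is where the structural lemmas on Lie-Leibniz triples do the real work. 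Everything else is bookkeeping, and since none of these manipulations disturb the data in degrees $\leq i$, the resulting $(i+1)$-stem has $i$-truncation equal to $\mathcal{U}$, which is what we wanted.
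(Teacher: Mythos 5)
Your overall shape (induct, adjoin one new graded piece defined as a kernel, define the new maps, verify the axioms) matches the paper, but the mechanism you describe is not the one that works here, and two of your key steps rest on structure the stem does not have. First, the new space is not governed by $\delta$ at all: the paper sets $U_{i+1}=s\big(\mathrm{Ker}(\pi|_{S^2(U)_i})\big)$, the kernel of the \emph{co-bracket} $\pi$ extended as a derivation to $S^2(U)_i$, with $\pi_i$ the inclusion. This choice is forced by item 4 of the definition of a stem (exactness of $0\to U_{i+1}\to S^2(U)_i\to S^3(U)_{i-1}$, i.e.\ robustness of the dual graded Lie algebra), and it makes the Jacobi identity --- dually, $\pi\circ\pi_i=0$ --- hold on the nose. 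There is no Jacobi obstruction class to absorb, and your proposed cure for it would fail anyway: the complex $(U,\delta)$ is not required to be exact (note also that in the stem the differential raises degree, $\delta_{i+1}\colon U_i\to U_{i+1}$, not the other way around), so ``$\delta$-closed implies $\delta$-exact in degree $i$'' is not available. Second, you conflate $\mu$ with the bracket. In a stem the bracket data is $\pi$; the map $\mu_i$ is the null-homotopy term, and it is not ``forced by demanding $\delta$ be a derivation'' --- it is defined explicitly as $\mu_i=(\Theta^*\otimes\mathrm{id})\circ\widetilde{\rho_i}$ from the $\mathfrak{g}$-action on $U_i$ and the embedding tensor.

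The genuinely hard step, which your proposal does not identify, is the construction of $\delta_{i+1}$. One sets $h_i=\mu_i-\delta\circ\pi_{i-1}$ and must show (a) $\mathrm{Im}(h_i)\subset\mathrm{Ker}(\pi|_{S^2(U)_i})=\mathrm{Im}(\pi_i)$, so that $h_i$ factors through the injective $\pi_i$ to define $\delta_{i+1}$, and (b) $\mathrm{Im}(\delta_i)\subset\mathrm{Ker}(h_i)$, which gives $\delta_{i+1}\circ\delta_i=0$. Both reduce to the two compatibility identities $\pi\circ\mu_i=\mu\circ\pi_{i-1}$ and $\delta\circ\mu_{i-1}=\mu_i\circ\delta_i$, whose proof (Appendix A of the paper) uses the linear and quadratic constraints on $\Theta$ and the $\mathfrak{h}$-equivariance of the data; this is where the Lie-Leibniz structure actually enters. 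Without these identities, or some substitute for them, your extension of $\delta$ to degree $i+1$ is not defined, so the argument as written has a gap precisely at the point you flagged as ``bookkeeping.''
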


  In Section \ref{london}, we discuss the notion of morphisms and equivalences of stems; in particular we give the definition of robust stems and we eventually obtain an important unicity result in Corollary \ref{inftystem}. Then, Section  
% that enables to build the tensor hierarchy of a Leibniz algebra step by step, it relies on some technical computation that is provided in Appendix \ref{appendicite}.
 \ref{sec:hierarchy} is devoted to the explanation of building a tensor hierarchy algebra from the data contained in the stem associated to a Lie-Leibniz triple. In particular it contains Theorem \ref{prop:tensorhierarchy} that shows that there is a one-to-one correspondence between tensor hierarchy algebras and robust stems associated to the same Lie-Leibniz triple. We conclude this section with the main result of this paper, Corollary \ref{ultimate}, that proves that a Lie-Leibniz triple induces a unique -- up to equivalence -- tensor hierarchy algebra:
 
 \begin{corollaire*}
A Lie-Leibniz triple induces -- up to equivalence -- a unique tensor hierarchy algebra.
\end{corollaire*}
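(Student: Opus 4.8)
The plan is to read the statement off the three results assembled in the second part of the paper, chaining existence, uniqueness, and the correspondence between stems and tensor hierarchy algebras. First I would establish the \emph{existence} of a tensor hierarchy algebra associated to $\mathcal{V}=(\mathfrak{g},V,\Theta)$. Starting from the $i$-stem canonically attached to the Lie-Leibniz triple at the lowest degree, I would iterate the extension Theorem \ref{lemmestrand}: each $i$-stem $\mathcal{U}=(U,\delta,\pi,\mu)$ extends to an $(i+1)$-stem whose $i$-truncation it is, so the resulting tower of truncations is coherent, and passing to the inductive limit produces an $\infty$-stem. Arguing as in Section \ref{london} that this limit stem can be arranged to be robust, I obtain (this is precisely Corollary \ref{inftystem}) a robust stem associated to $\mathcal{V}$. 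Feeding this robust stem through the correspondence of Theorem \ref{prop:tensorhierarchy} then yields a tensor hierarchy algebra associated to $\mathcal{V}$, which settles existence.

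For \emph{uniqueness up to equivalence}, I would proceed in two stages. The first stage is the uniqueness clause of Corollary \ref{inftystem}: any two robust stems associated to the same triple $\mathcal{V}$ are related by a morphism of stems, and in fact by an equivalence of stems, since the extension at each degree is dictated by the data $(\delta,\pi,\mu)$ up to exactly the ambiguity that morphisms of stems absorb. The second stage is to transport this equivalence across the bijection of Theorem \ref{prop:tensorhierarchy}. Because that theorem sets up a one-to-one correspondence between robust stems and tensor hierarchy algebras attached to $\mathcal{V}$, it suffices to check that the correspondence is compatible with the two notions of equivalence: an equivalence of robust stems must be sent to a dgLa isomorphism (or quasi-isomorphism, depending on the adopted notion of equivalence) of the associated tensor hierarchy algebras, and conversely. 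Granting this compatibility, the uniqueness of the robust stem up to equivalence immediately forces the uniqueness of the tensor hierarchy algebra up to dgLa equivalence.

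The main obstacle I anticipate is not the formal chaining but the compatibility of the correspondence of Theorem \ref{prop:tensorhierarchy} with equivalences, i.e.\ the claim that the bijection between robust stems and tensor hierarchy algebras upgrades to an equivalence of the relevant groupoids. One has to verify that a morphism of stems induces, degree by degree, a chain map intertwining the differentials and the graded brackets of the two tensor hierarchy algebras, and that an equivalence of stems induces an invertible such map; the bookkeeping here is the delicate degree-juggling the introduction alludes to, since the graded Lie bracket on the tensor hierarchy algebra is reconstructed from the stem data $(\delta,\pi,\mu)$ and one must track how a stem morphism acts on each of these pieces separately. Once this is in place, the corollary follows formally: existence from Corollary \ref{inftystem} combined with Theorem \ref{prop:tensorhierarchy}, and uniqueness from transporting the uniqueness clause of Corollary \ref{inftystem} through the same correspondence.
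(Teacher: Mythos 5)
Your proposal is correct and follows essentially the same route as the paper: the paper's proof takes two tensor hierarchy algebras associated to $\mathcal{V}$, passes to their robust stems via Theorem \ref{prop:tensorhierarchy}, invokes Corollary \ref{inftystem} to get an equivalence of stems (which in particular forces the two depths to agree), and transports it back using Proposition \ref{isomequivmor}. The morphism-level compatibility you correctly single out as the main obstacle is precisely the content of Proposition \ref{isomequivmor}, which the paper establishes separately just before the corollary, so nothing is missing from your argument.
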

 
 Eventually, in Section \ref{sectionexamples} we give precise and explicit constructions of the tensor hierarchy algebras associated to some Lie-Leibniz triples. % in Section \ref{nextensions} we define the notions of $L_\infty$-algebras and Lie $n$-extensions of a Leibniz algebra, see Definition \ref{lienextension}. Using the result of Ezra Getzler, we show in Theorem \ref{finaltheoreme}, that every tensor hierarchy algebra induce a Lie $\infty$-algebra structure on its underlying graded vector space, which implies that any Leibniz algebra admits a Lie $n$-extension. This is the final result of the paper. 
  Appendices \ref{appendicite} and \ref{appendicite2} gather technical computations that appear in the proofs of Theorems \ref{lemmestrand} and \ref{prop:tensorhierarchy}, respectively. %and contains the main result of this paper (Theorem \ref{prop:tensorhierarchy}): that any Lie-Leibniz triple $(\mathfrak{g},V,\Theta)$ induces a family of tensor hierarchies algebra, and we show that these tensor hierarchies are in one-to-one correspondence with the lifts of the symmetric bracket of $V$.

%This paper can be seen as a completion of the results presented in \cite{palmkvist, palmkvist1}, where the tensor hierarchy algebra was introduced. We think that the present paper provides an alternative and systematic recipe of building the tensor hierarchy algebra. Moreover, given the informations we have, we conjecture that the tensor hierarchy defined in the present paper and the one defined in \cite{palmkvist1} are the same object (except at level 0).

\subsection*{Acknowledgments}

I would like to thank Thomas Strobl and Henning Samtleben for invaluable discussions at the beginning of this work. I would also like to thank Jakob Palmkvist for its patience and precious ideas, Camille Laurent-Gengoux for its insights, Jim Stasheff for his help during the redaction of this manuscript, and Titouan Chary for fructuous talks. I also thank Peter Gothen and Andr\'e Oliveira for their kindness and hosting me for one year at CMUP, Porto.
This research was partially supported by CMUP (UID/MAT/00144/2013) funded
by FCT (Portugal) with national funds. 

\vspace{1cm}

\section{Mathematical background}

\subsection{The embedding tensor in supergravity theories}
\label{tensorsection}

We propose here a sketch of gauging procedures in supergravity, since it is a very intricate subject in itself, that has induced a flourishing literature on the subject \cite{Henning2005, Henning08, Henning08bis, Trigiante, Henning08ter}. 
Maximal supergravity theories in $3\leq D\leq7$ dimensions admit as Lie algebra of (global) symmetries the finite-dimensional real simple Lie algebras $\mathfrak{e}_{11-D\, (11-D)}$, which are the split real forms of the corresponding complex Lie algebra $\mathfrak{e}_{11-D}$ \cite{Trigiante}.  Gauging procedures in supergravity theories rely on promoting a Lie subalgebra of these real Lie algebras to the status of gauge algebra. Contrary to the usual gauge procedure as in classical field theories, in supergravity models the gauge fields do not take values in the Lie algebra of global symmetries, but rather in its standard representation, i.e. its smallest irreducible faithful representation \cite{Trigiante}.

\begin{convention}
In this paper, Lie algebras will always be real and finite-dimensional, but not necessarily semi-simple. We will use the gothic letters $\mathfrak{g}$ to denote what is considered as the Lie algebra of global symmetries, and $\mathfrak{h}$ for the Lie subalgebra of $\mathfrak{g}$ that is promoted to be a gauge algebra. %Since, in this paper, we do not require the Lie algebra $\mathfrak{g}$ to be semi-simple,   the semi-simplicity of the Lie algebra of global symmetries implies that the representations are faithful. In this paper, however, we will allow these 
We use the letter $V$ to denote the representation of $\mathfrak{g}$ in which the 1-form fields would take values. 
%In (half)-maximal supergravities, the representation $V$ in which the 1-form fields take values is faithful and the representation to which belongs the embedding tensor is in general irreducible. In this paper, for more generality, we drop both conditions. %may neither be faithful nor completely reducible, as in maximal supergravities.
\end{convention}

 %In general, in supergravity theories, that $\mathfrak{g}$ is in ful generality a finite dimensional real semi-simple Lie algebra. 
%A gauging procedure relies on promoting a Lie subalgebra $\mathfrak{h}\subset \mathfrak{g}$ to the status of a gauge algebra. 

The starting point for gauging in supergravity is to define a Lie subalgebra $\mathfrak{h}$ of $\mathfrak{g}$ that will be promoted to the status of \emph{gauge algebra}. Physicists define $\mathfrak{h}$ as the image of some linear mapping $\Theta:V\to\mathfrak{g}$ that satisfies some consistency conditions. %such that $\mathfrak{h}\equiv\mathrm{Im}(\Theta)$ is a Lie subalgebra of $\mathfrak{g}$. 
Physicists call this map the \emph{embedding tensor}. %, and the Lie subalgebra $\mathfrak{h}$ the \emph{gauge algebra}.
The first condition is a contraint required by supersymmetry considerations and defines to which sub-module $T_\Theta$ of $V^*\otimes \mathfrak{g}$ the embedding tensor belongs, and the other condition is a closure constraint that ensures that $\mathfrak{h}\equiv\mathrm{Im}(\Theta)$ is indeed a Lie algebra.

 The fact that $\mathfrak{h}$ would be a gauge algebra means that it should induce a covariant derivative on the space of fields. Since the 1-form fields do not take values in $\mathfrak{h}$, %it is not the whole of $\mathfrak{g}$ that is promoted to the status of gauge algebra, 
 Physicists expect that the field strengths associated to the 1-form fields might not transform covariantly. Thus, for consistency, the theory might involve some 2-form fields that would be coupled to the field strengths. These 2-form fields would take values in some $\mathfrak{g}$-sub-module of $S^2(V)$. Since $\mathfrak{g}$ is semi-simple in supegravity theories, physicists can  decompose $S^2(V)$ into a sum a irreducible representations and check which one is allowed by supersymmetry. It turns out that in general there is a $\mathfrak{g}$-sub-module $\widetilde{W}\subset S^2(V)$ in which the 2-form fields cannot take values, because of these supersymmetry considerations. 
From this, Physicists require that the action of $\Theta$ on $\widetilde{W}$ gives zero. By decomposing $V^*\otimes\mathfrak{g}$ into a sum of irreducible $\mathfrak{g}$-modules, they can find which representation in particular has such a property.
 We call it $T_\Theta$. Hence physicists know that every element of $T_\Theta$ defines a subspace of $\mathfrak{g}$ that cancels $\widetilde{W}$.  This condition is called the \emph{linear}, or \emph{representation constraint}, and it defines $T_\Theta$ uniquely.

% For covariance reasons, physicists require the embedding tensor to be an element of a (usually irreducible) $\mathfrak{g}$-sub-module of $\mathrm{Hom}(V,\mathfrak{g})\simeq V^*\otimes \mathfrak{g}$. %the condition that $V$ admits a sub-space that behaves as the fundamental representation of $\mathrm{Im}(\Theta)$ (cf Trigiante 3.1) REALLY ?. 
%Hence, a careful analysis of the decomposition of the product $V^*\otimes \mathfrak{g}$ into irreducible representations of $\mathfrak{g}$ needs to be performed. This is at least possible when $V^*\otimes\mathfrak{g}$ is completely reducible, which is always the case when $\mathfrak{g}$ is semi-simple, as in supergravity theories. The choice of the correct representation $T_\Theta$ under which $\Theta$ would transform depends on a condition that is called the \emph{linear (or representation) contraint}. This constraint is usually induced by some requirement of supersymmetry. More precisely, in supergravity models, physicists know that the 2-form fields will take values in some irreducible $\mathfrak{g}$ sub-module $W$ of $S^{2}(V)$. Usually, this subspace $W$ depends on the choice of representation $V$. The $\mathfrak{g}$-module $Z$ that form a complementary subspace of $W$ in $S^2(V)$ should not appear in the theory. Thus, physicists require that if an embedding tensor exists, it should cancel $Z$ when acting on it. This condition singles out a unique $\mathfrak{g}$ sub-module $T_\Theta$ of

\begin{remarque}
In (half)-maximal supergravities, the representation $V$ in which the 1-form fields take values is faithful and the representation $T_\Theta$ to which belongs the embedding tensor is in general irreducible. In this paper, for more generality, we drop both conditions.
\end{remarque}

%The linear constraint states that the choice for $T_\Theta$ depends on the choice of $W$: allowed candidates for $T_\Theta\subset V^*\otimes\mathfrak{g}$ are those who appear in the decomposition of the product $W^*\otimes V$ into irreducible $\mathfrak{g}$-modules. %, where $W$ is some irreducible $\mathfrak{g}$-sub-module of $S^2(V)$.
% Usually, in supergravity, there will be only one possibility for the choice of $T_\Theta$ and $W$. It is a `constraint' because physicists usually know already $W$ by supersymmetric considerations, hence the choice of $T_\Theta$ is dependent on this. In this paper, we will allow $T_\Theta$ to be reducible, although irreducibility offers a few more results. %In Mathematics though, we have more freedom on the choice of $\Theta$ because supersymmetry doesn't necessarily apply.

%is not well mathematically understood today, and needs clarification. In particular it fixes both the representation in which $\Theta$ transforms, and the first space $W$ of the tensor hierarchy, in which the 2-form fields take values. In the course of this article, we show that the tensor hierarchies associated to the same triple $(\mathfrak{g},V,\Theta)$ are in one-to-one correspondence with the choices of $W$, hence it is fully adapted to the construction that physicists develop. Since the present paper allows for a decorrelated choice of $\Theta$ and $W$, we will not use the linear constraint, until further investigations.

Now assume that such a  representation $T_\Theta\subset V^*\otimes \mathfrak{g}$ has been settled. We denote by $\rho^\Theta:\mathfrak{g}\to\mathrm{End}(T_\Theta)$ the linear mapping that encodes the representation of $\mathfrak{g}$ on $T_\Theta$.
\begin{align*}
\rho^\Theta:\hspace{0.2cm}\mathfrak{g}&\xrightarrow{\hspace*{1.7cm}} \hspace{0.2cm}\mathrm{End}(T_\Theta)\\
	a&\xmapsto{\hspace*{1.7cm}}\rho^\Theta_a:\Xi\mapsto\rho^\Theta_a(\Xi)
\end{align*}
However, the choice for the embedding tensor $\Theta$ has not been done yet. The condition that $\mathfrak{h}\equiv \mathrm{Im}(\Theta)$ is a Lie subalgebra of $\mathfrak{g}$ is equivalent to saying that the image of $\Theta$ in $\mathfrak{g}$ is closed under the Lie bracket. This will give the second constraint on the existence of the embedding tensor, that specify % the \emph{quadratic}, or \emph{closure constraint}. 
all possible elements of $T_\Theta$ that can be possibles candidates to being an embedding tensor. There might be many, and they can define different gauge algebras. %A priori it satisfies the linear and the quadratic constraints, then they perform the calculations under these assumptions.

%we denote by $\rho^\Theta:T_\Theta\to \mathfrak{g}^*\otimes T_\Theta$ the linear mapping that encodes the action of $\mathfrak{g}$ on $T_\Theta$:
%\begin{align*}
%\rho^\Theta:\hspace{0.2cm}T_\Theta&\xrightarrow{\hspace*{1.7cm}} \hspace{0.5cm}\mathfrak{g}^*\otimes T_\Theta\\
%	\Xi&\xmapsto{\hspace*{1.7cm}}\rho^\Theta(\Xi):a\mapsto\rho^\Theta_a(\Xi)
%\end{align*}
%In particular, $\rho^\Theta_a$ is an endomorphism of $T_\Theta$. 
As an immediate consequence of the definition of the action of $\mathfrak{g}$ on the tensor product $V^*\otimes\mathfrak{g}$, we have:% that $\Theta$ can be acted upon by $\mathfrak{g}$ through a linear maping $\rho^\Theta:\mathfrak{g}\otimes W\to W$ satisfying the following equation:
\begin{equation}\label{quadratic0}
\rho^\Theta_a(\Xi)(x)=\big[a,\Xi(x)\big]-\Xi\big(\rho_a(x)\big)
\end{equation}
for any $a\in\mathfrak{g}$,  $x\in V$ and $\Xi\in T_\Theta$. One can see that if $\rho^\Theta_a(\Xi)=0$ then $[a,\Xi(x)]\in\mathrm{Im}(\Xi)$. Thus, a sufficient condition for the subspace $\mathrm{Im}(\Theta)\subset\mathfrak{g}$ to be stable under the Lie bracket is: \begin{equation}\label{quadratic}
\rho^\Theta_a(\Theta)=0\hspace{1cm}\text{for any $a\in\mathrm{Im}(\Theta)$}\end{equation} Indeed, Equation \eqref{quadratic} applied to Equation \eqref{quadratic0} gives:
\begin{equation}\label{equivariance}
\big[\Theta(x),\Theta(y)\big]=\Theta\big(\rho_{\Theta(x)}(y)\big)
\end{equation}
for any $x,y\in V$. This implies that $\mathrm{Im}(\Theta)$ is a Lie subalgebra of $\mathfrak{g}$, that we denote by $\mathfrak{h}$ and that physicists call the \emph{gauge algebra}. The name is justified by the  analogy with the classical case, where gauge fields take values in the gauge algebra adjoint representation. Here, the 1-form fields are taking values in $V$ but they are associated to elements of  $\mathfrak{h}$ through the embedding tensor. %Following this analogy, physicists say that the embedding tensor is gauge invariant, because it satisfies an $\mathfrak{h}$-invariance condition in Equation \eqref{quadratic}. 

Given that $V$ inherits a $\mathfrak{h}$-module structure induced by its $\mathfrak{g}$-module structure, Equation \eqref{equivariance} shows that the embedding tensor $\Theta$ is $\mathfrak{h}$-equivariant, with respect to the induced action of $\mathfrak{h}$ on $V$ and to the adjoint action of $\mathfrak{h}$ on itself. Indeed, writing $a$ instead of $\Theta(x)$ in Equation \eqref{equivariance}, we obtain:
\begin{equation}
\mathrm{ad}_{a}\big(\Theta(y)\big)=\Theta\big(\rho_{a}(y)\big)
\end{equation}
In supergravity theories, the gauge invariance condition \eqref{quadratic} is often written under the form of the `equivariance' condition \eqref{equivariance}, %(or in a more compact form in Equation \eqref{equivariance}),
 and is called the \emph{quadratic}, or \emph{closure constraint}. This constraint is another formulation of the fact that $\mathrm{Im}(\Theta)$ is a Lie subalgebra of $\mathfrak{g}$. %The property that $\mathrm{Im}(\Theta)$ is a Lie subalgebra of $\mathfrak{g}$, or equivalently that the embedding tensor is gauge invariant, is a condition that physicists require for the consistency of gauging procedures in supergravity theories. %, and fix the choice for the embedding tensor.

\begin{remarque}
Usually physicists do not specify which embedding tensor they want to pick up till the very end of their calculations, where they make a definite choice.  They say that it is a \emph{spurionic object}. They perform the calculations under the assumption that the generic embedding tensor $\Theta$ satisfies the linear and the quadratic constraint. This makes the computations easier, and more importantly, it prevents also a manifest symmetry breaking in the Lagrangian. When they fix a choice of embedding tensor, and thus of gauge algebra, at the very end of the computations, the symmetry is broken and they obtain the desired model. %%A priori it satisfies the linear and the quadratic constraints, then t
\end{remarque}

The specificity of supergravity theories is that even if the action of the gauge algebra on the 1-form gauge fields $A^a$ is a Lie algebra representation, the corresponding field strengths $F^a$ do not transform covariantly.
%As in classical gauge theories, the dynamics of the gauge fields is controlled by a this set of field strengths. %However, contrary to the classical case, in supergravity models these field strengths are not covariant anymore under the action of $\mathfrak{h}$. 
To get rid of this issue, as said before, a set of 2-form fields $B^I$ taking value in the complementary sub-space of $\widetilde{W}$ in $S^2(V)$ are added to the theory. They are coupled to the field strength $F^a$ through a Stuckelberg-like coupling, and their gauge transformation is defined to compensate the lack of covariance of $F^a$. However, the addition of these new fields $B^I$ necessarily implies to add their  corresponding field strengths, i.e. some 3-forms $H^I$. However, it turns out that they are not covariant either. One then adds to the model a set of 3-form fields living in a very specific $\mathfrak{g}$-module  so that the field strengths $H^I$ become covariant. The procedure continues and $p$-form fields are added to the theory until the dimension of space-time is reached. The set of all these fields form what is known as a \emph{tensor hierarchy}. If not for the dimension of space-time, nothing prevents this tower of fields to be infinite in full generality \cite{Henning08, Henning08bis}.

 To summarize, having a Lie algebra $\mathfrak{g}$ and a $\mathfrak{g}$-module $V$, the gauging procedure in supergravity theories consists of the following steps:

\begin{enumerate}
\item Defining a specific $\mathfrak{g}$-module $T_\Theta\subset V^*\otimes\mathfrak{g}$ to which all possible candidates as an embedding tensor would belong. This is done using the linear constraint, which picks up every elements of $V^*\otimes \mathfrak{g}$ that have a trivial action on some particular $\mathfrak{g}$-sub-module $\widetilde{W}$ of $S^2(V)$, that has been selected by supersymmetry considerations;
\item Setting a specific element $\Theta\in T_\Theta$ by the quadratic constraint \eqref{equivariance}, which ensures that $\mathfrak{h}\equiv\mathrm{Im}(\Theta)$ is a Lie subalgebra of $\mathfrak{g}$;
%\item Using the action of $\mathfrak{h}$ on $V$, we can equip $V$ with an action on itself, turning it into a Leibniz algebra (see Section \ref{leibnizalgebras});
\item The action of $\mathfrak{h}$ on the 1-form fields $A^a$ induces an action on their corresponding field strengths $F^a$, but they do not transform covariantly. Then, physicists add a set of 2-form fields $B^I$ that take value in some specific sub-module so that the field strengths $F^a$ become covariant when they get coupled to the $B^I$'s;
\item If the field strengths associated to the $B^I$'s are not covariant, one should add 3-form fields, etc.
\end{enumerate}
 %\begin{remarque}
%The vector space $V$ can then be equipped with a Leibniz algebra structure, satisfying:
%\begin{equation}
%x\bullet y=\rho_{\Theta(x)}(y)
%\end{equation}
%for any $x,y\in V$. The Leibniz identity is satisfied because $V$ is a $\mathfrak{h}$-representation.
%\end{remarque}

%The vector space $V$ can then be equipped with a Leibniz algebra structure, which encodes the gauge transformations of the 1-forms in supergravity theories. 
%Then, to make the associated field strengths covariant under the action of $V$, physicists add a set of 2-form fields that take value in $W$. %This space is chosen such that the symmetric part of the Leibniz bracket $\{.\,,.\}:S^2(V)\to V$ factors through $W$. 
Following the same kind of considerations for the 2-form fields, 3-form fields and so on, physicists manage to build a whole sequence of $\mathfrak{g}$-modules in which those higher fields take values. The construction  of this (possibly infinite) tower of spaces is automatic as soon as one has chosen the embedding tensor. The goal of this paper is to provide a detailed `bottom-up' approach to this construction, whereas the `top-down' approach was given in \cite{palmkvist1}, using Borcherds algebras. Both approaches seem to give the same result, as guessed in \cite{Cederwall2}. %This is what physicists call the \emph{tensor hierarchy} of the model, see details in section \ref{graded} and \cite{Henning08, Trigiante, palmkvist}.

%In supergravity theories the choice of $W$ is usually made by supersymmetric considerations and precedes the choice of $T_\Theta$. %in the following we will revert for mathematical purpose this implication, 
% In all generality, nothing constrains $W$ a priori  and then the linear constraint relating the choice of $T_\Theta$ to $W$ can be reverted. We will choose $T_\Theta$ first, and then the linear constraint will give us a unique choice for the $\mathfrak{g}$-module $W\subset S^2(V)$. The candidates for $T_\Theta$ are the $\mathfrak{g}$-sub-modules of $V^*\otimes\mathfrak{g}$ that contain possible candidates for embedding tensors, i.e. all elements $\Theta\in V^*\otimes\mathfrak{g}$ that satisfy the quadratic constraint. Hence, in all generality, the two first steps of the gauging procedure can be reversed: we first choose an embedding tensor by the quadratic constraint, and then we define $W$ from the choice of $T_\Theta$ by decomposing the product $S^{2}(V)$ into its irreducible components and finding the $\mathfrak{g}$-module $W$ such that $W^*\otimes V$ contains $T_\Theta$. This is the linear contraint, but used in the reverse order.

\subsection{Embedding tensors and Leibniz algebras} \label{leibnizalgebras}

At first, \emph{Leibniz} (or \emph{Loday}) \emph{algebras} have been introduced by Jean-Louis Loday in \cite{Loday} as a non commutative generalization of Lie algebras. In a Lie algebra, the Jacobi identity is equivalent to saying that the adjoint action is a derivation of the bracket. In a Leibniz algebra, we preserve this derivation property but we do not require the bracket to be skew symmetric anymore. More precisely:

\begin{definition}\label{defleibniz}
A %(left)\footnote{In a right Leibniz algebra, the product acts from the right, hence the Leibniz identity is:
%\begin{equation}
%(x\bullet  y)\bullet  z=x\bullet (y\bullet  z)+(x\bullet z)\bullet y
%\end{equation}}
 \emph{Leibniz algebra} is a finite dimensional real vector space $V$ equipped with a bilinear operation $\bullet $ satisfying the derivation property, or \emph{Leibniz identity}:
\begin{equation}\label{leibnizidentity}
x\bullet (y\bullet  z)=(x\bullet  y)\bullet  z+y\bullet (x\bullet  z)
\end{equation}
for all $x,y,z\in V$. A \emph{Leibniz algebra morphism} between  $(V,\bullet)$ and $(V',\bullet')$ is a linear mapping $\chi:V\to V'$ that is compatible with the respective products, that is:
\begin{equation}
\chi(x)\bullet'\chi(y)=\chi(x\bullet y)
\end{equation}
for every $x,y\in V$.
\end{definition}

\begin{convention}
In general, and for clarity of the exposition, we will often omit to write the couple $(V,\bullet)$ to designate a Leibniz algebra. In that case, we will assume that the Leibniz product $\bullet$ is implicitly attached to $V$.
\end{convention}

\begin{example}
A Lie algebra $(\mathfrak{g},[\,.\,,.\,])$ is a Leibniz algebra, with product $\bullet =[\,.\,,.\,]$. The Leibniz identity is nothing more than the Jacobi identity on $\mathfrak{g}$. Conversely a Leibniz algebra $(V,\bullet )$ is a Lie algebra when the product does not carry a symmetric part, that is: $x\bullet  x=0$ for every $x\in V$. Hence, the Leibniz identity \eqref{leibnizidentity} is a possible generalization of the Jacobi identity to non skew-symmetric brackets.
\end{example}

%\begin{remarque} Contrary to the convention, we introduced the dot $\bullet $ instead of the commonly used bracket $[\,.\, ,.\,]$ for the bilinear operation equipping the Leibniz algebra, because we will eventually split the product into its symmetric and its skew symmetric part.
%\end{remarque}

\begin{example}
Let $(A,\cdot)$ be an associative algebra equipped with an endomorphism $P:A\to A$ satisfying $P\big(P(x)\cdot y\big)=P(x)\cdot P(y)=P\big(x\cdot P(y)\big)$ for every $x,y\in A$ (for example when $P$ is an algebra morphism satisfying $P^2=P$). Then the product that is defined by:
\begin{equation}
x\bullet y\equiv P(x)\cdot y-y\cdot P(x)%\hspace{1cm}\text{for all $x,y\in A$,}
\end{equation}
induces a Leibniz algebra structure on $A$. It is a Lie algebra precisely when $P=\mathrm{id}$.

\end{example}

\begin{example}\label{embeddingleibniz}
Let $\mathfrak{g}$ be a finite dimensional real Lie algebra and let $V$ be a $\mathfrak{g}$-module. % It implies that the representation $\rho:\mathfrak{g}\to\mathrm{End}(V)$ is injective.
 Let $\Theta:V\to\mathfrak{g}$ be an embedding tensor as in Section \ref{tensorsection}, i.e. a linear map from $V$ to $\mathfrak{g}$ satisfying the quadratic constraint \eqref{equivariance}. This implies that $\mathfrak{h}\equiv\mathrm{Im}(\Theta)$ is a Lie subalgebra of $\mathfrak{g}$. The action $\rho$ of $\mathfrak{g}$ on $V$ descends to an action of $\mathfrak{h}$ on $V$, that induces an action $\bullet$ of $V$ on $V$ itself by the following formula:
\begin{equation}\label{product}
x\bullet  y\equiv\rho_{\Theta(x)}(y)
\end{equation}
This action may not be symmetric nor skew-symmetric. By the equivariance condition \eqref{equivariance}, we deduce that $\Theta$ intertwines the product on $V$ and the Lie bracket on $\mathfrak{h}$:
\begin{equation}\label{integrability}
\Theta(x\bullet  y)=\big[\Theta(x),\Theta(y)\big]
\end{equation}
This is the most compact form of the quadratic constraint found in supergravity theories.  From Equations \eqref{product} and \eqref{integrability}, and from the fact that $V$ is a representation of the Lie algebra $\mathfrak{h}$, we deduce the following identity:
\begin{equation}\label{eq:leibnizidentity}
x\bullet (y\bullet  z)=(x\bullet  y)\bullet  z+y\bullet (x\bullet  z)
\end{equation}
In other words, the product $\bullet$ is a derivation of itself. This turns $V$ into a Leibniz algebra. Hence, Leibniz algebras emerge naturally through the gauging procedure in supergravity theories. % This property is called the \emph{Leibniz identity} and it turns the space $(V,\bullet )$ into what is called a \emph{Leibniz algebra}.
\end{example}

We can  split the product $\bullet$ of a Leibniz algebra $V$ into its symmetric part $\{.\,,.\}$ and its skew-symmetric part $[\,.\,,.\,]$:
\begin{equation}\label{splitting}
x\bullet  y=[x,y]+\{x,y\}
\end{equation}
where
\begin{equation*}
[x,y]=\frac{1}{2}\big(x\bullet  y-y\bullet  x\big)%\label{skewsym}
\hspace{0.7cm}\text{and}\hspace{0.7cm}
\{x,y\}=\frac{1}{2}\big(x\bullet  y+y\bullet  x\big)%\label{sym}
\end{equation*}
for any $x,y\in V$.
%A symmetric element of the Leibniz algebra has a null action on any other elements:
%\begin{align}
%2\{x,y\}\bullet  z
%&= (x\bullet  y)\bullet  z+(y\bullet  x)\bullet  z\\
%&=\big(x\bullet (y\bullet  z)-y\bullet (x\bullet  z)\big) + \big(y\bullet (x\bullet  z)-x\bullet (y\bullet  z)\big)\nonumber\\
%&=0\nonumber
%\end{align}
As a consequence of Equation \eqref{leibnizidentity}, the Leibniz product is a derivation of both brackets. An important remark here is that even if the bracket $[\,.\,,.\,]$ is skew-symmetric, it does not satisfy the Jacobi identity since, using Equation \eqref{leibnizidentity}, we have:
\begin{equation}\label{jacobiator0}
\big[x,[y,z]\big]+\big[y,[z,x]\big]+\big[z,[x,y]\big]=\mathrm{Jac}(x,y,z)
\end{equation}
where the Jacobiator is defined by:
\begin{equation}\label{jacobiator}
\mathrm{Jac}(x,y,z)=-\frac{1}{3}\Big(\big\{x,[y,z]\big\}+\big\{y,[z,x]\big\}+\big\{z,[x,y]\big\}\Big)
\end{equation}
for every $x,y,z\in V$.  Hence the skew-symmetric bracket $[\,.\,,.\,]$ is not a Lie bracket. Since the Jacobi identity  does not close, one is tempted to lean on the notion of $L_\infty$-algebras to extend the bracket $[\,.\,,.\,]$. These are algebraic structures that generalize the notion of (differential graded) Lie algebras, by allowing the Jacobi identity to be satisfied only up to homotopy. Finding a $L_\infty$-algebra that extends the skew-symmetric bracket of $V$ is a topic that is currently under heavy investigation in the physics community, and that is actually interesting on its own.  Indeed, having a recipe to build a $L_\infty$-algebra lifting the skew-symmetric part of the product of any Leibniz algebra would be very important. This is currently under investigation.% The construction of such \emph{Lie $n$-extensions} of Leibniz algebras from the data contained in a tensor hierarchy algebra will be addressed in Section \ref{nextensions}.

Given a Leibniz algebra $V$, the subspace $\mathcal{I}\subset V$ generated by the set of elements of the form $\{x,x\}$ contains all symmetric elements of the form $\{x,y\}$, since they can always be written as a sum of squares.
%An interesting result is that the action of $\mathcal{I}$ on the algebra is null, for if we take $x,y,z\in V$, we get:
%\begin{align}
%2\{x,y\}\bullet  z
%&= (x\bullet  y)\bullet  z+(y\bullet  x)\bullet  z\\
%&=\big(x\bullet (y\bullet  z)-y\bullet (x\bullet  z)\big) + \big(y\bullet (x\bullet  z)-x\bullet (y\bullet  z)\big)\nonumber\\
%&=0\nonumber
%\end{align}
Using Equation  \eqref{leibnizidentity}, one can check that $\mathcal{I}$ is an ideal of $V$ for the Leibniz product, i.e. $V\bullet \mathcal{I}\subset \mathcal{I}$, and that the action of $\mathcal{I}$ on $V$ is null. 

\begin{definition}\label{definitiontoutou}
The sub-space $\mathcal{I}$ of $V$ generated by elements of the form $\{x,x\}$ is an ideal called the \emph{ideal of squares of V}. An ideal of $V$ whose action is trivial is said \emph{central}. The union of all central ideals of $V$ is called the \emph{center of $V$}:
\begin{equation}
\mathcal{Z}=\Big\{x\in V\ \big|\ x\bullet  y=0\ \ \text{for all}\ \ y\in V\Big\}
\end{equation}
\end{definition}
%By the Leibniz identity \eqref{leibnizidentity}, the subspace $\mathcal{Z}$ is an ideal of $V$, and moreover $\mathcal{I}\subset\mathcal{Z}$.

We have seen  in Example \ref{embeddingleibniz} that the embedding tensor $\Theta$ defines a Leibniz product $\bullet$ on the $\mathfrak{g}$-module $V$. %Then Equation \eqref{integrability}  implies that any element of $\mathcal{I}$ lies in the kernel of the embedding tensor $\Theta$. %Then the ideal of squares on $(V,\bullet)$ lies in the kernel of the embedding tensor.
% Hence, the kernel of the embedding tensor contains the ideal of squares, and by definition of the product in Equation \eqref{product}, 
In the gauging procedure of maximal supergravity theories, the module $V$ is faithful, which implies that the map $\rho:\mathfrak{g}\to\mathrm{End}(V)$ is injective. Then by Equation \eqref{product}, we deduce that in that case the center of $V$ satisfies:
$$\mathcal{Z}=\mathrm{Ker}(\Theta)$$
% By Equation \eqref{product}, the fact that $\mathcal{I}$ and $\mathcal{Z}$ are ideal of $V$ are equivalent to saying that they are $\mathfrak{h}$-modules.
%Now, for any $y\in\mathrm{Ker}(\Theta)$ and $x\in V$, Equation \eqref{integrability} implies that $x\bullet y\in\mathrm{Ker}(\Theta)$. Hence, $\mathrm{Ker}(\Theta)$ is an ideal of $V$ and a $\mathfrak{h}$-module as well. This provides an idea to attach an embedding tensor to any Leibniz algebra.

%Given a Leibniz algebra $V$, the fact that the ideal of squares $\mathcal{I}$ is in the center $\mathcal{Z}$ implies that the quotient $\mathfrak{g}_V\equiv\bigslant{V}{\mathcal{Z}}$ is a Lie algebra when equipped with the skew-symmetric bracket defined in Equation \eqref{skewsym}. 
Inspired by this result, we intend to define an embedding tensor from the data contained in a Leibniz algebra structure. Given a Leibniz algebra $(V,\bullet)$, we can define a particular vector space, noted $\mathfrak{h}_V$, by quotienting $V$ by the center $\mathcal{Z}$:
\begin{equation*}
\mathfrak{h}_V\equiv\bigslant{V}{\mathcal{Z}}
\end{equation*}
and we define $\Theta_V:V\to \mathfrak{h}_V$ to be the corresponding quotient map. 
%\begin{center}
%\begin{tikzcd}[column sep=0.7cm,row sep=0.4cm]
%0\ar[r]&\mathcal{Z}\ar[r,"\iota"]&V\ar[r,"\Theta_V"]&\mathfrak{h}_V\ar[r]&0
%\end{tikzcd}
%\end{center}
The projection of the Leibniz product via $\Theta_V$ defines a bilinear product on $\mathfrak{h}_V$:
\begin{equation}\label{defbracket}
[a,b]_{\mathfrak{h}_V}\equiv \Theta_V\big(\widetilde{a}\bullet\widetilde{b}\big)
\end{equation}
for every $a,b\in\mathfrak{h}_V$, and where $\widetilde{a},\widetilde{b}$ are any pre-image of $a,b$ in $V$. Because $\mathcal{Z}$ is a central ideal, this bilinear product does not depend on the choice of pre-images.
% Leibniz product on $V$ induces a skew-symmetric bracket $[\,.\,,.\,]_{\mathcal{K}}$ on $\mathfrak{h}_{\mathcal{K}}$ via the co-restriction of the Leibniz product:
%\begin{equation}\label{defbracket}
%\big[\Theta_V(x),\Theta_V(y)\big]_{\mathfrak{h}_V}\equiv\Theta_V\big(x\bullet y\big)
%\end{equation}
%for any $x,y\in V$. 
This discussion can be summarized in the following diagram:
\begin{center}
\begin{tikzpicture}
\matrix(a)[matrix of math nodes, 
row sep=5em, column sep=7em, 
text height=1.5ex, text depth=0.25ex] 
{V\otimes V&V\\ 
\mathfrak{h}_V\otimes\mathfrak{h}_V&\mathfrak{h}_V\\}; 
\path[->](a-1-1) edge node[above]{$\bullet $}  (a-1-2); 
\path[->](a-1-1) edge node[left]{$\Theta_V\otimes\Theta_V$} (a-2-1);
\path[->](a-1-2) edge node[right]{$\Theta_V$} (a-2-2);
\path[->](a-2-1) edge node[above]{$[\,.\,,.\,]_{\mathfrak{h}_V}$}  (a-2-2); 
\end{tikzpicture}
\end{center}
%Equation \eqref{jacobiator} shows that the Jacobiator of the skew-symmetric bracket $[\,.\,,.\,]$ takes values in $\mathcal{I}\subset\mathcal{Z}$, hence it projects down to zero in $\mathfrak{h}_V$. 
From Equation \eqref{defbracket}, we have:
\begin{equation}\label{integrabilityleib}
\Theta_V(x\bullet y)=\Big[\Theta_V(x),\Theta_V(y)\Big]_{\mathfrak{h}_V}
\end{equation}
for every $x,y\in V$.
The Leibniz identity on V -- see Equation \eqref{leibnizidentity} -- and the fact that $\Theta_V$ is onto, implies that the Jacobi identity for $[\,.\,,.\,]_{\mathfrak{h}_V}$ is satisfied.
This turns $\big(\mathfrak{h}_V,[\,.\,,.\,]_{\mathfrak{h}_V}\big)$ into a Lie algebra, that we call the \emph{gauge algebra of $V$}. 
This analogy with the vocabulary from gauging procedures in supergravity  is not a coincidence. There is indeed a close relationship between tensors hierarchies and Leibniz algebras. As a first clue, one can notice the analogy between Equation \eqref{integrabilityleib} and Equation \eqref{integrability}.

%\begin{definition}
%The Lie algebra $\big(\mathfrak{h}_V,[\,.\,,.\,]_{\mathfrak{h}_V}\big)$ is called the \emph{gauge algebra of $V$}.
%\end{definition}
%Since $\mathcal{Z}$ is a subspace of $\mathfrak{h}_V$, it induces a projection  $p:\mathfrak{h}_V\to\mathcal{Z}$. Given some $u\in\mathfrak{h}_V$, we can identify the vector $u-p(u)$ with the corresponding vector $\widetilde{\Theta_V}(u)$ of $\mathfrak{h}_V$. 
% Then we can define an isomorphism between $\mathfrak{h}_V$ and $\mathfrak{g}_V$ by the following map:
%\begin{align*}
%\phi:\hspace{0.2cm}\mathfrak{h}_V&\xrightarrow{\hspace*{1.7cm}} \hspace{0.3cm}\mathfrak{g}_V\\
%	u\hspace{0.1cm}&\xmapsto{\hspace*{1.7cm}}p(u)\roplus u-p(u)\end{align*}
%The map is obviously injective, and since $\mathrm{dim}(\mathfrak{h}_V)=\mathrm{dim}(\mathfrak{g}_V)$, it is bijective. However it is not in general an isomorphism in the category of Lie algebras.
Moreover, one can define an action $\rho$ of $\mathfrak{h}_V$ on $V$ by:
\begin{equation}\label{action}
\rho_{a}(x)\equiv\widetilde{a}\bullet  x
\end{equation}
for any $a\in\mathfrak{h}_V, x\in V$ and where $\widetilde{a}$ is any pre-image of $a$ in $V$. The action does not depend on the pre-image of $a$ since the component of $\widetilde{a}$ which is in $\mathcal{Z}$ acts trivially on $x$. This implies that for any $x,y\in V$,%Second, the Leibniz identity \eqref{leibnizidentity}, and the fact that $\mathcal{Z}$ is the center of $V$, imply that $\rho$ actually defines a representation of $\mathfrak{h}_V$ on $V$. %:
%\begin{equation}\label{astuce0}
%\rho_{[a,b]}(x)=\big[\rho_{a},\rho_{b}\big](x)
%\end{equation}
%for any $a,b\in\mathfrak{h}_V$ and $\,x\in V$. 
we have:
\begin{equation}\label{productbis}
x\bullet y=\rho_{\Theta_V(x)}(y)
\end{equation}
Then, since $\Theta_V$ is onto, one can check that the Leibniz identity for the product $\bullet$ is equivalent to the fact that $\rho$ is a representation of $\mathfrak{h}_V$. Moreover, one can further notice the analogy between Equation \eqref{productbis} and Equation \eqref{product}. Hence we have shown that given a Leibniz algebra $V$, one can define a Lie algebra $\mathfrak{h}_V$ and a surjective map $\Theta_V:V\to\mathfrak{h}_V$ satisfying the linear constraint \eqref{product} and the quadratic constraint \eqref{integrability}. %Since the action \eqref{action} coincides with the Leibniz product, and since the ideal $\mathcal{Z}$ is central, it implies that Equation \eqref{astuce0} is nothing but the Leibniz identity \eqref{leibnizidentity}, when evaluated in $a=\Theta_V(y)$ and $b=\Theta_V(z)$.

%We call this representation the \emph{gauge representation of $\mathfrak{g}_V$ on $V$}.

%Hence the gauge algebra $\mathfrak{h}$ defined in Section \ref{tensorsection} is isomorphic to $\mathfrak{g}_V$, the gauge algebra of $V$. 
%The fact that $\mathfrak{h}$ and $\mathfrak{g}_V$ are isomorphic and are both images of an embedding tensor justifies the mathematical construction of the tensor hierarchy that is proposed in this paper. %The above diagram summarizes the objects that are at the core of the construction of the tensor hierarchy, that is why they deserve a name on their own:

%Equations \eqref{integrability} and \eqref{integrabilityleib} on the one hand, and Equations \eqref{product} and \eqref{productbis} on the other hand, show the close relationship between embedding tensors and Leibniz algebras. 

This strong relationship between the embedding tensor formalism and Leibniz algebras can be captured by the following object:

\begin{definition}\label{def:lieleibniz}
A \emph{Lie-Leibniz triple} is a triple $(\mathfrak{g},V,\Theta)$ where:
\begin{enumerate} 
\item $\mathfrak{g}$ is a real, finite dimensional, Lie algebra,
\item $V$ is a $\mathfrak{g}$-module equipped with a Leibniz algebra structure $\bullet$, and
\item $\Theta: V\to \mathfrak{g}$ is a linear mapping called the \emph{embedding tensor}, that satisfies two compatibility conditions.  The first one is the \emph{linear constraint}:
\begin{equation}\label{eq:compat}
x\bullet y=\rho_{\Theta(x)}(y)
\end{equation}
where $\rho:\mathfrak{g}\to \mathrm{End}(V)$ denotes the action of $\mathfrak{g}$ on $V$.
The second one is called the \emph{quadratic constraint}:
\begin{equation}\label{eq:equiv}
\Theta(x\bullet  y)=\big[\Theta(x),\Theta(y)\big]
\end{equation}
where $[\,.\,,.\,]$ is the Lie bracket on $\mathfrak{g}$.
\end{enumerate}
%A Lie-Leibniz triple is said \emph{faithful} when $V$ is a faithful representation of $\mathfrak{g}$. %A Lie-Leibniz triple is said \emph{irreducible} when $T_\Theta$ $-$ the representation to which $\Theta$ belongs $-$ is an irreducible representation of $\mathfrak{g}$.
\end{definition}

The two conditions that $\Theta$ has to satisfy guarantee the compatibility between the Leibniz algebra structure on $V$, its $\mathfrak{g}$-module structure and the Lie bracket of $\mathfrak{g}$. 
The names of the constraints are justified because the symmetrization of the first equation gives the relationship between the symmetric bracket and the embedding tensor that is underlying the linear constraint of gauging procedures in supergravity theories. Moreover, using the first equation into the second one implies that $\Theta$ satisfies the quadratic constraint \eqref{equivariance}. Given these data, we deduce that $\mathfrak{h}\equiv\mathrm{Im}(\Theta)$ is a Lie subalgebra of $\mathfrak{g}$. In other words, we have mathematically encoded what is the embedding tensor. Moreover, Equation \eqref{eq:compat} implies that:
\begin{equation*}
\mathrm{Ker}(\Theta)\subset \mathcal{Z}
\end{equation*}
We have the equality when the representation of $\mathfrak{h}$ on $V$ is faithful.

\begin{example}
 Given a Leibniz algebra $V$, setting $\mathfrak{g}\equiv\mathfrak{h}_V$  and $\Theta\equiv\Theta_V$, %Since the ideal $\mathcal{K}$ is central, the action of $\mathfrak{g}_V$ can be defined as in Equation \eqref{action}. % Moreover, we have seen that the action of $\mathfrak{h}_\mathcal{I}$ and $\mathfrak{g}_V$ on $V$ coincide with the action of the central gauge algebra $\mathfrak{g}_V$, and that it gives back the Leibniz product, because they are precisely induced by central ideals. 
we observe that the data $(\mathfrak{h}_V,V,\Theta_V)$ canonically define a Lie-Leibniz triple associated to $V$. This justifies that we call $\Theta_V$ the \emph{embedding tensor of $V$} and, as said before, we call $\mathfrak{h}_V$ the \emph{gauge algebra of $V$}. The Lie-Leibniz triple $(\mathfrak{h}_V,V,\Theta_V)$ satisfies every argument of Section \ref{tensorsection}, and in particular since $\mathrm{Ker}(\Theta_V)=\mathcal{Z}$, the action of $\mathfrak{h}_V$ on $V$ is faithful.
\end{example}

\begin{example} If $(\mathfrak{g},V,\Theta)$ is a Lie-Leibniz triple where the Leibniz algebra structure on $V$ is a mere Lie algebra structure, and where the embedding tensor $\Theta: V\to \mathfrak{g}$ is surjective, then the Lie-Leibniz triple $(\mathfrak{g},V,\Theta)$ is what we call a differential crossed module. 
Thus, the obstruction for a Lie-Leibniz triple $(\mathfrak{g},V, \Theta)$ where $V$ is a mere Lie algebra to be a differential crossed module comes from the fact that $\Theta$ might not be $\mathfrak{g}$-invariant, inducing a supplementary term in the usual condition:
\begin{equation}
\Theta\big(\rho_a(x)\big)=\big[a,\Theta(x)\big]-\rho^\Theta_a(\Theta)(x)
\end{equation}
for some $a\in\mathfrak{g}$ (and not in $\mathfrak{h}$) and $x\in V$.
\end{example}

\begin{example} In \cite{lodayseul}, Loday defines a \emph{pre-crossed module}: it is a triple $(\mathfrak{g}, V,\Theta)$ consisting of a Lie algebra $\mathfrak{g}$, a $\mathfrak{g}$-module $V$ and a $\mathfrak{g}$-equivariant linear map $\Theta:V\to \mathfrak{g}$. Then equips $V$ with a Leibniz algebra structure that is given by Equation \eqref{eq:compat}. The triples $(\mathfrak{g}, V,\Theta)$ defining pre-crossed modules are Lie-Leibniz triples.
\end{example}

 %In particular, $\Theta_V$ is in the trivial representation of $\mathfrak{g}_V$. %In that case we will write this triple $V(\mathcal{K})$ for simplicity.
%Conversely, in section \ref{tensorsection}, we have shown that in supergravity theories, the gauging procedure induces a Leibniz algebra structure on some representation $V$ of the Lie algebra $\mathfrak{g}$ of symmetries of the theory.

More generally, we have the following result:
\begin{lemme}\label{propositionreve}
%and we let $\Theta_V$ be the corresponding quotient map. 
%Let $(V,\bullet)$ be a Leibniz algebra and 
Let $(\mathfrak{g},V,\Theta)$ be a Lie-Leibniz triple. % such that the Leibniz product on $V$ induced by $\mathfrak{h}\equiv\mathrm{Im}(\Theta)$ coincides with $\bullet$. 
Then, there is a canonical surjective Lie algebra morphism $\varphi:\mathfrak{h}\to\mathfrak{h}_V$ that makes the following diagram commute:%Then the embedding tensor $\Theta$ factorises through $\mathfrak{g}_V$, and the Lie algebras $\mathfrak{h}$ and $\mathfrak{h}_V$ are isomorphic.
%and using the quadratic constraint \eqref{integrability}, we also deduce that the ideal of squares $\mathcal{I}$ lies in $\mathrm{Ker}(\Theta)$,
%so that
\begin{center}
\begin{tikzpicture}
\matrix(a)[matrix of math nodes, 
row sep=5em, column sep=7em, 
text height=1.5ex, text depth=0.25ex] 
{V&\mathfrak{h}_V\\ 
\mathfrak{h}&\\}; 
\path[->>](a-2-1) edge node[above left]{$\varphi$} (a-1-2); 
\path[->](a-1-1) edge node[above]{$\Theta_V$} (a-1-2);
\path[->](a-1-1) edge node[left]{$\Theta$} (a-2-1);
\end{tikzpicture}
\end{center}
Moreover, $\varphi$ is an isomorphism if and only if $V$ is a faithful $\mathfrak{h}$-module.
\end{lemme}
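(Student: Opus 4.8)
The plan is to obtain $\varphi$ as the map induced by $\Theta_V$ on the quotient $\mathfrak{h}=\mathrm{Im}(\Theta)$. Since $\Theta$ is by definition surjective onto $\mathfrak{h}$, every $a\in\mathfrak{h}$ admits a preimage $x\in V$ with $\Theta(x)=a$, and I would set $\varphi(a)\equiv\Theta_V(x)$. To see this is well defined, I would invoke the inclusion $\mathrm{Ker}(\Theta)\subset\mathcal{Z}$ established just after Definition \ref{def:lieleibniz}: if $\Theta(x)=\Theta(x')$ then $x-x'\in\mathrm{Ker}(\Theta)\subset\mathcal{Z}=\mathrm{Ker}(\Theta_V)$, hence $\Theta_V(x)=\Theta_V(x')$. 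Linearity of $\varphi$ is immediate, and the relation $\varphi\circ\Theta=\Theta_V$ holds by construction, which is exactly the commutativity of the diagram.

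Next I would check the two structural properties. Surjectivity of $\varphi$ follows at once from the surjectivity of the quotient map $\Theta_V$: given $b\in\mathfrak{h}_V$, pick $x$ with $\Theta_V(x)=b$, so $\varphi(\Theta(x))=\Theta_V(x)=b$. For the Lie-algebra morphism property, recall that the bracket on $\mathfrak{h}$ is the restriction of the bracket of $\mathfrak{g}$. Taking preimages $x,y$ of $a,b\in\mathfrak{h}$, the quadratic constraint \eqref{eq:equiv} gives $[a,b]=[\Theta(x),\Theta(y)]=\Theta(x\bullet y)$, so $\varphi([a,b])=\Theta_V(x\bullet y)$; on the other hand $[\varphi(a),\varphi(b)]_{\mathfrak{h}_V}=[\Theta_V(x),\Theta_V(y)]_{\mathfrak{h}_V}=\Theta_V(x\bullet y)$ by \eqref{integrabilityleib}. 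Both sides coincide, so $\varphi$ is a morphism of Lie algebras.

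For the last claim, since $\varphi$ is already surjective it is an isomorphism exactly when it is injective. I would compute its kernel: $a\in\mathrm{Ker}(\varphi)$ means $\Theta_V(x)=0$ for a preimage $x$ of $a$, i.e. $x\in\mathcal{Z}$, so $\mathrm{Ker}(\varphi)=\Theta(\mathcal{Z})$. Combined with the known inclusion $\mathrm{Ker}(\Theta)\subset\mathcal{Z}$, the vanishing $\Theta(\mathcal{Z})=0$ is equivalent to the equality $\mathcal{Z}=\mathrm{Ker}(\Theta)$. It then remains to identify this equality with faithfulness of $V$ as an $\mathfrak{h}$-module, which I would do through the linear constraint \eqref{eq:compat}: for $x\in V$, the condition $x\in\mathcal{Z}$ reads $\rho_{\Theta(x)}(y)=0$ for all $y$, i.e. $\rho_{\Theta(x)}=0$. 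If $V$ is faithful over $\mathfrak{h}$ this forces $\Theta(x)=0$, giving $\mathcal{Z}\subset\mathrm{Ker}(\Theta)$; conversely, if $\mathcal{Z}=\mathrm{Ker}(\Theta)$ then any $a=\Theta(x)\in\mathfrak{h}$ with $\rho_a=0$ satisfies $x\in\mathcal{Z}=\mathrm{Ker}(\Theta)$, whence $a=0$.

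The argument is essentially a diagram chase, so I do not anticipate a serious obstacle; the one point requiring care is the logical bookkeeping in the final equivalence -- namely keeping track of which inclusion ($\mathrm{Ker}(\Theta)\subset\mathcal{Z}$ or its reverse) is automatic and which is equivalent to faithfulness -- and making sure the translation between ``$x\in\mathcal{Z}$'' and ``$\rho_{\Theta(x)}=0$'' via the linear constraint is used in the correct direction in each half of the equivalence.
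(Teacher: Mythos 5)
Your proposal is correct and follows essentially the same route as the paper's proof: define $\varphi(a)=\Theta_V(x)$ on a preimage, use $\mathrm{Ker}(\Theta)\subset\mathcal{Z}=\mathrm{Ker}(\Theta_V)$ for well-definedness, and verify surjectivity and the morphism property via the quadratic constraint. Your treatment of the final equivalence via $\mathrm{Ker}(\varphi)=\Theta(\mathcal{Z})$ is in fact slightly more explicit than the paper's, which only details one direction and declares the converse immediate.
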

\begin{proof}
%By definition, $\mathfrak{h}$ is isomorphic to $\bigslant{V}{\mathrm{Ker}(\Theta)}$.
First of all, notice that by Equation \eqref{eq:compat}, we have the inclusion $\mathrm{Ker}(\Theta)\subset\mathrm{Ker}(\Theta_V)=\mathcal{Z}$.
 Now let $a\in\mathfrak{h}$ and let $x$ be some preimage of $a$ in $V$. We then define $\varphi(a)\equiv\Theta_V(x)$. This definition does not depend on the choice of pre-image of $a$, because if we had chosen another one, say $y$, the difference $x-y$ would be in $\mathrm{Ker}(\Theta)$, which is a subspace of the center $\mathcal{Z}=\mathrm{Ker}(\Theta_V)$. Thus, the linear map $\varphi$ is well defined; it is also surjective since for every $u\in\mathfrak{h}_V$, there exists $x\in V$ such that $u=\Theta_V(x)$, so that the element $a=\Theta(x)$ is a preimage of $u$ by $\varphi$.
 Finally, the map $\varphi$ is a Lie algebra morphism because, for any $x,y\in V$:
\begin{align}
\varphi\Big(\big[\Theta(x),\Theta(y)\big]\Big)&=\varphi\big(\Theta(x\bullet y)\big)\\
&=\Theta_V(x\bullet y)\\
&=\big[\Theta_V(x),\Theta_V(y)\big]\\
&=\Big[\varphi\big(\Theta(x)\big),\varphi\big(\Theta(y)\big)\Big]
\end{align}

When $V$ is faithful, from Equation \eqref{eq:compat} we deduce the equality $\mathrm{Ker}(\Theta)=\mathcal{Z}$ which implies that $\mathrm{Ker}(\Theta)=\mathrm{Ker}(\Theta_V)$. Thus, for every $a\in\mathfrak{h}$ such that $\varphi(a)=0$, we deduce that $a=0$, for otherwise, the element would admit a preimage $x\in V$ that is not in the kernel of $\Theta$. But then $\Theta_V(x)\neq0$, which contradicts the fact that $\varphi\circ\Theta(x)=0$. This implies that $\varphi$ is injective, hence  bijective. The converse is immediate.
\end{proof}

To explore further this relationship, we need to define the notion of morphism of Lie-Leibniz triples:
\begin{definition}
Given two Lie Leibniz triples $\mathcal{V}\equiv(\mathfrak{g},V,\Theta)$ and $\overline{\mathcal{V}}\equiv(\overline{\mathfrak{g}},\overline{V},\overline{\Theta})$, a \emph{morphism between $\mathcal{V}$ and $\overline{\mathcal{V}}$} is a double $(\varphi,\chi)$ consisting of a Lie algebra morphism $\varphi:\mathfrak{g}\to \overline{\mathfrak{g}}$, and a Leibniz algebra morphism $\chi: V\to \overline{V}$, satisfying the following consistency conditions:
\begin{align}
\overline{\Theta}\circ\chi&=\varphi\circ\Theta\label{jfk}\\
\overline{\rho}_{\varphi(a)}\circ\chi&=\chi\circ\rho_a
\end{align}
for every $a\in \mathfrak{g}$, and where $\rho$ (resp. $\overline{\rho}$) denotes the action of $\mathfrak{g}$ (resp. $\overline{\mathfrak{g}}$) on $V$ (resp. $\overline{V}$). We say that $(\varphi,\chi)$ is an \emph{isomorphism of Lie-Leibniz triples} when both $\varphi$ and $\chi$ are isomorphisms in their respective categories.
\end{definition}

\begin{remarque}
We notice that Equation \eqref{jfk} implies that $\phi\big(\mathrm{Im}(\Theta)\big)\subset \mathrm{Im}\big(\overline{\Theta}\big)$.
\end{remarque}

%
%\begin{definition}
%A \emph{Lie-Leibniz triple} is a triple $(\mathfrak{g},V,\Theta)$ where:
%\begin{enumerate}
%\item $\mathfrak{g}$ is a real finite-dimensional semi-simple Lie algebra;
%\item $V$ is a finite-dimensional $\mathfrak{g}$-module that carries a Leibniz algebra structure;
%\item $\Theta: V\to \mathfrak{g}$ is a linear mapping whose image is a non-trivial Lie subalgebra $\mathfrak{h}$ of $\mathfrak{g}$;
%\item $\Theta$ is $\mathfrak{h}$-equivariant;
%\item the Leibniz product on $V$ satisfies:
%\begin{equation}
%x\bullet y=\rho_{\Theta(x)}(y)
%\end{equation}
%\end{enumerate}
%\end{definition}

%\begin{example} The first example of a Lie-Leibniz triple is the one described in section \ref{tensorsection}. Also in section \ref{leibnizalgebras}, we have seen that to any Leibniz algebra $V$ is associated a family of Lie-Leibniz triples $\big((V,\mathfrak{h}_\mathcal{I},\pi_\mathcal{K})\big)_{\mathcal{K}}$, where $\mathcal{K}$ runs over the central ideals that contain the squares.\end{example}

%\subsection{Lifting the symmetric bracket of a Leibniz algebra}\label{liftingbracket}

Given a Leibniz algebra $(V,\bullet)$,
%where $\mathfrak{g}$ is a Lie algebra acting on $V$ via the representation $\rho:\mathfrak{g}\to\mathrm{End}(V)$, 
the Leibniz product can be seen as a map from $V$ to $\mathrm{Der}(V)$:
\begin{align*}
\bullet:\hspace{0.2cm}V&\xrightarrow{\hspace*{1.7cm}} \hspace{0.3cm}\mathrm{Der}(V)\\
	x\hspace{0.1cm}&\xmapsto{\hspace*{1.7cm}}x\,\bullet :y\mapsto x\bullet y
\end{align*}
The kernel of this map is precisely the center of $V$. %The image of this map in $\mathrm{End}(V)$ is noted $\mathrm{Im}(\bullet)$.
Assume that the Leibniz product is such that there exists a Lie-Leibniz triple $(\mathfrak{g},V,\Theta)$ associated to $V$. By Equation \eqref{eq:compat} and Lemma \ref{propositionreve}, one deduces that the following diagram is commutative:
\begin{center}
\begin{tikzpicture}
\matrix(a)[matrix of math nodes, 
row sep=5em, column sep=7em, 
text height=1.5ex, text depth=0.25ex] 
{\mathfrak{h}&\\
V&\mathrm{Der}(V)\\ 
\mathfrak{h}_V&\\}; 
%\path[right hook->](a-2-2) edge node[below right]{$\rho$} (a-1-3); 
\path[->](a-1-1) edge node[above right]{$\rho$} (a-2-2); 
\path[->](a-2-1) edge node[above]{$\bullet$} (a-2-2);
\path[->](a-2-1) edge node[right]{$\Theta$} (a-1-1);
\path[->](a-1-1) edge [bend right=60] node[left]{$\varphi$} (a-3-1);
\path[->](a-2-1) edge node[right]{$\Theta_V$} (a-3-1);
\path[->](a-3-1) edge node[below right]{$\eta_V$} (a-2-2);
\end{tikzpicture}
\end{center}
where $\rho$ (resp. $\eta_V$) denotes the action of $\mathfrak{h}$ (resp. $\mathfrak{h}_V$) on $V$. In particular, we have the following equality:
\begin{equation}\label{wtfff}
\rho=\eta_V\circ\varphi
\end{equation}
Thus, Lemma \ref{propositionreve}, together with Equation \eqref{wtfff}, imply the following result:
\begin{proposition}\label{propositionreve2}
Let $(\mathfrak{g},V,\Theta)$ be a Lie-Leibniz triple, then there is a canonical morphism of Lie-Leibniz triples:
\begin{equation*}
\big(\varphi,\mathrm{id}_V\big):(\mathfrak{h},V,\Theta)\xrightarrow{\hspace*{1cm}}(\mathfrak{h}_V,V,\Theta_V)
\end{equation*}
where $\varphi$ is the map defined in Lemma \ref{propositionreve}. If $V$ is a faithful $\mathfrak{h}$-module, it is an isomorphism.
\end{proposition}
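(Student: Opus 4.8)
The plan is to verify directly that the pair $(\varphi,\mathrm{id}_V)$ satisfies the two defining conditions of a morphism of Lie-Leibniz triples, taking as source $(\mathfrak{h},V,\Theta)$ — where $\mathfrak{h}=\mathrm{Im}(\Theta)$ acts on $V$ through the restriction of $\rho$ — and as target $(\mathfrak{h}_V,V,\Theta_V)$, with $\mathfrak{h}_V$ acting through $\eta_V$. The key observation is that essentially all the work has already been done upstream: the map $\varphi$ was constructed and shown to be a surjective Lie algebra morphism in Lemma \ref{propositionreve}, and the two compatibility conditions reduce exactly to the commutative triangle of that lemma and to Equation \eqref{wtfff}.

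First I would observe that $\chi=\mathrm{id}_V$ is trivially a Leibniz algebra morphism, since the Leibniz product $\bullet$ on $V$ is literally the same on both sides of the arrow. It then remains to check the two consistency conditions. Condition \eqref{jfk} reads $\Theta_V\circ\mathrm{id}_V=\varphi\circ\Theta$, that is $\Theta_V=\varphi\circ\Theta$; this is precisely the commutativity of the triangle established in Lemma \ref{propositionreve}. The second condition reads $(\eta_V)_{\varphi(a)}\circ\mathrm{id}_V=\mathrm{id}_V\circ\rho_a$ for every $a\in\mathfrak{h}$, that is $\rho_a=(\eta_V)_{\varphi(a)}$, which is exactly the pointwise form of the identity $\rho=\eta_V\circ\varphi$ recorded in Equation \eqref{wtfff}. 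Together with the fact that $\varphi$ is a Lie algebra morphism, this establishes that $(\varphi,\mathrm{id}_V)$ is a morphism of Lie-Leibniz triples.

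For the final assertion, I would simply invoke the last statement of Lemma \ref{propositionreve}: when $V$ is a faithful $\mathfrak{h}$-module, $\varphi$ is a Lie algebra isomorphism. Since $\mathrm{id}_V$ is always an isomorphism of Leibniz algebras, the pair $(\varphi,\mathrm{id}_V)$ is then an isomorphism of Lie-Leibniz triples, as required.

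I do not expect any genuine obstacle here: the proposition is really a packaging statement that reassembles Lemma \ref{propositionreve} and Equation \eqref{wtfff} into the language of morphisms of Lie-Leibniz triples, so the entire argument is bookkeeping borrowed from already-proven material. If anything deserves care, it is keeping straight which copy of $V$ carries which action — the restriction of $\rho$ on the source side versus $\eta_V$ on the target side — so that each of the two conditions is matched against the correct equation.
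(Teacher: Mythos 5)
Your proof is correct and follows exactly the route the paper takes: the paper derives this proposition directly from Lemma \ref{propositionreve} (which gives $\Theta_V=\varphi\circ\Theta$ and the isomorphism criterion under faithfulness) together with Equation \eqref{wtfff} (which gives $\rho=\eta_V\circ\varphi$), and these are precisely the two conditions you check. Nothing is missing.
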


\subsection{The bud of a Lie-Leibniz triple}\label{someproperties}

 %that Equations \eqref{eq:compat} and \eqref{eq:equiv} imply 
%One can then interpret  the bilinear product $\bullet$ as a lift of the representation $\rho$ of $\mathfrak{h}$ on $V$ to a Leibniz product on $V$.

%%%%%%%%%%%%%%%%%%%%%%%%%%%%%%%%%%%%%%%%%%%%%%%%%%%%%%%%%%%%%%%%%%%%%%%%%%%%%%%%%%%%%%%%%%%%%%%%%%%%%%%%%%%%%%%%%%%%%%%%%%%%%%%%%%%%%%%%%%%%%%%%%%%%%%%%%%%%%%%%%%%%%%%%%%%%%%%%%%%%%%%%%%%%%%%%%%%%%%%%%%%%%%%%%%%%%%%%%%%%%%%%%%%%%%%%%%%%%%%%%%%%%%%%%%%%%%%%%%%%%%%%%%%%%%%%%%%%%%%%%%%%%%%%%%%%%%%%%%%%%%%%%%%%%%%%%%%%%%%%%%%%%%%%%%%%%%%%%%%%%%%%%%%%%%%%%%%%%%%%%%%%%%%%%%%%%%%%%%%%%%%%%%%%%%%%%%%%%%%%%%%%%%%%%%%%%%%%%%%%%%%%%%%%%%%%%%%%%%%%%%%%%%%%%%%%%%%%%%%%%%%%%%%%%%%%%%%%%%%%%%%%%%%%%%%%%%%%%%%%%%%%%%%%%%%%%%%%%%%%%%%%%%%%%%%%%%%%%%%%%%%%%%%%%%%%%%%%%%%%%%%%%%%%%%%%%%%%%%%%%%%%%%%%%%%%%%%%%%%%%%
%%%%%%%%%%%%%%%%%%%%%%%%%%%%%%%%%%%%%%%%%%%%%%%%%%%%%%%%%%%%%%%%%%%%%%%%%%%%%%%%%%%%%%%%%%%%%%%%%%% INUTILE ??? %%%%%%%%%%%%%%%%%%%%%%%%%%%%%%%%%%%%%%

Let us dwelve a bit further in the exploration of some properties of Lie-Leibniz triples.
The map $\bullet:V\to\mathrm{Der}(V)$ can be seen as an element of $V^*\otimes \mathrm{End}(V)$ on which $\mathfrak{g}$ acts. When $\mathfrak{g}$ is semi-simple, this is a completely reducible representation of $\mathfrak{g}$,
and we call $T_\bullet$ the representation to which $\bullet$ belongs. Let us write $\rho^\bullet:\mathfrak{g}\to\mathrm{End}(T_\bullet)$ for the map through which $\mathfrak{g}$ acts on $T_\bullet$.
%denote the $\mathfrak{g}$-submodule of $V^*\otimes\mathrm{End}(V)$ generated by the map $\bullet:V\to \mathrm{Der}(V)$:
%\begin{equation*}
%T_\bullet=\mathrm{Span}\Big(\{\bullet\}\cup\big\{\rho^\bullet_{a_1}\circ\ldots\circ\rho^\bullet_{a_n}(\bullet)\ \big|\ \forall\, a_1,\ldots,a_n\in\mathfrak{g}, \forall\, n\geq1\big\}\Big)
%\end{equation*}
 Now let $a\in\mathfrak{g}$, then we have a map $\rho^\bullet_a(\bullet):V\to\mathrm{End}(V)$ defined by:
\begin{equation}\label{bulleta}
\rho^\bullet_a(\bullet)(x) (y) \equiv \rho_a(x\bullet y)-x\bullet\rho_a(y)-\rho_a(x)\bullet y
\end{equation}
for every $x,y\in V$. Hence, the map $\rho^\bullet_a(\bullet)$ measures the obstruction of $\rho_a$ to be a derivation of the Leibniz product.
By using Equation \eqref{eq:compat}, one deduces that:
\begin{equation}\label{eq:deformationbullet}
\rho^\bullet_a(\bullet)(x) (y)=\rho_{\rho^\Theta_a(\Theta)(x)}(y)
\end{equation}
for every $x,y\in V$.  Moreover, Equation \eqref{eq:deformationbullet} implies that if $a\in \mathfrak{h}$, then the left hand side of Equation \eqref{bulleta} vanishes, which means that $\rho_a$ is a derivation of the Leibniz product at least when $a\in\mathfrak{h}$.  This was expected because in that case, the right hand side of Equation \eqref{bulleta} is the Leibniz identity \eqref{leibnizidentity}, when we write $a=\Theta(z)$.

Until now, we have only exploited the skew-symmetric part of the Leibniz product. It is now time to turn to the symmetric part.
Let $(V,\bullet)$ be a Leibniz algebra, then the symmetric bracket can be seen as a map $\{.\,,.\}:S^2(V)\to V$ whose image is the ideal of squares $\mathcal{I}$:
\begin{equation}
\big\{.\,,.\big\}(x\odot y)=\{x,y\}
\end{equation}
where $\odot$ represents the symmetric product. 
Hence, it can also be seen as an element of $S^2(V^*)\otimes V$, and as such it can be acted upon by $\mathfrak{g}$. We note $T_{\{,\}}$ the representation to which the symmetric bracket belongs, and $\rho^{\{,\}}:\mathfrak{g}\to \mathrm{End}\big(T_{\{,\}}\big)$ the corresponding map.
%The space $S^2(V)$ is completely reducible with respect to the action of $\mathfrak{g}$, thus we can define the following important object:
Using Equations \eqref{bulleta} and \eqref{eq:deformationbullet}, a short calculation shows that the symmetric bracket obeys the following equation:
\begin{equation}
\rho^{\{,\}}_a\big(\{.\,,.\}\big)(x\odot y)=\frac{1}{2}\Big(\rho_{\rho^\Theta_a(\Theta)(x)}(y)+\rho_{\rho^\Theta_a(\Theta)(y)}(x)\Big)
\end{equation}
for every $a\in\mathfrak{g}$. The fact that the embedding tensor is $\mathfrak{h}$-invariant implies that the right hand-side vanishes when $a\in\mathfrak{h}$. It means that the symmetric bracket is $\mathfrak{h}$-equivariant:
\begin{equation}
\rho_a\big(\{x,y\}\big)=\big\{\rho_a(x),y\big\}+\big\{x,\rho_a(y)\big\}
\end{equation}
for every $a\in\mathfrak{h}$.

Hence, the kernel of $\{.\,,.\}:S^2(V)\to V$ is a $\mathfrak{h}$-module, but not necessarily a $\mathfrak{g}$-module. Let us define $\widetilde{W}$ to be the biggest $\mathfrak{g}$-sub-module of $S^2(V)$ contained in $\mathrm{Ker}\big(\{.\,,.\}\big)$. %the kernel of $\{.\,,.\}$.
 Then the symmetric bracket factors through the quotient $W\equiv\bigslant{S^2(V)}{\widetilde{W}}$:
\begin{center}
\begin{tikzpicture}
\matrix(a)[matrix of math nodes, 
row sep=5em, column sep=6em, 
text height=1.5ex, text depth=0.25ex] 
{&W\\ 
S^2(V)&V\\}; 
\path[->>](a-2-1) edge node[above left]{$\Pi_W$} (a-1-2); %node[below right]{$\varphi$}  (a-1-2); 
\path[->](a-1-2) edge node[right]{$\dd$} (a-2-2);
\path[->](a-2-1) edge node[above]{$\{.\,,.\}$} (a-2-2);
\end{tikzpicture}
\end{center}
This is a $\mathfrak{g}$-module, inheriting this structure from the quotient map $\Pi_W:S^{2}(V)\to W$ is the canonical quotient map. The map $\dd$ is the unique linear application $\dd: W\to \mathcal{I}$ such that:
\begin{equation}\label{defdd}
\{.\,,.\}=\dd\circ\Pi_W
\end{equation}
We can now give the following definition:
\begin{definition}\label{bud}
Given a Lie-Leibniz triple $\mathcal{V}=(\mathfrak{g},V,\Theta)$, the quotient $W$ of $S^2(V)$ by the biggest $\mathfrak{g}$-submodule of $\mathrm{Ker}\big(\{.\,,.\}\big)$ is called \emph{the bud of $\mathcal{V}$}. The map $\dd:W\to V$ defined in Equation \eqref{defdd} is called the \emph{collar of $\mathcal{V}$}.
\end{definition}

\begin{convention}
We have chosen botanical vocabulary because we will see in the following that the construction of the tensor hierarchy can be metaphorically seen as a plant which is growing, one step after another.
\end{convention}

\begin{remarques}
\begin{enumerate}
\item When the Lie-Leibniz triple is the canonical triple $(\mathfrak{h}_V,V,\Theta_V)$, the corresponding bud is obviously $W=\bigslant{S^2(V)}{\mathrm{Ker}\big(\{\,.\,,.\,\}\big)}$. For the clarity of exposition, in that case, we will speak of the bud (resp. collar) of $V$.
\item When $\mathfrak{g}$ is semi-simple, one can decompose $S^2(V)$ into irreducible representations of $\mathfrak{g}$, and see $W$ as the supplementary subspace of $\widetilde{W}$: $S^2(V)\simeq W\oplus \widetilde{W}$.
\item Notice that the definition of the bud and of the collar of $\mathcal{V}$ actually do not depend explicitely on the choice of embedding tensor $\Theta$. They only depend on the choice of the Lie algebra $\mathfrak{g}$ and of the Leibniz structure on $V$. Various embedding tensors satisfying Equation \eqref{eq:compat} will not interfere with the definition of the bud. This is consistent with the situation in supergravity where supersymmetry provides a constraint on the content of the fields, that translates into the choice of a $\mathfrak{g}$-sub-module of $S^2(V)$ that could not appear in the theory. Actually, the bud $W$ is precisely the space in which 2-form fields take values in supergravity models. 
\end{enumerate}
\end{remarques}

We deduce this simple but important result:

\begin{proposition}\label{thetainclusion}
Let $\mathcal{V}=(\mathfrak{g},V,\Theta)$ be a Lie-Leibniz triple, and let $\mathrm{d}$ be the collar of $\mathcal{V}$. Then, we have:
\begin{equation}\label{inclusionlol}
\Theta\circ\mathrm{d}=0
\end{equation}
\end{proposition}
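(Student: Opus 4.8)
The plan is to reduce the identity to a statement about the symmetric bracket itself, exploiting the fact that $\Pi_W$ is surjective. Since $\Pi_W:S^2(V)\to W$ is onto, the composite $\Theta\circ\dd$ vanishes on all of $W$ if and only if $\Theta\circ\dd\circ\Pi_W$ vanishes on all of $S^2(V)$. Using the defining relation \eqref{defdd}, namely $\{.\,,.\}=\dd\circ\Pi_W$, this is in turn equivalent to showing $\Theta\circ\{.\,,.\}=0$, i.e.\ that $\Theta\big(\{x,y\}\big)=0$ for every $x,y\in V$.

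First I would expand the symmetric bracket via its definition $\{x,y\}=\tfrac12(x\bullet y+y\bullet x)$ from \eqref{splitting} and apply $\Theta$. The key ingredient is then the quadratic constraint \eqref{eq:equiv}, which converts each Leibniz product into a Lie bracket: $\Theta(x\bullet y)=[\Theta(x),\Theta(y)]$. Combining the two contributions yields $\Theta\big(\{x,y\}\big)=\tfrac12\big([\Theta(x),\Theta(y)]+[\Theta(y),\Theta(x)]\big)$, which vanishes because the Lie bracket on $\mathfrak{g}$ is skew-symmetric.

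There is essentially no obstacle here: the content of the statement is precisely that the quadratic constraint sends the symmetric part of the Leibniz product to zero, the image of $\Theta$ being a Lie algebra in which only the skew-symmetric part of $\bullet$ survives. The only point requiring a moment of care is the reduction in the first step, where one must use that $\dd$ is characterized as the unique factorization of $\{.\,,.\}$ through the surjection $\Pi_W$, so that checking $\Theta\circ\dd=0$ on $W$ is the same as checking $\Theta\circ\{.\,,.\}=0$ on $S^2(V)$. Once this is noted, the computation is immediate.
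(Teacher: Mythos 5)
Your proof is correct and follows essentially the same route as the paper's: the paper observes that $\Theta(\mathcal{I})=0$ by the quadratic constraint \eqref{eq:equiv} and that $\mathrm{Im}(\mathrm{d})=\mathcal{I}$, which is exactly your reduction via the surjectivity of $\Pi_W$ together with the computation $\Theta(\{x,y\})=\tfrac12\big([\Theta(x),\Theta(y)]+[\Theta(y),\Theta(x)]\big)=0$. Your version just spells out the two steps the paper leaves implicit.
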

\begin{proof}
By Equation \eqref{eq:equiv}, we deduce that $\Theta(\mathcal{I})=0$. Since $\mathrm{Im}(\mathrm{d})=\mathcal{I}$, we have the result.
\end{proof}
\begin{remarque}
Proposition \ref{thetainclusion} implies the following inclusion:
\begin{equation*}
\mathcal{I}\subset\mathrm{Ker}(\Theta)
\end{equation*}
but, usually, the kernel of the embedding tensor does not necessarily coincide with the ideal of squares. %This is the case in particular for any Leibniz algebra whose center $\mathcal{Z}$ is strictly bigger than $\mathcal{I}$.
\end{remarque}

The vector space $W$ inherits the canonical quotient $\mathfrak{g}$-module structure induced by the action of $\mathfrak{g}$ on $S^2(V)$. Hence, it is the smallest quotient of $S^2(V)$ that has the property that $\{.\,,.\}$ factorizes through it and that is also a representation of $\mathfrak{g}$. In particular, the subspace $W$ cannot be smaller than $\bigslant{S^2(V)}{\mathrm{Ker}\big(\{.\,,.\}\big)}$, which happens when $\mathrm{Ker}\big(\{.\,,.\}\big)$ is a $\mathfrak{g}$-module as well. This is the case when $\mathfrak{g}=\mathfrak{h}_V$ for example.
From Proposition \ref{propositionreve2}, we deduce that following result:
\begin{proposition}\label{propositionreve3}
Let $\mathcal{V}=(\mathfrak{g},V,\Theta)$ be a Lie-Leibniz triple, let $W$ be the bud of $\mathcal{V}$, and let $\mathfrak{h}\equiv\mathrm{Im}(\Theta)$. Then there is canonical surjective linear mapping:
\begin{equation*}
\tau:W\xrightarrow{\hspace*{1cm}}\bigslant{S^2(V)}{\mathrm{Ker}\big(\{\,.\,,.\,\}\big)}
\end{equation*}
that makes the following diagram commute:
%\begin{center}
%\begin{tikzpicture}
%\matrix(a)[matrix of math nodes, 
%row sep=5em, column sep=7em, 
%text height=1.5ex, text depth=0.25ex] 
%{W&\hspace{0.2cm}\bigslant{S^2(V)}{\mathrm{Ker}\big(\{\,.\,,.\,\}\big)}\\ 
%V&\\}; 
%\path[->](a-1-2) edge node[below right]{$\{\,.\,,.\,\}$} (a-2-1); 
%\path[->>](a-1-1) edge node[above]{$\tau$} (a-1-2);
%\path[->](a-1-1) edge node[left]{$\dd$} (a-2-1);
%\end{tikzpicture}
%\end{center}
\begin{center}
\begin{tikzpicture}
\matrix(a)[matrix of math nodes, 
row sep=5em, column sep=6em, 
text height=1.5ex, text depth=0.25ex] 
{W&\\
S^2(V)&V\\ 
\bigslant{S^2(V)}{\mathrm{Ker}\big(\{\,.\,,.\,\}\big)}&\\}; 
%\path[right hook->](a-2-2) edge node[below right]{$\rho$} (a-1-3); 
\path[->](a-1-1) edge node[above right]{$\dd$} (a-2-2); 
\path[->](a-2-1) edge node[above]{$\{\,.\,,.\,\}$} (a-2-2);
\path[->](a-2-1) edge node[right]{$\Pi_W$} (a-1-1);
\path[->](a-1-1) edge [bend right=60] node[left]{$\tau$} (a-3-1);
\path[->](a-2-1) edge node[right]{} (a-3-1);
\path[->](a-3-1) edge node[below right]{$\{\,.\,,.\,\}$} (a-2-2);
\end{tikzpicture}
\end{center}
and which is compatible with the respective actions of $\mathfrak{h}$ and $\mathfrak{h}_V$, that is:
\begin{equation}\label{lalol}
\rho_{V,\varphi(a)}\big(\tau(\alpha)\big)=\tau\big(\rho_{a}(\alpha)\big)
\end{equation}
for every $a\in\mathfrak{h}$ and $\alpha\in W$, where $\varphi$ is the map defined in Lemma \ref{propositionreve}.% and where $P:\mathfrak{g}\to\mathfrak{h}$ is the projection of $\mathfrak{g}$ onto $\mathfrak{h}$.
\end{proposition}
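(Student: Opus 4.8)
The plan is to construct $\tau$ directly as a map between two successive quotients of $S^2(V)$, then to read off commutativity of the diagram from this construction, and finally to deduce the $\mathfrak{h}$-equivariance \eqref{lalol} from the canonical morphism of Lie-Leibniz triples already obtained in Proposition \ref{propositionreve2}. First I would record that, by the very definition of the bud (Definition \ref{bud}), the space $\widetilde{W}$ — the biggest $\mathfrak{g}$-submodule of $S^2(V)$ contained in $\mathrm{Ker}\big(\{.\,,.\}\big)$ — satisfies $\widetilde{W}\subset\mathrm{Ker}\big(\{.\,,.\}\big)$. The two subspaces $\widetilde{W}\subset\mathrm{Ker}\big(\{.\,,.\}\big)$ of $S^2(V)$ are thus nested, and the standard factorization of quotient maps produces a unique linear surjection $\tau:W=\bigslant{S^2(V)}{\widetilde{W}}\to\bigslant{S^2(V)}{\mathrm{Ker}(\{.\,,.\})}$ sending the class of $s$ modulo $\widetilde{W}$ to its class modulo $\mathrm{Ker}\big(\{.\,,.\}\big)$. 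By construction $\tau\circ\Pi_W$ equals the canonical projection $p:S^2(V)\to\bigslant{S^2(V)}{\mathrm{Ker}(\{.\,,.\})}$, which is exactly the unlabeled arrow of the diagram.

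Next I would verify commutativity of the whole diagram. The top triangle is Equation \eqref{defdd}, namely $\{.\,,.\}=\mathrm{d}\circ\Pi_W$. The bottom triangle is the defining property of the bracket $\{.\,,.\}:\bigslant{S^2(V)}{\mathrm{Ker}(\{.\,,.\})}\to V$ induced on the quotient, which is well defined precisely because one divides by the full kernel of $\{.\,,.\}$. Since $\tau\circ\Pi_W=p$, the left-hand outer region commutes as well, and gluing these pieces shows that every path from $S^2(V)$ to $V$ agrees; in particular $\tau$ is the unique map making the diagram commute.

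The genuine content lies in the equivariance \eqref{lalol}. The subtlety is that $W$ carries a $\mathfrak{g}$-module structure (here restricted to $\mathfrak{h}$), whereas the target $\bigslant{S^2(V)}{\mathrm{Ker}(\{.\,,.\})}$ only carries a $\mathfrak{h}_V$-module structure, so the two actions must be matched through $\varphi$. Since $\Pi_W$ is surjective it suffices to test \eqref{lalol} on $\alpha=\Pi_W(s)$; using $\tau\circ\Pi_W=p$, both sides then reduce to comparing $\rho_{V,\varphi(a)}\big(p(s)\big)$ with $p\big(\rho_a(s)\big)$, where on each space the relevant action is the derivation of $S^2(V)$ extending the action on $V$. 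The decisive input is Equation \eqref{wtfff}, $\rho=\eta_V\circ\varphi$, coming from the canonical morphism $(\varphi,\mathrm{id}_V):(\mathfrak{h},V,\Theta)\to(\mathfrak{h}_V,V,\Theta_V)$ of Proposition \ref{propositionreve2}: it says that the $\mathfrak{h}$-action of $a$ and the $\mathfrak{h}_V$-action of $\varphi(a)$ already coincide on $V$, hence on $S^2(V)$ since both extend as derivations, and therefore descend to the same map on the quotient. I expect this last step — keeping track of the two module structures and invoking \eqref{wtfff} to identify them — to be the only real obstacle, the existence and commutativity of $\tau$ being formal consequences of the inclusion $\widetilde{W}\subset\mathrm{Ker}\big(\{.\,,.\}\big)$.
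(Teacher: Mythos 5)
Your proposal is correct and follows essentially the same route as the paper: $\tau$ is obtained as the canonical factorization through the nested quotients $\widetilde{W}\subset\mathrm{Ker}\big(\{\,.\,,.\,\}\big)$ (the paper phrases this as defining $\tau$ on representatives and checking independence of the choice of preimage), and the equivariance \eqref{lalol} is deduced exactly as you describe, by using the equivariance of the two quotient maps and the identification $\rho=\eta_V\circ\varphi$ of Equation \eqref{wtfff} to trade the $\mathfrak{h}$-action of $a$ for the $\mathfrak{h}_V$-action of $\varphi(a)$ on $S^2(V)$.
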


\begin{proof}
For clarity, let $\overline{W}\equiv\bigslant{S^2(V)}{\mathrm{Ker}\big(\big\{\,.\,,.\,\})}$. Let $\Pi_W:S^2(V)\to W$ (resp. $\Pi_{\overline{W}}:S^2(V)\to \overline{W}$) be the quotient map associated to $W$ (resp. $\overline{W}$). Then in particular:
\begin{equation*}
\mathrm{Ker}(\Pi_W)\subset \mathrm{Ker}(\Pi_{\overline{W}})=\mathrm{Ker}\big(\{\,.\,,.\,\}\big)
\end{equation*}
Let us define the map $\tau$ by:
\begin{equation}
\tau\big(\Pi_W(x\odot y)\big)=\Pi_{\overline{W}}(x\odot y)
\end{equation}
for every $x,y\in V$. It is well defined, because for any $\alpha\in W$ that admits two pre-images $u$ and $v$ in $S^2(V)$, we have $\Pi_W(u-v)=0$. Thus, $\Pi_{\overline{W}}(u-v)=0$, which implies that $\Pi_{\overline{W}}(u)=\tau(\alpha)=\Pi_{\overline{W}}(v)$. The map $\tau$ is obviously surjective, and by definition, we have $\{\,.\,,.\,\}\circ\tau=\dd$.

Now let $a\in\mathfrak{h}$, then for any $\alpha\in W$ and any pre-image $u\in S^2(V)$, we have:
\begin{align}
\tau\big(\rho_a(\alpha)\big)&=\tau\big(\rho_a(\Pi_W(u))\big)\\
&=\tau\big(\Pi_W(\rho_a(u))\big)\\
&=\Pi_{\overline{W}}\big(\rho_a(u)\big)\\
&=\Pi_{\overline{W}}\big(\rho_{V,\varphi(a)}(u)\big)\\
&=\rho_{V,\varphi(a)}\big(\Pi_{\overline{W}}(u)\big)\\
&=\rho_{V,\varphi(a)}\big(\tau(\alpha)\big)
\end{align}
which concludes the proof.
\end{proof}

Even if the map $\Pi_W$ is $\mathfrak{g}$-equivariant, the map  $\dd$ may not be. Rather, it transforms as $\{.\,,.\}$ in the representation $T_{\{,\}}$. At least, $\mathrm{d}$ is $\mathfrak{h}$-equivariant because the symmetric bracket is. There is even more: in supergravity theories, physicists show that the representation $T_{\{.,.\}}$ is the same as $T_\bullet$, and as $T_\Theta$. It implies that $\mathrm{d}$ and $\Theta$ transform in the same representation. This property has not been shown in the general case yet.

\subsection{Graded geometry}\label{graded}

The construction of the tensor hierarchies will involve many notions from graded algebra. We define a \emph{graded vector space $E$} as a family of vector spaces $E=(E_{k})_{k\in\mathbb{Z}}$. An element $x$ is said \emph{homogeneous of degree $i$} if $x\in E_{i}$. The degree of an homogeneous element $x$ is noted $|x|$. A \emph{commutative graded algebra} is a graded vector space $A=(A_k)_{k\in\mathbb{Z}}$ equipped with a product $\odot:A\otimes A\to A$ such that $$x\odot y=(-1)^{|x||y|}y\odot x$$ for every homogeneous elements $x,y\in A$. %We have chosen the symbol $\odot$ for the product to emphasize the analogy with the symmetric product in section \ref{factorizing}. 
If the product is associative, successive products of multiple elements make sense whatever the order in which we perform the products. In that case, given $n$ homogeneous elements $x_1,\ldots,x_n\in A$, and a permutation $\sigma$ of $\{1,\ldots,n\}$, we define the \emph{Koszul sign of the permutation} (with respect to these elements) as the sign $\epsilon^{\sigma}_{x_1,\ldots,x_n}=\pm 1$ satisfying:
\begin{equation}
x_1\odot\ldots\odot x_n=\epsilon^\sigma_{x_1,\ldots,x_n}x_{\sigma(1)}\odot\ldots\odot x_{\sigma(2)}
\end{equation}
%The \emph{skew-symmetric Koszul sign}, denoted by $\chi^\sigma_{x_1,\ldots,x_n}$ of the permutation is the product of the Koszul sign with the signature of the permutation $(-1)^\sigma$:
%\begin{equation}
%\chi^\sigma_{x_1,\ldots,x_n}=(-1)^\sigma\epsilon^\sigma_{x_1,\ldots,x_n}
%\end{equation}

Given two graded vector spaces $E$ and $F$, a \emph{linear mapping between $E$ and $F$} is a family $\phi=(\phi_k)_{k\in\mathbb{Z}}$ of linear applications $\phi_k:E_k\to F_k$. For any two commutative graded algebras $A$ and $B$, a \emph{homomorphism from $A$ to $B$} is a degree 0 linear mapping $\Phi:A\to B$ that commutes with the respective products of $A$ and $B$:
$$\Phi(x\odot_A y)=\Phi(x)\odot_B\Phi(y)$$
for any $x,y\in A$. %Given a graded vector space $E=(E_i)_{i\in\mathbb{Z}}$,
 A \emph{morphism from $E$ to $F$} is a (degree 0) graded commutative algebra homomorphism $\Phi:S(F^*)\to S(E^*)$. It induces a degree 0 linear mapping $\phi^*:F^*\to E^*$ whose dual map is a linear mapping $\phi:E\to F$ between the graded vector spaces $E$ and $F$.
  A \emph{function on $E$} is an element of the commutative graded algebra $S(E^*)=\bigoplus_{n\geq0}S^n(E^*)$, where $E^*$ is the graded vector space defined by the family of dual spaces $E^*=\big((E_k)^*\big)_{k\in\mathbb{Z}}$. In particular the degree of an element of $(E_k)^*$ is $-k$, i.e the opposite of the degree of elements of $E_k$. A function $f$ is said to be \emph{homogeneous of degree $p$} if $f\in S(E^*)_{p}$.

It is now time to define the central mathematical object related to tensor hierarchies:
\begin{definition}\label{def:Lie}
A \emph{graded Lie algebra} is a graded vector space $L=(L_{k})_{k\in\mathbb{Z}}$ equipped with a graded skew-symmetric bracket $[\,.\,,.\,]:L_{k}\otimes L_l\to L_{k+l}$ that satisfies the \emph{graded Jacobi identity}:
\begin{equation}
\big[x,[y,z]\big]=\big[[x,y],z\big]+(-1)^{|x||y|}\big[y,[x,z]\big]
\end{equation}
for any $x,y,z\in L$.

A \emph{differential graded Lie algebra} is a graded Lie algebra $\big(L,[\,.\,,.\,]\big)$ that admits a differential $\partial=\big(\partial_k:L_{k-1}\to L_k\big)_{k\in \mathbb{Z}}$ which is a derivation of the bracket:
\begin{equation}\label{eq:compatibility}
\partial\big([x,y]\big)=\big[\partial(x),y\big]+(-1)^{|x|}\big[x,\partial(y)\big]
\end{equation}
for any $x,y\in L$.
If $L$ is negatively graded, i.e. if $L=\bigoplus_{k\geq0} L_{-k}$, we call the \emph{depth of $L$} the unique element $i\in\mathbb{N}\cup\{\infty\}$ such that $L=\bigoplus_{0\leq k<i+1}L_{-k}$, and the sequence $(L_{-k})_{0\leq k<i+1}$ does not converge to the zero vector space.
\end{definition}
\begin{remarque}
The depth of a graded Lie algebra is either an integer, and in this case %$L=\bigoplus_{0\leq i\leq n}L_{-i}$ and 
$L_{-i}\neq0$, or it is infinite and then, whatever the rank $n$ we chose, there is always some $k>n$ such that $L_{-k}\neq0$.
\end{remarque}
\begin{example}
Let $\mathfrak{g}$ be a Lie algebra. The Chevalley-Eilenberg algebra $\mathrm{CE}(\mathfrak{g})$ is the graded commutative algebra:
\begin{equation}
\wedge^\bullet\mathfrak{g}^*\equiv\mathbb{R}\oplus\mathfrak{g}^*\oplus\wedge^2\mathfrak{g}^*\oplus\ldots
\end{equation}
The Chevalley-Eilenberg differential $\mathrm{d}_{\mathrm{CE}}$ acts naturally on this algebra. There exist also two kinds of derivations acting on $\mathrm{CE}(\mathfrak{g})$: the inner contractions $\iota_x$ and the Lie derivatives $\mathcal{L}_x\equiv[\mathrm{d}_{\mathrm{CE}},\iota_x]$, for every $x\in\mathfrak{g}$. Here the bracket is the bracket of operators in the space of derivations of $\mathrm{CE}(\mathfrak{g})$. We define the differential graded Lie algebra $\mathfrak{inn}(\mathfrak{g})$ of inner derivations of $\mathfrak{g}$ by the following:
\begin{itemize}
\item elements of degree $-1$ are the contractions;
\item elements of degree $0$ are the Lie derivatives;
\item the differential $\partial: \mathfrak{inn}(\mathfrak{g})_{-1}\to \mathfrak{inn}(\mathfrak{g})_0$ satisfies:
\begin{equation}
\partial=[\mathrm{d}_{\mathrm{CE}},.\,]
\end{equation}
\item and the bracket is defined by:
\begin{equation}
[\mathcal{L}_x,\mathcal{L}_y]=\mathcal{L}_{[x,y]}\hspace{1cm}[\mathcal{L}_x,\iota_y]=\iota_{[x,y]}\hspace{1cm}[\iota_x,\iota_y]=0
\end{equation}
for every $x,y\in\mathfrak{g}$. 
\end{itemize}
\end{example}

There is another formulation of (differential) graded Lie algebras using the notion of \emph{differential graded manifolds}. First, a \emph{graded manifold $\mathcal{M}=(E,M)$} is a sheaf $\mathcal{C}^{\infty}_{\mathcal{M}}$ of graded algebras over a smooth manifold $M$ that is called the \emph{base}, such that for every open set $U\subset M$, $\mathcal{C}^{\infty}_{\mathcal{M}}(U)\simeq\mathcal{C}^{\infty}(U)\otimes S(E^*)$, where $E$ is a graded vector space called the \emph{fiber}. A \emph{morphism between the graded manifolds $\mathcal{M}$ and $\mathcal{N}$} is a family $\Phi=(\phi_{U})_{U\subset M}$ of graded algebra homomorphisms $\phi_U:\mathcal{C}^{\infty}_{\mathcal{N}}(U)\to\mathcal{C}^{\infty}_{\mathcal{M}}(U)$. We define \emph{vector fields on $\mathcal{M}$} as sections to the (graded) vector space of derivations of $\mathcal{C}^{\infty}_{\mathcal{M}}$. If the base manifold $M$ is reduced to a point,  we say that the graded manifold $\mathcal{M}$ is \emph{pointed}, i.e. it is reduced to the graded vector space $E$. In that case a vector field $X$ can be identified with an element of $S(E^*)\otimes E$. A vector field $X$ on $\mathcal{M}$ is said to be \emph{of arity $n$} if for any function $f\in S^k(E^*)$, we have $X(f)\in S^{k+n}(E^*)$. Obviously we can decompose a graded vector field by its components of various arities, but they should not be confused with the degree of the vector field.

\begin{definition}
A \emph{differential graded manifold} is a graded manifold $\mathcal{M}$ equipped with a degree $+1$ vector field $Q$ satisfying $[Q,Q]=0$.
\end{definition}

Given a graded vector space $E$, the \emph{suspension of $E$} is the graded vector space $sE=(sE)_{k\in\mathbb{Z}}$ defined as: 
\begin{equation*}
(sE)_k=E_{k-1}
\end{equation*} In other words, the suspension of a graded vector space is the same vector space, but with all degrees shifted by $+1$. Consequently, the degrees of dual elements are shifted by $-1$:
\begin{equation*}
(sE)^*=s^{-1}(E^*)
\end{equation*}
Also, every graded symmetric object becomes graded skew-symmetric (and vis-versa). Hence a function $f\in S^n(E^*)$ of degree $p$ is transformed into a function $sf\in \wedge^n\big((sE)^*\big)$ of degree $p-n$. In particular, given a linear application $F:S^{2}(E)\to E$ of degree $p$, it suspension $sF:\wedge^2(sE)\to sE$ has degree $p-1$ (precise formulas are given in \cite{Fiorenza}). The suspension isomorphism admits a reverse map which is called the \emph{desuspension} and which is noted $s^{-1}$. The desuspension map satisfies the following identity: 
\begin{equation*}(s^{-1}E)_k=E_{k+1}\end{equation*}

We now define the pairing between a graded vector space $E$ and its dual $E^*$. For any two homogeneous elements $u\in E$ and $\alpha\in E^*$, the pairing:
\begin{equation}
\langle\alpha,u\rangle_E\equiv\alpha(u)
\end{equation}
is non vanishing if an only if $|\alpha|=-|u|$ (recall that the `absolute value' denotes the degree and thus can be negative). The vector space that is written at the bottom of the right angle labels the space to which the right element belongs, here $u\in E$. Set $\iota_u$ to be the degree $|u|$  constant vector field on $E$ satisfying:
\begin{equation}\label{identification}
\iota_u(\alpha)\equiv\langle\alpha, u\rangle_E
\end{equation}
It is an interior product. 
The pairing is symmetric:
\begin{equation}\label{pairing}
\langle\alpha,u\rangle_E=\langle u,\alpha\rangle_{E^*}
\end{equation}
where here one considers that $u\in E^{**}\simeq E$. If $\alpha$ is an element of $(E^*)^{\otimes2}$, then we define the composition $\iota_v\iota_u$, for two homogeneous elements $u,v\in E$, by:
\begin{equation}\label{identification2}
\iota_v\iota_u(\alpha)\equiv2\,\big\langle\alpha,u\otimes v\big\rangle_{E^{\otimes2}}
\end{equation}
If, in particular, $\alpha\in S^2(E^*)$, then we can commute $\iota_v$ and $\iota_u$, so that the following identity holds:
\begin{equation}\label{working}
\iota_v\iota_u(\alpha)=2\,\big\langle\alpha,u\odot v\big\rangle_{S^2(E)}=(-1)^{|u||v|}2\,\big\langle\alpha,v\odot u\big\rangle_{S^2(E)}=(-1)^{|u||v|}\iota_u\iota_v(\alpha)
\end{equation}
An example of particular importance is when one has a basis $\{u_a\}$ of $E$, with dual basis $\{u^a\}$. Then we have:
\begin{equation}
\iota_{u_d}\iota_{u_c}\big(u^a\odot u^b\big)=2\,\big\langle u^a\odot u^b,u_c\odot u_d\big\rangle_{S^2(E)}=\delta^a_c\delta^b_d+(-1)^{|u^a||u_c|}\delta^a_d\delta^b_c\label{works2}
\end{equation}
In that case one can see that by formally identifying $\iota_{u_a}$ with the derivative $\frac{\partial}{\partial u^a}$, one obtains the right hand side of Equation \eqref{works2} by applying $\frac{\partial}{\partial u^d}\frac{\partial}{\partial u^c}$ to $u^a\odot u^b$. Notice that Equation \eqref{working} stands when it is rewritten with derivatives as well:
\begin{align}
(-1)^{|u_c||u_d|}\frac{\partial}{\partial u^c}\frac{\partial}{\partial u^d}(u^a\odot u^b)&=(-1)^{|u_c||u_d|}\delta^a_d\delta^b_c+(-1)^{|u^d|(|u_c|+|u^a|)}\delta^a_c\delta^b_d\\
&=(-1)^{-|u_c||u^a|}\delta^a_d\delta^b_c+\delta^a_c\delta^b_d\\
&=\frac{\partial}{\partial u^d}\frac{\partial}{\partial u^c}(u^a\odot u^b)
\end{align}
where the first term on the second line is non zero if and only if $|u^a|=-|u_d|$ because of the Kronecker's delta, and where the second term is non zero if and only if $|u_c|+|u^a|=0$ (for the same reason).

When one applies the suspension operator on both sides of the pairing, nothing changes:
\begin{equation}\label{suspensiondecal}
\big\langle s(\alpha),s^{-1}(u)\big\rangle_{s^{-1}E}\equiv\langle \alpha,u\rangle_E
\end{equation}
A similar equation holds when we swap $s$ with $s^{-1}$. Moreover, we have the following identity that we will use from time to time:
\begin{equation}\label{suspensiondecal2}
\Big\langle \big(s^2\odot s^2\big)(\alpha),u\odot v\Big\rangle_{S^2(E)}= \Big\langle \alpha\,,s^2(u)\odot s^2(v)\Big\rangle_{S^2(s^2E)}
\end{equation}
for every $u,v\in E$ and $\alpha\in S^2\big(s^{-2}(E^*)\big)$. Let $P:E\to F$ be a degree $p$ linear map between two graded vector spaces, then we define its dual $P^*:F^*\to E^*$ by:
\begin{equation}\label{dualitysmooth}
\big\langle P^*(\alpha),u\big\rangle_{E}\equiv (-1)^{p|\alpha|} \big\langle\alpha,P(u)\big\rangle_F
\end{equation}
This equation does not hold when $P$ is a representation, because usually in that case, the contragredient representation induces only a minus sign. For example, for $\mathfrak{g}$ a Lie algebra acting on a $\mathfrak{g}$-module $V$, we have:
\begin{equation}\label{dualitysmooth2}
\big\langle \rho^\vee_a(\alpha),x\big\rangle_{V}\equiv -\big\langle\alpha,\rho_a(x)\big\rangle_E
\end{equation}
for any $a\in\mathfrak{g}$, $x\in V$ and $\alpha\in V^*$. In the following we denote by $\rho^\vee$ the contragredient represention, induced by $\rho:\mathfrak{g}\to \mathrm{End}(V)$.

We can now give the equivalence that is of interest for us:
\begin{theoreme}\label{correspondence} Let $E=(E_i)_{i\in\mathbb{Z}}$ be a graded vector space. Then differential graded Lie algebra structures on $E$ are in one-to-one correspondence with differential graded manifold structures of arity at most one on the pointed graded manifold $s^{-1}E$.
\end{theoreme}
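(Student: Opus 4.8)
The plan is to set up an explicit dictionary between the two kinds of data and then show that it intertwines the respective integrability conditions. Fix a homogeneous basis $\{u_a\}$ of $s^{-1}E$ with dual basis $\{u^a\}$, so that functions on the pointed manifold $s^{-1}E$ are polynomials in the $u^a$ and a vector field is an element of $S\big((s^{-1}E)^*\big)\otimes s^{-1}E$. A degree $+1$ vector field of arity at most one decomposes as $Q=Q_0+Q_1$, where $Q_0$ has arity $0$ and $Q_1$ has arity $1$ (the arity $-1$ part is a constant vector field, which we discard since $Q$ vanishes at the distinguished point). Writing $Q_0=m^a_{\,b}\,u^b\,\iota_{u_a}$ and $Q_1=\tfrac12 f^a_{\,bc}\,u^b u^c\,\iota_{u_a}$, the coefficients are recovered through the contraction rules \eqref{identification}--\eqref{works2}; these structure constants are precisely the data of a degree $+1$ endomorphism and a degree $+1$ symmetric bilinear map on $s^{-1}E$.

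First I would make the bijection of data precise. The arity-$0$ part $Q_0$ is, under the identification $(s^{-1}E)_k=E_{k+1}$, a degree $+1$ linear map on $E$, which I declare to be the differential $\partial$; this matches the grading $\partial=(\partial_k\colon E_{k-1}\to E_k)$ of Definition \ref{def:Lie}. The arity-$1$ part $Q_1$ is a degree $+1$ symmetric map $S^2(s^{-1}E)\to s^{-1}E$; applying the suspension isomorphism (whose effect on degrees and on graded symmetry is recalled just before the statement, with precise sign formulas in \cite{Fiorenza}) turns it into a degree $0$ skew-symmetric bracket $[\,.\,,.\,]\colon\wedge^2 E\to E$. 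Since suspension and desuspension are mutually inverse isomorphisms and the passage $Q\mapsto(m^a_{\,b},f^a_{\,bc})$ is manifestly invertible, this assignment is a bijection between degree $+1$, arity-$\le 1$ vector fields on $s^{-1}E$ and pairs $(\partial,[\,.\,,.\,])$ of the correct degrees on $E$.

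Next I would match the two integrability conditions. Because arity is additive under the graded commutator of vector fields, the homological condition splits as $[Q,Q]=[Q_0,Q_0]+2[Q_0,Q_1]+[Q_1,Q_1]$, where the three summands have arity $0$, $1$ and $2$ respectively and hence lie in independent graded components of $S\big((s^{-1}E)^*\big)\otimes s^{-1}E$. Thus $[Q,Q]=0$ is equivalent to the simultaneous vanishing of all three. The plan is then to identify each piece: $[Q_0,Q_0]=0$ unwinds to $\partial^2=0$; $[Q_0,Q_1]=0$ unwinds, after desuspension, to the statement that $\partial$ is a derivation of the bracket, i.e.\ Equation \eqref{eq:compatibility}; and $[Q_1,Q_1]=0$ unwinds to the graded Jacobi identity of Definition \ref{def:Lie}. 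Concretely I would evaluate each commutator on the coordinate functions $u^a$ and pair the result against $\iota_{u_b},\iota_{u_c},\iota_{u_d}$ via \eqref{identification2}--\eqref{works2}, extracting polynomial identities among the $m^a_{\,b}$ and $f^a_{\,bc}$ that coincide, term by term, with the three dgLa axioms.

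The main obstacle will be the sign bookkeeping rather than the structure of the argument. Two sources of Koszul signs must be reconciled: those produced by commuting the odd contractions $\iota_{u_a}$ past the coordinates (governed by \eqref{working} and \eqref{works2}), and those introduced by the suspension, which both shifts degrees and exchanges graded symmetry for graded skew-symmetry. I would therefore carry out the desuspension of $Q_1$ with the explicit conventions of \cite{Fiorenza}, and check that the $(-1)^{|x|}$ in the Leibniz rule \eqref{eq:compatibility} and the $(-1)^{|x||y|}$ in the graded Jacobi identity emerge with exactly the right signs; this is where a careful, but essentially mechanical, verification is required. Once the signs are confirmed on basis elements, bilinearity extends the identities to all of $E$, completing the equivalence and hence the bijective correspondence asserted in Theorem \ref{correspondence}.
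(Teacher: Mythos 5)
Your proposal is correct and follows essentially the same route as the paper, which simply records the derived-bracket formulas $\iota_{s^{-1}[x,y]}=(-1)^{|x|}\big[[Q,\iota_{s^{-1}(x)}],\iota_{s^{-1}(y)}\big]$ and $\iota_{s^{-1}(\partial(x))}=-[Q,\iota_{s^{-1}(x)}]\big|_0$ and delegates the verification that $[Q,Q]=0$ encodes the Leibniz and Jacobi identities to \cite{Fiorenza} and \cite{voronov2}. Your arity decomposition $Q=Q_0+Q_1$ and the splitting of $[Q,Q]=0$ into its arity $0$, $1$ and $2$ components is exactly the content of that citation, worked out explicitly.
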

%\begin{theoreme}\label{correspondence} Let $E=(E_i)_{i\in\mathbb{Z}}$ be a graded vector space. Then graded Lie algebra structures on $E$ are in one-to-one correspondence with differential graded manifold structures of arity one on the pointed graded manifold $s^{-1}E$.
%\end{theoreme}
\begin{proof}
The formulas to pass from one structure to another are taken from \cite{Fiorenza} and \cite{voronov2}. Given $x,y\in E$, the relationship between $[x,y]$ in $E$ and the corresponding homological vector field $Q$ is given by:
\begin{equation}\label{voronov2bis}
\iota_{s^{-1}[x,y]}=(-1)^{|x|}\big[[Q,\iota_{s^{-1}(x)}],\iota_{s^{-1}(y)}\big]
%s\left(\big[[\pi,s^{-1}(x)]_{T(s^{-1})^2 V},s^{-1}(y)\big]_{T(s^{-1})^2 V}\right)
\end{equation}
where on the right hand side, we use the bracket of (graded) vector fields on $s^{-1}E$. On the other hand, the differential $\partial$ satisfies:
\begin{equation}\label{bracketbis}
\iota_{s^{-1}(\partial(x))}=-[Q,\iota_{s^{-1}(x)}]\big|_0
\end{equation}
where the sub-script $|_0$ means that the vector field is constant and its value is the one taken at the origin. Formulas \eqref{voronov2bis} and \eqref{bracketbis} provide a one-to-one correspondence between the differential graded Lie algebra structure on $E$ and the differential graded manifold structure on $s^{-1}E$. The Jacobi and Leibniz identities are indeed incapsulated into the homological condition $[Q,Q]=0$. More details are found in \cite{Fiorenza} and \cite{voronov2} .
\end{proof}

%\begin{example}
%Let $\big(\mathfrak{g}=\mathfrak{g}_0\oplus\mathfrak{g}_{-1},[\,.\,,.\,]\big)$ be a graded Lie algebra. Given a basis $(e_i)_{1\leq i\leq n}$ of $\mathfrak{g}_0$ and $(f_a)_{1\leq a\leq m}$ of $\mathfrak{g}_{-1}$, there exist tensors $C_{ij}^k$ and $C_{ia}^b$ such that:% and $d_a^i$ such that:
%\begin{equation}
%%\partial(f_a)=d_a^i\,e_i\,,\hspace{1cm}
%[e_i,e_j]=C_{ij}^k\,e_k\hspace{1cm}\text{and}\hspace{1cm}[e_i,f_a]=C_{ia}^b\,f_b
%\end{equation}
%Setting $(\widetilde{e}_i)_{1\leq i\leq n}$ be the basis for $s^{-1}\mathfrak{g}_0$ and $(\widetilde{f}_a)_{1\leq a\leq m}$ be the basis for $s^{-1}\mathfrak{g}_{-1}$, the corresponding homological vector field on $s^{-1}\mathfrak{g}$ is:
%\begin{equation}
%Q=%-d_a^i\,\widetilde{f}^{*a}\otimes\iota_{\widetilde{e}_i}
%-\frac{1}{2}C_{ij}^k\,\widetilde{e}^{*i}\widetilde{e}^{*j}\otimes\iota_{\widetilde{e}_k}-C_{ia}^b\,\widetilde{e}^{*i}\widetilde{f}^{*a}\otimes\iota_{\widetilde{f}_b}
%\end{equation}
%where the star denotes the dual basis.
%\end{example}

\begin{example}
Let $\big(\mathfrak{g}=\mathfrak{g}_0\oplus\mathfrak{g}_{-1},\partial,[\,.\,,.\,]\big)$ be a differential graded Lie algebra. Given a basis $(e_i)_{1\leq i\leq n}$ of $\mathfrak{g}_0$ and $(f_a)_{1\leq a\leq m}$ of $\mathfrak{g}_{-1}$, there exist tensors $C_{ij}^k$, $C_{ia}^b$ and $d_a^i$ such that:
\begin{equation}
\partial(f_a)=d_a^i\,e_i\,,\hspace{1cm}[e_i,e_j]=C_{ij}^k\,e_k\hspace{1cm}\text{and}\hspace{1cm}[e_i,f_a]=C_{ia}^b\,f_b
\end{equation}
Setting $(\widetilde{e}_i)_{1\leq i\leq n}$ be the basis for $s^{-1}\mathfrak{g}_0$ and $(\widetilde{f}_a)_{1\leq a\leq m}$ be the basis for $s^{-1}\mathfrak{g}_{-1}$, the corresponding homological vector field on $s^{-1}\mathfrak{g}$ is:
\begin{equation}
Q=-d_a^i\,\widetilde{f}^{*a}\otimes\iota_{\widetilde{e}_i}-\frac{1}{2}C_{ij}^k\,\widetilde{e}^{*i}\widetilde{e}^{*j}\otimes\iota_{\widetilde{e}_k}-C_{ia}^b\,\widetilde{e}^{*i}\widetilde{f}^{*a}\otimes\iota_{\widetilde{f}_b}
\end{equation}
where the star denotes the dual basis.
\end{example}

Given this one-to-one correspondence, we can define a cohomology on any  graded Lie algebra that mimics the Chevalley-Eilenberg cohomology of Lie algebras. Let $\big(\mathfrak{g}%=\bigoplus_{i\in \mathbb{Z}}\mathfrak{g}_i
,[\,.\,,.\,]\big)$ be a graded Lie algebra, and let $(s^{-1}\mathfrak{g}, Q)$ be the associated differential graded manifold structure. The homological vector field $Q$ can be seen as a differential on $S\big((s^{-1}\mathfrak{g})^*\big)$. Since the only non vanishing term in $Q$ is of arity one, it defines a chain complex of graded vector spaces:
\begin{center}
\begin{tikzcd}[column sep=0.7cm,row sep=0.4cm]
0\ar[r]&(s^{-1}\mathfrak{g})^*\ar[r,"Q"]&S^2\big((s^{-1}\mathfrak{g})^*\big)\ar[r,"Q"]&S^3\big((s^{-1}\mathfrak{g})^*\big)\ar[r,"Q"]&\ldots
\end{tikzcd}
\end{center}
This sequence can be augmented on the left to the Chevalley-Eilenberg complex of $\mathfrak{g}$ acting trivially on $\mathbb{R}$, when $Q$ is identified with the Chevalley-Eilenberg differential $\dd_{\mathrm{CE}}$:
\begin{center}
\begin{tikzcd}[column sep=0.7cm,row sep=0.4cm]
0\ar[r]&\mathbb{R}\ar[r,"0"]&\mathrm{Hom}(\mathfrak{g},\mathbb{R})\ar[r, "\dd_{\mathrm{CE}}"]&\mathrm{Hom}(\wedge^2\mathfrak{g},\mathbb{R})\ar[r, "\dd_{\mathrm{CE}}"]&\mathrm{Hom}(\wedge^3\mathfrak{g},\mathbb{R})\ar[r, "\dd_{\mathrm{CE}}"]&\ldots
\end{tikzcd}
\end{center}
The cohomology that is associated to this complex is called \emph{the Chevalley-Eilenberg cohomology of the graded Lie algebra $\mathfrak{g}$} and it is noted $H_{\mathrm{CE}}(\mathfrak{g})=\bigoplus_{k\geq0}H^k_{\mathrm{CE}}(\mathfrak{g})$. The spaces $H^k_{\mathrm{CE}}(\mathfrak{g})$ inherit the grading of $\mathfrak{g}$.
When $\mathfrak{g}$ is restricted to non-positive degrees, i.e. when $\mathfrak{g}=\bigoplus_{k\geq1}\mathfrak{g}_{-k}$, we have $\mathrm{d}_{\mathrm{CE}}\big((\mathfrak{g}_{-1})^*\big)=0$ and $[\mathfrak{g}_{-1},\mathfrak{g}_{-1}]\subset \mathfrak{g}_{-2}$, which implies the following two inclusions: 
\begin{equation*}
(\mathfrak{g}_{-1})^*\subset H^1_{\mathrm{CE}}(\mathfrak{g})\hspace{1cm}\text{and}\hspace{1cm}\bigslant{\wedge^2 (\mathfrak{g}_{-1})^*}{\mathrm{d}_{\mathrm{CE}}\big((\mathfrak{g}_{-2})^*\big)}\subset H^2_{\mathrm{CE}}(\mathfrak{g})
\end{equation*}
When it is an equality, it means that the restriction of the map $\mathrm{d}_{\mathrm{CE}}$ to any $(\mathfrak{g}_{-k})^*$, for $k>1$, is injective, and that $\mathrm{Im}(\mathrm{d}_{\mathrm{CE}}|_{(\mathfrak{g}_{-k})^*})=\mathrm{Ker}(\mathrm{d}_{\mathrm{CE}}|_{\wedge^2(\mathfrak{g}^*)_{k+1}})$. This property will be important in the following so that it deserves a name:
\begin{definition}
We say that a strictly negatively graded Lie algebra $\mathfrak{g}=\bigoplus_{i\geq1}\mathfrak{g}_{-i}$ is \emph{robust} when either the first or the second following conditions holds: 
\begin{enumerate}
\item $\mathfrak{g}$ is of depth 1, i.e. when $\mathfrak{g}=\mathfrak{g}_{-1}$,
\item if its depth is higher than 1, when the following equalities are satisfied\footnote{Notice that in the final, printed version of the present paper, an important typo has escaped scrutiny of the proofreading, as there it is written $H^2_{\mathrm{CE}}(\mathfrak{g})=\mathrm{d}_{\mathrm{CE}}\big((\mathfrak{g}_{-2})^*\big)$.}:%$H^1_{\mathrm{CE}}(\mathfrak{g})=(\mathfrak{g}_{-1})^*$ and $H^2_{\mathrm{CE}}(\mathfrak{g})=\mathrm{d}_{\mathrm{CE}}\big((\mathfrak{g}_{-2})^*\big)$. 
\begin{equation*}
H^1_{\mathrm{CE}}(\mathfrak{g})=(\mathfrak{g}_{-1})^*\hspace{1cm}\text{and}\hspace{1cm}H^2_{\mathrm{CE}}(\mathfrak{g})=\bigslant{\wedge^2 (\mathfrak{g}_{-1})^*}{\mathrm{d}_{\mathrm{CE}}\big((\mathfrak{g}_{-2})^*\big)}
\end{equation*}
\end{enumerate}
\end{definition}

\begin{remarque}
Notice that in the first case, the Lie algebra structure is trivial for degree reasons: the bracket of two elements of degree $-1$ should be of degree $-2$, but there is no space of degree $-2$ in the first item.
\end{remarque}

\section{Building the tensor hierarchy }

This section is devoted to the construction of a tensor hierarchy algebra associated to a Lie-Leibniz triple $\mathcal{V}=(\mathfrak{g},V,\Theta)$. Let $W$ be the bud of $\mathcal{V}$ and let $\mathrm{d}$ be the collar of $\mathcal{V}$. Since $\mathrm{Im}\big(\{.\,,.\}\big)\subset\mathrm{Ker}(\bullet)$, the following diagram is commutative and the composition of arrows is zero:

\begin{center}
\begin{tikzpicture}
\matrix(a)[matrix of math nodes, 
row sep=5em, column sep=6em, 
text height=1.5ex, text depth=0.25ex] 
{&W&\\ 
S^2(V)&V&\mathrm{Der}(V)\\
&\mathfrak{h}&\\}; 
\path[->>](a-2-1) edge node[above left]{$\Pi_W$} (a-1-2); 
\path[->](a-1-2) edge node[right]{$\dd$} (a-2-2);
\path[->>](a-2-2) edge node[left]{$\Theta$} (a-3-2);
\path[->](a-3-2) edge node[above left]{$\rho$} (a-2-3);
\path[->](a-1-2) edge node[right]{$\dd$} (a-2-2);
\path[->](a-2-1) edge node[above]{$\{.\,,.\}$} (a-2-2);
\path[->](a-2-2) edge node[above]{$\bullet$} (a-2-3);
\end{tikzpicture}
\end{center}

 The motivation for the construction of the tensor hierarchy relies on the observation that if one consider elements of $\mathfrak{h}$, $V$ and $W$ as having degree $0$, $-1$ and $-2$, respectively, the maps $\rho$ and $\Pi_W$ induce a  skew-symmetric bracket on the graded vector space $\mathfrak{h}\oplus V\oplus W$.
Unfortunately, for degree reasons, they do not define a graded Lie algebra structure, since the Jacobi identity cannot be satisfied. This justifies to find a vector space $X$ with degree $-3$ and adapted brackets that would enable the closure of the Jacobi identity. The goal of this section is to construct the tower of spaces that defines the tensor hierarchy. We will then show that this graded vector space can be equipped with a differential graded Lie algebra structure that contain all relevant informations required by gauging procedures in supergravity.

In \cite{palmkvist1}, the tensor hierarchy algebra is defined using Borcherds algebras. One quotients out some particular ideal from the free Lie algebra of $V$. This top-down approach gives, up to a sign change in the grading, a differential graded Lie algebra structure on some graded vector space $T=\bigoplus_{k\geq-1}T_{-k}$, with $T_{+1}=T_\Theta$, $T_0=\mathfrak{g}$,  $T_{-1}=V$, and where each $T_{-k}$ for $k\geq2$ is a quotient of $\big[\cdots[[V,V],V]\cdots\big]$ (with $i$ copies of $V$), see also \cite{palmkvist}. This algebraic structure on $T$ is called a tensor hierarchy algebra. In particular, it is suggested in \cite{Cederwall2} that the graded Lie algebra structure induced on $T'\equiv \bigoplus_{k\geq1}T_{-k}$ is robust.
In this section, we  present a bottom-up construction alternative to the one given in \cite{palmkvist, palmkvist1}. We are convinced that it gives a tensor hierarchy algebra structure on $T'\oplus\mathfrak{h}$ that is the mere restriction of the tensor hierarchy algebra structure on $T$ described in \cite{palmkvist, palmkvist1}. %There are some clues that this claim is true in a paper by Martin Cederwall and Jakob Palmkvist, see \cite{Cederwall2}.

We believe that the definition given in \cite{palmkvist} is the correct definition of a tensor hierarchy algebra, but we chose the reverse convention on the grading, and we do not consider $T_{+1}$ nor $T_0$ in the same way as in \cite{palmkvist}:  %chose a slightly different presentation, in which we do not use this formal element, and we use the plain differential instead:

\begin{definition}\label{def:tensoralgebra}
Let $\mathcal{V}=(\mathfrak{g},V,\Theta)$ be a Lie-Leibniz triple, let $\mathfrak{h}$ denote $\mathrm{Im}(\Theta)$ and let $W$ be the bud of $\mathcal{V}$. A \emph{tensor hierarchy algebra associated to $\mathcal{V}$} is a differential graded Lie algebra $\big( T,\partial,[\,.\,,.\,]\big)$ that consists of a negatively graded $\mathfrak{g}$-module $T=(T_{-k})_{k\geq0}$ (i.e. such that for every $k\geq1$, $T_{-k}$ is a $\mathfrak{g}$-module) that satisfies:
\begin{enumerate}
%\item $T_{1}=s(T_\Theta)$,% is the representation to which $\Theta$ belongs;
\item $T_0=\mathfrak{h}$,
\item  $T_{-1}= s^{-1}V$, and
\item  $T_{-2}=s^{-2}W$. % with the representation $\eta_{-1}\equiv s^{-1}\circ\eta_V\circ s$, and 
%$\eta_{-1}\equiv s^{-1}\circ\eta_V\circ s$ (resp. $\eta_{-2}\equiv s^{-2}\circ\eta_W\circ s^2$),
% where $\eta_V$ (resp. $\eta_W$) is the representation of $\mathfrak{g}$ on $V$ (resp. $W$).
%with the quotient representation $\eta_{-2}\equiv s^{-2}\circ\eta_W\circ s^2$,
%\begin{equation}
%H^1_{CE}( T')=0\hspace{1cm}\text{and}\hspace{1cm}H^2_{CE}( T')=0
%\end{equation} %where $H^i_{CE}( T')$ is the Chevalley-Eilenberg cohomology of the graded Lie algebra $\big( T',[\,.\,,.\,]'\big)$,%\item $T_{-i}\subset s^{-i}\big(V^{\otimes i}\big)$ for every $i\geq3$;
%\item for every $i\geq1$, the decomposition of $T_{-i}^*\otimes T_{-i+1}$ into irreducible $\mathfrak{g}$-modules contains a sub-module $T'_{-i}$ that is isomorphic to $T_\Theta$;
\end{enumerate}
% T_{V(\mathfrak{g},\Theta)}=
The graded Lie bracket $[\,.\,,.\,]$ is  such that:
\begin{enumerate}\setcounter{enumi}{3}
\item the graded Lie algebra $\big((T_{-k})_{k\geq1},[\,.\,,.\,]\big)$ is robust and the bracket is $\mathfrak{g}$-equivariant:
\begin{equation}
\big[\eta_{-k,a}(x),y\big]+\big[x,\eta_{-k,a}(y)\big]=\eta_{-k-l,a}\big([x,y]\big)
\end{equation}
for every $x\in T_{k}$,$y\in T_{-l}$ and $a\in\mathfrak{g}$, where $k,l\geq1$;
\item the bracket $[\,.\,,.\,]:T_{-1}\otimes T_{-1}\to T_{-2}$ satisfies, for all $x,y\in T_{-1}$:
\begin{equation}\label{conditioncrochet}
[x,y]\equiv2 \,s^{-2}\circ\Pi_W\big(s(x),s(y)\big)
\end{equation}
where $\Pi_W:S^2(V)\to W$ is the canonical projection on the bud of $\mathcal{V}$;
\item the bracket on $T_0$ is the Lie bracket on $\mathfrak{h}$;
\item for all $k\geq1$, the bracket $[\,.\,,.\,]: T_0\otimes T_{-k}\to T_{-k}$ is defined by the action of $\mathfrak{h}$ on $T_{-k}$:
\begin{equation}\label{forkequalzero}
%[a,x]=s^{i}\rho_{a}\big(s^{-i}x\big)=-[x,a]
\forall\ a\in\mathfrak{h},x\in T_{-k}\hspace{1cm}[a,x]\equiv\eta_{-k,a}(x)=-[x,a]
\end{equation}
where $\eta_{-k}:\mathfrak{g}\to \mathrm{End}(T_{-k})$ encodes the $\mathfrak{g}$-module structure on $T_{-k}$.
\end{enumerate}
The differential $\partial=\big(\partial_{-k}:T_{-k-1}\to T_{-k}\big)_{k\geq0}$  satisfies at highest levels:
\begin{enumerate}\setcounter{enumi}{7}
\item $\partial_{0}\equiv-\Theta\circ s$
\item
$\partial_{-1}\equiv -s^{-1}\circ\dd\circ s^2$
%\hspace{1cm}\text{and}\hspace{1cm} \partial_{+1}\equiv-\rho^\Theta(\Theta)
\end{enumerate}
where $\dd$ is the collar of $\mathcal{V}$.
% and $\rho^\Theta':R\to\mathfrak{g}^*\otimes R$ is the map induced by the representation $\rho^\Theta$;
%\item for every $i\geq1$, the map $\partial_{-i+1}$ is an element of $E_{-i}^*\otimes E_{-i+1}$ that belongs to $T'_{-i}$;
%\item  the differential $\partial$ is $\mathrm{Im}(\Theta)$-equivariant and the 2-bracket $[\,.\,,.\,]$ is $\mathfrak{g}$-equivariant.
%Moreover $\Theta$ defines an element of $T_{1}$, and the bracket of $\Theta\in T_{1}$ with elements of $ T$ is:
%\begin{equation}
%[\Theta,.\, ]\equiv\partial
%\end{equation}
%All other brackets with elements of $T_1$ are obtained from this equation and the Jacobi identity.
\end{definition}

\begin{remarques}
\begin{enumerate}
\item If the Leibniz algebra $V$ is a Lie algebra, then its bud $W$ is the zero vector space, and the depth of the corresponding tensor hierarchy algebra is 1. 
\item The data that $T_{-2}=s^{-2}W$ and that the bracket between two elements of $T_{-1}$ satisfy Equation \eqref{conditioncrochet} are important since they are characterizing the tensor hierarchies in supergravity.
%Then the only items that are not trivial are items 1., 2., 4., 5., 7., and 8.
\item If one defines $\eta_0:\mathfrak{h}\to \mathrm{End}(\mathfrak{h})$ to be the adjoint action, then Equation \eqref{forkequalzero} is even consistent for $k=0$.
\item The fact that the algebra degree stops at 0 implies that $\partial(a)=0$ for every $a\in\mathfrak{h}$. Then, by the derivation property of the differential, we deduce that $\partial$ is $\mathfrak{h}$-equivariant:
\begin{equation}
\partial_{-k+1}\big(\eta_{-k,a}(x)\big)=\eta_{-k,a}\big(\partial_{-k} (x)\big)
\end{equation}
for every $x\in T_{-k}$, where $k\geq1$, and every $a\in\mathfrak{h}$. However, it may not be $\mathfrak{g}$-equivariant. In supergravity theories, the differential $\partial$ is actually an element of $T_\Theta$ \cite{Trigiante}.
 
%By the homological property of the differential, we obtain that $[\Theta,\Theta]=0$, which corresponds to the quadratic constraint, see section \ref{sec:hierarchy}. This is consistent with the fact that there is no space of degree higher than 1, hence the bracket between two elements of degree 1 should vanish.
%\item Since $T_1\simeq T_\Theta$ is an irreducible representation of $\mathfrak{g}$,  it is generated by elements of the form $\rho^\Theta_a(\Theta)$ for every $a\in\mathfrak{g}$. In other words it is the image of $\partial_1$.

\item This algebra is related to the tensor hierarchy algebra defined in \cite{palmkvist}, by noticing that the differential $\partial$ can be seen as the adjoint action of an element of degree $+1$. Indeed let $T_{+1}\equiv s\big(\langle \Theta\rangle\big)$ be the one dimensional space generated by the embedding tensor $\Theta$. The differential $\partial$ is then related to the embedding tensor by the following equation: \begin{equation}\partial=[\Theta,.\,]\end{equation}
This applies in particular to $\Theta$, giving $[\Theta,\Theta]=0$ that is precisely the quadratic constraint, and that implies the cohomological condition $\partial^2=0$. This is consistent with the fact that the differential $\partial$ is zero on $T_0=\mathfrak{h}$, because, for every $a\in\mathfrak{h}$ it would write $0=[\Theta,a]=-\rho_a(\Theta)$ which corresponds to the $\mathfrak{h}$-invariance of $\Theta$.
\end{enumerate}
\end{remarques}

The notion of morphism between two tensor hierarchy algebras have to be compatible with the underlying Lie-Leibniz triples:

\begin{definition}
Let $\big( T,\partial,[\,.\,,.\,]\big)$ (resp. $\big(\overline{T},\overline{\partial},\overline{[\,.\,,.\,]}\big)$) be a tensor hierarchy algebra associated to some Lie-Leibniz triple $(\mathfrak{g},V,\Theta)$ (resp. $(\overline{\mathfrak{g}},\overline{V},\overline{\Theta})$). A \emph{tensor hierarchy algebra morphism between $T$ and $\overline{T}$} is a couple $(\varphi,\phi)$, where $\varphi:\mathfrak{g}\to\overline{\mathfrak{g}}$ is a Lie algebra morphism, and where $\phi:T\to \overline{T}$ is a differential graded Lie algebra morphism such that $\phi_0=\varphi\big|_\mathfrak{h}$ and:
\begin{equation}\label{insup}
\overline{\eta}_{-k,\varphi(a)}\circ\phi_{-k}=\phi_{-k}\circ\eta_{-k,a}
\end{equation}
for every $k\geq1$ and every $a\in\mathfrak{g}$. 

When $T$ and $\overline{T}$ have the same depth $i\in\mathbb{N}^*\cup\{\infty\}$,
we say that the tensor hierarchy algebra morphism $\phi: T\to\overline{T}$ is an \emph{isomorphism} if:
\begin{enumerate}
\item $\varphi:\mathfrak{g}\to\overline{\mathfrak{g}}$ is a Lie algebra isomorphism, and
\item if $\phi_k:T_{-k}\to\overline{T}_{-k}$ is an isomorphism for every $1\leq k <i+1$.
\end{enumerate}
%
%that satisfies the following two conditions:
%\begin{enumerate}
%\item the map $\phi$ intertwines $\eta$ and $\overline{\eta}$, i.e. for every $k\geq0$ and every $a\in\mathfrak{g}$:
%\begin{equation}
%\overline{\eta}_{-k,\phi_0(a)}\circ\phi_{-k}=\Phi_{-k}\circ\eta_{-k,a}
%\end{equation}
%\item the couple $\big(\Phi_0\,,s\circ\Phi_{-1}\circ s^{-1}\big)$ is a Lie-Leibniz triple morphism between $\mathcal{V}$ and $\overline{\mathcal{V}}$.
%\end{enumerate}
%such that the couple $\big(\Phi_0\,,s\circ\Phi_{-1}\circ s^{-1}\big)$ is a Lie-Leibniz triple morphism between $\big(T_0,sT_{-1}, -\partial_0\circ s^{-1}\big)$ and $\big(\overline{T}_0,s\overline{T}_{-1}, -\overline{\partial}_0\circ s^{-1}\big)$, where $\Phi_0:T_0\to \overline{T}_0$  and $\Phi_{-1}:T_{-1}\to \overline{T}_{-1}$ are the first two components of $\Phi$.
\end{definition}

\begin{remarques} \begin{enumerate}
\item Notice that this automatically implies that the couple $\big(\phi_0\,,s\circ\phi_{-1}\circ s^{-1}\big)$ is a Lie-Leibniz triple morphism between $\mathcal{V}$ and $\overline{\mathcal{V}}$. In particular, the condition that $s\circ\phi_{-1}\circ s^{-1}$ is a Leibniz algebra morphism follows from Equations \eqref{eq:compat} and \eqref{insup}.
\item The notion of tensor hierarchy algebra morphism differs from the usual notion of quasi-isomorphisms in the category of differential graded Lie algebras, for the latter is onjly bijective at the cohomology level.
\end{enumerate}
\end{remarques}

The first step to build a tensor hierarchy associated to $V$ is to define a chain complex:
\begin{center}
\begin{tikzcd}[column sep=0.7cm,row sep=0.4cm]
0&\ar[l]T_{0}&\ar[l,"\partial_{0}"]T_{-1}&\ar[l,"\partial_{-1}"]T_{-2}&\ar[l,"\partial_{-2}"]T_{-3}&\ar[l]\ldots
\end{tikzcd}
\end{center}
in which we expect that $T_0=\mathfrak{h}$, $T_{-1}=s^{-1}(V)$ and $T_{-2}=s^{-2}W$. Our goal is to show that the process of constructing this structure is unique and straightforward. We have been inspired by the construction that is performed in gauging procedures in supergravity \cite{Henning08, Henning08bis}. 
% We then show that this chain complex admits a $L_\infty$-algebra structure that extends the skew-symmetric bracket $[\,.\,,.\,]$ of $V_0$. The higher Jacobi identities will be automatically satisfied by construction.
%Let $\mathcal{K}\subset V$ be a central ideal that contains the squares. Then following the discussion in subsection \ref{leibnizalgebras}, the triple $(\mathfrak{h}_{\mathcal{K}},V,\pi_\mathcal{K})$ is an Lie-Leibniz triple. It is then natural to define $R_{0}=\mathfrak{g}_V$, the gauge algebra over $\mathcal{K}$.
%Next, choose a $\mathfrak{h}_{\mathcal{K}}$-sub-module $W$ of $S^2(V)$ that is contained in $\mathrm{Ker}\big(\{.\,,.\}\big)$. Then we impose $R_{-2}=s^{-2}(\bigslant{S^2(V)}{W})$. 
We will proceed in two steps: first, from a Lie-Leibniz triple, construct a chain complex:
\begin{center}
\begin{tikzcd}[column sep=0.7cm,row sep=0.4cm]
0\ar[r]&U_{0}\ar[r,"\delta_{1}"]&U_{1}\ar[r,"\delta_{2}"]&U_{2}\ar[r,"\delta_{3}"]&U_{3}\ar[r]&\ldots
\end{tikzcd}
\end{center}
that has some adequate properties, e.g. $U_0=V^*$, $U_1=s(W^*)$ and $\delta_1=s\circ\mathrm{d}^*$. The complex $S(U)$ has then to be equipped with some maps that have some convenient properties. This is worked out in Sections \ref{tower} and \ref{london} where some unicity results are discussed. Second, define the shifted dual of this chain complex via the following equality:
\begin{equation*}
T_{-k}\equiv s^{-1}(U_{k-1}^*)\hspace{1cm}\text{for any $k\geq1$}
\end{equation*}
Then, using the data attached to the chain complex $U=(U_i)_{i\geq0}$, we show that the following chain complex:
\begin{center}
\begin{tikzcd}[column sep=0.7cm,row sep=0.4cm]
0&\ar[l]T_{-1}&\ar[l,"\partial_{-1}"]T_{-2}&\ar[l,"\partial_{-2}"]T_{-3}&\ar[l]\ldots
\end{tikzcd}
\end{center}
can be equipped with a robust graded Lie algebra structure. This algebraic structure is not totally compatible with the differential $\partial$, unless we add a space $T_0\equiv \mathfrak{h}$ at level 0. Then by a cautious analysis of the brackets and of the differentials, we conclude that $ T=(T_{-k})_{k\geq0}$ can be equipped by a tensor hierarchy algebra structure. 
The discussion on this second point takes place in Section \ref{sec:hierarchy}, where we conclude that every Lie-Leibniz triple induces a unique tensor hierarchy algebra.  
Section \ref{sectionexamples} then provides examples that are presented in details so that the construction that is made in the preceding sections make sense.

Finally, we would like to emphasize that the construction of the tensor hierarchy algebra that is given in this paper is only the first step toward a better understanding of gauging procedures in supergravity. The next step will be to find a convincing way of building the $L_\infty$-algebras involved in supergavity, from the data of these tensor hierarchy algebras. This topic is not present in the present paper, because it is still under investigation.

%As a direct application, Section \ref{nextensions} is devoted to the problem of defining a Lie $n$-extension to every Leibniz algebra, to which we provide a positive answer.
%However the problem of finding a differential graded Lie algebra structure on $ T=(T_{-i})_{i\geq-1}$ is more subtle. First, we can show that the maps used for the construction of the chain complex $(U_{i})_{i\geq0}$ induce a differential graded Lie algebra structure on the graded vector space $T'=(T_{-i})_{i\geq2}$. To extend this structure to all of $ T$, a cautious analysis has to be performed. 

\subsection{The stem of a Lie-Leibniz triple}
\label{tower}

The aim of this section is to define the `stem' of a tensor hierarchy algebra associated to a Lie-Leibniz triple, that is: the (possibly infinite) tower of space which is underlying the tensor hierarchy algebra. The construction of this tower of spaces is made by induction. Let $\mathcal{V}=(\mathfrak{g},V,\Theta)$ be a Lie-Leibniz triple and let $W\subset S^2(V)$ be the bud of $\mathcal{V}$. Then we set $U_{0}=V^*$ and $U_1=s(W^*)$. In this setup, the shifted dual of the collar $\dd$ becomes a degree $+1$ map that we call $\delta_{1}$:
\begin{equation}
\delta_1=s\circ\dd^*:U_0\to U_1
\end{equation}
where the dual is taken with respect to the pairing between $V,V^*$ and $W,W^*$, as given in Equation \eqref{dualitysmooth}:
\begin{equation}\label{pouuf}
\big\langle\mathrm{d}^*(\alpha),u\big\rangle_{W}=\big\langle \alpha,\mathrm{d}(u)\big\rangle_V
\end{equation}
for every $\alpha\in V^*$ and $u\in W$. Here, $V\oplus W$ is seen as a mere vector space, and $\mathrm{d}$ as a degree 0 endomorphism, so that $\delta_1$ is a degree $+1$ linear mapping. We see that we have defined the two first spaces of a chain complex:
\begin{center}
\begin{tikzcd}[column sep=0.7cm,row sep=0.4cm]
0\ar[r]&U_{0}\ar[r,"\delta_1"]&U_{1}%\ar[r,"\delta_{2}"]&U_{2}\ar[r,"\delta_{3}"]&U_{3}\ar[r]&\ldots%&\ldots\ar[r,"\delta_{-(k-2)}"]&V^*_{-k+1}\ar[r,"\delta_{-(k-1)}"]&U_{-k}^*\ar[r,"\delta_{-k}"]&\ldots
\end{tikzcd}
\end{center}

The construction of the tensor hierarchy relies precisely on the choice of $U_{1}\simeq W$. Once this space is fixed, the procedure is unique and straightforward. It is now time to define the backbone of the construction:

\begin{definition}\label{wooo}
Let $\mathcal{V}=(\mathfrak{g},V,\Theta)$ be a Lie-Leibniz triple, let $W$ be the bud of $\mathcal{V}$ and let $\mathrm{d}$ be the collar of $\mathcal{V}$. A \emph{$i$-stem associated to $\mathcal{V}$} (for $i\in\mathbb{N}\cup\{\infty\}$) is a 4-tuple $(U,\delta, \pi,\mu)$ where
$U=(U_{k})_{0\leq k<i+1}$ is a family of $\mathfrak{g}$-modules, with respective action $\rho_k:\mathfrak{g}\to \mathrm{End}(U_{k})$, such that, if $i=0$ then $\mathcal{U}\equiv(V^*,0,0,0)$, and if $i\neq0$ we have the following conditions:
\begin{enumerate}
\item $U_{0}= V^*$ and $U_1=s(W^*)$; %and the sequence $(U_{k})$ does not converge to 0;
\item  $\rho_0$ (resp. $\rho_1$) is the contragredient action of $\mathfrak{g}$ on $V$ (resp.  $s^{-1}W$):
\begin{equation*}
\rho_0=\eta_V^{\vee}\hspace{1cm}\text{and}\hspace{1cm}\rho_1=s\circ\eta_W^\vee\circ s^{-1}
%\big\langle\,\cdot\,,\rho_0(\cdot)\big\rangle=\big\langle-\eta_V(\cdot),\cdot\,\big\rangle\hspace{1cm}\text{and}\hspace{1cm}\big\langle\,\cdot\,,\rho_1(\cdot)\big\rangle=\big\langle-s\circ\eta_W\circ s^{-1}(\cdot),\cdot\,\big\rangle
\end{equation*}
where $\eta_V$ (resp. $\eta_W$) is the representation of $\mathfrak{g}$ on $V$ (resp. $W$),
\end{enumerate} and where $\delta$, $\pi$ and $\mu$ are three families of maps, consisting of:
% is the given representation of $\mathfrak{h}_{\mathcal{K}}$ on $U_{-k}$.
%\setcounter{enumi}{2}
\begin{itemize}
\item a $\mathrm{Im}(\Theta)$-equivariant differential $\delta=\big(\delta_k:U_{k-1}\to U_{k}\big)_{1\leq k<i+1}$,  %of degree $+1$  differentials $\delta_k:U_{-k}^*\to U_{-k-1}^*$,%derivation $\delta: S^\bullet(V^*)\to S^{\bullet}(V^*)$,
\item a family $\pi=(\pi_{k})_{0\leq k<i}$ of $\mathfrak{g}$-equivariant degree $-1$ linear maps $\pi_k:U_{k+1}\to S^2(U)_{k}$, %$S^\bullet(V^*)\to S^{\bullet+1}(V^*)$,
\item a family $\mu=(\mu_{k})_{0\leq k<i}$ of degree $0$ linear maps $\mu_k:U_{k}\to S^2(U)_{k}$,%a degree $0$ derivation $\mu: S^\bullet(V^*)\to S^{\bullet+1}(V^*)$,
\end{itemize}
%The map $\theta$ (resp. $\pi$) can be extended by derivation to $S^2(V)$ (resp. $S^3(V)$) by a map $\hat{\theta}=\sum_k\theta_{-k}$ (resp. $\hat{\pi}=\sum_{k}\pi_{k}$).
that are extended to all of $S(U)$ as derivations, and such that they satisfy the following conditions:
\begin{enumerate}
\setcounter{enumi}{2}
%\item $(V^*,\delta)$ is a chain complex
%such that the first cohomology space $H^1(V,\delta)=\mathrm{Ker}(\delta_0)$ satisfies: $$H^1(V,\delta)=\mathfrak{h}^*$$
%\textbf{EN FAIT PLUTOT DIRE QUE CA FAIT UN CHAIN COMPLEX AVEC $g^*$, parce que la cohomologie ne sera plus vraie le jour ou on voudra prendre un embedding tensor plus grand (par exemple quotienter $V_0$ par son center), mais a ce moment la il faudra justifier dans un autre point, par ex celui sur mu, que la premiere map a pour noyau les carres}
%\item $(V^*,\delta)$ is a cochain complex that admits an augmentation by $\theta^*:\mathfrak{h}^*\to V_0^*$:
%\begin{center}
%\begin{tikzcd}[column sep=0.7cm,row sep=0.4cm]
%0\ar[r]&\mathfrak{h}^*\ar[r,"\theta^*"]&V^*_{0}\ar[r,"\delta_0"]&V^*_{-1}\ar[r,"\delta_{-1}"]&V^*_{-2}\ar[r,"\delta_{-2}"]&\ldots%&\ldots\ar[r,"\delta_{-(k-2)}"]&V^*_{-k+1}\ar[r,"\delta_{-(k-1)}"]&U_{-k}^*\ar[r,"\delta_{-k}"]&\ldots
%\end{tikzcd}
%\end{center}
\item at lowest orders, the maps $\mu_0, \pi_0$ and $\delta_1$ satisfy:
\begin{center}
\begin{tikzpicture}
\matrix(a)[matrix of math nodes, 
row sep=5em, column sep=5em, 
text height=1.5ex, text depth=0.25ex] 
{U_{0}&U_1\\ 
S^2(U_0)&\\}; 
\path[->](a-1-1) edge node[above]{$s\circ \mathrm{d}^*$}  (a-1-2); 
\path[->](a-1-1) edge node[left]{$-\,\{\,.\,,.\,\}^*$} (a-2-1);
\path[->](a-1-2) edge node[below right]{$-\,\Pi_W^*\circ s^{-1}$} (a-2-1);
\end{tikzpicture}
\end{center}

\item for every $0\leq k <i$, the map $\pi_k$ defines an exact sequence:
\begin{center}
\begin{tikzcd}[column sep=0.7cm,row sep=0.4cm]
0\ar[r]&U_{k+1}\ar[r,"\pi_k"]&S^2(U)_{k}\ar[r,"\pi"]&S^3(U)_{k-1}%\ar[r,"\pi"]&\ldots\ar[r,"\pi"]&S^{k+2}(U)_0\ar[r]&0
\end{tikzcd}
\end{center}
\item for every $1\leq k<i$, the map $\mu_k$ satisfies:
\begin{equation}
2\,\big\langle \alpha\odot \beta,\mu_k(u)\big\rangle_{S^2(U)_k}\equiv\big\langle \alpha,\rho_{k,\Theta(\beta)}(u)\big\rangle_{U_k}+\big\langle \beta,\rho_{k,\Theta(\alpha)}(u)\big\rangle_{U_k}
\end{equation}
for any $\alpha\odot\beta\in S^2(U)_k$, where $\rho:\mathfrak{g}\to \mathrm{End}(U)$ is the unique map that restricts to $\rho_k$ on $U_k$, and where $\Theta$ is considered as the zero map if acting on $U_k^*$, for any $k\geq1$. %factors through $\mathfrak{g}^*\otimes U_{k}$ via the dual of the embedding tensor $\Theta^*$:
\item the map $\mu:U\to S^2(U)$ is a null-homotopic chain map between $U$ and $S^2(U)$:
\begin{center}
\begin{tikzpicture}
\matrix(a)[matrix of math nodes, 
row sep=4em, column sep=3em] 
{U_{0}&U_{1}&U_{2}&U_{3}&\cdots\\
S^2(U)_0&S^2(U)_{1}&S^2(U)_{2}&\cdots\\};
\path[left hook->](a-1-2) edge node[above left]{$\pi$}  (a-2-1);
\path[->](a-1-2) edge node[right]{$\mu$}  (a-2-2);
\path[->](a-1-1) edge node[above]{$\delta$}  (a-1-2);
\path[->](a-2-2) edge node[above]{$\delta$} (a-2-3);
\path[dotted,->](a-2-3) edge  (a-2-4);
\path[->](a-1-2) edge node[above]{$\delta$}  (a-1-3);
\path[left hook->](a-1-3) edge node[above left]{$\pi$} (a-2-2);
\path[->](a-1-3) edge node[above]{$\delta$} (a-1-4);
\path[dotted,left hook->](a-1-5) edge  (a-2-4);
%\path[->](a-2-4) edge node[above]{$\delta$}  (a-2-5);
\path[->](a-2-1) edge node[above]{$\delta$}  (a-2-2);
\path[left hook->](a-1-4) edge node[above left]{$\pi$}  (a-2-3);
%\path[left hook->](a-1-6) edge node[above left]{$\pi$}  (a-2-5);
\path[dotted,->](a-1-4) edge (a-1-5);
%\path[->](a-1-5) edge node[above]{$\delta$}  (a-1-6);
\path[dotted,->](a-1-4) edge  (a-2-4);
%\path[->](a-1-5) edge node[right]{$\mu$}  (a-2-5);
\path[->](a-1-3) edge node[right]{$\mu$} (a-2-3);
\path[->](a-1-1) edge node[right]{$\mu$}  (a-2-1);
\end{tikzpicture}
\end{center}
\end{enumerate}
%Given some $i\geq2$, we say that a $i$-stem is \emph{exact at level $k$} (for some $1\leq k< i$) if $\pi_k$ is injective and $\mathrm{Im}(\pi_k)=\mathrm{Ker}\big(\pi|_{S^2(U)_k}\big)$. If a $i$-stem is exact at level $k$ for every $1\leq k< i$, then we say that it is \emph{stiff}.
The \emph{$j$-truncation} (for $0\leq j< i$)   of the $i$-stem $(U,\delta,\pi,\mu)$ is the $j$-stem of $\mathcal{V}$ defined by the quadruple $\big(U'\equiv\bigoplus_{0\leq k\leq j}U_{k},\delta|_{U'},\pi|_{U'},\mu|_{U'}\big)$.
 %We say that a $i$-stem $\mathcal{U}$ is \emph{robust} if $i=\infty$, or $i<\infty$ and there is no $l$-stem, for $l>i$ whose truncation is $\mathcal{U}$.
\end{definition}

%On the contrary we rather have (Young lattice cf gmail) as a tensor hierarchy. In that case the exact sequence of Axiom 2 of the definition can be augmented to a short exact sequence:
%\begin{center}
%\begin{tikzcd}[column sep=0.7cm,row sep=0.4cm]
%0\ar[r]&U_{-(k+1)}^*\ar[r,"\pi"]&S^2(V^*)_{k}\ar[r,"\pi"]&S^3(V^*)_{k-1}\ar[r]&0
%\end{tikzcd}
%\end{center}
%\textbf{to check et how to explain or suppress}

\begin{remarques}
Some remarks are necessary:
\begin{enumerate}
%\item Item 5. is consistent with the construction of the tensor hierarchy algebra given in \cite{palmkvist1}.
%\item A 0-stem consists of the quadruple $(V^*,0,0,0)$.
%\item The condition that the sequence $(U_k)$ does not converge has a different meaning when $i\in\mathbb{N}^*$ or when $i=\infty$. In the first case, it means  that $U_i\neq 0$, whereas in the second case it means  that the sequence $(U_k)$ is never equal to 0 after a certain rank. In regard of this, a $i$-stem $\mathcal{U}$ is robust either when $i=\infty$, or when there is no $l$-stem, with $U_l\neq0$, that contains $\mathcal{U}$.
\item %Since $\eta_V:\mathfrak{g}\to\mathrm{End}(V)$ (resp. $\eta_W:\mathfrak{g}\to\mathrm{End}(W)$) to be the representation of $\mathfrak{g}$ on $V$ (resp. W), 
Using Equations \eqref{pairing} and \eqref{suspensiondecal}, the content of item 2. is equivalent to:
\begin{align}
\big\langle x\,,\rho_{0,a}(u)\big\rangle_{U_0}&\equiv-\big\langle \eta_{V,a}(x),u\big\rangle_{U_0}=-\big\langle u\,, \eta_{V,a}(x)\big\rangle_{V}\label{representation1}\\
\big\langle\alpha\,,\rho_{1,a}(\omega)\big\rangle_{U_1}&\equiv-\big\langle s^{-1}\circ\eta_{W,a}\big(s(\alpha)\big),\omega\big\rangle_{U_1}=-\big\langle s^{-1}(\omega),\eta_{W,a}\big(s(\alpha)\big)\big\rangle_{W}\label{representation2}
\end{align}
for every $a\in\mathfrak{g}$, $x\in U_0^*=V$, $u\in U_0=V^*$, $\alpha\in U_1^*=s^{-1}W$ and $\omega\in U_1=s(W^*)$. In other words, $\rho_1^\vee=s^{-1}\circ \eta_W\circ s$. %The disappearance of the minus sign on the right in the second line comes from Equation \eqref{suspensiondecal}. %There is no minus sign on the second line because the one coming from dualizing the contragredient action is cancelled by the one coming from Equation \eqref{suspensiondecal}, when passing from $\alpha\in U_1^*=s^{-1}W$ to $s(\alpha)\in W$.
\item By applying Equation \eqref{dualitysmooth} to item 3., the dual of the symmetric bracket is defined by using the pairing between $S^2(V)$ and $S^2(V^*)$ on the one hand, and the pairing between $V$ and $V^*$ on the other hand:
\begin{equation}\label{worksz}
\big\langle\{\,.\,,.\,\}^*(\alpha),x\odot y\big\rangle_{S^2(V)}\equiv\big\langle \alpha\,,\{x,y\}\big\rangle_{V}
\end{equation}
for any $\alpha\in V^*$, and $x,y\in V$.
The definition for the map $\Pi_W^*$  is made in a similar way:
\begin{equation}\label{dualpi}
\big\langle\Pi_W^*(u),x\odot y\big\rangle_{S^2(V)}\equiv\big\langle u\,,\Pi_W(x\odot y)\big\rangle_{W}
\end{equation}
where $u\in W^*$ and $x,y\in V$. There is no minus sign on the right hand side because $W$ is supposed to have degree 0, as well as $\Pi_W$.
%\item Since $\Pi_W$ is surjective, the map $\pi_0$ is injective, and we deduce that $\mathrm{Ker}(\mu_0)=\mathrm{Ker}(\delta_1)$. Since $\mu_0=-2\{\,.\,,.\,\}^*$, we deduce that $\mathrm{Ker}(\mu_0)=\mathcal{I}^\circ$, the annihilator of the ideal of squares in $V^*$.
\item Item 5. implies that $\mathrm{Im}(\mu_k)\subset U_0\odot U_k$, for every $1\leq k<i$. Moreover, calling $\rho^\vee$ the contragredient representation of $\rho$, %and more generally $\rho_k^*$ the contragredient representation of $\rho_k$,
 item 5. translates as:
\begin{equation}\label{equationmu}
2\,\big\langle \alpha\odot \beta\,,\mu_k(u)\big\rangle_{S^2(U)_k}=-\big\langle\rho^\vee_{\Theta(\alpha)}(\beta)+\rho^\vee_{\Theta(\beta)}(\alpha),u\big\rangle_{U_k}
\end{equation}
where, still, $\Theta$ is considered as the zero map if acting on $U_i^*$, for any $i\geq1$.
When $k=0$, and for $\alpha,\beta\in U_0^*=V$ and $u\in U_0=V^*$, we have the identity $\rho_0^\vee=\eta_V$ and Equation \eqref{equationmu} coincides with the definition of $\mu_0\equiv-\{\,.\,,.\,\}^*$ given in item 3. of Definition \ref{wooo}.
\item Dualizing Equation \eqref{inclusionlol} implies the following important identity:
\begin{equation}\label{eq:quad}
\delta_1\circ\Theta^*=0
\end{equation}
where $\Theta^*:\mathfrak{g}^*\to V^*$ is the dual map of $\Theta$ defined by:
\begin{equation}
\big\langle\Theta^*(u),x\big\rangle_{V}=\big\langle u,\Theta(x)\big\rangle_\mathfrak{g}
\end{equation}
for every $u\in\mathfrak{g}^*$ and $x\in V$. In particular, it is injective on $\mathfrak{h}^*$.
From this, we deduce that the chain complex $(U,\delta)$ admits an augmentation by $\mathfrak{g}^*$:% by $\theta_{\mathcal{K}}^*$:
\begin{center}
\begin{tikzcd}[column sep=0.7cm,row sep=0.4cm]
0\ar[r]&\mathfrak{g}^*\ar[r,"\Theta^*"]&U_{0}\ar[r,"\delta_1"]&U_{1}\ar[r,"\delta_{2}"]&U_{2}\ar[r,"\delta_{3}"]&\ldots%&\ldots\ar[r,"\delta_{-(k-2)}"]&V^*_{-k+1}\ar[r,"\delta_{-(k-1)}"]&U_{-k}^*\ar[r,"\delta_{-k}"]&\ldots
\end{tikzcd}
\end{center}

%
%the dual maps of $\delta_0, \pi_0$ and $\mu_0$ satisfy the `lifting condition':
%\begin{center}
%\begin{tikzpicture}
%\matrix(a)[matrix of math nodes, 
%row sep=5em, column sep=5em, 
%text height=1.5ex, text depth=0.25ex] 
%{L&\\ 
%U_{0}&\sym\\}; 
%\path[->>](a-2-2) edge node[above right]{$\pi^*$}  (a-1-1); 
%\path[->](a-1-1) edge node[left]{$\delta_0^*$} (a-2-1);
%\path[->](a-2-2) edge node[above]{$2\{.\,,.\}$} (a-2-1);
%\end{tikzpicture}
%\end{center}
%In particular, it implies that $\mathrm{Im}(\delta_0^*)\subset\mathrm{Ker}(\theta_{\mathcal{K}})$ or, in dual terms,
\item The condition $\delta^2=0$ may not be necessary in some cases (see \cite{Trigiante}). In most supergravity models, a careful analysis shows that the null-homotopy condition in item 6. and Equation \eqref{eq:quad} imply the homological condition $\delta^2=0$. %This last equation is in fact a consequence of the symmetric part of Equation \eqref{integrability} (the `quadratic constraint').
%\item The equality $\mu_0=\pi_0\circ\delta_1$ means that the symmetric bracket factors through $U_{1}^*$:
%\begin{center}
%\begin{tikzpicture}
%\matrix(a)[matrix of math nodes, 
%row sep=5em, column sep=5em, 
%text height=1.5ex, text depth=0.25ex] 
%{U_{1}^*&\\ 
%V&S^2(V)\\}; 
%\path[->>](a-2-2) edge node[above right]{$\pi_0^*$}  (a-1-1); 
%\path[->](a-1-1) edge node[left]{$\delta_1^*$} (a-2-1);
%\path[->](a-2-2) edge node[above]{$-\{.\,,.\}$} (a-2-1);
%\end{tikzpicture}
%\end{center}
%and defines a $\mathfrak{g}$-sub-module $W\subset S^2(V)$ which is the bud of the symmetric bracket.
\end{enumerate}
\end{remarques}

\begin{example}
A natural example of a 1-stem of a Leibniz algebra $V$ is the one described in item 3.%We set $U_0=V^*$ and $U_{1}=s(W^*)$ as in the beginning of the section, and $\delta_1=s\circ\dd^*$ and $\pi_0=-(\Pi_W)^*\circ s^{-1}$, where $\Pi_W:S^2(V)\to W$ is the quotient map on $W$. %This is the converse process of the observation that is given in item 3. of the precedent remark.
%\begin{enumerate}
%\item lol
%\item An example of what is not a 1-stem is the dual complex of the following resolution:
%%(here for clarity all spaces have degree 0):
%\begin{center}
%\begin{tikzcd}[column sep=0.7cm,row sep=0.4cm]
%0\ar[r]&\mathrm{Ker}\big(\{.\,,.\}\big)\ar[r,"\iota"]&S^2(V_0)\ar[r,"\{.\text{,} .\}"]&V_0
%\end{tikzcd}
%\end{center}
%This resolution does not correspond to a 1-stem $V_0$, since it does not satisfies condition 2 of the definition at level .
%\end{enumerate}
\end{example}

 We now show that if $i\geq0$, a $i$-stem associated to a Lie-Leibniz triple can always be extended a step further:

%\begin{remarque}
%In particular, given item 1., 2. and 3. of Definition \ref{wooo}, we deduce that $U_0=U'_0=V^*$ and that $U_1=U'_1=s(W^*)$, so that $\Phi$ is the identity on $U_0$ and $U_1$.
%\end{remarque}

%\begin{corollaire}
%To any Leibniz algebra $(V_0,\bullet )$ we can associate an $\infty$-stem.
%\end{corollaire}
%
%We can now make sense of the tensor hierarchy:

%it is only defined by the choice of the Lie-Leibniz triple $V(\mathfrak{g},\Theta)$, and by the choice of the first space $U_1$, i.e. by the choice of a $\mathfrak{g}$-module contained in $\mathrm{Ker}\big(\{.\,,.\}\big)\subset S^2(V)$, as discussed in Lemma \ref{lemmelift}.

\begin{theoreme}\label{lemmestrand}
Let $i\in\mathbb{N}$ and let $\mathcal{V}=(\mathfrak{g},V,\Theta)$ be a Lie-Leibniz triple admitting a  $i$-stem $\mathcal{U}=(U,\delta,\pi,\mu)$. %for some $i\in\mathbb{N}^*$. If $\mathrm{Ker}\big(\pi\big|_{S^2(U)_i}\big)\neq0$, then $\mathcal{U}$ is not robust. 
Then there exists a $(i+1)$-stem whose $i$-truncation is $\mathcal{U}$. %and such that $\pi_i$ is injective and $\mathrm{Im}(\pi_i)=\mathrm{Ker}\big(\pi|_{S^2(U)_i}\big)$.
\end{theoreme}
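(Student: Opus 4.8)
The plan is to build the missing data $(U_{i+1},\delta_{i+1},\pi_i,\mu_i)$ on top of the given $i$-stem, the only genuine freedom being the choice of $U_{i+1}$, which the exactness requirement in item 4 of Definition \ref{wooo} pins down up to isomorphism. Throughout, write $D$ for the degree $+1$ derivation of $S(U)$ extending $\delta$, and keep the letters $\pi,\mu$ for the derivations extending those families. Assuming $i\geq 1$, I would first set
\[
U_{i+1}\equiv\ker\big(\pi\colon S^2(U)_i\to S^3(U)_{i-1}\big),
\]
regarded as a $\mathfrak{g}$-module placed in internal degree $i+1$; it is a submodule because $\pi$ is $\mathfrak{g}$-equivariant. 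Let $\pi_i\colon U_{i+1}\to S^2(U)_i$ be the inclusion, a $\mathfrak{g}$-equivariant degree $-1$ map. Then the exactness required by item 4 at level $k=i$ holds tautologically. Next, I would \emph{define} $\mu_i\colon U_i\to S^2(U)_i$ by the pairing formula \eqref{equationmu}; since $\Theta$ is declared to vanish on $U_k^*$ for $k\geq 1$, this formula yields at once $\mathrm{Im}(\mu_i)\subset U_0\odot U_i$ and an $\mathrm{Im}(\Theta)$-equivariant map, as demanded.

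The heart of the argument is to produce $\delta_{i+1}\colon U_i\to U_{i+1}$ realising the null-homotopy of item 6 at level $i$, namely $\mu_i=D\circ\pi_{i-1}+\pi_i\circ\delta_{i+1}$ (signs as fixed there). Since $\pi_i$ is the inclusion of $\ker\pi$, it is injective, so such a $\delta_{i+1}$ exists and is unique as soon as
\[
\pi\circ\big(\mu_i-D\circ\pi_{i-1}\big)=0\qquad\text{on }U_i,
\]
i.e. as soon as $\mu_i-D\pi_{i-1}$ takes values in $\mathrm{Im}(\pi_i)=\ker\pi$. This is the main obstacle, and pure derivation calculus does not settle it: the identity $\mu=D\circ\pi+\pi\circ\delta$, which holds on the generators $U_0,\dots,U_{i-1}$ by the null-homotopy of the $i$-stem, cannot be propagated to $U_i$, because the missing term $\pi\circ\delta|_{U_i}=\pi_i\circ\delta_{i+1}$ is exactly the datum we are trying to construct. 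I therefore expect to verify $\pi(\mu_i-D\pi_{i-1})=0$ by an explicit computation with the pairings of Section \ref{graded}, feeding in the defining formula \eqref{equationmu} for $\mu_i$, the $\mathrm{Im}(\Theta)$-equivariance of $\delta$, and ultimately the quadratic constraint and the Leibniz identity of $\mathcal{V}$; these routine but lengthy manipulations are the ones I would collect in Appendix \ref{appendicite}.

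Granting well-definedness, I define $\delta_{i+1}$ as the unique lift through the injection $\pi_i$ and check the remaining axioms. Equivariance is formal: $\mu_i$ is $\mathrm{Im}(\Theta)$-equivariant, $D$ and $\pi_{i-1}$ are $\mathrm{Im}(\Theta)$-equivariant (the former because $\delta$ is), and $\pi_i$ is an equivariant inclusion, so $\delta_{i+1}$ is $\mathrm{Im}(\Theta)$-equivariant. For the differential condition $\delta_{i+1}\circ\delta_i=0$, injectivity of $\pi_i$ reduces it to $(\mu_i-D\pi_{i-1})\circ\delta_i=0$; using the null-homotopy at level $i-1$, that is $\pi_{i-1}\circ\delta_i=\mu_{i-1}-D\pi_{i-2}$, together with $D^2=0$, this collapses to the chain-map identity
\[
\mu_i\circ\delta_i=D\circ\mu_{i-1},
\]
which I would again establish from the explicit formula \eqref{equationmu}; this is where \eqref{eq:quad} enters, exactly as announced in the remark on $\delta^2=0$ following Definition \ref{wooo}.

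Finally, the $i$-truncation of $(U_{i+1},\delta_{i+1},\pi_i,\mu_i)$ returns $\mathcal{U}$ by construction, so the resulting $4$-tuple is the desired $(i+1)$-stem. The case $i=0$ is a degenerate base step that must be handled separately, since the kernel recipe would give $U_1\cong S^2(V^*)$ rather than $s(W^*)$: here one simply takes the explicit maps $U_1=s(W^*)$, $\delta_1=s\circ\mathrm{d}^*$, $\pi_0=-\Pi_W^*\circ s^{-1}$ and $\mu_0=-\{\,.\,,.\,\}^*$ of item 3, for which all the identities above reduce to $\{\,.\,,.\,\}=\mathrm{d}\circ\Pi_W$, and injectivity of $\pi_0$ follows from surjectivity of $\Pi_W$.
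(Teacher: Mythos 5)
Your proposal is correct and follows essentially the same route as the paper: define $U_{i+1}$ as (the suspension of) $\mathrm{Ker}\big(\pi|_{S^2(U)_i}\big)$ with $\pi_i$ the inclusion, define $\mu_i$ from the $\mathfrak{g}$-action composed with $\Theta^*$, and obtain $\delta_{i+1}$ as the unique lift of $h_i=\mu_i-\delta\circ\pi_{i-1}$ through the injective $\pi_i$, with $\delta_{i+1}\circ\delta_i=0$ reduced to $\mu_i\circ\delta_i=\delta\circ\mu_{i-1}$. The two identities you defer to explicit pairing computations are precisely Equations \eqref{donkey2} and \eqref{donkey1} of the paper, proved in Appendix \ref{appendicite}, and your equivariance and base-case remarks match the paper's treatment.
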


%\begin{remarque}
%Obviously, uniqueness does not rely on the definition of $U_{-(i+1)}$, because it is uniquely defined by condition $3$ in Definition \ref{wooo}. Rather, it relies on the uniqueness of the maps $\theta_{-i},\pi_{i+1}$ and $\mu_{i}$.
%\end{remarque}

\begin{proof} The result is obvious if $i=0$, so we can assume that $i\in\mathbb{N}^*$. The idea of the proof is that the space of degree $i+1$ will be  defined so as to satisfy exactness of the map $\pi_{i}$, as in item 4. of Definition \ref{wooo}. %Since item 3. is always satisfied, the proof will then mostly consist in showing that items 5. and 6. are satisfied -- as well as the $\mathfrak{g}$-equivariance of $\pi_i$ and the $\mathrm{Im}(\Theta)$-equivariance of $\delta_{i+1}$. 
Then, the definition of the map $\mu_i$ is made so that item 5. is satisfied. Most difficulties come from the definition of the map $\delta_{i+1}$: in particular it should be defined in a way so that item 6. is satisfied. We will see that its definition relies on Equations \eqref{donkey2} and \eqref{donkey1} whose proof is technical and thus postponed to Appendix \ref{appendicite}.

Let $\mathcal{U}=(U,\delta,\pi,\mu)$ be a $i$-stem associated to the Lie-Leibniz triple $\mathcal{V}$. In particular, $U=\bigoplus_{0\leq k\leq i} U_k$, $\delta=(\delta_k)_{1\leq k\leq i}$, $\pi=(\pi_k)_{0\leq k\leq i-1}$ and $\mu=(\mu_k)_{0\leq k\leq i-1}$, are such that they satisfy Definition \ref{wooo} up to level $i$. We define the vector space $U_{i+1}$ as:
$$U_{i+1}=s\Big(\mathrm{Ker}\big(\pi|_{S^2(U)_i}\big)\Big)$$
There is no certainty that the space $U_{i+1}$ is not zero, but the construction is still valid in that case. We build the degree $-1$ injective map $\pi_i$ by using the inclusion map:
\begin{equation}
\pi_{i}\equiv\iota\circ s^{-1}:U_{i+1}\to S^2(U)_i
\end{equation}
In particular we have the following exact sequence: 
\begin{center}
\begin{tikzcd}[column sep=0.7cm,row sep=0.4cm]
0\ar[r]&U_{i+1}\ar[r,"\pi_{i}"]&S^2(U)_i\ar[r,"\pi"]&S^3(U)_{i-1}
\end{tikzcd}
\end{center}
Hence item 4. is satisfied at level $i+1$.

By extending the respective actions of $\rho_k$ on $U_k$ -- for every $k\geq0$ -- to $S^2(U)$ by derivation, the space $S^2(U)_i$ becomes a $\mathfrak{g}$-module. 
We call $\rho:\mathfrak{g}\to\mathrm{Der}\big(S^2(U)\big)$ the corresponding map.
Since $\pi$ is $\mathfrak{g}$-equivariant, $\mathrm{Ker}\big(\pi|_{S^2(U)_i}\big)$ is a $\mathfrak{g}$-sub-module of $S^2(U)_i$. Hence, the sub-space $\mathrm{Im}(\pi_i)$ is a representation of $\mathfrak{g}$. Since $\pi_i$ is injective, this $\mathfrak{g}$-module structure can be transported back to $U_{i+1}$, turning it into a representation of $\mathfrak{g}$. For every $x\in U_{i+1}$, the action of $a\in\mathfrak{g}$ on $x$ is defined by:
\begin{equation}
\rho_{i+1,a}(x)\equiv(\pi_i)^{-1}\circ\rho_{a}\big(\pi_i(x)\big)
\end{equation}
%Hence it it is sufficient to take the dual representation of this representation to make $U_{(i+1)}$ a $\mathfrak{g}$-module.
Then, by construction, the map $\pi_i$ is $\mathfrak{g}$-equivariant at level $i+1$, as required in Definition \ref{wooo}.

It is now time to show that there exist a map $\mu_{i}$ and a map $\delta_{i+1}$ that combine with $\pi_i$ to satisfy all other items of Definition \ref{wooo} (in particular item 6.).
Since $U_{i}$ admits a $\mathfrak{g}$-action $\rho_i:\mathfrak{g}\to \mathrm{End}(U_{i})$,
%the dual space $U_{i}^*$ is also a $\mathfrak{g}$-module with the contragredient representation $\rho_i^*:\mathfrak{g}\to \mathrm{End}(U_{i}^*)$ defined by:
%\begin{equation}
%\big\langle\rho^*_{i,a}(x),u\big\rangle=-\big\langle x,\rho_{i,a}(u)\big\rangle
%\end{equation}
%for any $a\in\mathfrak{g}$, $x\in V_0^*$ and $u\in V_0$.
 this representation defines a map  $\widetilde{\rho_i}: U_i\to \mathfrak{g}^*\otimes U_i$ by:
\begin{align*}
\widetilde{\rho_i}:\hspace{0.3cm}U_{i}\hspace{0.3cm}&\xrightarrow{\hspace*{2cm}} \hspace{0.4cm}\mathfrak{g}^*\otimes U_{i}\\%\mathrm{Hom}\big(\mathfrak{g},U_{k}\big)\\
x\hspace{0.5cm}&\xmapsto{\hspace*{2cm}}\widetilde{\rho_i}(x):a\to\rho_{i,a}(x)\nonumber
\end{align*}
This map can be lifted to a degree 0 map $\mu_i:U_{i}\to U_{0}\otimes U_{i}$ by composition with $\Theta^*$:
\begin{center}
\begin{tikzpicture}
\matrix(a)[matrix of math nodes, 
row sep=5em, column sep=5em, 
text height=1.5ex, text depth=0.25ex] 
{&U_{0}\otimes U_{i}\\ 
U_{i}&\mathfrak{g}^*\otimes U_{i}\\}; 
\path[->](a-2-1) edge node[above left]{$\mu_i$}  (a-1-2); 
\path[->](a-2-2) edge node[right]{$\Theta^*\otimes\mathrm{id}$} (a-1-2);
\path[->](a-2-1) edge node[above]{$\widetilde{\rho_i}$} (a-2-2);
\end{tikzpicture}
\end{center}
Identifying $U_0\otimes U_i$ with $U_0\odot U_i$, the map $\mu_i$ satisfies item 5. of Definition \ref{wooo} at level $i+1$.
Then, let us define a degree 0 map $h_i$ by:
\begin{align*}
h_i:\hspace{0.2cm}U_{i}&\xrightarrow{\hspace*{1.7cm}} \hspace{0.5cm}S^2(U)_i\\
	x\hspace{0.2cm}&\xmapsto{\hspace*{1.7cm}}\mu_i(x)-\delta\circ\pi_{i-1}(x)
\end{align*}
and we extend it to all of $S(U)$ by derivation. The existence of a well-defined map $\delta_{i+1}:U_{i}\to U_{i+1}$ satisfying item 6. of Definition \ref{wooo} as well as the condition
$\delta_{i+1}\circ\delta_i=0$
is conditioned to these two inclusions:
\begin{equation*}\mathrm{Im}(h_i)\subset\mathrm{Ker}\big(\pi|_{S^2(U)_{i}}\big)\hspace{1cm}\text{and}\hspace{1cm}\mathrm{Im}\big(\delta_{i}\big)\subset\mathrm{Ker}(h_i)\end{equation*}
To show these, we need the two following identities:
\begin{align}
\pi\circ\mu_{i}&=\mu\circ\pi_{i-1}\label{donkey2}\\
\delta\circ\mu_{i-1}&=\mu_i\circ\delta_{i}\label{donkey1}
\end{align}
Their proof is technical and is given in Appendix \ref{appendicite}.

%where $\Theta^*$ is the injective linear application mapping $\mathfrak{g}^*$ onto $\mathcal{K}^{\circ}\subset V_0^*$.

Then, the first inclusion is obtained as follows:
\begin{align}
\pi\circ h_i&=\pi\circ\mu_i-\pi\circ\delta\circ\pi_{i-1}\\
&=\pi\circ\mu_i-\mu\circ\pi_{i-1}+\delta\circ(\pi\circ\pi_{i-1})\\
&=0
\end{align}
where passing from the first line to the second line is done by using item 6. of Definition \ref{wooo} at level $i-1$, whereas passing from the second to the last line is done using item 4. of the same definition, together with Equation \eqref{donkey2}. On the other hand, the second inclusion is obtained as follows:
\begin{align}
h_i\circ\delta_i&=\mu_i\circ\delta_i-\delta\circ\pi_{i-1}\circ\delta_i\\
&=\mu_i\circ\delta_i-\delta\circ\mu_{i-1}+\delta\circ\delta\circ\pi_{i-2}\\
&=0
\end{align}
where passing from the first line to the second line is done by using item 6. of Definition \ref{wooo} at level $i-1$, whereas passing from the second to the last line is done by using Equation \eqref{donkey1}, together with the fact that $\delta$ is a differential on $S(U)$. This concludes the proof of the two inclusions.
%Indeed let $\omega\in U_{1}^*$ then by Lemma \ref{lemmelift}
%\begin{equation}
%\pi_0\big(h_1(\omega)\big)=\pi_0\circ\mu_1(\omega)-\mu_0\circ\pi_0(\omega)
%\end{equation}
%but this vanishes by Equation \eqref{commutation1}, proving the left inclusion. Now let $\alpha\in V_0^*$, then:
%\begin{equation}
%h_1\big(\theta_0^*(\alpha)\big)=\mu_1\circ\theta_0^*(\alpha)-\theta_0^*\circ\mu_0(\alpha)
%\end{equation}
%and this vanishes because $\theta_0$ is $\mathfrak{g}$-equivariant, proving the second inclusion. \textbf{a expliquer}

Now, let us show that $h_i$ factors through $U_{i+1}$, i.e. that there exists a unique map $\delta_{i+1}:U_{i}\to U_{i+1}$ such that the following triangle is commutative:
\begin{center}
\begin{tikzpicture}
\matrix(a)[matrix of math nodes, 
row sep=5em, column sep=5em, 
text height=1.5ex, text depth=0.25ex] 
{U_{i}&U_{i+1}\\ 
S^2(U)_i&\\}; 
\path[left hook->](a-1-2) edge node[above left]{$\pi_{i}$}  (a-2-1); 
\path[->](a-1-1) edge node[above]{$\delta_{i+1}$} (a-1-2);
\path[->](a-1-1) edge node[left]{$h_{i}$} (a-2-1);
\end{tikzpicture}
\end{center}
We first define the map $\delta_{i+1}$. Let $v\in U_{i}$. Since $\mathrm{Im}(h_i)\subset\mathrm{Ker}(\pi|_{S^2(U)_{i}})$ and since $\mathrm{Ker}(\pi|_{S^2(U)_{i}})=\mathrm{Im}(\pi_{i})$,  then $h_i(v)\in\mathrm{Im}(\pi_{i})$. By injectivity of $\pi_{i}$, there exists a unique $u\in U_{i+1}$ such that $\pi_i(u)=h_i(v)$. Then we set:
\begin{equation}\label{eq:defdelta}
\delta_{i+1}(v)\equiv u
\end{equation}
This automatically implies that $\mathrm{Ker}(h_i)\subset\mathrm{Ker}(\delta_{i+1})$. By the inclusion $\mathrm{Im}(\delta_{i})\subset\mathrm{Ker}(h_i)$, we deduce that:
$$\mathrm{Im}\big(\delta_{i}\big)\subset\mathrm{Ker}(\delta_{i+1})$$
This allows to extend the chain complex $(U,\delta)$ one step further.

The $\mathfrak{h}$-equivariance of $\delta_{i+1}$ is guaranteed by the fact that $\mu_i$ and $\pi_{i}$ are both $\mathfrak{h}$-equivariant. Indeed, let $a\in\mathfrak{h}$, let $v\in U_i$, and let $u\in U_{i+1}$ be the (unique) image of $v$ through $\delta_{i+1}$ (as in Equation \eqref{eq:defdelta}).
By definition, there exists a unique $w\in U_{i+1}$ such that $\delta_{i+1}\big(\rho_{i,a}(v)\big)=w$. Let us show that $w=\rho_{i+1,a}(u)$ so that we will have:
\begin{equation}
\rho_{i+1,a}\big(\delta_{i+1}(v)\big)=\delta_{i+1}\big(\rho_{i,a}(v)\big)
\end{equation}
By definition of $w$, $h_i\big(\rho_{i,a}(v)\big)=\pi_i(w)$. But $\mu_{i}$, $\pi_i$ and the differential $\delta$ are $\mathfrak{h}$-equivariant, hence $h_{i}$ is $\mathfrak{h}$-equivariant as well, then we have:
\begin{equation}
\pi_i(w)=\rho_{i,a}\big(h_i(v)\big)=\rho_{i,a}\big(\pi_i(u)\big)=\pi_i\big(\rho_{i+1,a}(u)\big)
\end{equation}
Since the map $\pi_i$ is injective, we deduce that $w=\rho_{i+1,a}(u)$, proving the $\mathfrak{h}$-equivariance of $\delta_{i+1}$.

 By construction, the quadruple $\big((U_{k})_{0\leq k\leq i+1},(\delta_{k})_{1\leq k\leq i+1},(\pi_{k})_{0\leq k\leq i},(\mu_{k})_{0\leq k\leq i}\big)$ satisfies every axioms of Definition \ref{wooo}, hence it defines a $(i+1)$-stem of $\mathcal{V}$, and its $i$-truncation is $(U,\delta,\pi,\mu)$.
\end{proof}

%To conclude we have the following results:
%\begin{enumerate}
%\item a family of $\mathfrak{h}_{\mathcal{K}}$-modules $V=\bigoplus_{0\leq k\leq i+1}U_{-k}$,
%\item a family $\delta'=(\delta_{-k})_{0\leq k\leq i}$ of degree $+1$ linear maps $\delta_k:U_{-k}^*\to U_{-k-1}^*$,%derivation $\delta: S^\bullet(V^*)\to S^{\bullet}(V^*)$,
%\item a family $\pi'=(\pi_{k})_{0\leq k\leq i}$ of degree $-1$ linear maps $\pi_k:U_{-k-1}^*\to S^2(V^*)_{-k}$, %$S^\bullet(V^*)\to S^{\bullet+1}(V^*)$,
%\item a family $\mu'=(\mu_{k})_{0\leq k\leq i}$ of degree $0$ linear maps $\mu_k:U_{-k}^*\to S^2(V^*)_{-k}$,%a degree $0$ derivation $\mu: S^\bullet(V^*)\to S^{\bullet+1}(V^*)$,
%\end{enumerate}

%\begin{align}
%h_1(\widetilde{x}-\widetilde{y})
%&=(\mu_{1}-\pi_0\theta_{0})(\pi_0(\alpha))\\
%&=\mu_{1}\circ\pi_0(\alpha)-\pi_0\circ(\theta_{0}(\pi_0(\alpha)))\nonumber\\
%&=\mu_{1}\circ\pi_0(\alpha)-\pi_0\circ\mu_{0}(\alpha)\nonumber\\
%&=0\nonumber
%\end{align}

%To conclude, we have the following facts:
%\begin{itemize}
%\item The sequence:
%\begin{center}
%\begin{tikzcd}[column sep=0.7cm,row sep=0.4cm]
%0\ar[r]&U_{-2}^*\ar[r,"\pi_1"]&U_{0}^*\odot U_{-1}^*\ar[r,"\pi_0"]&S^3(V_0^*)
%\end{tikzcd}
%\end{center}
%is exact, with both maps $\mathfrak{h}_{\mathcal{K}}$-equivariant,
%\item $\theta_0\circ\theta_{-1}=0$, with both maps $\mathfrak{h}_{\mathcal{K}}$-equivariant,
%\item $\mu_1=\theta_{-1}\circ\pi_1+\pi_0\circ\theta_0$.
%\end{itemize}

\begin{example}
If $V$ is a Lie algebra, then the kernel of the symmetric bracket is the whole of $S^2(V)$, and $W=0$. Then, by induction, all spaces $U_i$ are zero, for all $i\geq1$. Then the $\infty$-stem associated to a Lie algebra is itself.
\end{example}

\subsection{Morphisms and equivalences of stems}\label{london}

In the former section, we gave the definition of stems associated to Lie-Leibniz triples, and proved that any Lie-Leibniz triple induces a stem. This existence result will be completed in this section by a unicity result on stems associated to the same Lie-Leibniz triple. First, let us define the notion of morphisms between two stems:
\begin{definition}\label{defi0}
Let $\mathcal{U}=(U,\delta,\pi,\mu)$  (resp. $\overline{\mathcal U}=(\overline{U},\overline{\delta},\overline{\pi},\overline{\mu})$) be a stem %of length $i\in\mathbb{N}\cup\{\infty\}$ (resp. $\overline{i}$)
 associated to a Lie-Leibniz triple $(\mathfrak{g},V,\Theta)$ (resp. $(\overline{\mathfrak{g}},\overline{V},\overline{\Theta})$). A \emph{morphism of stems from $\mathcal{U}$ to $\overline{\mathcal{U}}$} is a couple $(\varphi,\Phi)$, where $\varphi:\overline{\mathfrak{g}}\to\mathfrak{g}$ is a Lie algebra morphism, and where $\Phi=(\Phi_k:U_k\to \overline{U}_k)_{k\geq0}$ is a family of degree 0 linear maps, such that:
\begin{enumerate}
\item the couple $(\varphi,\Phi_0^*)$ is a Lie-Leibniz triple morphism from $:(\overline{\mathfrak{g}},\overline{V},\overline{\Theta})$ to $(\mathfrak{g},V,\Theta)$;
\item  $\Phi$ is compatible with the respective actions of $\mathfrak{g}$ and $\overline{\mathfrak{g}}$, i.e. for every $k\geq0$ and $a\in\overline{\mathfrak{g}}$:
\begin{equation}\label{raoul}
\Phi_{k}\circ\rho_{k,\varphi(a)}=\overline{\rho}_{k,a}\circ\Phi_k
\end{equation}
\item when extended to $S(U)$ as a graded commutative algebra morphism, $\Phi$ intertwines $\pi$, $\overline{\pi}$, and $\delta$, $\overline{\delta}$.
\end{enumerate}
When $\mathcal{U}$ and $\overline{\mathcal{U}}$ are both $i$-stems, for some $i\in\mathbb{N}\cup\{\infty\}$,
we say that $(\varphi,\Phi)$ is an \emph{isomorphism of $i$-stems} if $(\varphi, \Phi_0^*)$ is an isomorphism of Lie-Leibniz triples, and if $\Phi_k:U_k\to\overline{U}_k$ is an isomorphism for every $0\leq k <i+1$.
 \end{definition}

Now let us turn to the study of some unicity questions arising from this definition. First, let us define the following notion of equivalence between two $i$-stems:
\begin{definition}\label{defi}
Let $i\in\mathbb{N}\cup\{\infty\}$, and let $\mathcal{U}$ and $\overline{\mathcal U}$ be two $i$-stems associated to the same Lie-Leibniz triple $\mathcal{V}=(\mathfrak{g},V,\Theta)$. Then $\mathcal{U}$ and $\overline{\mathcal U}$ are said \emph{equivalent} if there exists an isomorphism of $i$-stems $(\varphi,\Phi):\mathcal{U}\to\overline{\mathcal{U}}$ such that: 
\begin{enumerate}
\item $\varphi=\mathrm{id}_\mathfrak{g}$,
\item $\Phi_0=\mathrm{id}_{V^*}$, and
\item $\Phi_1=\mathrm{id}_{s(W^*)}$,
where $W$ is the bud of $\mathcal{V}$.
\end{enumerate}
%\item for every $k\geq2$, $\Phi_k:U_k\to \overline{U}_k$ is an equivalence of $\mathfrak{g}$-modules,
%\item $\Phi$ intertwines $\pi$, $\overline{\pi}$, and $\delta$, $\overline{\delta}$.
%\end{enumerate}
\end{definition}
%\begin{definition}\label{defi}
%Let $i\in\mathbb{N}\cup\{\infty\}$, and let $\mathcal{U}=(U,\delta,\pi,\mu)$ and $\overline{\mathcal U}=(\overline{U},\overline{\delta},\overline{\pi},\overline{\mu})$ be two $i$-stems associated to the same Lie-Leibniz triple $\mathcal{V}=(\mathfrak{g},V,\Theta)$. Then $\mathcal{U}$ and $\overline{\mathcal U}$ are said \emph{equivalent} if there exists a degree 0 linear mapping $\Phi:U\to\overline{U}$ such that: 
%\begin{enumerate}
%\item $\Phi_0=\mathrm{id}_{V^*}$ and $\Phi_1=\mathrm{id}_{s(W^*)}$,
%\item for every $k\geq2$, $\Phi_k:U_k\to \overline{U}_k$ is an equivalence of $\mathfrak{g}$-modules,
%\item $\Phi$ intertwines $\pi$, $\overline{\pi}$, and $\delta$, $\overline{\delta}$.
%\end{enumerate}
%\end{definition}
\noindent The definition is trivial for $i=0$ and $i=1$. For every $2\leq k<i+1$, it means that $U_k$ is isomorphic to $\overline{U}_k$, but there is more: item 1., together with Equation \eqref{raoul}, imply that the maps $\Phi_k:U_k\to \overline{U}_k$ are equivalence of $\mathfrak{g}$-modules, for every $2\leq k < i+1$.
This notion of equivalence is obviously an equivalence relation between $i$-stems. 
It turns out that the axioms of Definition \ref{wooo} are strict enough so that the following proposition holds:
\begin{proposition}\label{isomequiv}
For any $i\in\mathbb{N}\cup\{\infty\}$, two $i$-stems associated to the same Lie-Leibniz triple are equivalent. %Then, there exists a degree 0 linear mapping of graded vector spaces $\Phi:U\to \overline{U}$ such that:
%\begin{enumerate}
%\item $\Phi_0:V^*\to V^*$ and $\Phi_1:s(W^*)\to s(W^*)$ behave as the identity;
%\item for every $k\geq2$, $\Phi_k:U_k\to \overline{U}_k$ is an equivalence of $\mathfrak{g}$-modules;
%\item once extended to $S(U)$ as a graded commutative algebra homomorphism, the map $\Phi:S(U)\to S(\overline{U})$ intertwines $\pi$ and $\overline{\pi}$:
%\begin{equation}
%\overline{\pi}\circ\Phi=\Phi\circ\pi
%\end{equation}
%%\item the map $\Phi$ intertwines $\delta$ and $\overline{\delta}$
%%\item it intertwines the maps $\pi$ and $\overline{\pi}$:
%%\begin{equation}
%%\overline{\pi}\circ\Phi=\Phi\circ\pi
%%\end{equation}
%%\item it extends to a graded commutative algebra homomorphism $\Phi:S(U)\to S(\overline{U})$;
%%\item as such, it intertwines the maps $\delta$ and $\pi$.
%\end{enumerate}
%We say that $\Phi$ is an \emph{equivalence} between $\mathcal{U}$ and $\mathcal{U}'$
\end{proposition}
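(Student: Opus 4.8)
The plan is to build the equivalence $(\mathrm{id}_\mathfrak{g},\Phi)$ between $\mathcal{U}=(U,\delta,\pi,\mu)$ and $\overline{\mathcal{U}}=(\overline U,\overline\delta,\overline\pi,\overline\mu)$ by induction on the internal degree $k$, producing the components $\Phi_k:U_k\to\overline U_k$ one level at a time. The driving idea is that the exactness condition (item 4 of Definition \ref{wooo}) realises $U_{k+1}$ canonically as $\mathrm{Ker}\big(\pi|_{S^2(U)_k}\big)$ through the injection $\pi_k$, so that once $U_0,\dots,U_k$ have been identified the next space is pinned down up to a unique isomorphism. For the base case, both stems have $U_0=\overline U_0=V^*$ and $U_1=\overline U_1=s(W^*)$ with the same prescribed actions $\rho_0,\rho_1$, and item 3 forces $\delta_1=s\circ\mathrm{d}^*$, $\pi_0=-\Pi_W^*\circ s^{-1}$ and $\mu_0=-\{.,.\}^*$ to agree in both; hence $\Phi_0=\mathrm{id}_{V^*}$ and $\Phi_1=\mathrm{id}_{s(W^*)}$ already form an equivalence of $1$-stems.

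For the inductive step, suppose $\Phi_0,\dots,\Phi_k$ (with $k\geq1$) are $\mathfrak{g}$-module isomorphisms intertwining $\delta_1,\dots,\delta_k$ and $\pi_0,\dots,\pi_{k-1}$, and extend them to the graded commutative algebra isomorphism $S(\Phi)$ on symmetric powers. Since $\pi$ is a degree $-1$ derivation, its restriction $\pi|_{S^2(U)_k}:S^2(U)_k\to S^3(U)_{k-1}$ only involves $\pi_0,\dots,\pi_{k-1}$; therefore $S(\Phi)$ intertwines $\pi$ at this level and carries $\mathrm{Ker}\big(\pi|_{S^2(U)_k}\big)$ isomorphically onto $\mathrm{Ker}\big(\overline\pi|_{S^2(\overline U)_k}\big)$. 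By item 4 these kernels are precisely $\mathrm{Im}(\pi_k)$ and $\mathrm{Im}(\overline\pi_k)$, and both maps are injective, so I set
\begin{equation*}
\Phi_{k+1}\equiv\overline\pi_k^{-1}\circ S^2(\Phi)\circ\pi_k,
\end{equation*}
an isomorphism $U_{k+1}\to\overline U_{k+1}$ which by construction satisfies $\overline\pi_k\circ\Phi_{k+1}=S^2(\Phi)\circ\pi_k$. Its $\mathfrak{g}$-equivariance is inherited from that of $\pi_k$, $\overline\pi_k$ and $S^2(\Phi)$.

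It remains to check that $\Phi$ intertwines the two new structure maps. The map $\mu_k$ requires no effort: by item 5 (equivalently Equation \eqref{equationmu}) it is determined entirely by the $\mathfrak{g}$-module structure on $U_k$ and the embedding tensor $\Theta$, both preserved by the module isomorphism $\Phi_k$, so $S^2(\Phi)\circ\mu_k=\overline\mu_k\circ\Phi_k$. For $\delta_{k+1}$ I invoke the null-homotopy of item 6, which reads $\pi_k\circ\delta_{k+1}=\mu_k-\delta\circ\pi_{k-1}$. Composing with $\Phi_k$ and using the intertwinings already in hand gives
\begin{equation*}
\overline\pi_k\circ\Phi_{k+1}\circ\delta_{k+1}=S^2(\Phi)\circ\big(\mu_k-\delta\circ\pi_{k-1}\big)=\big(\overline\mu_k-\overline\delta\circ\overline\pi_{k-1}\big)\circ\Phi_k=\overline\pi_k\circ\overline\delta_{k+1}\circ\Phi_k,
\end{equation*}
and injectivity of $\overline\pi_k$ yields $\Phi_{k+1}\circ\delta_{k+1}=\overline\delta_{k+1}\circ\Phi_k$. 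Thus the extended family is again an equivalence, the induction closes, and the argument is uniform in $i$, so it covers $i=\infty$ as well.

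I expect the only delicate point to be the claim that $S(\Phi)$ intertwines the derivation $\pi$ and hence matches the two kernels realising $U_{k+1}$ and $\overline U_{k+1}$; this rests on the observation that $\pi$ on $S^2(U)_k$ decreases internal degree while raising symmetric degree and so only calls on the lower maps $\pi_0,\dots,\pi_{k-1}$, together with the Koszul-sign bookkeeping inherent in extending $\Phi$ and $\pi$ as derivations. Everything else is forced by the rigidity built into Definition \ref{wooo}, in particular by the fact that $\mu$ and the higher $\mathfrak{g}$-module structures are dictated by the data of the underlying Lie-Leibniz triple.
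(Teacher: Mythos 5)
Your proof is correct and follows essentially the same route as the paper's: the identity on levels $0$ and $1$, the inductive definition $\Phi_{k+1}=\overline{\pi}_k^{-1}\circ S^2(\Phi)\circ\pi_k$ justified by the exactness axiom, the automatic intertwining of $\mu$ from its characterization via the $\mathfrak{g}$-action and $\Theta$, and the null-homotopy identity to transport $\delta_{k+1}$. The paper merely packages the $\pi$-intertwining and the $(\mu,\delta)$-intertwining into two separate lemmas rather than a single induction.
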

\begin{proof}
We construct this equivalence by first setting $\varphi\equiv\mathrm{id}_\mathfrak{g}$, $\Phi_0=\mathrm{id}_{V^*}$ and $\Phi_1=\mathrm{id}_{s(W^*)}$, as in Definition \ref{defi}. Then, we construct the other components of the linear map $\Phi$ by induction, so that the couple $(\varphi,\Phi)$ defines a morphism of stems. 
Under such a choice of maps $\varphi, \Phi_0$ and $\Phi_1$, item 1. of  Definition \ref{defi0} is automatically satisfied, whereas item 2. implies that the map $\Phi$ should be a mere $\mathfrak{g}$-equivalence. Item 3. is not modified. To show that there exists such a map $\Phi$ satisfying items 2. and 3. of Definition \ref{defi0}, we will do it in two steps, with the use of Lemmas \ref{isomequiv1} and \ref{isomequiv2}.
\end{proof}

\begin{lemme}\label{isomequiv1}
Let $i\in\mathbb{N}\cup\{\infty\}$ and let $\mathcal{U}=(U,\delta, \pi,\mu)$ and $\overline{\mathcal{U}}=(\overline{U},\overline{\delta}, \overline{\pi},\overline{\mu})$ be two $i$-stems associated to the same Lie-Leibniz triple $\mathcal{V}=(\mathfrak{g},V,\Theta)$.  Then, there exists a degree 0 linear mapping of graded vector spaces $\Phi:U\to \overline{U}$ such that:
\begin{enumerate}
\item $\Phi_0:V^*\to V^*$ and $\Phi_1:s(W^*)\to s(W^*)$ behave as the identity;
\item for every $k\geq2$, $\Phi_k:U_k\to \overline{U}_k$ is an equivalence of $\mathfrak{g}$-modules;
\item once extended to $S(U)$ as a graded commutative algebra homomorphism, the map $\Phi:S(U)\to S(\overline{U})$ intertwines $\pi$ and $\overline{\pi}$:
\begin{equation}
\overline{\pi}\circ\Phi=\Phi\circ\pi
\end{equation}
%\item the map $\Phi$ intertwines $\delta$ and $\overline{\delta}$
%\item it intertwines the maps $\pi$ and $\overline{\pi}$:
%\begin{equation}
%\overline{\pi}\circ\Phi=\Phi\circ\pi
%\end{equation}
%\item it extends to a graded commutative algebra homomorphism $\Phi:S(U)\to S(\overline{U})$;
%\item as such, it intertwines the maps $\delta$ and $\pi$.
\end{enumerate}
%We say that $\Phi$ is an \emph{equivalence} between $\mathcal{U}$ and $\mathcal{U}'$
\end{lemme}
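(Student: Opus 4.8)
The plan is to build the components $\Phi_k$ one at a time by induction on $k$, the whole construction being driven by the exactness axiom (item 4 of Definition~\ref{wooo}), which realises $U_{k+1}$, through the injection $\pi_k$, as the $\mathfrak{g}$-submodule $\mathrm{Ker}\big(\pi|_{S^2(U)_k}\big)=\mathrm{Im}(\pi_k)$ of $S^2(U)_k$. The base case is forced by the statement: I would put $\Phi_0=\mathrm{id}_{V^*}$ and $\Phi_1=\mathrm{id}_{s(W^*)}$. These are equivalences of $\mathfrak{g}$-modules because the actions $\rho_0$ and $\rho_1$ are fixed by item 2 of Definition~\ref{wooo} and therefore coincide for $\mathcal{U}$ and $\overline{\mathcal{U}}$. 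Moreover $\pi_0=-\Pi_W^*\circ s^{-1}=\overline{\pi}_0$ depends only on the Lie-Leibniz triple (item 3 of Definition~\ref{wooo}), so the desired intertwining $\overline{\pi}_0\circ\Phi_1=\Phi\circ\pi_0$ holds trivially at the lowest level.

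For the inductive step, I would assume $\Phi_0,\dots,\Phi_k$ have been constructed as $\mathfrak{g}$-equivalences satisfying $\overline{\pi}_{a-1}\circ\Phi_a=\Phi\circ\pi_{a-1}$ for every $1\le a\le k$, where $\Phi$ denotes the graded commutative algebra homomorphism extending the components defined so far. Since $S^2(U)_k$ and $S^3(U)_{k-1}$ involve only the generators $U_a$ with $a\le k$, the extension already determines isomorphisms $\Phi\colon S^2(U)_k\to S^2(\overline U)_k$ and $\Phi\colon S^3(U)_{k-1}\to S^3(\overline U)_{k-1}$ of $\mathfrak{g}$-modules. The crucial point is that $\Phi$ intertwines the two degree $-1$ derivations $\pi$ and $\overline{\pi}$ on $S^2(U)_k$: both are derivations, $\Phi$ is an algebra homomorphism intertwining them on all generators $U_a$ with $a\le k$ by the induction hypothesis, and the Leibniz rule then propagates the identity $\overline{\pi}\circ\Phi=\Phi\circ\pi$ to products (the Koszul signs agree because $\Phi$ has degree $0$). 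Combined with the injectivity of $\Phi$ on $S^3(U)_{k-1}$, this shows that $\Phi$ restricts to an isomorphism $\mathrm{Ker}\big(\pi|_{S^2(U)_k}\big)\xrightarrow{\ \sim\ }\mathrm{Ker}\big(\overline{\pi}|_{S^2(\overline U)_k}\big)$, that is $\mathrm{Im}(\pi_k)\xrightarrow{\ \sim\ }\mathrm{Im}(\overline{\pi}_k)$.

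I would then set
\[
\Phi_{k+1}\equiv\overline{\pi}_k^{-1}\circ\Phi\circ\pi_k\colon U_{k+1}\to\overline U_{k+1},
\]
which is well defined because $\overline{\pi}_k$ is injective with image exactly $\mathrm{Ker}\big(\overline{\pi}|_{S^2(\overline U)_k}\big)$, the range of $\Phi\circ\pi_k$. As a composite of three $\mathfrak{g}$-equivariant isomorphisms, $\Phi_{k+1}$ is an equivalence of $\mathfrak{g}$-modules, and by construction $\overline{\pi}_k\circ\Phi_{k+1}=\Phi\circ\pi_k$, which is the intertwining relation at level $k+1$. This closes the induction for every level below $i+1$, uniformly for finite or infinite $i$; since $\pi,\overline{\pi}$ are derivations and $\Phi$ an algebra homomorphism, the generator-wise relations assemble into the global identity $\overline{\pi}\circ\Phi=\Phi\circ\pi$ on $S(U)$, establishing items 1--3.

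The heart of the argument, and the step I expect to require the most care, is the middle one: showing that $\Phi$ maps $\mathrm{Ker}\big(\pi|_{S^2(U)_k}\big)$ isomorphically onto $\mathrm{Ker}\big(\overline{\pi}|_{S^2(\overline U)_k}\big)$. This is exactly where the exactness axiom is indispensable --- it is what lets one recover the abstract module $U_{k+1}$ as a kernel inside $S^2(U)_k$ --- and where one must ensure that the intertwining of $\pi$ with $\overline{\pi}$ has genuinely been secured on \emph{both} $S^2(U)_k$ and $S^3(U)_{k-1}$, with the degree $-1$ derivation's Koszul signs tracked correctly.
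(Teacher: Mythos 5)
Your proposal is correct and follows essentially the same route as the paper: identity maps at levels $0$ and $1$ forced by items 1--3 of Definition \ref{wooo}, then an induction in which the intertwining of $\pi$ and $\overline{\pi}$ on $S^2(U)_k$ (propagated by the derivation property and the algebra-homomorphism property of $\Phi$) shows that $\Phi$ carries $\mathrm{Ker}\big(\pi|_{S^2(U)_k}\big)$ onto $\mathrm{Ker}\big(\overline{\pi}|_{S^2(\overline U)_k}\big)$, so that $\Phi_{k+1}\equiv(\overline{\pi}_k)^{-1}\circ\Phi\circ\pi_k$ is a well-defined $\mathfrak{g}$-equivalence; this is exactly the paper's formula. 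The only cosmetic differences are that the paper treats level $2$ separately before launching the general induction and explicitly notes the degenerate case $U_{j+1}=\overline{U}_{j+1}=0$, which your argument covers implicitly.
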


\begin{proof}
We can assume that $i\geq2$ because the case $i=0$ and $i=1$ are trivial.
By item 1. of Definition \ref{wooo}, we know that $U_0=\overline{U}_0=V^*$ and that $U_1=\overline{U}_1=s(W^*)$. Then, set $\Phi_0:U_0\to \overline{U}_0$ and $\Phi_1:U_1\to \overline{U}_1$ to be the identity map. Item 3. of the same definition ensures that $\pi_0=\overline{\pi}_0=-\Pi_W^*\circ s^{-1}$. Then, identifying $S^2(U)_1=S^2(\overline{U})_1$ with $U_0\otimes U_1$, we have $\pi|_{S^2(U)_1}=\mathrm{id}\otimes\pi_0=\overline{\pi}|_{S^2(\overline{U})_1}$. Since $S^2(U)_1=S^2(\overline{U})|_1$, we have:
\begin{equation*}
\mathrm{Ker}\big(\pi|_{S^2(U)_1}\big)=\mathrm{Ker}\big(\overline{\pi}|_{S^2(\overline{U})_1}\big)
\end{equation*}

By item 4. of Definition \ref{wooo}, we know that $\pi_1:U_2\to\mathrm{Ker}\big(\pi|_{S^2(U)_1}\big)$ and that $\overline{\pi}_1:\overline{U}_2\to\mathrm{Ker}\big(\overline{\pi}|_{S^2(\overline{U})_1}\big)$ are bijective. Hence we conclude that $U_2$ and $\overline{U}_2$ are isomorphic through the linear map:
\begin{equation}\label{defphi1}
\Phi_2\equiv(\overline{\pi}_1)^{-1}\circ\pi_1:U_2\to \overline{U}_2
\end{equation}
Since it is defined from two $\mathfrak{g}$-equivariant maps, $\Phi_2$ is $\mathfrak{g}$-equivariant. Let us set $\Phi^{(2)}:S(U)\to S(\overline{U})$ to be the unique graded algebra homomorphism from $S(U)$ to $S(\overline{U})$ whose restriction on $U$ satisfies $\Phi^{(2)}_k\big|_{U_k}=\Phi_k$, for $0\leq k\leq 2$. %We can extend $\Phi^{(2)}$ to all of $S^2(U)$ as a morphism of algebra. 
We deduce from the definition of $\Phi^{(2)}$ that it intertwines $\pi_1$ and $\overline{\pi}_1$:
\begin{equation}\label{intertwinepi}
\overline{\pi}\circ\Phi^{(2)}=\Phi^{(2)}\circ\pi_1
\end{equation}

Now assume that the maps $\Phi_k:U_k\to \overline{U}_k$ have been defined for $0\leq k\leq j$ for some $j<i$, and let us construct $\Phi_{j+1}:U_{j+1}\to \overline{U}_{j+1}$. Following the induction hypothesis, we assume that the maps $\Phi_k$ are bijective and $\mathfrak{g}$-equivariant. We define $\Phi^{(j)}:S(U)\to S(\overline{U})$ to be the unique graded commutative algebra homomorphism whose restriction to $U$ satisfies $\Phi^{(j)}\big|_{U_k}=\Phi_k$, for every $0\leq k\leq j$. We also assume that $\Phi^{(j)}$ intertwines $\pi$ and $\overline{\pi}$ up to level $j$, i.e. that:
\begin{equation}\label{intertwinepi3}
\overline{\pi}\circ\Phi^{(j)}=\Phi^{(j)}\circ\pi
\end{equation}
holds on $S(U)_k$ for every $1\leq k\leq j$.

 We know from item 4. of Definition \ref{wooo} that the map $\pi_j:U_{j+1}\to S^2(U)_j$ (resp. $\overline{\pi}_j:\overline{U}_{j+1}\to S^2(\overline{U})_j$) is injective, and that its image coincides with $\mathrm{Ker}\big(\pi|_{S^2(U)_j}\big)$ (resp. $\mathrm{Ker}\big(\overline{\pi}|_{S^2(\overline{U})_j}\big)$). We only need to show that:
\begin{equation}\label{inclusion}
\Phi^{(j)}\Big(\mathrm{Ker}\big(\pi|_{S^2(U)_j}\big)\Big)=\mathrm{Ker}\big(\overline{\pi}|_{S^2(\overline{U})_j}\big)
\end{equation}
to define the map $\Phi_{j+1}$.
 Let $\lambda\in\mathrm{Ker}\big(\pi|_{S^2(U)_j}\big)$, then, by Equation \eqref{intertwinepi3}:
\begin{equation}
\overline{\pi}\circ\Phi^{(j)}(\lambda)=\Phi^{(j)}\circ\pi(\lambda)=0
\end{equation}
then $\lambda\in \mathrm{Ker}\big(\overline{\pi}|_{S^2(\overline{U})_j}\big)$. We show the reverse inclusion by the same trick, because $\Phi^{(j)}$ is invertible. Hence, we have the desired equality.  In particular, it implies that $U_{j+1}$ and $\overline{U}_{j+1}$ are necessarily isomorphic as vector spaces. This is also true even if both kernels reduce to zero, i.e. when $\pi|_{S^2(U)_j}$ and $\overline{\pi}|_{S^2(\overline{U})_j}$ are injective. In that case, $U_{j+1}=\overline{U}_{j+1}=0$.

 Thus we can define $\Phi_{j+1}$ by:
\begin{equation}\label{defphi2}
\Phi_{j+1}\equiv(\overline{\pi}_j)^{-1}\circ\Phi^{(j)}\circ\pi_j:U_{j+1}\to \overline{U}_{j+1}
\end{equation}
By construction, it is bijective and $\mathfrak{g}$-equivariant. Define $\Phi^{(j+1)}: S(U)\to S(\overline{U})$ to be to be the unique graded commutative algebra homomorphism whose restriction to $U$ satisfies $\Phi^{(j+1)}\big|_{U_k}=\Phi_k$, for every $0\leq k\leq j+1$. Then by construction we have:
\begin{equation}
\overline{\pi}\circ\Phi^{(j+1)}=\Phi^{(j+1)}\circ\pi
\end{equation}
This equation holds even in the case where $U_{j+1}=\overline{U}_{j+1}=0$, because in that case, $\pi_j=\overline{\pi}_j=0$ and $\Phi_{j+1}:U_{j+1}\to \overline{U}_{j+1}$ is the map that sends $0$ to $0$.
We have thus proven the existence of a map $\Phi^{(j+1)}$ that satisfies all the hypothesis of Lemma \ref{isomequiv1} at level $j+1$. Performing the induction up to level $i$ (or to infinity) proves the statement.
\end{proof}

%Proposition \ref{isomequiv} shows that two $i$-stems associated to the same Lie-Leibniz triple are not only isomorphic as graded vector spaces, but also equivalent as $\mathfrak{g}$-modules. This is a strong result. Actually, we even have a better result, a natural corollary of Proposition \ref{isomequiv}:
\begin{lemme}\label{isomequiv2}
Let $i\in\mathbb{N}\cup\{\infty\}$ and let $\mathcal{U}=(U,\delta, \pi,\mu)$ and $\overline{\mathcal{U}}=(\overline{U},\overline{\delta}, \overline{\pi},\overline{\mu})$ be two $i$-stems associated to the same Lie-Leibniz triple $\mathcal{V}=(\mathfrak{g},V,\Theta)$. Then, the map $\Phi: U\to \overline{U}$ defined in Lemma \ref{isomequiv1} intertwines $\delta$ and $\overline{\delta}$:
\begin{equation}
\overline{\delta}\circ\Phi=\Phi\circ\delta
\end{equation}
\end{lemme}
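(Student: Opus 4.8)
The plan is to run an induction on the degree, exploiting the two pieces of structure that $\Phi$ is already known to respect from Lemma \ref{isomequiv1}: its compatibility with the $\mathfrak{g}$-module structure (item 2 of Definition \ref{defi0}) and the fact that it intertwines the homotopies $\pi$ and $\overline{\pi}$ (item 3). The linchpin is the null-homotopy condition of item 6 of Definition \ref{wooo} which, exactly as in the definition of $\delta_{i+1}$ in the proof of Theorem \ref{lemmestrand}, reads
\[
\mu_k = \pi_k\circ\delta_{k+1} + \delta\circ\pi_{k-1}
\]
on $U_k$, and identically with bars for $\overline{\mathcal{U}}$. This expresses $\delta_{k+1}$, composed with the injection $\pi_k$, purely in terms of $\mu$, of $\delta$ at lower levels, and of $\pi$ --- all objects that $\Phi$ can be made to intertwine.

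First I would record that $\Phi$ intertwines $\mu$ and $\overline{\mu}$, that is $\Phi\circ\mu = \overline{\mu}\circ\Phi$. This follows from item 5 of Definition \ref{wooo}: the map $\mu_k$ is determined by the action $\rho_k$ of $\mathfrak{g}$ on $U_k$ and by $\Theta^*$ (recall $\mathrm{Im}(\mu_k)\subset U_0\odot U_k$), both of which are shared data of the common Lie--Leibniz triple. Since each $\Phi_k$ is a $\mathfrak{g}$-equivalence and $\Phi_0=\mathrm{id}_{V^*}$, applying $\Phi$ to the defining pairing identity of item 5 and using $\overline{\rho}_{k,a}\circ\Phi_k=\Phi_k\circ\rho_{k,a}$ yields the intertwining at once.

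Then I would induct on $k$, with hypothesis $P(k)$: \emph{$\Phi_j\circ\delta_j=\overline{\delta}_j\circ\Phi_{j-1}$ for all $j\leq k$}. The base case $P(1)$ is immediate, since $\delta_1=\overline{\delta}_1=s\circ\mathrm{d}^*$ by item 3 and $\Phi_0,\Phi_1$ are identities. For the inductive step, I first observe that $P(k)$ already forces $\Phi$ to intertwine $\delta$ and $\overline{\delta}$ on $S^2(U)_{k-1}$: an element there is a product of two factors of degree at most $k-1$, so the derivation property of $\delta,\overline{\delta}$ together with the algebra-homomorphism property of $\Phi$ and $P(k)$ give the intertwining on $S^2(U)_{k-1}$. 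I then compute
\begin{align*}
\overline{\pi}_k\circ\Phi_{k+1}\circ\delta_{k+1}
&= \Phi\circ\pi_k\circ\delta_{k+1}
= \Phi\circ\big(\mu_k-\delta\circ\pi_{k-1}\big)\\
&= \overline{\mu}_k\circ\Phi_k-\overline{\delta}\circ\overline{\pi}_{k-1}\circ\Phi_k
= \overline{\pi}_k\circ\overline{\delta}_{k+1}\circ\Phi_k,
\end{align*}
where the first equality uses the $\pi$-intertwining of Lemma \ref{isomequiv1}, the second the null-homotopy identity, the third the $\mu$-intertwining together with the just-established $\delta$-intertwining on $S^2(U)_{k-1}$ (composed with $\Phi\circ\pi_{k-1}=\overline{\pi}_{k-1}\circ\Phi_k$), and the last the barred null-homotopy. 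Since $\overline{\pi}_k$ is injective by item 4 of Definition \ref{wooo}, I may cancel it to obtain $\Phi_{k+1}\circ\delta_{k+1}=\overline{\delta}_{k+1}\circ\Phi_k$, which is $P(k+1)$.

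Having established the intertwining on all generators $U$, I extend it to all of $S(U)$ using that $\Phi$ is a graded algebra homomorphism and $\delta,\overline{\delta}$ are derivations, which gives the statement $\overline{\delta}\circ\Phi=\Phi\circ\delta$. The main obstacle is purely organizational: arranging the induction so that at stage $k+1$ the $\delta$-intertwining on the symmetric square $S^2(U)_{k-1}$ is already available from $P(k)$, since this is precisely what lets the null-homotopy identity be pushed through $\Phi$; the injectivity of $\overline{\pi}_k$ is then the essential ingredient that upgrades the computed equality with $\overline{\pi}_k$ prepended to the bare statement $P(k+1)$.
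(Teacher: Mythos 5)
Your proposal is correct and follows essentially the same route as the paper: establish the $\mu$-intertwining from item 5 and the $\mathfrak{g}$-equivariance of $\Phi$, then induct using the null-homotopy identity $\pi_k\circ\delta_{k+1}=\mu_k-\delta\circ\pi_{k-1}$ together with the $\pi$-intertwining from Lemma \ref{isomequiv1}, cancelling the injective $\overline{\pi}_k$ at the end. The paper phrases the last step via the explicit formula $\Phi_{j+1}=(\overline{\pi}_j)^{-1}\circ\Phi^{(j)}\circ\pi_j$ rather than by prepending $\overline{\pi}_k$ and cancelling, but this is the same argument.
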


\begin{proof} We can assume that $i\geq2$.
We already know from Lemma \ref{isomequiv1} that $\Phi$ intertwines $\pi$ and $\overline{\pi}$. Let us now show that it intertwines $\mu$ and $\overline{\mu}$. Obviously it is the case on $U_0$ and $U_1$ because in that case $\Phi$ is the identity map. Let $2\leq k<i+1$, and let $x\in U_0^*=V, \overline{\alpha}\in (\overline{U}_k)^*$ and $u\in U_k$, then:
\begin{align}
2\,\big\langle x\odot\overline{\alpha},\overline{\mu}_k\big(\Phi_k(u)\big)\big\rangle&=\big\langle \overline{\alpha},\rho_{k,\Theta(x)}\big(\Phi_k(u)\big)\big\rangle\\
\footnotesize\text{by $\mathfrak{g}$-equivariance of $\Phi$}\hspace{0.3cm}\normalsize&=\big\langle \overline{\alpha},\Phi_k\big(\rho_{k,\Theta(x)}(u)\big)\big\rangle\\
\footnotesize\text{by Equation \eqref{dualitysmooth}}\hspace{0.48cm}\normalsize&=\big\langle \Phi_k^*(\overline{\alpha}),\rho_{k,\Theta(x)}(u)\big\rangle\\
\footnotesize\text{by definition of $\mu_k$}\hspace{0.542cm}\normalsize&=2\,\big\langle x\odot \Phi_k^*(\overline{\alpha}),\mu_k(u)\big\rangle\\
\footnotesize\text{by Equation \eqref{dualitysmooth}}\hspace{0.48cm}\normalsize&=2\,\big\langle x\odot \overline{\alpha},\Phi\big(\mu_k(u)\big)\big\rangle
\end{align}
Thus, we can conclude that $\Phi$ intertwines $\mu$ and $\overline{\mu}$, that is:
\begin{equation}\label{intertwinemu}
\overline{\mu}_k\circ\Phi=\Phi\circ\mu_k
\end{equation}
for every $0\leq k<i+1$.

For $k=1$, we naturally have $\overline{\delta}_1\circ\Phi_0=\Phi_1\circ\delta_1$ because $\Phi_0$ and $\Phi_1$ are the identity maps on $U_0$ and $U_1$. For $k=2$, inspired by the proof of Theorem \ref{lemmestrand}, let us define $h_1=\mu_1-\delta\circ\pi_0$. Since $\mu_1=\overline{\mu}_1$, $\pi_0=\overline{\pi}_0$ and $\delta_1=\overline{\delta}_1$ because of item 3. of Definition \ref{wooo}, we can write $\delta_2:U_1\to U_2$ and $\overline{\delta}_2:\overline{U}_1\to \overline{U}_2$ as:
\begin{equation}\label{osef}
\delta_2=(\pi_1)^{-1}\circ h_1\hspace{1cm}\text{and}\hspace{1cm}\overline{\delta}_2=(\overline{\pi}_1)^{-1}\circ h_1
\end{equation}
We know from Equation \eqref{defphi1}, that $\Phi_2=(\overline{\pi}_1)^{-1}\circ\pi_1$. Applying the map to the expression of  $\delta_2$ in Equation \eqref{osef}, we have:
\begin{equation}
\Phi_2\circ\delta_2=(\overline{\pi}_1)^{-1}h_1=\overline{\delta}_2\circ\Phi_1
\end{equation}
This proves that $\Phi$ commutes with $\delta$ at level $k=2$.

Now, let us assume that $\Phi$ commutes with $\delta$ up to some level $1\leq j<i$, i.e. that for every $1\leq k\leq j$, we have:
\begin{equation}\label{onvayarriver}
\Phi_k\circ\delta_k=\overline{\delta}_k\circ\Phi_{k-1}
\end{equation}
This identity extends naturally to $S(U)$.  Set $h_j\equiv\mu_j-\delta\circ\pi_{j-1}$ and $\overline{h}_j\equiv\overline{\mu}_j-\overline{\delta}\circ\overline{\pi}_{j-1}$, and $\Phi^{(j)}:S(U)\to S(\overline{U})$ be the unique graded commutative algebra homomorphism whose restriction to $U$ satisfies $\Phi^{(j)}\big|_{U_k}=\Phi_k$, for every $0\leq k\leq j$. Since $\Phi$ commutes with $\pi$ (by definition), and with $\mu$ (as was just shown), we deduce the following equality:
\begin{equation}\label{bebop}
\Phi^{(j)}\circ h_j=\overline{h}_j\circ\Phi_j
\end{equation}
Moreover, we know by item 7. of Definition \ref{wooo} that we can write $\delta_{j+1}:U_j\to U_{j+1}$ and $\overline{\delta}_{j+1}:\overline{U}_j\to \overline{U}_{j+1}$ as:
\begin{equation}\label{osef2}
\delta_{j+1}=(\pi_j)^{-1}\circ h_j\hspace{1cm}\text{and}\hspace{1cm}\overline{\delta}_{j+1}=(\overline{\pi}_j)^{-1}\circ \overline{h}_j
\end{equation}
We know from Equation \eqref{defphi2}, that $\Phi_{j+1}=(\overline{\pi}_{j})^{-1}\circ\Phi^{(j)}\circ\pi_j$,  Applying this map to the expression of  $\delta_{j+1}$ in Equation \eqref{osef2}, and using Equation \eqref{bebop}, we have:
\begin{equation}
\Phi_{j+1}\circ\delta_{j+1}=(\overline{\pi}_j)^{-1}\circ\Phi^{(j)}\circ h_j=(\overline{\pi}_j)^{-1}\circ \overline{h}_j\circ\Phi_j=\overline{\delta}_{j+1}\circ\Phi_j
\end{equation}
Thus, we have proven that the map $\Phi$ commutes with $\delta$ at level $j+1$. We conclude the proof by induction.
\end{proof}

Proposition \ref{isomequiv} is a very strong result on $i$-stems: %From it, we can deduce that any two $i$-stems associated to the same Lie-Leibniz triple canonically induce a family of equivalences of $\mathfrak{g}$-modules that commutes with every structure maps $\pi,\mu,\delta$. In other words, Proposition \ref{isomequiv} 
it defines an equivalence relation between every $i$-stems associated to the same Lie-Leibniz triple. Then, if a Lie-Leibniz triple admits a $i$-stem, it is `unique' in the sense that every other $i$-stem is isomorphic to this one.  
Now that we know that any two $i$-stems associated to the same Lie-Leibniz triple are equivalent, the question remains to find the `biggest' stem associated to a given Lie-Leibniz triple. For a clear statement, we need to define the following notions:

\begin{definition}
Let $\mathcal{V}$ be a Lie-Leibniz triple and let $i\in\mathbb{N}\cup\{\infty\}$. 
\begin{enumerate}
\item  We say that a $i$-stem $\mathcal{U}=(U,\delta,\pi,\mu)$ is \emph{caulescent} if the sequence $(U_k)_{0\leq k<i+1}$ does not converge to 0. In that case we say that \emph{$\mathcal{U}$ is of height $i$}.
\item We say that the caulescent $i$-stem $\mathcal{U}$ is \emph{robust} if there is no higher caulescent stem of which $\mathcal{U}$ is the $i$-truncation. %$l$-stem, for $l>i$, whose truncation at level $i$ is $\mathcal{U}$.
\end{enumerate}
\end{definition}

\begin{remarque}
The condition that the sequence $(U_k)_k$ does not converge has a different meaning when $i\in\mathbb{N}$ or when $i=\infty$. In the first case, it means  that $U_i\neq 0$, whereas in the second case it means  that for every $I>0$ there exists some $i>I$ such that $U_i\neq0$. In regard of this, a caulescent $i$-stem $\mathcal{U}$ is robust either when $i=\infty$, or when there is no caulescent $l$-stem, for $l>i$ with $U_l\neq0$, that contains $\mathcal{U}$.
\end{remarque}

\begin{example}
A Lie algebra is a particular case of a Leibniz algebra that does not admit a symmetric bracket. Hence, the bud $W$ is the quotient of $S^2(\mathfrak{g})$ by itself, hence it is zero. From this, by induction we deduce that $S^2(U)_k=0$ for $k\geq0$. Hence the robust stem associated to $\mathfrak{g}$ is the 0-stem $(\mathfrak{g}^*,0,0,0)$.
\end{example}

%Now, recall that in Theorem \ref{lemmestrand} we have shown that every $i$-stem could be extended to a $(i+1)$-stem. This applies to all stems, independently if they are caulescent or not. This, together with the equivalence of any two $i$-stems associated to the same Lie-Leibniz triple described in Proposition \ref{isomequiv}, give:

Caulescence is a characteristics of stems that is obviously preserved by equivalence, but more importantly, robustness is as well:

\begin{proposition}\label{robustness}
Let $\mathcal{U}$ and $\overline{\mathcal{U}}$ be two equivalent $i$-stems (for $i\in\mathbb{N}\cup\{\infty\}$) associated to the same Lie-Leibniz triple $\mathcal{V}$. Then $\mathcal{U}$ is robust if and only if $\overline{\mathcal{U}}$ is robust.
\end{proposition}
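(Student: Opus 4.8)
The plan is to reduce robustness to a property that depends only on the underlying Lie-Leibniz triple $\mathcal{V}$, and then invoke the uniqueness result of Proposition \ref{isomequiv}. First I would record that equivalence preserves caulescence: since an equivalence $(\mathrm{id}_\mathfrak{g},\Phi):\mathcal{U}\to\overline{\mathcal{U}}$ provides isomorphisms $\Phi_k:U_k\to\overline{U}_k$ for every $0\le k<i+1$, one has $U_k=0$ if and only if $\overline{U}_k=0$, so $(U_k)$ converges to $0$ exactly when $(\overline{U}_k)$ does. In particular $\mathcal{U}$ is caulescent if and only if $\overline{\mathcal{U}}$ is, and for $i=\infty$ both caulescent stems are robust by definition, which settles that case.

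For finite $i$, the main step is to characterize robustness through a canonical full extension. Using Theorem \ref{lemmestrand} repeatedly, I would extend $\mathcal{U}$ to an $\infty$-stem $\mathcal{U}^+$ (at each level the new space is forced to be $s\big(\mathrm{Ker}(\pi|_{S^2(U)_k})\big)$, so the tower is well defined), and likewise extend $\overline{\mathcal{U}}$ to $\overline{\mathcal{U}}^+$. I claim that $\mathcal{U}$ is robust if and only if $U^+_k=0$ for all $k>i$. For the `if' direction, any caulescent $l$-stem (with $l>i$) extending $\mathcal{U}$ is, by Proposition \ref{isomequiv}, equivalent to the $l$-truncation of $\mathcal{U}^+$, hence has top space isomorphic to $U^+_l=0$, contradicting caulescence; so no such stem exists. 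For the `only if' direction I argue by contraposition: if some $U^+_k\neq0$ with $k>i$, then either infinitely many are nonzero, in which case $\mathcal{U}^+$ itself is a caulescent $\infty$-stem extending $\mathcal{U}$, or there is a largest such index $m>i$, in which case the $m$-truncation of $\mathcal{U}^+$ is a caulescent $m$-stem extending $\mathcal{U}$; in both cases $\mathcal{U}$ fails to be robust.

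Finally I would close the argument by comparing the two full extensions. Both $\mathcal{U}^+$ and $\overline{\mathcal{U}}^+$ are $\infty$-stems associated to the same triple $\mathcal{V}$, so Proposition \ref{isomequiv} gives an equivalence between them and hence isomorphisms $U^+_k\cong\overline{U}^+_k$ for every $k\ge0$. Therefore $U^+_k=0$ for all $k>i$ if and only if $\overline{U}^+_k=0$ for all $k>i$, and combining this with the characterization above yields that $\mathcal{U}$ is robust precisely when $\overline{\mathcal{U}}$ is.

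I expect the delicate point to be the characterization of robustness in terms of the full extension, namely making sure that the statement `no higher caulescent stem extends $\mathcal{U}$' is equivalent to the full extension vanishing in all degrees above $i$. This rests on the fact that item 4 of Definition \ref{wooo} forces the degree $k+1$ space of any extending stem to be $\mathrm{Ker}(\pi|_{S^2(U)_k})$ up to isomorphism, so that the tower of spaces above level $i$ is an invariant of $\mathcal{V}$ through Proposition \ref{isomequiv}; the remainder is the bookkeeping that equivalences preserve vanishing of graded pieces.
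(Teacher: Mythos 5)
Your argument is correct and rests on the same two pillars as the paper's own proof: Theorem \ref{lemmestrand} to extend stems and Proposition \ref{isomequiv} to identify any two stems of the same height over the same Lie-Leibniz triple, deriving the contradiction from the resulting isomorphisms of graded pieces. Your detour through the canonical $\infty$-extension $\mathcal{U}^{+}$ and the characterization ``$\mathcal{U}$ robust iff $U^{+}_{k}=0$ for all $k>i$'' is only a repackaging of the paper's more direct comparison of a hypothetical caulescent $j$-stem $\widetilde{\mathcal{U}}$ with the zero-extension $\mathcal{U}\oplus\bigoplus_{i+1\leq k\leq j}\{0\}$, though it has the merit of making explicit the step the paper leaves implicit, namely that robustness forces every extension of $\mathcal{U}$ to vanish above level $i$.
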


\begin{proof}
Assume that $\mathcal{U}$ is robust and of height $i\in\mathbb{N}\cup\{\infty\}$. If $i$ is infinite, the proof is trivial because at each level $k\geq0$ we know that $U_k$ and $\overline{U}_k$ are isomorphic, then we can assume that $i\geq2$. Since $\mathcal{U}$ and $\overline{\mathcal{U}}$ are equivalent,  we know that $\overline{U}_i\simeq U_i$, so $\overline{\mathcal{U}}$ is caulescent. We have to show that it is robust.
 Suppose it is not the case, i.e. that $\overline{\mathcal{U}}$ is the $i$-truncation of some caulescent $j$-stem $\widetilde{\mathcal{U}}$  for some $j>i$. But then by Proposition \ref{isomequiv}, the $j$-stem $\mathcal{U}\oplus\bigoplus_{i+1\leq k\leq j}\{0\}$ would be equivalent to $\widetilde{\mathcal{U}}$. In particular, that would imply that $\widetilde{U_j}=0$, which is a contradiction.
\end{proof}
From Theorem \ref{lemmestrand}, Proposition \ref{isomequiv} and Proposition \ref{robustness}, we deduce the following fundamental result:

\begin{corollaire}\label{inftystem}
A Lie-Leibniz triple induces -- up to equivalence -- a unique robust stem.\end{corollaire}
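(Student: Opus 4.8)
The plan is to derive the statement by feeding the extension result Theorem~\ref{lemmestrand} into the two rigidity results, Propositions~\ref{isomequiv} and~\ref{robustness}. I would treat existence and uniqueness separately, but both hinge on a single canonical object: the $\infty$-stem obtained by iterating Theorem~\ref{lemmestrand}.

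For existence, I would start from the canonical $0$-stem $\mathcal{U}^{(0)}=(V^*,0,0,0)$, which always exists, and apply Theorem~\ref{lemmestrand} repeatedly to produce a compatible sequence of $k$-stems, the $k$-truncation of the $(k+1)$-st always recovering the $k$-th. Assembling these yields a canonical $\infty$-stem $\mathcal{U}^\infty=(U,\delta,\pi,\mu)$. Set $i^*=\sup\{k : U_k\neq 0\}\in\mathbb{N}\cup\{\infty\}$. If $i^*=\infty$, then $\mathcal{U}^\infty$ has infinitely many nonzero components, so it is caulescent and hence robust by definition. If $i^*$ is finite, the $i^*$-truncation of $\mathcal{U}^\infty$ is caulescent because $U_{i^*}\neq 0$, and I claim it is robust: any caulescent $j$-stem with $j>i^*$ and top space nonzero that extended it would, by Proposition~\ref{isomequiv}, be equivalent to the $j$-truncation of $\mathcal{U}^\infty$, whose top space is zero, forcing that nonzero top space to vanish—a contradiction. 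In either case a robust stem exists.

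For uniqueness, let $\mathcal{U}$ be an arbitrary robust stem of height $i$. By Proposition~\ref{isomequiv}, $\mathcal{U}$ is equivalent to the $i$-truncation of $\mathcal{U}^\infty$, and by Proposition~\ref{robustness} this truncation is itself robust. The decisive step is to show $i=i^*$. If $i<i^*$, then $\mathcal{U}^\infty$ (when $i^*=\infty$) or its $i^*$-truncation (when $i^*$ is finite) is a caulescent stem strictly extending the $i$-truncation of $\mathcal{U}^\infty$, contradicting its robustness. Conversely, if $i>i^*$, then some component of $\mathcal{U}$ beyond level $i^*$ is nonzero while the corresponding component of $\mathcal{U}^\infty$ vanishes, contradicting the level-by-level $\mathfrak{g}$-module equivalence furnished by Proposition~\ref{isomequiv}. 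Hence $i=i^*$, so $\mathcal{U}$ and the robust stem built in the existence step are two $i^*$-stems of the same triple, therefore equivalent by Proposition~\ref{isomequiv}.

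The main obstacle is precisely this height-matching step: the definition of robustness does not pin the height numerically, so one must rule out robust stems of differing heights. This is where the strength of Proposition~\ref{isomequiv} is essential—because any two $i$-stems of a fixed Lie-Leibniz triple agree at every level up to $\mathfrak{g}$-equivalence, the nonzero profile $k\mapsto U_k$ is an invariant of the triple, so the level at which the tower dies is canonical. Everything else reduces to bookkeeping with truncations.
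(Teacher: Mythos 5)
Your proof is correct and follows essentially the same route as the paper: iterate Theorem~\ref{lemmestrand} to build an $\infty$-stem, truncate at the last nonzero level $i^{*}$ (the paper's $i_{max}$), and use Propositions~\ref{isomequiv} and~\ref{robustness} to show that every robust stem has this same height and is equivalent to this truncation. Your treatment of the height-matching step is slightly more explicit than the paper's, but the underlying argument is identical.
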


\begin{proof}Given a Lie-Leibniz triple $\mathcal{V}=(\mathfrak{g},V,\Theta)$, if $V$ is a Lie algebra then its associated 0-stem is robust and unique. If it is not a Lie algebra, it admits at least a 1-stem by items 1., 2. and 3. of Definition \ref{wooo}, if not a $i$-stem for some $i>1$. Thus, let $\mathcal{U}$ be any $i$-stem associated to $\mathcal{V}$, for some $i\geq1$.
The proof then relies on the fact that one can always extend a given $i$-stem to a $(i+1)$-stem using Theorem \ref{lemmestrand}. %We observe that the structure of an $i$-stem $\mathcal{U}$ is so rigid, that it uniquely defines the $(i+1)$-stem whose $i$-truncation is $\mathcal{U}$. 
%Given a Lie-Leibniz triple $\mathcal{V}=(\mathfrak{g},V,\Theta)$, if $V$ is a Lie algebra then its associated 0-stem is robust. 
%One can start from the canonical $1$-stem defined in item 1., 2. and 3. of Definition \ref{wooo} to build an $i$-stem associated to $\mathcal{V}$ by induction, for any $i\geq1$. 
We can apply this theorem again and again, to extend the stem to higher degrees. %Reproducing the process up to infinity, we are led to the conclusion. 
Going up to infinity, we obtain an $\infty$-stem $\mathcal{U}$.
Then, either it is a caulescent $\infty$-stem, or the sequence of $\mathfrak{g}$-modules $U_k$ converges to the zero vector space after some  rank $i_{max}$: $U_{i_{max}}\neq0$, and $U_k=0$ for every $k>i_{max}$. In that case a robust  stem associated to $\mathcal{V}$ is the truncation $\mathcal{U}'$ of $\mathcal{U}$ at level $i_{max}$. There is no caulescent stem associated to $\mathcal{V}$ that has a bigger height than $i_{max}$, for if we had another caulescent stem $\overline{\mathcal{U}}$ of height $j>i_{max}$, then by Proposition \ref{isomequiv} its $i_{max}$-truncation $\overline{\mathcal{U}}'$ would be equivalent to $\mathcal{U}'$, then by Proposition \ref{robustness}, $\overline{\mathcal{U}}'$ would be robust, so that necessarily $\overline{U}_j=0$, which contradicts the assumption that $\overline{\mathcal{U}}$ is caulescent. Thus, every robust stem associated to $\mathcal{V}$ have the same height $i_{max}$. Finally, equivalence is guaranteed by Proposition \ref{isomequiv}.\end{proof}

\subsection{Unveiling the tensor hierarchy algebra}
%\subsection{The differential graded Lie algebra and a result by getzler}
\label{sec:hierarchy}

%A graded vector space can be given another grading by shifting each of the homogeneous subspaces. For example given a graded vector space $V=\bigoplus_{i\in\mathbb{Z}}U_{i}$, we can define another graded vector space $V[1]$ (or $s^{-1}(V)$) called the \emph{desuspended version of $V$}, defined by the following equality: $V[1]_{i}=U_{i+1}$. The map $s:V[1]_{i}\to U_{i+1}$ which increases the degree by one is called the suspension map. Shifting the degree of the homogeneous elements implies that the maps also have their degree shifted. The symmetric maps becomes skew symmetric and vis versa. For example a graded Lie algebra $\mathfrak{h}$ with bracket $[\,.\,,.\,]$ can be turned into a shifted version of itself $\mathfrak{h}[1]$ where the bracket is now symmetric:
%\begin{equation}
%\forall\ x,y\in\mathfrak{h}[1]\hspace{1.5cm}\{x,y\}=(-1)^{|x|+1}s^{-1}[s(x),s(y)]
%\end{equation}

We have shown in the last section that any Lie-Leibniz triple induces a $\infty$-stem. This structure will be at the core of the construction of tensor hierarchies. This section is devoted to showing how to build a tensor hierarchy algebra  from the data of any robust stem $\mathcal{U}=(U,\delta,\pi,\mu)$. We will first proof a Lemma that gives a graded Lie bracket on $s^{-1}(U^*)$ needed in the construction of the tensor hierarchy algebra, and then we built a tensor hierarchy algebra that satisfies all the axioms of Definition \ref{def:tensoralgebra} by construction.

Let us fix a Lie-Leibniz triple $\mathcal{V}=(\mathfrak{g},V,\Theta)$, and let $\mathcal{U}=(U,\delta,\pi,\mu)$ be the unique -- up to equivalence -- robust $i$-stem associated to it by Corollary \ref{inftystem}, where $i\in\mathbb{N}\cup\{\infty\}$. We can legitimately assume that $i\geq2$. %Let us turn to the construction of the tensor hierarchy algebra associated to thus $\infty$-stem.
Let $ T'$ be the dual space of the suspension of the graded vector space~$U$:
\begin{equation*}
T'\equiv s^{-1}(U^*)=\big(s(U)\big)^*
\end{equation*}
In other words, $ T'\equiv(T_{-k})_{1\leq k <i+2}$, with $T_{-1}=s^{-1}V$, $T_{-2}=s^{-1}(U_{1}^*)=s^{-2}W$, $T_{-3}=s^{-1}(U_2^*)$, and more generally:
\begin{equation*}
T_{-k}=s^{-1}(U_{k-1}^*)
\end{equation*}
for any $1\leq k<i+2$. Each vector space $T_{-k}$ is a $\mathfrak{g}$-module, since $U_{k-1}$ is a $\mathfrak{g}$-module. Indeed, the dual representation  of $\mathfrak{g}$ on $U_{k-1}$ induces an action of $\mathfrak{g}$ on $T_{-k}$ through a map $\eta_{-k}:\mathfrak{g}\to\mathrm{End}(T_{-k})$ that is defined by:
\begin{equation}\label{representationshifted}
\eta_{-k}\equiv s^{-1}\circ\rho_{k-1}^\vee\circ s
\end{equation}
where $\rho_{k-1}:\mathfrak{g}\to\mathrm{End}(U_{k-1})$ denotes the action of $\mathfrak{g}$ on $U_{k-1}$.

\begin{remarque}
Due to the suspension and desuspension operators, the contragredient representation of $\eta_k$ is defined by:
\begin{equation}\label{representationshifted2}
\eta_{-k}^\vee\equiv s\circ\rho_{k-1}\circ s^{-1}
\end{equation}
\end{remarque}

%%%%%%%%%%%%%%%%%%%%%%%%%%%%%%%%%%%%%%%%%%%%%%%%%%%%%%%%%%%%%%%%%%%%%%%%%%%%%%%%%%%%%%%%%%%%%%%%%%%%%%%%%%%%%%%%%%%%%%%%%%%%%%%%%%%%%%%%%%%%%%%%%%%%%%%%%%%%%%%%%%%%%%%%%%%%%%%%%%%%%%%%%%%%%%%%%%%%%%%%%%%%%%%%%%%%%%%%%%%%%%%%%%%%%%%%%%%%%%%%%%%%%%%%%%%%%%%%%%%%%%%%%%%%%%%%%%%%%%%%%%%%%%%%%%%%%%%%%
% CALCUL PREUVE DE ETA = RHO STAR
%for every $\alpha\in T_{-k}=s^{-1}(U_{k-1}^*)$, $u\in T_{-k}^*=s(U_{k-1})$ and $a\in\mathfrak{g}$ -- by the following:
%\begin{align}
%\big\langle u,\eta_{-k,a}(\alpha)\big\rangle_{T_{-k}}&=-\big\langle \eta_{-k,a}^*(u),\alpha\big\rangle_{T_{-k}}\\
%\footnotesize\text{by Equation \eqref{suspensiondecal}}\hspace{0.7cm}\normalsize&=-(-1)^k\big\langle s^{-1}\circ\eta_{-k,a}^*(u),s(\alpha)\big\rangle_{U_{k-1}^*}\\
%%&=-(-1)^{k+(k-1)^2}\big\langle s(\alpha),s^{-1}\circ\eta_{-k,a}^*(u)\big\rangle_{U_{k-1}}\\
%%&=\big\langle s(\alpha),\rho_{k-1,a}\big(s^{-1}(u)\big)\big\rangle_{U_{k-1}}\\
%\footnotesize\text{by definition of $\eta_{-k}$}\hspace{0.595cm}\normalsize&=-(-1)^k\big\langle \rho_{k-1,a}\big(s^{-1}(u)\big),s(\alpha)\big\rangle_{U_{k-1}^*}\\
%\footnotesize\text{by definition of $\rho_{k-1}^*$}\hspace{0.463cm}\normalsize&=(-1)^k\big\langle s^{-1}(u),\rho_{k-1,a}^*\big(s(\alpha)\big)\big\rangle_{U_{k-1}^*}\\
%\footnotesize\text{by Equation \eqref{suspensiondecal}}\hspace{0.7cm}\normalsize&=\big\langle u,s^{-1}\circ\rho_{k-1,a}^*\circ s(\alpha)\big\rangle_{T_{-k}}
%\end{align}
%Hence, we conclude that for every $k\geq1$, we have:
%\begin{equation}
%\eta_{-k}=s^{-1}\circ\rho_{k-1}^*\circ s
%\end{equation}

Let us now prove the following result:

\begin{lemme}\label{gradedlemma}
Let $\mathcal{V}=(\mathfrak{g},V,\Theta)$ be a Lie-Leibniz triple and let $\mathcal{U}=(U,\delta,\pi,\mu)$ be a robust $i$-stem associated to $\mathcal{V}$, where $i\in\mathbb{N}\cup\{\infty\}$. Then $T'\equiv s^{-1}(U^*)$ canonically inherits a robust graded Lie algebra structure of depth $i+1$. Moreover, the induced bracket is $\mathfrak{g}$-equivariant. %, and any other robust stem induces an isomorphic robust graded Lie algebra. 
\end{lemme}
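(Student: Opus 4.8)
The plan is to read off the entire bracket from the single datum $\pi$ via the dictionary of Theorem \ref{correspondence}, and then to match the exactness axioms of Definition \ref{wooo} against the graded Jacobi identity and the robustness conditions; the maps $\delta$ and $\mu$ play no role here, as they belong to the differential that is installed later.

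First I would describe the bracket. Since everything is finite dimensional, $U^{**}\cong U$, so the functions on the pointed graded manifold $s^{-1}T'=s^{-2}(U^*)$ form $S(s^2U)$. Extended as a derivation of $S(U)$, the degree $-1$ map $\pi$, transported through the desuspension formulas of Fiorenza and Voronov \cite{Fiorenza, voronov2}, yields a degree $+1$, arity $1$ vector field $Q$ on $s^{-1}T'$; by Theorem \ref{correspondence} such a $Q$ is the same datum as a graded skew-symmetric bracket on $T'$. Concretely, its component $[\,.\,,.\,]:T_{-p}\otimes T_{-q}\to T_{-(p+q)}$ is essentially the $s^{-1}$-suspension of the transpose of $\pi_{p+q-2}$ restricted to $U_{p-1}\odot U_{q-1}$; for $p=q=1$ this recovers Equation \eqref{conditioncrochet} through the identity $\pi_0=-\Pi_W^*\circ s^{-1}$ of item 3 of Definition \ref{wooo}. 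Graded skew-symmetry is automatic, because $\pi$ takes values in the graded-symmetric square $S^2(U)$, which the suspension $s^{-1}$ turns into the graded-skew square.

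Next I would show this is a graded Lie algebra. The graded Jacobi identity is equivalent to $[Q,Q]=0$, and under the correspondence $[Q,Q]=2Q^2$ matches $2\pi^2$, so it suffices to check $\pi^2=0$ on $S(U)$. As $\pi$ is an odd derivation, $\pi^2$ is again a derivation and need only be verified on generators; but item 4 of Definition \ref{wooo} gives $\mathrm{Im}(\pi_k)\subset\mathrm{Ker}(\pi|_{S^2(U)_k})$, i.e. $\pi\circ\pi_k=0$, while $\pi$ vanishes on $U_0$ for degree reasons, so $\pi^2=0$ everywhere. The bracket is $\mathfrak{g}$-equivariant because each $\pi_k$ is, and the action on $T_{-k}$ is $\eta_{-k}=s^{-1}\circ\rho_{k-1}^\vee\circ s$: transposing and suspending the equivariance of $\pi$ yields exactly the equivariance identity of item 4 of Definition \ref{def:tensoralgebra}. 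Finally $T_{-k}=s^{-1}(U_{k-1}^*)$ is nonzero precisely for $1\leq k\leq i+1$ (the robust $i$-stem has $U_i\neq0$), so $T'$ has depth $i+1$.

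The heart of the statement is robustness, which I expect to be the main obstacle. Using the reformulation given just before the definition of `robust', it suffices to show that $d_{\mathrm{CE}}$ is injective on $(T_{-k})^*$ for every $k\geq2$ and that $\mathrm{Im}(d_{\mathrm{CE}}|_{(T_{-k})^*})=\mathrm{Ker}(d_{\mathrm{CE}}|_{\wedge^2})$ in each internal degree. Under the suspension identification $\wedge^\bullet(T')^*\cong S^\bullet(U)$ the differential $d_{\mathrm{CE}}$, being the transpose of the bracket, is carried to the derivation $\pi$: its restriction to $(T_{-k})^*$ becomes $\pi_{k-2}:U_{k-1}\to S^2(U)_{k-2}$, and the following map becomes $\pi:S^2(U)_{k-2}\to S^3(U)_{k-3}$. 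The injectivity of all the $\pi_j$ (including $\pi_0=-\Pi_W^*\circ s^{-1}$) gives the first condition, hence $H^1_{\mathrm{CE}}(T')=(T_{-1})^*$; the equality $\mathrm{Im}(\pi_{k-2})=\mathrm{Ker}(\pi|_{S^2(U)_{k-2}})$ (item 4), which holds at every level since the robust stem is built with $U_{k}=s\big(\mathrm{Ker}(\pi|_{S^2(U)_{k-1}})\big)$ in Theorem \ref{lemmestrand}, gives the second, hence $H^2_{\mathrm{CE}}(T')=d_{\mathrm{CE}}((T_{-2})^*)$; the depth-$1$ case ($V$ a Lie algebra, $W=0$) is robust by definition. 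The genuinely delicate point, and my reason for isolating this step, is the degree and sign bookkeeping in the identification $d_{\mathrm{CE}}\leftrightarrow\pi$, together with the edge levels — the bottom one fed by $W$, and, for a finite stem, the top one where $U_{i+1}=0$ forces $\pi|_{S^2(U)_i}$ to be injective — after which robustness is a direct restatement of item 4.
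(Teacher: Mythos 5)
Your proposal is correct and follows essentially the same route as the paper: transport $\pi$ to the degree $+1$, arity $1$ homological vector field $Q_\pi$ on $s^{-1}T'$, invoke Theorem \ref{correspondence} to get the bracket, derive the Jacobi identity from $\pi^2=0$, read off robustness from the injectivity and exactness in item 4 of Definition \ref{wooo}, and obtain $\mathfrak{g}$-equivariance of the bracket by dualizing that of $\pi$. The only difference is one of detail — you justify $\pi^2=0$ on generators (which the paper takes for granted) and expand the one-line robustness claim, while you only sketch the equivariance computation that the paper carries out explicitly via the pairing identities; none of this changes the argument.
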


\begin{proof}

%%%%%%%%%%%%%%%%%%%%%%%%%%%%%%%%%%%%%%%%%%%%%%%%%%%%%%%%%%%%%%%%%%%%%%%%%%%%%%%%%%%%%%%%%%%%%%%%%%%%%%%%%%%%%%%%%%%%%%%%%%%%%%%%%%%%%%%%%%%%%%%%%%%%%%LEMME SUR LA REPRESENTATION DES DIFFERENTIELLES

%We first show that the linear maps $\delta_i$ are not only $\mathrm{Im}(\Theta)$-equivariant, they belong to the same $\mathfrak{g}$-module as $\Theta$. For clarity, in the following let us set $\mathfrak{h}\equiv \mathrm{Im}(\Theta)$. By Lemma \ref{lemmelift}, we know that there exist $\mathfrak{g}$-sub-modules of $S^2(V)$ through which the symmetric bracket factorizes. Among all the candidates, physicists pick up the specific sub-module $W$ such that the collar of the symmetric bracket lies in the same representation as the embedding tensor $\Theta$. This is precisely the content of the linear contraint. In the following then, we will assume that such condition is satisfied, i.e. we assume that the $\infty$-stem $\mathcal{U}$ involves at level $-1$ a $\mathfrak{g}$-sub-module such that $\delta_1$ transforms in the contragredient representation of the map $\Theta$. In other words, $\delta_1$ transforms as $\Theta^*$ under the action of the Lie algebra $\mathfrak{g}$. One can show that this happens to be sufficient to prove that every other map $\delta_i$ belongs to the same representation as $\Theta^*$.

If $i=0$ or $i=1$, then the proof is trivial, so we can suppose that $i\geq2$.
Let $T'=s^{-1}(U^*)$, i.e. $T_{-k}=s^{-1}(U_{k-1}^*)$ for every $1\leq k <i+2$. In particular $T'$ is of depth $i+1$.
Consider the space $s^{-1} T'=s^{-2}(U^*)$ which is the graded vector space $U^*$ whose elements have their degree shifted by $-2$. More precisely, for every $k\geq2$:
\begin{equation*}
(s^{-1} T')_{-k}\simeq(U_{k-2})^*
\end{equation*}
so that $(s^{-1}T')^*=s^2U$. Since the only modification is that the grading of has been shifted by the even number 2, the map $\pi:U\to S^2(U)$ induces a map $Q_\pi\equiv s^2\pi:s^2U\to S^2\big(s^2U\big)$ defined by:
%\begin{equation*}
%Q_\pi\equiv s^2\pi:(s^{-1} T')^*\to S^2\big((s^{-1} T')^*\big)
%\end{equation*}
\begin{equation}\label{bonjouuurq}
Q_\pi\equiv \big(s^2\odot s^2\big)\circ \pi\circ s^{-2}
\end{equation}
 This map can then be seen as a map from $(s^{-1}T')^*$ to $S^2\big((s^{-1}T')^*\big)$ that can be extended to all of $S\big((s^{-1} T')^*\big)$ by derivation. This symmetric algebra is the algebra of functions on $s^{-1} T'$, so that it turns out that $Q_\pi$ can be seen as a vector field on $s^{-1} T'$. For degree reasons, i.e. since the grading of $U$ has been shifted by 2, the degree of $Q_\pi$ is not $-1$ as the one of $\pi$, but it is $+1$. Moreover it is of arity 1 because $\pi$ is a map from $U$ to $S^2(U)$. And finally, the identity $(\pi)^{2}=0$ that holds on all of $S(U)$ implies that $Q_\pi$ is a homological vector field on the pointed graded manifold with fiber $s^{-1} T'$. In other words, $(s^{-1}  T',Q_\pi)$ is a pointed differential graded manifold. Then by Theorem \ref{correspondence}, we can use the correspondence between a homological vector field of degree $+1$ and of arity 1 on $s^{-1}T'$ and a graded Lie algebra structure on $T'$.

%The graded Lie algebra structure on $ T'= s^{-1}(U^*)$ is induced by the linear application $Q_\pi:(s^{-1} T')^*\to S^{2}\big((s^{-1}  T')^{\ast}\big)$. %the map $\pi$ is of arity one, \emph{i.e.} it sends $S^{\bullet}\big((s^{-1} U)^{\ast}\big)$ into $S^{\bullet+1}\big((s^{-1} U)^{\ast}\big)$. In particular the restriction
For any $u\in s^{-1} T'$, we define $\iota_u$ as the inner derivation of $S\big((s^{-1}T')^*\big)$ which satisfies, as in Equations \eqref{identification} and \eqref{identification2}:
\begin{align}
\iota_u(\alpha)&=\langle\alpha, u\rangle_{s^{-1}T'}\label{party}\\
\iota_v\iota_u(\alpha)&=2\,\langle\alpha, u\odot v\rangle_{S^2(s^{-1}T')}\label{party2}
\end{align} for any $\alpha\in (s^{-1} T')^*$. We have a natural identification $u\leftrightarrow\iota_u$, and thus by Theorem \ref{correspondence}, the graded Lie bracket $[\,.\,,.\,]'$ on $ T'=s^{-1}(U^*)$ is given by:
\begin{equation}\label{voronov2}
\iota_{s^{-1}[x,y]'}=(-1)^{|x|}\big[[Q_\pi,\iota_{s^{-1}(x)}],\iota_{s^{-1}(y)}\big]
%s\left(\big[[\pi,s^{-1}(x)]_{T(s^{-1})^2 V},s^{-1}(y)\big]_{T(s^{-1})^2 V}\right)
\end{equation}
for all $ x,y\in T'$, and where on the right side, the bracket is the (graded) bracket of vector fields on the pointed graded manifold with fiber $s^{-1} T'$. % More precisely this equation means that for any $\alpha\in V^*$, we have:
%\begin{equation}
%\iota_{s^{-1}[x,y]}(\alpha)=(-1)^{|x||y|}\big\langle\pi(\alpha),s^{-1}(x)\odot s^{-1}(y)\big\rangle
%\end{equation}
The sign $(-1)^{|x|}$ in front of the term on the right hand side is necessary to enforce the graded skew symmetry of the bracket. Indeed, due to this sign, for any $x,y\in  T'$ we have:
\begin{equation}
[x,y]'=-(-1)^{|x||y|}[y,x]'
\end{equation}
%\begin{align}
%\iota_{s^{-1}[x,y]}&=(-1)^{|x|}\big[[\pi,\iota_{s^{-1}(x)}],\iota_{s^{-1}(y)}\big]\\
%&=(-1)^{|x|}(-1)^{|x|(|y|+1)}\big\langle\pi,s^{-1}(x)\odot s^{-1}(y)\big\rangle\nonumber\\
%&=(-1)^{|x|}(-1)^{|y|+1}\big\langle\pi,s^{-1}(y)\odot s^{-1}(x)\big\rangle\nonumber\\
%&=-(-1)^{|x||y|}\iota_{s^{-1}[x,y]}\nonumber
%\end{align}
This graded Lie bracket is of degree 0 %, and the bracket on the right hand side is evaluated at the origin of $(s^{-1})^2 V$.
and the Jacobi identity is satisfied because it is equivalent to the fact that $\pi$ squares to zero. %Thus $ T'=s^{-1} (U^*)$ can be equipped with a graded Lie algebra structure.  
Moreover, by item 5. of Definition \ref{wooo}, the fact that the map $\pi_k$ is injective and that $\mathrm{Im}(\pi_k)=\mathrm{Ker}\big(\pi|_{S^2(U)_k}\big)$ for every $k\geq1$ implies that the graded Lie algebra structure on $T'$ is robust.

We now have to prove that the bracket $[\,.\,,.\,]'$ is $\mathfrak{g}$-equivariant. Let $k,l\geq1$ and let $x\in T_{-k}, y\in T_{-l}, u\in s(T_{-k-l}^*)=s^2(U_{k+l-1})$ and $a\in\mathfrak{g}$. We set $v=s^2\circ\rho_{k+l-1,a}\circ s^{-2}(u)$, so that we have on the one hand, by Equation \eqref{party}:
 \begin{align}
 \iota_{s^{-1}[x,y]'}(v)&=\Big\langle s^2\circ\rho_{k+l-1,a}\circ s^{-2}(u), s^{-1}[x,y]'\Big\rangle_{s^{-1}(T_{-k-l})}\label{varphi}\\
\footnotesize\text{by Equation \eqref{suspensiondecal}}\hspace{0.55cm}\normalsize &=\Big\langle \rho_{k+l-1,a}\circ s^{-2}(u), s\big([x,y]'\big)\Big\rangle_{s(T_{-k-l})}\\
\footnotesize\text{by definition of $\rho^\vee_{k+l-1}$}\hspace{0.35cm}\normalsize  &=-\Big\langle s^{-2}(u), \rho^\vee_{k+l-1,a}\circ s\big([x,y]'\big)\Big\rangle_{s(T_{-k-l})}\\
  \footnotesize\text{by Equation \eqref{suspensiondecal}}\hspace{0.55cm}\normalsize  &=-\Big\langle u, s^{-2}\circ\rho^\vee_{k+l-1,a}\circ s\big([x,y]'\big)\Big\rangle_{s^{-1}(T_{-k-l})}\\
 \footnotesize\text{by Equation \eqref{party}}\hspace{0.55cm}\normalsize &= -  \iota_{s^{-1}(\eta_{-k-l,a}([x,y]'))}(u)\label{finish1}
 \end{align}
We were allowed to use $\rho_{k+l-1}^\vee$ because $s(T_{-k-l})=U_{k+l-1}^*$.
On the other hand, from Equation \eqref{voronov2}, we have:
\begin{align}
 \iota_{s^{-1}[x,y]'}(v)&=(-1)^{kl+1}\iota_{s^{-1}(y)}\iota_{s^{-1}(x)}\circ Q_\pi(v)\label{varphi2}\\
\footnotesize\text{by Eq. \eqref{party2}}\hspace{0.2cm}\normalsize &=(-1)^{kl+1}2\,\big\langle Q_\pi(v),s^{-1}(x)\odot s^{-1}(y)\big\rangle_{S^2(s^{-1}T')_{-k-l-2}}\\
\footnotesize\text{by Eq. \eqref{bonjouuurq}}\hspace{0.2cm}\normalsize &= (-1)^{kl+1}2\,\Big\langle\big(s^2\odot s^2\big)\circ \pi\circ\rho_{k+l-1,a}\big(s^{-2}(u)\big) ,s^{-1}(x)\odot s^{-1}(y)\Big\rangle_{S^2(s^{-1}T')_{-k-l-2}}\\
\footnotesize\text{by Eq. \eqref{suspensiondecal2}}\hspace{0.2cm}\normalsize&= (-1)^{kl+1}2\,\Big\langle \pi\circ\rho_{k+l-1,a}\circ s^{-2}(u) ,s(x)\odot s(y)\Big\rangle_{S^2(sT')_{-k-l+2}}\\
\footnotesize\text{by $\mathfrak{g}$-equiv. of $\pi$}\hspace{0.05cm}\normalsize&= (-1)^{kl+1}2\,\Big\langle \rho_{k+l-2,a}\circ\pi\circ s^{-2}(u) ,s(x)\odot s(y)\Big\rangle_{S^2(sT')_{-k-l+2}}\\
\footnotesize\text{by def. of $\rho^\vee$}\hspace{0.3095cm}\normalsize&= (-1)^{kl}2\,\Big\langle \pi\circ s^{-2}(u) ,\rho^\vee_{a}\big(s(x)\odot s(y)\big)\Big\rangle_{S^2(sT')_{-k-l+2}}\label{varphi2bis}\\
\footnotesize\text{by Eq. \eqref{representationshifted}}\hspace{0.2cm}\normalsize&= (-1)^{kl}2\,\Big\langle \pi\circ s^{-2}(u) ,s\big(\eta_{-k,a}(x)\big)\odot s(y)+s(x)\odot s\big(\eta_{-l,a}(y)\big)\big)\Big\rangle_{S^2(sT')_{-k-l+2}}\\
\footnotesize\text{by Eq. \eqref{bonjouuurq}}\hspace{0.2cm}\normalsize&= (-1)^{kl}2\,\Big\langle  Q_\pi(u) ,s^{-1}\big(\eta_{-k,a}(x)\big)\odot s^{-1}(y)+s^{-1}(x)\odot s^{-1}\big(\eta_{-l,a}(y)\big)\big)\Big\rangle_{S^2(s^{-1}T')_{-k-l-2}}\\
\footnotesize\text{by Eq. \eqref{party2}}\hspace{0.2cm}\normalsize&=  (-1)^{kl}\iota_{s^{-1}[\eta_{-k,a}(x),y]'}\iota_{s^{-1}[x,\eta_{-k,a}(y)]'}\circ Q_\pi(u)\\
\footnotesize\text{by Eq. \eqref{voronov2}}\hspace{0.2cm}\normalsize&= -\iota_{s^{-1}([\eta_{-k,a}(x),y]'+[x,\eta_{-k,a}(y)]')}(u)\label{finish2}
\end{align} 
We were allowed to use $\rho_{k+l-2}^\vee$ because $S^2(sT')=S^2(U^*)$. Since the left-hand sides of Lines \eqref{varphi} and \eqref{varphi2} are the same, we deduce that Lines \eqref{finish1} and \eqref{finish2} are equal, which imply that the bracket $[\,.\,,.\,]'$ is $\mathfrak{g}$-equivariant:
\begin{equation}\label{jacobiaaa}
\eta_{-k-l,a}\big([x,y]'\big)=\big[\eta_{-k,a}(x),y\big]'+\big[x,\eta_{-k,a}(y)\big]'
\end{equation} 
This concludes the proof.
%If one picks up another robust stem $\overline{\mathcal{U}}$, we know by Corollary \ref{inftystem} that it is necessarily equivalent to $\mathcal{U}$. The isomorphism $\Phi$ that is defined in Proposition \ref{isomequiv} intertwines $\pi$ and $\overline{\pi}$ so that its dual commutes with the respective Lie brackets on $T'$ and $\overline{T}'$. Hence it defines an isomorphism of robust graded Lie algebras.
\end{proof}

%\begin{remarque}
%The choice of $\infty$-stem is indeed not consistent, because for any two $\infty$-stems associated to $\mathcal{V}$, the equivalence $\Phi$ defined in Proposition \ref{isomequiv} induces a graded Lie algebra isomorphism between the corresponding robust graded Lie algebra defined in Lemma \ref{gradedlemma}.
%\end{remarque}

Now we would like to use $ T'$ to define a tensor hierarchy algebra that would be associated to the Lie-Leibniz pair $(\mathfrak{g},V,\Theta)$. For this, we need to find a differential graded Lie algebra structure on $ T\equiv\mathfrak{h}\oplus T'$ satisfying all axioms of Definition \ref{def:tensoralgebra}.  Since Lemma \ref{gradedlemma} gives a robust graded Lie algebra structure on $s^{-1}(U^*)$, we first need to find a differential on $T'$ that is compatible with this bracket, before extending the differential graded Lie algebra structure to $T=\mathfrak{h}\oplus T'$. Obviously, a natural candidate to define the differential is the map $\delta$. More precisely we have:
%\begin{theoreme}\label{prop:tensorhierarchy}
%There is a one-to-one correspondence between Lie-Leibniz triples and tensor hierarchy algebras.\end{theoreme}
\begin{theoreme}\label{prop:tensorhierarchy}
Let $\mathcal{V}$ %=(\mathfrak{g},V,\Theta)$
 be a Lie-Leibniz triple, then there is a one-to-one correspondence between robust stems associated to $\mathcal{V}$ and tensor hierarchy algebras associated to $\mathcal{V}$.% and let $\mathcal{U}=(U,\delta,\pi,\mu)$ be any  robust stem associated to $\mathcal{V}$. Then the graded vector space $ T\equiv\mathfrak{h}\oplus s^{-1}(U^*)$ canonically inherits a tensor hierarchy algebra structure that is associated to $\mathcal{V}$, and any other robust stem induces an isomorphic differential graded Lie algebra.
\end{theoreme}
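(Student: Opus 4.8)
The plan is to exhibit explicit constructions in both directions and to check that they are mutually inverse. First I would describe the assignment from robust stems to tensor hierarchy algebras. Given a robust $i$-stem $\mathcal{U}=(U,\delta,\pi,\mu)$ associated to $\mathcal{V}$, set $T_{-k}\equiv s^{-1}(U_{k-1}^*)$ for $1\le k<i+2$ and $T_0\equiv\mathfrak{h}$. By Lemma \ref{gradedlemma} the map $\pi$ endows $T'=s^{-1}(U^*)$ with a robust, $\mathfrak{g}$-equivariant graded Lie algebra structure $[\,.\,,.\,]'$ of depth $i+1$, which is exactly item 4 of Definition \ref{def:tensoralgebra}. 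I would extend this bracket to $T=\mathfrak{h}\oplus T'$ by declaring the bracket on $T_0$ to be the Lie bracket of $\mathfrak{h}$ and the bracket $T_0\otimes T_{-k}\to T_{-k}$ to be the $\mathfrak{h}$-action $\eta_{-k}$, which gives items 6 and 7. The differential is built from $\delta$ by dualizing and reshifting: for $k\ge1$ the map $\partial_{-k}:T_{-k-1}\to T_{-k}$ is, up to sign, $s^{-1}\circ\delta_k^*\circ s$, while the augmentation $\Theta^*$ of the complex $(U,\delta)$ (remark 4 after Definition \ref{wooo}) supplies $\partial_0=-\Theta\circ s$. Using $\delta_1=s\circ\dd^*$ one checks that this recovers items 8 and 9.

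Next I would verify the remaining axioms. Item 5 — that the bracket $T_{-1}\otimes T_{-1}\to T_{-2}$ equals $2\,s^{-2}\circ\Pi_W$ — follows by unwinding the definition of $[\,.\,,.\,]'$ in Lemma \ref{gradedlemma} at lowest order, where $\pi_0=-\Pi_W^*\circ s^{-1}$ by item 3 of Definition \ref{wooo}; dualizing and reshifting produces exactly $\Pi_W$, up to the factor $2$. The condition $\partial^2=0$ reduces to $\delta^2=0$ together with $\delta_1\circ\Theta^*=0$, which is Equation \eqref{eq:quad}.

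The genuinely nontrivial point, and the main obstacle, is that $\partial$ must be a derivation of the bracket, namely Equation \eqref{eq:compatibility}. In the stem language $\delta$ does not commute with $\pi$ on the nose; by the null-homotopy condition of item 6 their graded commutator is precisely $\mu=\delta\circ\pi+\pi\circ\delta$. The content of the derivation property is that this failure is exactly accounted for by the $\mathfrak{h}$-action that appears whenever a differential lands in $T_0=\mathfrak{h}$, the bookkeeping being governed by the defining formula for $\mu$ in item 5. Establishing this is a degree-juggling computation in the inner-product calculus of Section \ref{graded}, which I would isolate and carry out in Appendix \ref{appendicite2}.

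Finally I would construct the inverse assignment: given a tensor hierarchy algebra $(T,\partial,[\,.\,,.\,])$ I set $U_k\equiv s(T_{-k-1}^*)$, recover $\delta$ as the shifted dual of $\partial$, recover $\pi$ by inverting the correspondence of Lemma \ref{gradedlemma} (equivalently Theorem \ref{correspondence}) applied to the bracket on $T'$, and define $\mu$ by the formula of item 5 — which is forced, since it depends only on the $\mathfrak{g}$-module structure and on $\Theta$. The stem axioms then transcribe directly from the tensor hierarchy algebra axioms, and robustness of the graded Lie algebra corresponds to the exactness condition of item 4, so the resulting stem is robust. Since suspension, desuspension and dualization are involutive and the Lemma \ref{gradedlemma} correspondence is a bijection, the two assignments invert one another, which establishes the claimed one-to-one correspondence.
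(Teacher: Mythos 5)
Your proposal follows essentially the same route as the paper: the same construction of $T=\mathfrak{h}\oplus s^{-1}(U^*)$, the same use of Lemma \ref{gradedlemma} for the bracket, the same dualized-and-shifted $\delta$ for the differential, the same identification of the derivation property $\partial([x,y])=[\partial x,y]\pm[x,\partial y]$ as the technical core to be delegated to Appendix \ref{appendicite2} via the null-homotopy $\mu=\delta\circ\pi+\pi\circ\delta$, and the same reversal argument for the converse. The only step you pass over silently is the verification that the \emph{extended} bracket on $\mathfrak{h}\oplus T'$ still satisfies the graded Jacobi identity on $\mathfrak{h}\wedge\mathfrak{h}\wedge T'$ and $\mathfrak{h}\wedge T'\wedge T'$, which the paper checks explicitly using the representation property of $\eta_{-k}$ and the $\mathfrak{g}$-equivariance of the bracket from Lemma \ref{gradedlemma}; this is routine and already implicit in the equivariance you invoke.
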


%\begin{remarque}
%Here, the vector space $R$ is the $\mathfrak{g}$-module to which the embedding tensor belongs.
%\end{remarque}

%A good candidate for the differential would the dual map of the differential $\delta$.
\begin{proof}
We will first show that any robust $i$-stem associated to $\mathcal{V}=(\mathfrak{g},V,\Theta)$, for $i\in\mathbb{N}\cup\{\infty\}$, canonically induces a tensor hierarchy algebra of depth $i+1$. The converse claim consists of taking the proof in the reverse direction.

First, if $i=0$ and $i=1$ the proof is trivial, so we can suppose that $i\geq2$. Let $\mathcal{U}=(U,\delta,\pi,\mu)$ be any  $i$-robust stem associated to $\mathcal{V}$. We will show that the graded vector space $ T\equiv\mathfrak{h}\oplus s^{-1}(U^*)$ canonically inherits a tensor hierarchy algebra structure. As in Lemma \ref{gradedlemma}, we set $T'= s^{-1}(U^*)$ and $T_0\equiv\mathfrak{h}$, so that $T$ is a negatively graded vector space of depth $i+1$.
% Let $T\equiv \mathfrak{h}\oplus T'$, so that $T_0=\mathfrak{h}$ and the grading on $T'$ is not modified.
  We have to find a bracket and a differential on $T$ that are compatible in the sense that they induce a differential graded Lie algebra structure on $T$, and such that they satisfy all items of Definition \ref{def:tensoralgebra}.

By Lemma \ref{gradedlemma}, we know that $ T'= s^{-1}(U^*)$ can be equipped with a robust graded Lie algebra structure, whose bracket $[\,.\,,.\,]'$ descends from the map $\pi$, and thus it is $\mathfrak{g}$-equivariant. We take this bracket as the restriction of $[\,.\,,.\,]$ to $T'\wedge T'$, so that item 4. of Definition \ref{def:tensoralgebra} is satisfied. After we have checked that this bracket satisfies item 5. of the same definition,  we will extend it to a bracket $[\,.\,,.\,]$ on all of $T$ that satisfies items 6. and 7. Then, we will define a differential on $T$ satisfying items 8. and 9., and finally, we will check its compatibility with the bracket.

Now, let us compute the restriction of $[\,.\,,.\,]'$ to $T_{-1}\wedge T_{-1}$ to check that it indeed satisfies item 5. of Definition \ref{def:tensoralgebra}. For any $x,y\in T_{-1}=s^{-1}V$ and any $u\in s^2(U_1)=s^3(W^*)$, Equation \eqref{voronov2} implies:
\begin{align}
\iota_{s^{-1}[x,y]'}(u)&=-\big[[Q_\pi,\iota_{s^{-1}(x)}],\iota_{s^{-1}(y)}\big](u)\\
&=-\iota_{s^{-1}(y)}\iota_{s^{-1}(x)}Q_\pi(u)\\
\footnotesize\text{by Equation \eqref{party2}}\hspace{0.35cm}\normalsize&=-2\,\big\langle Q_\pi(u),s^{-1}(x)\odot s^{-1}(y)\big\rangle_{S^2(s^{-2}V)}\\
\footnotesize\text{by Equation \eqref{bonjouuurq}}\hspace{0.35cm}\normalsize&=-2\,\Big\langle \big(s^2\odot s^2\big)\circ\big(-\Pi^*_W\big) \circ s^{-3}(u),s^{-1}(x)\odot s^{-1}(y)\Big\rangle_{S^2(s^{-2}V)}\\
\footnotesize\text{by Equation \eqref{suspensiondecal}}\hspace{0.35cm}\normalsize&=2\,\Big\langle \Pi^*_W\circ s^{-3}(u),s(x)\odot s(y)\big\rangle_{S^2(V)}\\
\footnotesize\text{by Equation \eqref{dualitysmooth}}\hspace{0.35cm}\normalsize&=2\,\Big\langle  s^{-3}(u), \Pi_W\big(s(x),s(y)\big)\big\rangle_{W}\\
\footnotesize\text{by Equation \eqref{suspensiondecal}}\hspace{0.35cm}\normalsize&=2\,\Big\langle  u, s^{-3}\circ\Pi_W\big(s(x),s(y)\big)\big\rangle_{s^{-3}W}\\
\footnotesize\text{by Equation \eqref{party}}\hspace{0.35cm}\normalsize&=\iota_{s^{-1}(2\,s^{-2}\circ\Pi_W(s(x),s(y)))}(u)
\end{align}
Hence, we deduce that at lowest order:
\begin{equation}\label{bracketdebase}
[x,y]'=2\,s^{-2}\circ\Pi_W\big(s(x),s(y)\big)
\end{equation}
as required by item 5. of Definition \ref{def:tensoralgebra}. Recall that this bracket is symmetric because $x$ and $y$ have degree $-1$.

% Then  compatibility between the differential $\partial$ and the graded bracket $[\,.\,,.\,]$ will be asserted. % such that the given differential $\partial$ and the extended bracket $[\,.\,,.\,]$ are compatible, 
 %We will see that, given the relationship between $\Theta$ and the differential $\partial$, this compatibility condition \eqref{eq:compatibility} will in fact be expressed by the Jacobi identities involving the embedding tensor. 

%We can now come back to the proof of Theorem \ref{prop:tensorhierarchy}, and the problem of extending the bracket $[\,.\,,.\,]'$ to all of $ T$.

Now, we will define a graded Lie bracket $[\,.\,,.\,]$ on $T=\mathfrak{h}\oplus  T'$  that restricts to $[\,.\,,.\,]'$ on $T'$, and that satisfies items 6. and 7. of Definition \ref{def:tensoralgebra}.
The Lie algebra $\mathfrak{h}$ comes equipped with its own Lie bracket, which is the restriction of the Lie bracket of $\mathfrak{g}$ to $\mathfrak{h}$. Thus, we define the  bracket $[\,.\,,.\,]$ on $\mathfrak{h}\wedge\mathfrak{h}$ by imposing that it matches the Lie algebra bracket of $\mathfrak{h}$:
\begin{equation}\label{bracket3}
[a,b]\equiv[a,b]_\mathfrak{h}
\end{equation}
so that item 6. of Definition \ref{def:tensoralgebra} is satisfied. Now we define the graded Lie bracket on $\mathfrak{h}\wedge T'$. Let $a\in\mathfrak{h}$ and $x\in T_{-k}$ (for $k\geq1$), then we set:
%\begin{equation}\label{bracket2}
%[a,x]\equiv s^{-i}\circ\rho_{i-1,a}\circ s^{i}(x)
%\end{equation}
\begin{equation}\label{bracket2}
[a,x]\equiv \eta_{-k,a}(x)
\end{equation}
and impose that $[x,a]$ is $-[a,x]=-\eta_{-k,a}(x)$, where $\eta_{-k}:\mathfrak{g}\to \mathrm{End}(T_{-k})$ has been defined in Equation \eqref{representationshifted}. Thus, item 7. of Definition \ref{def:tensoralgebra} is satisfied. 

The bracket $[\,.\,,.\,]$ that we have defined should satisfy the Jacobi identity. First, by Proposition \ref{gradedlemma}, we know that the restriction of the bracket to $T'\wedge T'$ (where $T'=\bigoplus_{k\geq1} T_{-k}$) is a graded Lie bracket. Second, the restriction of the bracket to $\mathfrak{h}\wedge\mathfrak{h}$ satisfies the Jacobi identity because it coincides with the Lie bracket on $\mathfrak{h}$. Now, we have to show that the Jacobiator of the bracket $[\,.\,,.\,]$ vanishes on $\mathfrak{h}\wedge\mathfrak{h}\wedge T'$ and on $\mathfrak{h}\wedge T'\wedge T'$.
Let $a,b\in \mathfrak{h}$ and let $x\in  T_{-k}$, for some $k\geq1$, then the Jacobiator $\mathrm{Jac}(a,b,x)$ turns out to be zero because the Jacobi identity corresponds to the condition that the vector space $T_{-k}$ is a family of Lie algebra representations:
\begin{equation}
\big[a,[b,x]\big]+\big[b,[x,a]\big]+\big[x,[a,b]\big]=\eta_{-k,a}\circ\eta_{-k,b}(x)-\eta_{-k,b}\circ\eta_{-k,a}(x)-\eta_{-k,[a,b]}(x)=0
\end{equation}
 
 In order to show the last Jacobi identity, one just have to recall Equation \eqref{jacobiaaa} and to notice that when $a\in\mathfrak{h}$, it is equivalent to the fact that the Jacobiator $\mathrm{Jac}(a,x,y)$ is vanishing, since it can be rewritten as:
\begin{equation}
\big[a,[x,y]\big]=\big[[a,x],y\big]+\big[x,[a,y]\big]
\end{equation}
To conclude, the extended bracket $[\,.\,,.\,]$ satisfies the graded Jacobi identity on the whole of $T=\mathfrak{h}\oplus T'$, it is then a graded Lie bracket.

Let us now define the differential on $T$. First, % on $T'=\bigoplus_{k\geq1}s^{-1}(U_{k-1}^*)$
the differential $\delta$ on the $\infty$-stem $\mathcal{U}$ induces a differential $\delta'$ on $s^2(U)$ as:
\begin{equation}\label{bonjouuurd}
\delta'_k\equiv s^2\circ\delta_k\circ s^{-2}
\end{equation}
for every $k\geq1$.
Then, let $\partial_{-k}:T_{-k-1}\to T_{-k}$ be the degree $+1$ map defined as in Equation \eqref{bracketbis} by:
\begin{equation}\label{bracket}
\iota_{s^{-1}(\partial_{-k}(x))}=-[\delta'_k,\iota_{s^{-1}(x)}]
\end{equation}
for every $x\in T_{-k-1}$, and $k\geq1$.
By duality, the maps $\partial_{-k}$ satisfy the homological condition $\partial_{-k}\circ\partial_{-k-1}=0$, so that we obtain a chain complex:
\begin{center}
\begin{tikzcd}[column sep=0.7cm,row sep=0.4cm]
0&\ar[l]T_{-1}&\ar[l,"\partial_{-1}"]T_{-2}&\ar[l,"\partial_{-2}"]T_{-3}&\ar[l]\ldots
\end{tikzcd}
\end{center}
Since $T_{-1}=s^{-1}V$ and $T_{-2}=s^{-2}W$, we deduce from Equation \eqref{bracket} that, for every $\alpha\in T_{-2}=s^{-2}W$ and $u\in s^2(U_0)=s^2(V^*)$, we have:
\begin{align}
\iota_{s^{-1}(\partial_{-1}(\alpha))}(u)&=-\iota_{s^{-1}(\alpha)}\circ\delta'_1(u)\\
\footnotesize\text{by Equation \eqref{party}}\hspace{0.5cm}\normalsize&=-\big\langle s^2\circ\delta_1\circ s^{-2}(u),s^{-1}(\alpha)\big\rangle_{s^{-3}W}\\
\footnotesize\text{by Equation \eqref{suspensiondecal}}\hspace{0.5cm}\normalsize&=-\big\langle s^{-1}\circ\delta_1\circ s^{-2}(u),s^2(\alpha)\big\rangle_{W}\\
\footnotesize\text{by definition of $\delta_1$}\hspace{0.633cm}\normalsize&=-\big\langle \mathrm{d}^*\circ s^{-2}(u),s^2(\alpha)\big\rangle_{W}\\
\footnotesize\text{by Eq. \eqref{dualitysmooth} and \eqref{suspensiondecal}}\hspace{0.2cm}\normalsize&=-\big\langle u,s^{-2}\circ\mathrm{d}\circ s^2(\alpha)\big\rangle_{s^{-2}V}\\
\footnotesize\text{by Equation \eqref{party}}\hspace{0.5cm}&=-\iota_{s^{-1}(s^{-1}\circ\mathrm{d}\circ s^2(\alpha))}(u)
\end{align}
Thus, we have at the lowest order:
\begin{equation}\label{differentielledebase}
\partial_{-1}=-s^{-1}\circ\dd \circ s^2
\end{equation}
where $\dd:W\to V$ is the collar of $\mathcal{V}$. This is consistant with item 9. of Definition \ref{def:tensoralgebra}.
 %You can notice that the differential $\partial$ is not defined on $R_{-1}=s^{-1}V_0$. This will raise a problem in the following. 
Now, taking into account $T_0=\mathfrak{h}$, we define a linear map $\partial_0:T_{-1}\to T_0$ as: %, as well as the brackets of elements of $T_0, T_{1}$ between them and with all other elements of $T$.
%First, we define a degree +1 map $\partial_{0}:T_{-1}\to T_0$ by composition:
\begin{equation}\label{diff0}
\partial_{0}=-\Theta\circ s
\end{equation}
This map satisfies item 8. of Definition \ref{def:tensoralgebra}, as well as the homological condition $\partial_0\circ\partial_1=0$, by Proposition \ref{thetainclusion}. Thus we can extend the above chain complex to:
\begin{center}
\begin{tikzcd}[column sep=0.7cm,row sep=0.4cm]
0&\ar[l]T_{0}&\ar[l,"\partial_{0}"]T_{-1}&\ar[l,"\partial_{-1}"]T_{-2}&\ar[l,"\partial_{-2}"]T_{-3}&\ar[l]\ldots
\end{tikzcd}
\end{center}

%It has degree one, and its image coincides with the gauge algebra $\mathfrak{h}\equiv\mathrm{Im}(\Theta)$. We immediately have that $\mathrm{Im}(\partial_{-1})\subset\mathrm{Ker}(\partial_{0})$ by duality of the inclusion $\mathrm{Im}(\Theta^*)\subset\mathrm{Ker}(\delta_1)$. Hence the complex $( T',\partial')$ can be extended through $\mathfrak{h}$:
%\begin{center}
%\begin{tikzcd}[column sep=0.7cm,row sep=0.4cm]
%0&\ar[l]\mathfrak{h}&\ar[l,"\partial_{0}"]s^{-1}V&\ar[l,"\partial_{-1}"]s^{-2}W&\ar[l,"\partial_{-2}"]T_{-3}&\ar[l]\ldots
%\end{tikzcd}
%\end{center}

\noindent In the following we will set $\partial\equiv(\partial_{-k})_{0\leq k}$; this family of maps defines a differential on $T$.

Let us summarize what we have obtained so far:
\begin{enumerate}
\item a (possibly infinite) graded vector space $ T=(T_{-i})_{i\geq 0}$ that satisfies items 1., 2. and 3. of Definition \ref{def:tensoralgebra};
\item a graded Lie algebra bracket $[\,.\,,.\,]$ on $T$ that satisfies items 4., 5., 6. and  7. of Definition~\ref{def:tensoralgebra};
\item a differential $\partial$ on $T$ that satisfies items 8. and 9. of Definition \ref{def:tensoralgebra}.
\end{enumerate}
Thus, the only thing that we have to show is that $[\,.\,,.\,]$ and $\partial$ are compatible in the sense that they induce a differential graded Lie algebra structure on $T$. Since the proof of this part, though conceptually very deep, is technical, we postpone it to Appendix \ref{appendicite2}. This concludes the proof that any robust $i$-stem induces a tensor hierarchy algebra of depth $i+1$.

% But, recalling Equation \eqref{bracket3}, we realize that this Jacobi identity is already satisfied through Equation \eqref{homotopy2}.
The proof of the converse  consists essentially to taking the above proof in reverse direction, and construct $\delta,\pi$ and $\mu$ from the data contained in $(T,\partial,[\,.\,,.\,])$. This construction defines uniquely the corresponding stem $\mathcal{U}$. The fact that the sequence $(T_{-k})_{1\leq k<i+2}$ does not converge to the zero vector space ensures that $\mathcal{U}$ is caulescent. The fact that $(T',[\,.\,,.\,])$ is a robust graded Lie algebra ensures that $\mathcal{U}$ is robust. The depth of $T$, minus one, will be the height of $\mathcal{U}$.
\end{proof}

  The correspondence between robust stems and tensor hierarchy algebras is also valid at the morphism level:
  
  \begin{proposition}\label{isomequivmor}
  Let $\mathcal{U}$ (resp. $\overline{\mathcal{U}}$) be a robust stem associated to some Lie-Leibniz triple $\mathcal{V}$ (resp. $\overline{\mathcal{V}}$). Let $T$ (resp. $\overline{T}$) be the unique tensor hierarchy algebra induced by $\mathcal{U}$ (resp. $\overline{\mathcal{U}}$). Then there is a one-to-one correspondence between morphisms of stems from $\mathcal{U}$ to $\overline{\mathcal{U}}$, and tensor hierarchy algebra morphisms from $\overline{T}$ to $T$.
  %Let $\mathcal{V}$ and $\overline{V}$ be two Lie-Leibniz triples. Then there is a one-to-one correspondence between %
  %Let $\mathcal{U}=(U,\delta,\pi,\mu)$ (resp. $\overline{\mathcal{U}}=(\overline{U},\overline{\delta},\overline{\pi},\overline{\mu})$) be a robust stem associated to some Lie-Leibniz triple $\mathcal{V}=(\mathfrak{g},V,\Theta)$ (resp. $\overline{\mathcal{V}}=(\overline{\mathfrak{g}},\overline{V},\overline{\Theta})$). Then, there is a one to one correspondence between morphisms
  \end{proposition}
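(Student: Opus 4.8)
The plan is to realise the correspondence directly on homogeneous components, exploiting the defining relation $T_{-k}=s^{-1}(U_{k-1}^{*})$ that makes a tensor hierarchy algebra the shifted dual of its stem. Starting from a morphism of stems $(\varphi,\Phi)\colon\mathcal{U}\to\overline{\mathcal{U}}$, I would set $\phi_0\equiv\varphi|_{\overline{\mathfrak{h}}}$ and, for every $k\geq1$,
\[
\phi_{-k}\equiv s^{-1}\circ\Phi_{k-1}^{*}\circ s\colon \overline{T}_{-k}\longrightarrow T_{-k}.
\]
This family has degree $0$ by construction, and the reversal of direction demanded by the statement (stems run $\mathcal{U}\to\overline{\mathcal{U}}$, algebras run $\overline{T}\to T$) is exactly the transposition produced by dualising $\Phi_{k-1}\colon U_{k-1}\to\overline{U}_{k-1}$. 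The map $\phi_0$ is well defined because Equation \eqref{jfk}, applied to the Lie-Leibniz triple morphism $(\varphi,\Phi_0^{*})$ of item 1 of Definition \ref{defi0}, yields $\varphi(\overline{\mathfrak{h}})\subset\mathfrak{h}$.

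Next I would verify that $(\varphi,\phi)$ meets every clause of the notion of tensor hierarchy algebra morphism. The $\mathfrak{g}$-equivariance relation \eqref{insup} is obtained by transposing item 2 of Definition \ref{defi0} and invoking the identity \eqref{representationshifted} that expresses $\eta_{-k}$ through $\rho_{k-1}^{\vee}$. Compatibility with the differentials, $\partial\circ\phi=\phi\circ\overline{\partial}$, divides into two regimes: in strictly negative degrees it follows by dualising the fact (item 3) that $\Phi$ intertwines $\delta$ and $\overline{\delta}$, together with the definitions \eqref{bonjouuurd} and \eqref{bracket} producing $\partial$ from $\delta$; at the top, the identity $\partial_0\circ\phi_{-1}=\phi_0\circ\overline{\partial}_0$ collapses, through $\partial_0=-\Theta\circ s$ and its barred analogue, precisely to $\Theta\circ\Phi_0^{*}=\varphi\circ\overline{\Theta}$, which is Equation \eqref{jfk} once more.

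The bracket compatibility is where the real content lies. On $T_0\wedge T_0$ it holds because $\phi_0$ is the restriction of the Lie algebra morphism $\varphi$; on $T_0\wedge T'$ it follows from the equivariance just proved, the bracket there being the action \eqref{bracket2}. The genuinely delicate case is $T'\wedge T'$ (which contains the degree $-1$ bracket \eqref{conditioncrochet} governed by $\Pi_W$), where the bracket is produced from the homological vector field $Q_\pi$ via \eqref{voronov2}. Here I would observe that, after the even shift by $s^{2}$, the graded-algebra homomorphism $\Phi\colon S(U)\to S(\overline{U})$ of item 3 is exactly the pullback of a morphism of pointed graded manifolds $s^{-1}\overline{T}'\to s^{-1}T'$, and that its intertwining of $\pi$ with $\overline{\pi}$ amounts to its intertwining of $Q_\pi$ with $Q_{\overline{\pi}}$; by the correspondence of Theorem \ref{correspondence}, now read at the level of morphisms, this is equivalent to $\phi$ being a graded Lie algebra morphism on $T'$. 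I expect promoting Theorem \ref{correspondence} to morphisms to be the principal obstacle; it can be settled either by appealing to the functorial version of the dictionary in \cite{Fiorenza, voronov2}, or by repeating the explicit pairing computation already performed for $\mathfrak{g}$-equivariance in the proof of Lemma \ref{gradedlemma}.

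Finally I would build the inverse assignment through the transposed formulas $\Phi_k\equiv s^{-1}\circ\phi_{-k-1}^{*}\circ s$ (with $\varphi$ unchanged), and check that a tensor hierarchy algebra morphism $(\varphi,\phi)\colon\overline{T}\to T$ returns a morphism of stems by running every implication above backwards: the Lie-Leibniz triple morphism condition and the $\mathfrak{g}$-equivariance of $\Phi$ both issue from differential compatibility at the top degree and from \eqref{insup}, while the intertwining of $\pi$ with $\overline{\pi}$ issues from bracket preservation on $T'$. Because bidualisation and $s^{-1}\circ s=\mathrm{id}$ return each component map to itself, the two assignments are mutually inverse, which establishes the desired bijection.
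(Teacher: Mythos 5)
Your proposal is correct and follows essentially the same route as the paper: the same component maps $\phi_0=\varphi|_{\overline{\mathfrak{h}}}$ and $\phi_{-k}=s^{-1}\circ\Phi_{k-1}^*\circ s$, equivariance via Equations \eqref{raoul} and \eqref{representationshifted}, bracket and differential compatibility obtained by dualizing the intertwining of $\pi$ and $\delta$ through the explicit pairing computations of Lemma \ref{gradedlemma} and Appendix \ref{appendicite2}, and the inverse by transposition. The "obstacle" you flag (promoting Theorem \ref{correspondence} to morphisms) is resolved in the paper exactly by the fallback you name, namely repeating the pairing computations, so there is no real divergence.
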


\begin{proof}
Let $(\varphi,\Phi)$ be a morphism of stems from $\mathcal{U}$ to $\overline{\mathcal{U}}$. Then by definition, $\varphi: \overline{\mathfrak{g}}\to\mathfrak{g}$ is a Lie algebra morphism, and $\Phi=(\Phi_k:U_k\to \overline{U}_k)_{0\leq k <i+1}$ is a family of degree 0 linear maps satisfying all items of Definition \ref{defi0}, where $i\in\mathbb{N}\cup \{\infty\}$ is the height of $\mathcal{U}$. These data canonically induce a family of morphisms:
\begin{equation*}
\phi_0\equiv\varphi\big|_{\overline{\mathfrak{h}}}\hspace{1cm}\text{and}
\hspace{1cm}\phi_{-k}\equiv s^{-1}\circ\Phi^*_{k-1}\circ s : \overline{T}_{-k}\longrightarrow T_{-k}
\end{equation*}
for every $1\leq k<i+2$. The equation on the left is the first condition for $\phi$ to be a tensor hierarchy algebra morphism. Moreover, Equation \eqref{raoul}, together with Equation \eqref{representationshifted}, imply that for every $1\leq k<i+2$, the map $\phi_k$ satisfies Equation \eqref{insup}, as required.

We now have to show that the map $\phi$ is compatible with the respective differentials and brackets of $T$ and $\overline{T}$. We have to show that it is a (graded) Lie algebra morphism, and that it intertwines $\partial$ and $\overline{\partial}$. Since $\Phi$ intertwines $\pi$ and $\overline{\pi}$ (see item 2. of Definition \ref{defi0}), one can use the same strategy as in Equations \eqref{varphi}--\eqref{finish1} and \eqref{varphi2}--\eqref{varphi2bis} to prove that $\phi$ commutes with the graded Lie bracket on $\overline{T}'\wedge \overline{T}'$ and on $T'\wedge T'$, respectively. Since $\phi$ satisfies Equation \eqref{insup} for every $1\leq k<i+2$, it intertwines the brackets on $\mathfrak{h}\wedge \overline{T}'$ and $\mathfrak{h}\wedge T'$. On $\overline{T}_0$, $\phi_0=\varphi\big|_{\overline{\mathfrak{h}}}$ is a Lie algebra morphism, so it intertwines the Lie bracket of $\overline{\mathfrak{h}}$ and $\mathfrak{h}$. Thus, $\phi: \overline{T}\to T$ is a morphism of graded Lie algebras. Now, since $\Phi$ also intertwines $\delta$ and $\overline{\delta}$, one can use the same strategy as in Equations \eqref{varphi3bis}--\eqref{varphi3} to deduce that  $\phi$ intertwines the differentials $\partial$ and $\overline{\partial}$ on $\overline{T}'$. By item 1. of Definition \ref{defi0}, it obviously commutes with $\partial_0$. This proves that $\phi$ defines a morphism of differential graded Lie algebras between $\overline{T}$ and $T$ that moreover satisfies Equation \eqref{insup}. Hence, it is a tensor hierarchy algebra morphism. The proof of the converse statement consists of taking the proof in the reverse direction.
 \end{proof}

  Before concluding this section, let us turn to some unicity result. There is a natural notion of equivalence of tensor hierarchy algebras that are associated to the same Lie-Leibniz triple:
\begin{definition}\label{defi365}
Let  $T$ and $\overline{T}$ be two tensor hierarchy algebras of depth $i\in\mathbb{N}\cup\{\infty\}$, associated to the same Lie-Leibniz triple $\mathcal{V}=(\mathfrak{g},V,\Theta)$. Then $T$ and $\overline{T}$ are said \emph{equivalent} if there exists an isomorphism of tensor hierarchy algebras $(\varphi,\phi):T\to\overline{T}$ such that: 
\begin{enumerate}
\item $\varphi=\mathrm{id}_\mathfrak{g}$,
\item $\phi_{-1}=\mathrm{id}_{s^{-1}V}$, and
\item $\phi_{-2}=\mathrm{id}_{s^{-2}W}$, where $W$ is the bud of $\mathcal{V}$.
\end{enumerate}
%\item for every $k\geq2$, $\Phi_k:U_k\to \overline{U}_k$ is an equivalence of $\mathfrak{g}$-modules,
%\item $\Phi$ intertwines $\pi$, $\overline{\pi}$, and $\delta$, $\overline{\delta}$.
%\end{enumerate}
\end{definition}
 \noindent This is an equivalence relation.   
 
 This definition allows us to deduce an important unicity result, %Item 4. in Definition \ref{def:tensoralgebra}, i.e. the fact that $T'=(T_{-k})_{1\leq k}$ is a robust graded Lie algebra, is a very strong assumption. 
 by using the one-to-one correspondence between robust stems and tensor hierarchy algebras:
%\begin{proposition}
%Let $\mathcal{V}$ be a Lie-Leibniz triple. Then equivalences of robust stems associated to $\mathcal{V}$ are in one-to-one correspondence with equivalences of tensor hierarchy algebras associated to $\mathcal{V}$.
%\end{proposition}
\begin{corollaire}\label{ultimate}
A Lie-Leibniz triple induces -- up to equivalence -- a unique tensor hierarchy algebra.
\end{corollaire}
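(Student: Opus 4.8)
The plan is to assemble the corollary from the three structural results already in hand: the uniqueness of robust stems (Corollary \ref{inftystem}), the object-level bijection between robust stems and tensor hierarchy algebras (Theorem \ref{prop:tensorhierarchy}), and its morphism-level refinement (Proposition \ref{isomequivmor}). For \emph{existence}, I would start from a Lie-Leibniz triple $\mathcal{V}$, invoke Corollary \ref{inftystem} to obtain a robust stem $\mathcal{U}$ associated to $\mathcal{V}$ of some height $i$, and then apply the forward direction of Theorem \ref{prop:tensorhierarchy} to produce a tensor hierarchy algebra $T$ associated to $\mathcal{V}$, of depth $i+1$.

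For \emph{uniqueness}, suppose $T$ and $\overline{T}$ are two tensor hierarchy algebras associated to the same $\mathcal{V}$. Applying the converse direction of Theorem \ref{prop:tensorhierarchy} to each produces robust stems $\mathcal{U}$ and $\overline{\mathcal{U}}$ associated to $\mathcal{V}$. By Corollary \ref{inftystem} these two stems are equivalent; in particular they share the same height $i$ (this is exactly the common-height statement established in the proof of that corollary), so $T$ and $\overline{T}$ both have depth $i+1$ and Definition \ref{defi365} applies. The equivalence of stems is by definition an isomorphism $(\varphi,\Phi):\mathcal{U}\to\overline{\mathcal{U}}$ with $\varphi=\mathrm{id}_\mathfrak{g}$, $\Phi_0=\mathrm{id}_{V^*}$ and $\Phi_1=\mathrm{id}_{s(W^*)}$. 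Feeding it into Proposition \ref{isomequivmor} yields an isomorphism of tensor hierarchy algebras with components $\phi_0=\varphi|_\mathfrak{h}$ and $\phi_{-k}=s^{-1}\circ\Phi_{k-1}^*\circ s$.

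The crux, and the only place where a genuine verification is needed, is to check that the identity normalisations on the stem side descend to the identity normalisations demanded by Definition \ref{defi365}. From $\varphi=\mathrm{id}_\mathfrak{g}$ we get $\phi_0=\mathrm{id}_\mathfrak{h}$ at once. Dualising $\Phi_0=\mathrm{id}_{V^*}$ gives $\Phi_0^*=\mathrm{id}_V$ under $V^{**}\simeq V$, whence $\phi_{-1}=s^{-1}\circ\mathrm{id}_V\circ s=\mathrm{id}_{s^{-1}V}$; likewise, using $(s(W^*))^*=s^{-1}W$, the relation $\Phi_1=\mathrm{id}_{s(W^*)}$ yields $\Phi_1^*=\mathrm{id}_{s^{-1}W}$ and hence $\phi_{-2}=\mathrm{id}_{s^{-2}W}$. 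Thus $(\mathrm{id}_\mathfrak{g},\phi)$ satisfies all three conditions of Definition \ref{defi365}, so $T$ and $\overline{T}$ are equivalent, and together with the existence step this proves the corollary. The main thing to track is bookkeeping rather than mathematics: Proposition \ref{isomequivmor} reverses the direction of arrows (a stem morphism $\mathcal{U}\to\overline{\mathcal{U}}$ corresponds to a tensor hierarchy morphism $\overline{T}\to T$), but since we are manipulating isomorphisms this contravariance is harmless, and the suspension/dual shifts must be followed carefully so that the normalisations at stem-levels $0$ and $1$ land exactly at tensor-hierarchy-levels $-1$ and $-2$.
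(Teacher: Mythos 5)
Your proof follows exactly the paper's argument: pass from the two tensor hierarchy algebras to their robust stems via Theorem \ref{prop:tensorhierarchy}, invoke Corollary \ref{inftystem} for equivalence of stems (hence equal heights and depths), and transport that equivalence back through Proposition \ref{isomequivmor}. Your extra verification that the identity normalisations of Definition \ref{defi} descend to those of Definition \ref{defi365} is a useful detail the paper leaves implicit, but the route is the same.
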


\begin{proof}
Let $T$ and $\overline{T}$ be two tensor hierarchy algebras, of respective depth $i$ and $\overline{i}$, associated to $\mathcal{V}$. Let $\mathcal{U}$ and $\overline{\mathcal{U}}$ be the corresponding robust stems, as given by Theorem \ref{prop:tensorhierarchy}. We know, by Corollary \ref{inftystem},  that $\mathcal{U}$ and $\overline{\mathcal{U}}$ are equivalent as stems. In particular, they have the same height, which implies that $i=\overline{i}$. Then, by Proposition \ref{isomequivmor}, the equivalence between $\mathcal{U}$ and $\overline{\mathcal{U}}$ induces a unique equivalence of tensor hierarchy algebras between $T$ and $\overline{T}$.
\end{proof}

We conclude this section by the following interesting result:

\begin{proposition}
Let $T$ (resp. $\overline{T}$) be a tensor hierarchy algebra associated to the Lie-Leibniz triple $\mathcal{V}=(\mathfrak{g},V,\Theta)$ (resp. $(\mathfrak{h}_V,V,\Theta_V)$). Then there exists a morphism of differential graded Lie algebras between $T$ and $\overline{T}$.
\end{proposition}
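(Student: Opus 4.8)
The plan is to realize both tensor hierarchy algebras from their robust stems and to build the desired morphism as the shifted dual of a comparison map between these stems. By Theorem \ref{prop:tensorhierarchy} and Corollary \ref{ultimate} I may assume that $T$ is induced by a robust stem $\mathcal{U}=(U,\delta,\pi,\mu)$ of $\mathcal{V}=(\mathfrak{g},V,\Theta)$ and that $\overline{T}$ is induced by a robust stem $\overline{\mathcal{U}}=(\overline{U},\overline{\delta},\overline{\pi},\overline{\mu})$ of $(\mathfrak{h}_V,V,\Theta_V)$. The two triples share the same Leibniz algebra $V$, and they are tied together by the two canonical maps produced earlier: the surjective Lie algebra morphism $\varphi\colon\mathfrak{h}\to\mathfrak{h}_V$ of Lemma \ref{propositionreve}, satisfying $\varphi\circ\Theta=\Theta_V$, and the surjection $\tau\colon W\to\overline{W}$ of Proposition \ref{propositionreve3}, where $\overline{W}=\bigslant{S^2(V)}{\mathrm{Ker}(\{.,.\})}$ is the bud of $V$ and $\tau$ satisfies $\tau\circ\Pi_W=\Pi_{\overline{W}}$, $\overline{\mathrm{d}}\circ\tau=\mathrm{d}$, and the $\mathfrak{h}$-compatibility \eqref{lalol}. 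The natural direction for the morphism is $T\to\overline{T}$, since $\overline{W}$ is a quotient of $W$ and $\mathfrak{h}_V$ a quotient of $\mathfrak{h}$.

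First I would assemble a comparison map at the level of stems, namely a family $\Psi=(\Psi_k\colon\overline{U}_k\to U_k)$ from $\overline{\mathcal{U}}$ to $\mathcal{U}$. I set $\Psi_0=\mathrm{id}_{V^*}$ and $\Psi_1=s\circ\tau^*\circ s^{-1}$, the shifted dual of $\tau$. The identity $\tau\circ\Pi_W=\Pi_{\overline{W}}$ dualizes to $\Pi_W^*\circ\tau^*=\Pi_{\overline{W}}^*$, which is exactly the statement that $\Psi$ intertwines the bottom components $\pi_0$ and $\overline{\pi}_0$ of item 3 of Definition \ref{wooo}. I then extend $\Psi$ to all degrees by induction, as in Lemma \ref{isomequiv1}: assuming that the algebra extension $\Psi^{(k)}\colon S(\overline{U})\to S(U)$ intertwines $\overline{\pi}$ and $\pi$ up to level $k$, this intertwining forces $\Psi^{(k)}\bigl(\mathrm{Ker}(\overline{\pi}|_{S^2(\overline{U})_k})\bigr)\subset\mathrm{Ker}(\pi|_{S^2(U)_k})=\mathrm{Im}(\pi_k)$, so that $\Psi_{k+1}\equiv\pi_k^{-1}\circ\Psi^{(k)}\circ\overline{\pi}_k$ is well defined by injectivity of $\pi_k$ (item 4). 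The crucial difference with Lemma \ref{isomequiv1} is that $\tau$ is only surjective, so $\Psi$ is not invertible and I cannot invoke Proposition \ref{isomequiv}; I only need the one-sided inclusion of kernels, which is precisely what the intertwining provides. Along the way I propagate $\mathfrak{h}$-equivariance of $\Psi$ through $\varphi$ from the base case \eqref{lalol}, exactly as the $\mathfrak{h}$-equivariance of $\delta_{i+1}$ is propagated in Theorem \ref{lemmestrand}. By the very computation of Lemma \ref{isomequiv2}, in which the only action intervening in $\mu_k$ is that of $\Theta(x)\in\mathfrak{h}$, I obtain that $\Psi$ intertwines $\mu$ and $\overline{\mu}$, with $\varphi\circ\Theta=\Theta_V$ linking $\mu_k$ to $\overline{\mu}_k$, and hence, via $h_k=\mu_k-\delta\circ\pi_{k-1}$, that $\Psi$ intertwines $\delta$ and $\overline{\delta}$ as well.

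Then I would transport $\Psi$ to a degree-$0$ map $\phi\colon T\to\overline{T}$ by shifted dualization, $\phi_{-k}\equiv s^{-1}\circ\Psi_{k-1}^*\circ s$ for $k\geq1$ and $\phi_0\equiv\varphi$, and check that it is a morphism of differential graded Lie algebras. That $\phi$ commutes with the graded brackets of $T'$ and $\overline{T}'$ is the pairing computation of Lemma \ref{gradedlemma} (run as in Proposition \ref{isomequivmor}), carried out verbatim from the intertwining of $\pi$ by $\Psi$; that it commutes with the differentials follows from the intertwining of $\delta$. At degree $0$, $\phi_0=\varphi$ is a Lie algebra morphism; the bracket $[\mathfrak{h},T_{-k}]$ is matched because $\Psi$ is $\mathfrak{h}$-equivariant, i.e. \eqref{insup} restricted to $a\in\mathfrak{h}$; and compatibility with $\partial_0=-\Theta\circ s$ and $\partial_{-1}=-s^{-1}\circ\mathrm{d}\circ s^2$ reduces to $\varphi\circ\Theta=\Theta_V$ and $\mathrm{d}=\overline{\mathrm{d}}\circ\tau$, both already recorded.

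The main obstacle is conceptual rather than computational: the two stems live over \emph{different} Lie algebras, $\mathfrak{g}$ and $\mathfrak{h}_V$, with genuinely different buds $W$ and $\overline{W}$, so the comparison map is an essentially non-invertible, surjection-type map between inequivalent stems, and the unicity machinery of Proposition \ref{isomequiv} and Proposition \ref{isomequivmor} does not apply off the shelf. Concretely, there is in general no Lie algebra morphism $\mathfrak{g}\to\mathfrak{h}_V$ extending $\varphi$, because the $\mathfrak{g}$-action on $V$ need not factor through $\mathfrak{h}_V=\bigslant{V}{\mathcal{Z}}$; consequently full $\mathfrak{g}$-equivariance is unavailable and $\phi$ is only a morphism of differential graded Lie algebras, not a full tensor hierarchy algebra morphism. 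What makes the argument go through nonetheless is that every bracket and differential of the dgLa structure of $T$ involves the ambient Lie algebra \emph{only through its subalgebra} $\mathfrak{h}=\mathrm{Im}(\Theta)$, so that $\mathfrak{h}$-equivariance of $\Psi$, together with $\varphi\circ\Theta=\Theta_V$, is exactly what is needed; verifying this at each inductive step, and in particular the $\mu$-intertwining that couples $\Theta$ with $\Theta_V$, is where the care must be taken.
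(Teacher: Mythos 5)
Your proposal is correct and follows essentially the same route as the paper: take the two robust stems, use $\varphi:\mathfrak{h}\to\mathfrak{h}_V$ from Lemma \ref{propositionreve} and $\tau:W\to\overline{W}$ from Proposition \ref{propositionreve3} as seeds, run the inductive construction of Lemmas \ref{isomequiv1} and \ref{isomequiv2} to get a (non-invertible) stem comparison map, and dualize via Proposition \ref{isomequivmor} to obtain the dgLa morphism $T\to\overline{T}$. Your explicit justification of why the adaptation works -- only a one-sided kernel inclusion is needed, and $\mathfrak{h}$-equivariance through $\varphi$ suffices because every structure map of the stem involves the ambient Lie algebra only through $\mathrm{Im}(\Theta)$ -- is exactly the content the paper compresses into ``by slightly adapting the proofs''.
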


\begin{proof} Let $\mathfrak{h}=\mathrm{Im}(\Theta)$ and $W$ be the bud of $\mathcal{V}$. Let $\mathcal{U}$ and $\overline{\mathcal{U}}$ be the two robust stems corresponding to $T$ and $\overline{T}$, respectively.
Lemma \ref{propositionreve} gives us a Lie algebra morphism $\varphi:\mathfrak{h}\to\mathfrak{h}_V$, and Proposition \ref{propositionreve3} proves the existence of a map $\tau:W\to\bigslant{S^2(V)}{\mathrm{Ker}\big(\{\,.\,,.\,\}\big)}$ that is compatible with $\varphi$, see Equation \eqref{lalol}. By following the steps in the proofs of Lemmas \ref{isomequiv1} and \ref{isomequiv2}, one can construct a couple $(\varphi,\Phi)$ where $\Phi$ is a map from $\overline{\mathcal{U}}$ to $\mathcal{U}$ that satisfies all criteria of Definition \ref{defi0}, except that $\varphi$ is a map from $\mathfrak{h}$ to $\mathfrak{h}_V$, and not from the whole of $\mathfrak{g}$. Then, by slightly adapting the proof of Proposition \ref{isomequivmor}, we deduce that the data $(\varphi,\Phi)$ define a morphism of differential graded Lie algebras between $T$ and $\overline{T}$.
\end{proof}

\begin{remarque}
Interestingly, this results shows that, given a Leibniz algebra $V$, every tensor hierarchy algebras involving $V$ (i.e. associated to any Lie-Leibniz triple involving $V$) admits a differential graded Lie algebra morphism toward the unique -- up to equivalence -- tensor hierarchy algebra associated to the `standard' Lie-Leibniz triple $(\mathfrak{h}_V,V,\Theta_V)$. However it may not induce a morphism of tensor hierarchy algebras !
\end{remarque}

Thus we have shown in this paper that every Lie-Leibniz triple induces a unique tensor hierarchy algebra. This algebra coincides with the one that Jakob Palmkvist builds from Borcherds algebras \cite{palmkvist, palmkvist1}. Given that in supergravity models, the Bianchi identities induce a $L_\infty$-algebra structure on the (shifted) tensor hierarchy \cite{monpapier}, it would seem natural to understand how one passes from the tensor hierarchy algebra structure on $T=\mathfrak{h}\oplus T'$ to a $L_\infty$-algebra on $T'[-1]$. This is all the more important since $L_\infty$ algebras have recently drawn much interests in supergravity theories \cite{Henning2018, Hohm, Cagnacci, Cederwall, Henning2019}. This topic is indeed important because these $L_\infty$ structures encode the field strengths of the theory and their corresponding Bianchi identities. Hence, to deduce a $L_\infty$-algebra structure from a tensor hierarchy algebra structure would be very interesting because it would show that some physical information captured by the tensor hierarchy could be deduced by straightforward mathematical considerations. Also, on the mathematical side, this would be very interesting because it might provide a lifting of the skew-symmetric part of the Leibniz product to a $L_\infty$-algebra structure. This might be possible by applying a result by Fiorenza and Manetti \cite{Fiorenza} (that was found again later by Getzler \cite{getzler}) that states that a differential graded Lie algebra structure on $T=\mathfrak{h}\oplus T'$ induces a $L_\infty$-algebra structure on $T'[-1]$. This topic is still under investigation and may be the object of another paper.

\section{Examples}\label{sectionexamples}

\subsection{Differential crossed modules}

A differential crossed module is the data of two Lie algebras $\big(V,[\,.\,,.\,]_V\big)$ and $\big(\mathfrak{g},[\,.\,,.\,]_\mathfrak{g}\big)$, and two Lie algebra morphisms $\Theta:V\to \mathfrak{g}$ and $\rho:\mathfrak{g}\to\mathrm{End}(V)$ satisfying the following equations:
\begin{align}
[x,y]_V&=\rho_{\Theta(x)}(y)\\
\Theta\big(\rho_a(x)\big)&=\big[a,\Theta(x)\big]_\mathfrak{g}
\end{align}
for every $x,y\in V$ and $a\in\mathfrak{g}$. These data form a Lie Leibniz triple $(\mathfrak{g},V,\Theta)$ for which  the embedding tensor is $\mathfrak{g}$-equivariant (and not only $\mathfrak{h}=\mathrm{Im}(\Theta)$-equivariant). The tensor hierarchy algebra associated to it actually minimally depends  on $\mathfrak{g}$ and on $\Theta$ because $V$ is a Lie algebra. Hence, the symmetric bracket on $V$ is inexistent, and then it means that $\mathrm{Ker}\big(\{\,.\,,.\,\}\big)=S^2V$, which turns out to be a $\mathfrak{g}$-module. Then, the bud of $(V,\mathfrak{g},\Theta)$ is $\{0\}$. By induction, every other space of higher degree appearing in the construction of the stem associated to this Lie-Leibniz triple is zero. Hence the tensor hierarchy reduces to the following data:
\begin{align}
T&=\mathfrak{h}\oplus s^{-1}V\\
\partial_{0}&=-\Theta\circ s\\
[s^{-1}V,s^{-1}V]&=0\\
[a,\widetilde{y}]&=s^{-1}\big([x,s(\widetilde{y})]_V\big)\\
[a,b]&=[a,b]_\mathfrak{g}
\end{align}
for any $a=\Theta(x), b\in\mathfrak{h}$ and $\widetilde{y}\in T_{-1}=s^{-1}V$. The brackets on the left hand sides are the brackets on $T$. In the third line, the bracket of two elements of $T_{-1}=s^{-1}V$  is zero because there is no space concentrated in degrees lower than $-1$ since the bud of $V$ is zero. The last line does not depend on the pre-image $x$ of $a$ because $\mathrm{Ker}(\Theta)\subset\mathcal{Z}$. This is the typical tensor hierarchy that one obtains with differential crossed modules. It was one of the first examples of \emph{strict Lie $2$-algebras}, i.e. Lie $2$-algebras with vanishing 3-bracket \cite{Baez}. When one has only a Lie algebra $V$, then one can work with the differential crossed module $\big(\mathfrak{h}_V, V\Theta_V\big)$.

\subsection{A nilpotent Leibniz algebra}

Let us define a very simple example that is however illuminating. One can equip $\mathbb{R}^2$ with a Leibniz algebra structure. Let $a=(1,0)$ and $b=(0,1)$, and define a product $\bullet$ on $\mathbb{R}^2$ by first setting:
\begin{equation}
a\bullet a = b
\end{equation}
Requiring that $\bullet$ satisfies the Leibniz identity \eqref{leibnizidentity}, we deduce the following two other equations:
\begin{align}
a\bullet(a\bullet a)&=(a\bullet a)\bullet a + a\bullet (a\bullet a) \hspace{1cm}&&\Longrightarrow\hspace{1cm}b\bullet a=0\\
a\bullet(a\bullet b)&=(a\bullet a)\bullet b + a\bullet (a\bullet b) \hspace{1cm}&&\Longrightarrow\hspace{1cm}b\bullet b=0
\end{align}
Now, there is only one product left: $a$ acting on $b$. We set it to be zero:
\begin{equation}
a\bullet b=0
\end{equation}
Then, $\mathbb{R}^2$ equipped with this product $\bullet$ becomes a Leibniz algebra, that we call $V$. It has the particularity that the only non vanishing product is $a\bullet a =b$. This implies as well that the Leibniz product is symmetric. We call it a \emph{nilpotent Leibniz algebra}, because any combination of products of elements vanish after at most two  successive iterations.

Now let us find a Lie algebra $\mathfrak{g}$ and an embedding tensor $\Theta:V\to \mathfrak{g}$ such that: 1. $V$ is a $\mathfrak{g}$-module, and 2. $\Theta$ satisfies the linear constraint \eqref{eq:compat} and the quadratic constraint \eqref{eq:equiv}. We can assume that both $\mathfrak{g}$ and $\mathfrak{h}=\mathrm{Im}(\Theta)$ are Lie subalgebras of $\mathfrak{gl}_2(\mathbb{R})$, i.e. that the action $\rho:\mathfrak{g}\to\mathfrak{gl}_2(\mathbb{R})$ is an inclusion (hence the representation is faithful). 
Since $\mathfrak{h}$ should be a Lie algebra, but since the Leibniz product is symmetric, we know that $0=\Theta(a\bullet a)=\Theta(b)$. Then, the linear constraint implies that $\rho_{\Theta(a)}(a)=a\bullet a=b$, which implies in turn that:
\begin{equation}
\rho_{\Theta(a)}=\begin{pmatrix}
0 & 0 \\
1 & 0
\end{pmatrix}
\end{equation}
 and we note this matrix $A$. Since $\rho$ is injective we can assume that $\Theta(a)=A$, and we deduce that $\mathfrak{h}$ is the 1-dimensional Lie algebra generated by $A$. For now, we decide to choose $\mathfrak{g}$ to be the lower triangular $2\times 2$ matrices, so that $\mathfrak{h}$ is indeed a Lie subalgebra of $\mathfrak{g}$. The action of $\mathfrak{g}$ on $a$ is surjective on $\mathbb{R}^2$, whereas the image of the action of $\mathfrak{g}$ on $b$ is the sub-vector space of $\mathbb{R}^2$ spanned by $b$.
 
Let us now turn to defining the bud of $V$. Using the above notations, we can write $S^2(V)=\mathrm{Span}\big(a\odot a, a\odot b, b\odot b\big)$. Since the product $\bullet$ is symmetric, the kernel of the symmetric bracket is the subspace of $S^2(V)$ generated by $a\odot b$ and $b\odot b$. This is a $\mathfrak{g}$-module, so that the bud $W$ of $V$ is the 1-dimensional quotient $\bigslant{S^2(V)}{\mathrm{Ker}\big(\{\,.\,,.\,\}\big)}$. We denote by  $[a\odot a]$ its generator so that the collar $\mathrm{d}:W\to V$ sends $[a\odot a]$ to $b$. Then we have the usual factorization $\{\,.\,,.\,\} = \mathrm{d}\circ \Pi_W$, where $\Pi_W:S^2(V)\to W$ is the quotient map:
 \begin{center}
\begin{tikzpicture}
\matrix(a)[matrix of math nodes, 
row sep=5em, column sep=6em, 
text height=1.5ex, text depth=0.25ex] 
{&W\\ 
S^2(V)&V\\}; 
\path[->>](a-2-1) edge node[above left]{$\Pi_W$} (a-1-2); %node[below right]{$\varphi$}  (a-1-2); 
\path[->](a-1-2) edge node[right]{$\dd$} (a-2-2);
\path[->](a-2-1) edge node[above]{$\{\,.\,,.\,\}$} (a-2-2);
\end{tikzpicture}
\end{center}

Let us now set $U_0=V^*$, $U_1=s(W^*)$, and  define $\delta_1=s\circ\mathrm{d}^*$, $\pi_0=-\Pi_W^*\circ s^{-1}$. By noting $a^*$ and $b^*$ the respective dual elements of $a$ and $b$, and by $[a^*\odot a^*]$ the generator of $U_1$, this means in particular that:
\begin{equation}
\delta_1(a^*)=0,\hspace{1cm}\delta_1(b^*)=[a^*\odot a^*], \hspace{1cm}\text{and}\hspace{1cm}\pi_0\big([a^*\odot a^*]\big)=-a^*\odot a^*
\end{equation}
The proof for computing $\pi_0$ goes as follows: set $[a\odot a]^*$ be the dual element of $[a\odot a]$, and hence a generator of $W^*$. Then let us set $\Pi_W^*\big([a\odot a]^*\big)=\varpi\, a^*\odot a^* \in S^2(V^*)$. By formula \eqref{works2}, we have on the one side:
\begin{equation}
\big\langle\Pi_W^*\big([a\odot a]^*\big), a\odot a\big\rangle_{S^2(V)}=\varpi\,\big\langle a^*\odot a^*, a\odot a\big\rangle_{S^2(V)}=\varpi
\end{equation}
and on the other side:
\begin{equation}
\big\langle\Pi_W^*\big([a\odot a]^*\big), a\odot a\big\rangle_{S^2(V)}=\big\langle [a\odot a]^*, \Pi_W(a\odot a)\big\rangle_{W}=1
\end{equation}
Since both left hand sides are equal by duality, we deduce that $\varpi=1$. And then, applying item 3. of Definition \ref{wooo}, and noticing that $[a\odot a]^*=s^{-1}\big([a^*\odot a^*]\big)$, we deduce the correct formula for $\pi_0=-\Pi_W^*\circ s^{-1}:[a\odot a]^*\mapsto -a^\ast\odot a^\ast$.

Then,  extending $\pi_0$ as a derivation on $S^2(U_0\oplus U_1)$, we then compute $U_2=\mathrm{Ker}\big(\pi_0\big|_{S^2(U_0\oplus U_1)|_1}\big)$. We have an isomorphism $S^2(U_0\oplus U_1)|_1\simeq U_0\otimes U_1$ so that this space admits generators $a^*\otimes[a^*\odot a^*]$ and $b^*\otimes [a^*\odot a^*]$. The action of $\pi_0$ on these generators is:
\begin{align}
\pi_0\big(a^*\otimes[a^*\odot a^*]\big)&=-a^*\odot a^*\odot a^*\\
\pi_0\big(b^*\otimes[a^*\odot a^*]\big)&=-b^*\odot a^*\odot a^*
\end{align}
where both terms on the right hand side belong to $S^3(U_0)$ as expected. Hence $\pi_0$ is injective on $S^2(U_0\oplus U_1)|_1$, so $U_2=0$.

Now let us turn to find $U_3$. Since $U_2=0$, the only term that contributes in $S^2(U_0\oplus U_1)|_2$ is the 1-dimensional space $U_1\odot U_1$, with generator $[a^*\odot a^*]\odot[a^*\odot a^*]$. Obviously the action of $\pi_0$ on this element is not trivial, hence it is injective on $U_1\odot U_1$, and since $U_3=\mathrm{Ker}\big(\pi_0\big|_{U_1\odot U_1}\big)$, we deduce that $U_3=0$. Now notice that the fact that $U_2=0$ and $U_3=0$ imply that $S^2(U_0\oplus U_1)\big|_4=0$ which automatically implies that $U_4=0$, and so on for $U_5, U_6, etc.$ so that one deduces that $U_k=0$ for every $k\geq2$. 
Hence, the robust stem associated to the Lie-Leibniz triple $(\mathfrak{g},V,\Theta)$ defined in this sub-section is a 1-stem:
\begin{center}
\begin{tikzpicture}
\matrix(a)[matrix of math nodes, 
row sep=5em, column sep=5em, 
text height=1.5ex, text depth=0.25ex] 
{U_{0}&U_1\\ 
S^2(U_0)&\\}; 
\path[->](a-1-1) edge node[above]{$\delta_1$}  (a-1-2); 
\path[->](a-1-1) edge node[left]{$-\{\,.\,,.\,\}^*$} (a-2-1);
\path[->](a-1-2) edge node[below right]{$-\Pi_W^*\circ s^{-1}$} (a-2-1);
\end{tikzpicture}
\end{center}
where the map $\mu_0=-\{\,.\,,.\,\}^*$ indeed satisfies $\mu_0(a^*)=0$ and $\mu_0(b^*)=-a^*\odot a^*$, as expected.

Now let us determine the tensor hierarchy algebra structure associated to this stem. First, set $T_{-1}=s^{-1}V$ with generators $\widetilde{a}=s^{-1}a,\widetilde{b}=s^{-1}b$, $T_{-2}=s^{-2}W$ with generator $u=s^{-2}[a\odot a]$, and $T'= T_{-1}\oplus T_{-2}$.  Then the map $\pi_0$ defines a degree $+1$ homological vector field on $s^{-1}T'$ by Equation \eqref{bonjouuurq}. By Theorem \ref{correspondence}, this induces a graded Lie algebra structure on $T'$, with only one bracket, obtained from Equation \eqref{bracketdebase} and defined by:
\begin{equation}
[\widetilde{a},\widetilde{a}]=2u
\end{equation}
This is consistent with Equation \eqref{conditioncrochet}, and all other brackets vanish. This graded Lie algebra structure can be completed with a linear application $\partial_{-1}:T_{-2}\to T_{-1}$ whose action is obtained by Equation \eqref{differentielledebase}:
\begin{equation}
\partial_{-1}(u)=-\widetilde{b}
\end{equation}

This map would have played the role of a differential if it had satisfied the compatibility condition with the bracket.
It is not the case since $\partial_{-1} \big([\widetilde{a},\widetilde{a}]\big)\neq0$ on the one hand, whereas $\big[\partial_{-1} (\widetilde{a}),\widetilde{a}\big]=0$ on the other hand. Thus, to satisfy the Leibniz identity, we add the Lie algebra $\mathfrak{h}$ to $T'$ as a degree 0 vector space, and we set $T=\mathfrak{h}\oplus T'$. We then define $\partial_0$ as in Equation \eqref{diff0}, and it is compatible with $\partial_{-1}$ in the sense that $\partial_0\circ\partial_{-1}(u)=0$. The bracket between $\mathfrak{h}$ and $T'$ id defined as in Equation \eqref{bracket3}, whereas on $\mathfrak{h}$ we take the usual Lie bracket, which is zero since $\mathfrak{h}$ is nilpotent and 2-dimensional. Then, using these new definitions, the (unique) Leibniz identity is satisfied:
\begin{equation}
\partial_{-1} \big([\widetilde{a},\widetilde{a}]\big)=-2\widetilde{b}=-2\eta_{-1,\Theta\circ s(\widetilde{a})} (\widetilde{a})=\big[\partial_{0} (\widetilde{a}),\widetilde{a}\big]+(-1)^{|\widetilde{a}|}\big[\widetilde{a},\partial_{0} (\widetilde{a})\big]
\end{equation}
where $\eta_{-1}:\mathfrak{g}\to \mathrm{End}(s^{-1}V)$ is the representation of $\mathfrak{g}$ on $T_{-1}=s{-1}V$, defined by Equation \eqref{representationshifted}, and where $(-1)^{|\widetilde{a}|}=-1$ since the degree of $\widetilde{a}$ is $-1$.
Thus we have obtained the tensor hierarchy algebra associated to the Lie-Leibniz triple $(\mathfrak{g},V,\Theta)$. It consists of the following data:
\begin{align}
T&=\mathfrak{h}\oplus s^{-1}V\oplus s^{-2}W\\
\partial_{0}(\widetilde{a})&=A\\
\partial_{0}(\widetilde{b})&=0\\
\partial_{-1}(u)&=-\widetilde{b}\\
[\widetilde{a},\widetilde{a}]&=2u\\
[A,\widetilde{a}]&=\widetilde{b}
\end{align}
All other brackets being zero. One can check that these data satisfy all conditions of Definition \ref{def:tensoralgebra}.

\subsection{The $(1,0)$ superconformal model}

An example of a 2-stem arises from the six-dimensional $(1,0)$ superconformal model in six dimensions presented in full generality \cite{Henning2011}. Its mathematical aspects were investigated in \cite{Palmer, monpapier}. The symmetry algebra of this model is $\mathfrak{g}\equiv\mathfrak{e}_{5(5)}=\mathfrak{so}(5,5)$ \cite{Trigiante}. %and the gauge algebra is $\mathfrak{h}\equiv\mathfrak{so}_5\oplus \mathfrak{so}_5$%It involves a Leibniz algebra $V$, a gauge algebra $\mathfrak{g}=\bigslant{V}{\mathcal{I}}$, obtained by quotienting $V$ by the ideal of squares $\mathcal{I}$. Denoting the quotient map by $\Theta: V\to\mathfrak{g}$, this implies that $\mathrm{Im}\big(\{.\,,.\}\big)=\mathrm{Ker}(\Theta)$. 
The model involves a set of $p$-forms (for $p=1,2,3,\ldots,6$) taking values, respectively, in the following $\mathfrak{g}$-modules: $V=\textbf{16}$, $W=\textbf{10}$, $X=\overline{\textbf{16}}$, $Y=\textbf{45}$, $Z=\overline{\textbf{144}}$ and $A=\textbf{10}\oplus\overline{\textbf{126}}\oplus\textbf{320}$ \cite{Trigiante}. These modules are defined from the representation constraint that sets $W$ and that is induced by supersymmetric considerations.  From this, all other spaces are uniquely defined. 

Notice that in supergravity, since supersymmetry provides a supplementary set of informations, the choice of gauge subalgebra need not be made at the beginning but at the very end of the construction. These physical considerations imply also that the choice of gauge algebra has no consequence on the modules $W, X, Y, Z$, and moreover that the dimension of the possible candidates for gauge algebras is constant. 
 This has two consequences: first, the gauge algebra $\mathfrak{h}\equiv\mathrm{Im}(\Theta)$ does not explicitly appear, see \cite{Bergshoeff, Henning2011}, and we rather work with a formal subalgebra $\mathfrak{h}_V$ (see below). Second, we construct a tensor hierarchy with abstract tensors, and then, at the very end, one fixes $\Theta$ and deduces the explicit form of the maps, as is done in \cite{Henning2011}.  Fixing the embedding tensor automatically fixes the gauge algebra : since it is done at the very end, the embedding tensor is considered as a \emph{spurionic object}. We will provide here the formal machinery and will hence do not bother on fixing $\Theta$, in the same spirit of the original paper \cite{Henning2011}, from which most notations are taken.

% Thus, Physicists have some latitude in choosing the embedding tensor $\Theta$. \textbf{WHAT is the reason/consequence on W again ??}
% 
% 

A priori the hierarchy is not constrained and goes to infinity, but since the space-time dimension is bounded, physicists are not interested tensor hierarchies of depth striclty higher than $6$. However, the computation are so complicated that usually Physicists stop the calculations at depth 3 or 4, and we will follow them on this point. Moreover, the particularity of this model is that the 3-form fields $C_t$ are dual to the 1-forms $A^a$. The  top (resp. bottom) indices are taken from the beginning (resp. the end) of the alphabet, to emphasize this duality. The reader who is not familiar with the $(1,0)$ superconformal model in six dimensions is advised to refer herself to \cite{Henning2011}, where this is discussed in full generality.
Due to the heavy calculations induced by the model, we will not present the whole hierarchy and restrain ourselves to the first orders. See \cite{Ortin} for an exposition of higher orders and \cite{Bergshoeff} for a more general discussion of supergravity models in $D=6$ dimensions.

The beginning of the hierarchy  is governed by a set of constants $h^a_I$, $g^{It}$, %$k^\alpha_t$, 
$f_{ab}^c \equiv f_{[ab]}^c$,  $d^I_{ab} \equiv d^I_{(ab)}$, $b^{}_{Ita}$, %$c^s_{\alpha a}$, $c^{}_{\alpha IJ}\equiv c_{\alpha[IJ]}$, 
subject to the following relations:
\begin{align}% identites utiles pour Bianchi
	2\big(d^{J}_{c(a}d^{I}_{b)s}-d^{I}_{cs}d^{J}_{ab}\big)h^{s}_{J}			&=2f_{c(a}{}^{s}d^{I}_{b)s}-b^{}_{Jsc}d^{J}_{ab}g^{Is}\label{lol1}\\
	\big(d^{J}_{rs}b^{}_{Iut}+d^{J}_{rt}b^{}_{Isu}+2d^{K}_{ru}b^{}_{Kst}\delta^{J}_{I}\big)h^{u}_{J}&=f_{rs}{}^{u}b^{}_{Iut}+f_{rt}{}^{u}b^{}_{Isu}+g^{Ju}b^{}_{Iur}b^{}_{Jst}\label{lol2}\\
	f_{[ab}{}^{r}f_{c]r}{}^{s}-\frac{1}{3}h_{I}^{s}d^{I}_{r[a}f_{bc]}{}^{r}&=0\label{lol3}\\
	h^{a}_{I}g^{It}													&=0\label{lol4}\\
	f_{rb}{}^{a}h_{I}^{r}-d^{J}_{rb}h^{a}_{J}h^{r}_{I}				&=0\label{lol5}\\
	g^{Js}h^{r}_{I}b^{}_{Ksr}-2h_{K}^{s}h_{I}^{r}d^{J}_{rs}				&=0\label{lol6}\\
	-f_{rt}{}^{s}g^{It}+d^{J}_{rt}h^{s}_{J}g^{It}-g^{It}g^{Js}b^{}_{Jtr}	&=0\label{lol7}\\
	b^{}_{Jt(a}d^{J}_{bc)}												&=0\label{lol8}
\end{align}	

The 1-forms $A^a$ take values in the $\mathfrak{g}$-module $V=\textbf{16}$. %In \cite{Bergshoeff}, it is shown that 
This $\mathfrak{g}$-module $V$ can be equipped with a Leibniz algebra structure whose generators are noted $X_a$. The Leibniz product is defined, for any $X_a,X_b\in V$ by:
\begin{equation}\label{conformaleibniz}
X_{a}\bullet X_b\equiv-X_{ab}{}^c X_c
\end{equation}
where $X_{ab}{}^c=-f_{ab}^c+d^I_{ab}h^c_I$ are the \emph{structure constants} of the Leibniz algebra. For consistency with Definition \ref{def:lieleibniz}, this action should coincide with the action of $\mathfrak{h}\equiv\mathrm{Im}(\Theta)$ on $V$:
\begin{equation}\label{return1}
\eta_{\Theta(X_a)} (X_b)\equiv-X_{ab}{}^c X_c
\end{equation}
The (skew)-symmetric brackets are then defined by:
\begin{equation}
[X_a,X_b]_V=f_{ab}{}^cX_c\hspace{1cm}\text{and}\hspace{1cm}\{X_a,X_b\}_V=-d^I_{ab}h_I^c X_c
\end{equation}
where $h_I^c$ is a tensor that corresponds to the collar $\mathrm{d}$ of the Lie-Leibniz triple $(\mathfrak{g},V,\Theta)$.

The 2-forms $B^I$ take values in $W=\textbf{10}$, which is a sub-representation of $S^2(V)$ and which is the bud of $V$. These fields are labelled by capital letters of the middle of the alphabet, and a set of generators of $W$ is noted $\{X_I\}$. The quotient map $\Pi_W:S^2(V)\to W$ is defined by:
\begin{equation}\label{convention2}
\Pi_W(X_a\odot X_b)=-d^I_{ab}X_I
\end{equation}
and this is consistant ith the fact that $\{\,.\,,.\,\}=\mathrm{d}\circ \Pi_W$. 
The action of $\mathfrak{g}$ on $W$ is defined by:
\begin{equation}\label{return2}
\eta_{\Theta(X_a)}(X_I)\equiv-X_{aI}{}^J X_J
\end{equation}
where $X_{aI}{}^J=2h_{I}^cd^J_{ac}-g^{Js}b^{}_{Isa}$, and where $X_a\in V$.
Going further up, we reach the 3-form fields $C_t$, taking values in $X=\overline{\textbf{16}}$. In the $(1,0)$ superconformal model, the 3-forms $C_t$ are dual to the 1-forms $A^a$, that is why we use latin letters of the end of the alphabet as labels. % that is now considered as a space of degree $-2$. 
By duality, the action of $\mathfrak{g}$ on a generator $X^s$ of $X$ is defined by:
\begin{equation}\label{structure}
\eta_{\Theta(X_a)}(X^s)\equiv X_{at}{}^sX^t
\end{equation}
where $X_{at}{}^s=-f_{at}{}^s+d^I_{at}h^s_I$.

By setting $U_0=V^*$, $U_1=s(W^*)$ and $U_2=s^2(X^*)$, the maps of interest are written on the following diagram (the signs and the symbols can directly be read on the Bianchi identities of the field strengths in \cite{Henning2011}):
\begin{center}
\begin{tikzpicture}
\matrix(a)[matrix of math nodes, 
row sep=5em, column sep=4em] 
{U_0&U_1&U_2\\
U_0\odot U_0&U_0\odot U_1\\};
\path[left hook->](a-1-2) edge node[above left]{$d^{I}_{ab}$}  (a-2-1);
%\path[->](a-1-2) edge node[right]{$\mu$}  (a-2-2);
\path[->](a-1-1) edge node[above]{$h_I^a$}  (a-1-2);
%\path[->](a-2-2) edge node[above]{$\delta$} (a-2-3);
%\path[dotted,->](a-1-4) edge  (a-1-5);
\path[->](a-1-2) edge node[above]{$g^{It}$}  (a-1-3);
\path[left hook->](a-1-3) edge node[above left]{$-b^{}_{Ita}$} (a-2-2);
%\path[->](a-1-3) edge node[above]{$k_{t}^\alpha$} (a-1-4);
%\path[dotted,left hook->](a-1-5) edge  (a-2-4);
%\path[->](a-2-4) edge node[above]{$\delta$}  (a-2-5);
%\path[->](a-2-1) edge node[above]{$\delta$}  (a-2-2);
%\path[left hook->](a-1-4) edge node{$-c^s_{\alpha a}+c^{}_{\alpha IJ}$}  (a-2-3);
%\path[left hook->](a-1-6) edge node[above left]{$\pi$}  (a-2-5);
%\path[dotted,->](a-1-4) edge (a-1-5);
%\path[->](a-1-5) edge node[above]{$\delta$}  (a-1-6);
%\path[dotted,->](a-1-4) edge  (a-2-4);
%\path[->](a-1-5) edge node[right]{$\mu$}  (a-2-5);
%\path[->](a-1-3) edge node[right]{$\mu$} (a-2-3);
%\path[->](a-1-1) edge node[right]{$\mu$}  (a-2-1);
\end{tikzpicture}
\end{center}
%where $W$ and $X$ are understood to be spaces of degree $-1$ and $-2$ respectively.
In the following, we will define $X^a, X^I$ and $X_t$ as the respective shifted dual elements of $X_a, X_I$ and $X^t$, i.e. in the sense that $X^a$ has degree 0, $X^I$ has degree $+1$, $X_t$ has degree $+2$, etc. This means that $\{s^{-1}X^I\}$ is a basis of $W^*=s^{-1}U_1$ for example. Then we have:
\begin{align*}
\delta_1(X^a)=h^a_I\, X^I\hspace{1cm}&\text{and}\hspace{1cm}\delta_{2}(X^I)=g^{It}\,X_t,\\
\pi_0(X^I)=d^I_{ab}\,X^a\odot X^b\hspace{1cm}&\text{and}\hspace{1cm}\pi_1(X_t)=-b^{}_{Ita}\,X^I\odot X^a,\\
\mu_0(X^a)=d^{I}_{bc}h_{I}^a\,X^b\odot X^c\hspace{1cm}&\text{and}\hspace{1cm}\mu_1(X^I)=X_{aJ}{}^I\,X^a\odot X^J
\end{align*}
The expression for $\delta_1$ is the mere dual expression of the collar $\mathrm{d}$, whereas the expression for $\pi_0$ is a bit more intricate to find. Recall that $\pi_0$ is defined from the map $\Pi_W:W^*\to S^2(V^*)$ by Item 3. in Definition \ref{wooo}. So let us compute $\Pi_W^*$ in coordinates, given the expression of $\Pi_W$ in Equation \eqref{convention2}. We  set $\Pi_W^*(s^{-1}X^I)=M_{ef}^I\,X^e\odot X^f$ for some tensor $M_{ef}^c$, where the lower indices are symmetric. Then we have, by Equation \eqref{works2}:
\begin{equation}
\big\langle\Pi_W^*(s^{-1}X^I),X_a\odot X_b\big\rangle=\big\langle M_{ef}^I\,X^e\odot X^f,X_a\odot X_b\big\rangle=M_{ab}^I
\end{equation}
But, by Equation \eqref{convention2} we obtain on the other hand:
\begin{equation}
\big\langle\Pi_W^*(s^{-1}X^I),X_a\odot X_b\big\rangle=\big\langle s^{-1}X^I,\Pi_W(X_a\odot X_b)\big\rangle=-d_{ab}^I
\end{equation}
Thus, we have $M_{ab}^I=-d^I_{ab}$, which implies that $\Pi_W^*(s^{-1}X^I)=-d_{ab}^I\,X^a\odot X^b$. By comparing with the formula of $\pi_0$, this proves indeed that $\pi_0(X^I)=-\Pi_W^*(s^{-1}X^I)=d^I_{ab}X^a\odot X^b$.

%In particular, we also have $X^a(s^{-1}X_b)=\delta^a_b$, $X^I(s^{-1}X_J)=\delta^I_J$ and $X_t(s^{-1}X^s)=\delta^s_t$.
%The corresponding Lie-Leibniz triple is $(V,\mathfrak{e}_{5(5)},\Theta)$, and the first  we have $U_0=V^*$, $U_1=W^*$ and $U_2=X^*$.%, and the Lie algebra of the Lie-Leibniz triple and the gauge algebra (the image of $\Theta$) coincide in this model.

%Under such circumstances, the map $\mu_1$ corresponds to the contragedient representation of $\mathfrak{g}$ on $W$.
Now let us show that Equations \eqref{lol3}$-$\eqref{lol8} encode all items of Definition \ref{wooo}, except the $\mathfrak{e}_{5(5)}$-equivariance of $\pi_0$ and $\pi_1$ which is implicit in the definition of $W$ and $X$, see for example the construction of tensor hierarchies in \cite{Cederwall2}.
 This $\mathfrak{e}_{5(5)}$-equivariance of $\pi_0$ and $\pi_1$ implies a $\mathfrak{h}$-equivariance (where $\mathfrak{h}$ would be the gauge subalgebra), this is the content of Equations \eqref{lol1} and \eqref{lol2}. 
Equation \eqref{lol3} corresponds to the Jacobi identity for the skew-symmetric bracket $[\,.\,,.\,]_V$, when one uses the tensors corresponding to $[\,.\,,.\,]_V$ and $\{\,.\,,.\,\}_V$ in Equations \eqref{jacobiator0} and \eqref{jacobiator}. Equation \eqref{lol4} corresponds to the condition $\delta_{2}\circ\delta_1=0$. Equations \eqref{lol5} and \eqref{lol6}  are implied by the fact that $\eta_{\Theta\circ\mathrm{d}(X_I)}=0$ on $V$ and on $W$ since, for example, multiplying the left hand side of Equation \eqref{lol6} by $X^K$ gives:
\begin{equation}
h_I^rX_{rK}{}^J\,X^K=h_I^r\eta_{\Theta(X_r)}(X^J)=\eta_{\Theta\circ\mathrm{d}(X_I)}(X^J)
\end{equation}
whose vanishing is induced by the homological condition $\Theta\circ \mathrm{d}$. % $\delta_1\circ\Theta^*=0$, that is: $\mathcal{I}\subset\mathrm{Ker}(\Theta)$.
Equation \eqref{lol5} can also be seen as the $\mathfrak{h}$-equivariance of $\delta_1$:
\begin{align}
\eta_{\Theta(X_b)}\big(\delta_1(X^a)\big)-\delta_1\big(\eta_{\Theta(X_b)}(X^a)\big)&=h^a_J\eta_{\Theta(X_b)}\big(X^J\big)-\delta_1\big(X_{br}{}^aX^r\big)\\
&=h^a_JX_{bI}{}^JX^I-h^r_IX_{br}{}^aX^I\\
&=\Big(h^a_J\big(2h^r_Id^J_{br}-g^{Jt}b_{Itb}\big)\nonumber\\
&\hspace{2cm}-h^r_I\big(-f_{br}{}^a+d^J_{br}h_J^a\big)\Big)X^I\\
&=\Big(h^a_Jh^r_Id^J_{br}-f_{rb}{}^ah^r_I\Big)X^I
\end{align}
where we used Equation \eqref{lol4} between the second and the third line, and the skew-symmetry of lower indices of $f_{br}{}^a$ between the third and the fourth line.
By the same line of arguments, Equation \eqref{lol7} symbolizes the $\mathfrak{h}$-equivariance of $\delta_{2}$, and Equation \eqref{lol8} is the condition $\pi^2|_{U_2}=0$:
\begin{equation}
\pi^2(X_t)=\pi\big(-b^{}_{Ita}X^I\odot X^a\big)=-b_{Ita}^{}d^I_{bc}\,X^b\odot X^c\odot X^a=-b_{It(a}^{}d^I_{bc)}\,X^a\odot X^b\odot X^c
\end{equation}
The fact that $\pi_1$ is injective, and that $\mathrm{Im}(\pi_1)=\mathrm{Ker}\big(\pi|_{U_0\odot U_1}\big)$ is guaranteed from physical considerations, see \cite{Henning2011, Cederwall2}. Now let us check that the condition that $\mu$ is a null-homotopic map at levels 0 and 1 is satisfied. First let us compute $\{\,.\,,.\,\}_V^*:U_0\to S^2(U^0)$ in coordinates. We set $\{\,.\,,.\,\}_V^*(X^c)=N_{ef}^c\,X^e\odot X^f$ for some tensor $N_{ef}^c$, where the lower indices are symmetric. Then we have, by Equation \eqref{works2}:
\begin{equation}
\big\langle\{\,.\,,.\,\}_V^*(X^c),X_a\odot X_b\big\rangle=\big\langle N_{ef}^c\,X^e\odot X^f,X_a\odot X_b\big\rangle=N_{ab}^c
\end{equation}
But, by Equation \eqref{worksz} and the definition of $\{\,.\,,.\,\}_V$, we have $N_{ab}^c=-d^I_{ab}h^c_I$. This implies that $\{\,.\,,.\,\}_V^*(X^c)=-h^c_Id_{ab}^I\,X^a\odot X^b$. By comparing with the formula of $\mu_0$, this proves indeed that $\mu_0=-\{\,.\,,.\,\}_V^*$.
Finally, to show that $\mu_1=\delta_1\circ\pi_0+\pi_1\circ\delta_2$, we compute straighforwardly:
\begin{equation}
\Big(\delta_1\circ\pi_0+\pi_1\circ\delta_2\Big)(X^I)=\delta_1\big(d^I_{ab}X^a\odot X^b\big)+\pi_1\big(g^{It}X_t\big)=\Big(2h^a_Jd^I_{ab}-g^{It}b_{Jtb}\Big)X^J\odot X^b
\end{equation}
and the parenthesis on the right hand side is indeed equal to $X_{bJ}{}^I$, as required.  Hence, all this set of maps and spaces form a 2-stem as defined in Definition \ref{wooo}.

As explained in \cite{Henning2011}, the hierarchy of differential forms $A^a, B^I, C_t$ can be extended one step further by adding a set of 4-forms $D_\alpha$ that take values in the $\mathfrak{g}$-module $Y=\textbf{45}$. %We assign to every element of $Y$ a degree $-3$ so that the dual space $Y^*$ is considered as having degree 3.
 Three new tensors $k_t^\alpha, c^{}_{\alpha IJ}$ and $c^t_{\alpha a}$ have to be introduced  so that this extension is consistent. They obey a set of additional conditions:
\begin{align}
	g^{Kt}k_{t}^{\alpha}												&=0\label{lol9}\\
	4d^J_{ab} c^{}_{\alpha IJ} - b^{}_{Ita} c^t_{\alpha b} - b^{}_{Itb} c^t_{\alpha a}
	&=0\label{lol10}\\
	k_t^\alpha c^{}_{\alpha IJ} -h^a_{[I}b^{}_{J]ta} 
	&=0\label{lol11}\\
	k_t^\alpha c^s_{\alpha a} - f_{ta}{}^s + b^{}_{Jta}g^{Js} - d^J_{ta}h^s_J
	&=0\label{lol12}%celle n'est la que par dualite 4-form - 1-form
\end{align}

\noindent By setting $U_3=s^3(Y^*)$, the corresponding 3-stem is as follows (the signs are obtained from the Bianchi identities given in \cite{Henning2011}):

\begin{center}
\begin{tikzpicture}
\matrix(a)[matrix of math nodes, 
row sep=5em, column sep=4em] 
{U_0&U_1&U_2&U_3\\
U_0\odot U_0&U_0\odot U_1&\begin{array}{ll}U_1\odot U_1\\\oplus \ U_0\odot U_2\end{array}\\};
\path[left hook->](a-1-2) edge node[above left]{$d^{I}_{ab}$}  (a-2-1);
%\path[->](a-1-2) edge node[right]{$\mu$}  (a-2-2);
\path[->](a-1-1) edge node[above]{$h_I^a$}  (a-1-2);
%\path[->](a-2-2) edge node[above]{$\delta$} (a-2-3);
%\path[dotted,->](a-1-4) edge  (a-1-5);
\path[->](a-1-2) edge node[above]{$g^{It}$}  (a-1-3);
\path[left hook->](a-1-3) edge node[above left]{$-b^{}_{Ita}$} (a-2-2);
\path[->](a-1-3) edge node[above]{$k_{t}^\alpha$} (a-1-4);
%\path[dotted,left hook->](a-1-5) edge  (a-2-4);
%\path[->](a-2-4) edge node[above]{$\delta$}  (a-2-5);
%\path[->](a-2-1) edge node[above]{$\delta$}  (a-2-2);
\path[left hook->](a-1-4) edge node{$-c^s_{\alpha a}+c^{}_{\alpha IJ}$}  (a-2-3);
%\path[left hook->](a-1-6) edge node[above left]{$\pi$}  (a-2-5);
%\path[dotted,->](a-1-4) edge (a-1-5);
%\path[->](a-1-5) edge node[above]{$\delta$}  (a-1-6);
%\path[dotted,->](a-1-4) edge  (a-2-4);
%\path[->](a-1-5) edge node[right]{$\mu$}  (a-2-5);
%\path[->](a-1-3) edge node[right]{$\mu$} (a-2-3);
%\path[->](a-1-1) edge node[right]{$\mu$}  (a-2-1);
\end{tikzpicture}
\end{center}

\noindent The new maps $\delta_2, \pi_2$ and $\mu_2$ are:
\begin{align*}
\pi_2(X_\alpha)&=-c_{\alpha a}^t\,X_t\odot X^a +c^{}_{\alpha IJ}\, X^ I\odot X^J,\\
\delta_{3}(X_t)&=k^\alpha_t\, X_\alpha\hspace{1cm}\text{and}\hspace{1cm}\mu_2(X_t)=- X_{at}{}^s\,X^a\odot X_s
\end{align*}
where $X_\alpha$ is the dual of $X^\alpha$.
The presence of a minus sign in the definition of $\mu_2$ was expected because the index labelling the 3-forms is at the bottom. The space $U_3=s^3(Y^*)$ can be seen as a sub-module of $(V^*\otimes X^*)\oplus (W^*\odot W^*)$, when identified  with the kernel of the map $\pi|_{U_1\odot U_1\oplus  U_0\odot U_2}$.

Equation \eqref{lol9} corresponds to the homological condition $\delta_{2}\circ\delta_{1}=0$, and Equation \eqref{lol10} corresponds to the condition $\pi^2|_{U_3}=0$:
\begin{align}
\pi^2(X_\alpha)&=-c_{\alpha a}^t\,\pi\big(X_t\odot X^a\big) +c^{}_{\alpha IJ}\,\pi\big( X^ I\odot X^J\big)\\
&=-c_{\alpha a}^t\,\big(-b^{}_{Itb}\,X^I\odot X^b\odot X^a-0\big)+2c^{}_{\alpha IJ}d^I_{ab}\,X^a\odot X^b\odot X^J\\
&=\Big(b^{}_{It(a|}c^t_{\alpha |b)}-2c^{}_{\alpha IJ}d^J_{ab}\Big)\,X^a\odot X^b\odot X^J
\end{align}
And the term in parenthesis is indeed the left hand side of Equation \eqref{lol10}.
 Equation \eqref{lol12} can be written as $ f_{at}{}^s- d^J_{at}h^s_J=-k_t^\alpha c^s_{\alpha a} - b^{}_{Jta}g^{Js}$. The left hand side is the structure constant $-X_{at}^s$ of the contragredient action of $\mathfrak{h}$ on $X^*$ and it can be seen as the map $\mu_2:U_3\to U_0\odot U_2$, whereas the right hand side corresponds to applying $\pi_2\circ\delta_{3}+(\delta_1\otimes \mathrm{id}_{U_1}+\mathrm{id}_{U_0}\otimes \delta_2)\circ\pi_{1}$ and taking the corestriction to $U_0\odot U_2$. Equation \eqref{lol11} corresponds to the fact that the corestriction of $\pi_2\circ\delta_{3}+(\delta_1\otimes \mathrm{id}_{U_1}+\mathrm{id}_{U_0}\otimes \delta_2)\circ\pi_{1}$ to $U_1\odot U_1$ is always zero, by construction. Hence, Equations \eqref{lol11} and \eqref{lol12} correspond to the null-homotopic condition $\mu=[\delta,\pi]$ at level~2. Finally, as a side remark we notice that Equation \eqref{lol7} is obtained by contracting Equation \eqref{lol12} with $g^{It}$.
%Recall that $U_1$ is a  graded vector space of degree $-1$ hence the symmetric product of elements in $U_1\odot U_1$ is actually skew-symmetric on $I,J$ indices. 
Hence, by setting $\delta=(\delta_k)_{1\leq k\leq 3}$, $\pi=(\pi_k)_{0\leq k \leq 2}$ $\mu=(\mu_k)_{0\leq k \leq 2}$, and $U=(U_{k})_{0\leq k\leq 3}$, we observe that $(U,\delta,\pi,\mu)$ is a 3-stem over the Lie-Leibniz triple $\big(\mathfrak{e}_{5(5)},V,\Theta\big)$, where $\Theta: V\to \mathfrak{e}_{5(5)}$ is to be fixed later. It is not a proper robust stem as such because we should push the computations to higher levels, but physicists did not go further so we shall stop here, having in mind that theoretically the process does not meet any obstacle to build a robust stem.
% then the computations for the Jacobi identities become very nasty and one is not gaining much more insight. 

 That is why we will use this 3-stem to  build the beginning of the tensor hierarchy algebra that is associated to the $(1,0)$ superconformal model in six dimensions. %Since we have only given the formulas characterizing the 3-stem of this supergravity model, we only have the beginning of the tensor hierarchy algebra.
We define $T_{-1}\equiv s^{-1}V=s^{-1}(U_0^*)$, $T_{-2}\equiv s^{-2}W=s^{-1}(U_1^*)$, $T_{-3}\equiv s^{-3}X=s^{-1}(U_2^*)$ and $T_{-4}\equiv s^{-4}Y=s^{-1}(U_3^*)$, so that $T_{-k}$ can be considered as a space of degree $-k$, as desired. We finally set $T'\equiv(T_{-k})_{1\leq k\leq 4}$.
Let us now define basis for $T'$: a basis of $T_{-1}$ is given by the elements $e_a\equiv s^{-1}(X_a)$, a basis of $T_{-2}$ is given by the elements $e_I\equiv s^{-2}(X_I)$, a basis of $T_{-3}$ is given by the elements $e^t\equiv s^{-3}(X^t)$ and a basis of $T_{-4}$ is given by the elements $e^\alpha\equiv s^{-4}(X^\alpha)$.
Let us now turn to the application of Lemma \ref{gradedlemma}. We have to show that the graded vector space $T'$ can be equipped with a bracket that satisfies the Jacobi identity (at least for the Jacobiators taking values in $T'$). The idea is to show that $s^{-1}T'=(s^{2}U)^*$ is a $Q$-manifold. 
We set $u^a\in (s^{-1}T_{-1})^*=s^2U_0$ the dual coordinate of $s^{-1}e_a$, $u^I\in (s^{-1}T_{-2})^*=s^2U_1$ the dual coordinate of $s^{-1}e_I$, $u_t\in (s^{-1}T_{-3})^*=s^2U_2$ the dual coordinate of $s^{-1}e^t$ and $u_\alpha\in (s^{-1}T_{-4})^*=s^2U_3$ the dual coordinate of $s^{-1}e^\alpha$.
That is to say, we have the following duality relations:
\begin{align}
\iota_{s^{-1}e_b}(u^a)&=\delta_b^a\\
\iota_{s^{-1}e_J}(u^I)&=\delta_J^I\\
\iota_{s^{-1}e^s}(u_t)&=\delta_t^s\\
\iota_{s^{-1}e^\beta}(u_\alpha)&=\delta^\beta_\alpha
%\iota_{s^{-1}e_b}(u^a)=u^a\big(s^{-1}e_b\big)&=\delta_b^a\\
%\iota_{s^{-1}e_J}(u^I)=u^I\big(s^{-1}e_J\big)&=\delta_J^I\\
%\iota_{s^{-1}e^s}(u_t)=u_t\big(s^{-1}e^s\big)&=\delta_t^s\\
%\iota_{s^{-1}e^\beta}(u_\alpha)=u_\alpha\big(s^{-1}e^\beta\big)&=\delta^\beta_\alpha
\end{align}
where the $\delta$'s are Kronecker's symbols. In particular we can make the following formal identifications:
\begin{align}
\frac{\partial}{\partial u^a}&\longleftrightarrow\iota_{s^{-1}e_a}\\
\frac{\partial}{\partial u^I}&\longleftrightarrow\iota_{s^{-1}e_I}\\
\frac{\partial}{\partial u_t}&\longleftrightarrow\iota_{s^{-1}e^t}\\
\frac{\partial}{\partial u_\alpha}&\longleftrightarrow\iota_{s^{-1}e^\alpha}
\end{align}

Then, since $\pi: U\to S^2(U)$ is a degree $-1$ map, it canonically induces a degree $+1$ map $Q_\pi:s^{2}U\to S^2\big(s^{2}U\big)$, that we can extend as a derivation to $S\big(s^{2}U\big)=S\big((s^{-1}T')^*\big)$ as follows:
\begin{align}
	Q_\pi (u^{a})		&=0\\
	Q_\pi (u^{I})		&=d^{I}_{bc}u^{b}\odot u^{c}\\
	Q_\pi (u_{t})	\hspace{0.05cm}	&=-b^{}_{Ita}u^{I}\odot u^{a}\\
	Q_\pi (u_{\alpha})	&=c^{}_{\alpha IJ}u^{I}\odot u^{J}-c_{\alpha a}^{t}u_{t}\odot u^{a}
\end{align}
%where $u^{a},u^{b},u^{c}\in (s^{-1}T_{-1})^*$, $u^{I},u^{J}\in(s^{-1}T_{-2})^*$, $u_t\in (s^{-1}T_{-3})^*$ and $u_\alpha\in (s^{-1}T_{-4})^*$.
By Theorem \ref{correspondence}, this induces the following bilinear bracket on $T'$:
\begin{align}
	[e_{a},e_{b}]	&=	-2\,d^{I}_{ab}\,e_{I}\label{brak1}\\
	[e_{a},e_{I}]	&=	-b^{}_{Ita}\,e^{t}\label{brak2}\\
	[e_{I},e_{J}]	&=	-2\,c^{}_{\alpha IJ}\,e^{\alpha}\label{brak3}\\
	[e_{a},e^{t}]	&=	c_{\alpha a}^{t}\,e^{\alpha}\label{brak4}
\end{align}
%where $e_{a},e_{b}\in T_{-1}$, $e_{I},e_{J}\in T_{-2}$, $e^t\in T_{-3}$ and $e^\alpha\in T_{-4}$.
In particular, by Equation \eqref{convention2}, the bracket $[e_a,e_b]$ satisfies Equation \eqref{conditioncrochet}.
Since the degree of $e_a, e_b$ is $-1$, their bracket is symmetric, whereas the bracket of $e_I, e_J$ is skew-symmetric, for they have degree $-2$. To be more precise,  one obtains for example the bracket $[e_{I},e_{J}]$ by the following calculation:
\begin{align}
\iota_{s^{-1}[e_{I},e_{J}]}&=(-1)^{-2}\big[[Q_\pi,\iota_{s^{-1}e_I}],\iota_{s^{-1}e_J}\big]\\
&=\big[\iota_{s^{-1}e_I}Q_\pi,\iota_{s^{-1}e_J}\big]\\
&=2c^{}_{\alpha IK}\,\big[u^K\frac{\partial}{\partial u_\alpha},\iota_{s^{-1}e_J}\big]\\
&=-2c^{}_{\alpha IK}\delta^K_J\, \iota_{s^{-1}e^\alpha}\\
&=\iota_{s^{-1}(-2c^{}_{\alpha IJ}e^\alpha)}
\end{align}
An other example is:
\begin{align}
\iota_{s^{-1}[e_{a},e_{I}]}&=(-1)^{-1}\big[[Q_\pi,\iota_{s^{-1}e_a}],\iota_{s^{-1}e_I}\big]\\
&=-\big[-\iota_{s^{-1}e_a}Q,\iota_{s^{-1}e_I}\big]\\
&=-b_{Kta}\,\big[u^K\frac{\partial}{\partial u_t},\iota_{s^{-1}e_I}\big]\\
&=-b_{Kta}\delta_I^K\iota_{s^{-1}e^t}\\
&=\iota_{s^{-1}(-b_{Ita}e^t)}
\end{align}

This bracket satisfies the graded Jacobi identity on $T_{-1}\otimes T_{-1}\otimes T_{-1}$ (resp. $T_{-1}\otimes T_{-1}\otimes T_{-2}$) because the corresponding Jacobiator takes values in $T_{-3}$ (resp. $T_{-4}$) and identically vanishes since it is equivalent to the homological conditions $\pi^2\big|_{U_2=0}$ (resp. $\pi^2\big|_{U_3=0}$).
However, every other Jacobiator has a degree strictly lower than $-4$, and thus cannot be defined since  $T'$ has be defined only up to degree $-4$. For example to compute the Jacobiator on $T_{-2}\otimes T_{-2}\otimes T_{-1}$, one need the bracket on $T_{-1}\otimes T_{-4}$ to be defined, which has not been done because one needs to define the 4-stem associated to $\big(\mathfrak{e}_{5(5)},V,\Theta\big)$ before. However in the case that we had extended the study to degree $-5$ and $-6$, Lemma \ref{gradedlemma} ensures that the Jacobi identities would be satisfied at these levels. The robustness condition on $T'$ is satisfied by construction of the modules $W,X,Y$ in supergravity theories, see the discussion of the construction of the tensor hierarchy in \cite{Cederwall2}.

Recall that up to now the Leibniz algebra structure defined on $V$ by Equation \eqref{conformaleibniz} is formal, and so is the center $\mathcal{Z}$ of $V$ and thus the quotient $\mathfrak{h}_V\equiv\bigslant{V}{\mathcal{Z}}$. These data are uniquely fixed as soon as one chooses a specific embedding tensor $\Theta:V\to \mathfrak{e}_{5(5)}$, that defines a gauge algebra $\mathfrak{h}\subset \mathfrak{e}_{5(5)}$. In supergravity theories, this is usually done at the end of the calculations.
In the $(1,0)$ superconformal model in six dimensions, the $\mathfrak{e}_{5(5)}$-module $V$ is the Majorana-Weyl spinor representation of $\mathfrak{e}_{5(5)}$, hence it is faithful, see\cite{Bergshoeff}. This implies by Lemma \ref{propositionreve} that any choice of gauge algebra $\mathfrak{h}$ is isomorphic to the algebra $\mathfrak{h}_V$.  %uniquely defines the center $\mathcal{Z}$ of the Leibniz algebra structure on $V$. However, since all possible choices of gauge algebra in this model are of the same dimensions, it implies that the centers induced by the choice of gauge algebra have all the same dimension. %, and thus is isomorphic to $\mathfrak{h}_V=\bigslant{V}{\mathcal{Z}}$.
%Moreover, by Lemma \ref{propositionreve}, $\mathfrak{h}$ is necessarily isomorphic to $\mathfrak{h}_V=\bigslant{V}{\mathcal{Z}}$. %and $T_{1}\equiv \mathbb{R}[\Theta]$, 
%Moreover, even if up to now the Leibniz algebra structure on $V$ is formal, all possible gauge algebras are of the same dimensions, and define isomorphic 
Hence, even if we do not have made a choice for a gauge algebra yet, we can formally continue the construction of the tensor hierarchy algebra by setting $T_0\equiv \mathfrak{h}_V$, and $T\equiv(T_{-k})_{0\leq k\leq3}$. We will not go to lower degrees because the fields taking values in these spaces have not been defined in \cite{Henning2011}. Since the embedding tensor $\Theta_V:V\to \mathfrak{h}_V$ is by definition surjective, a  set of generators of $\mathfrak{h}_V$ is $\Big\{\Theta_V\big(s(e_a)\big)\Big\}$ where the $\{e_a\}$ form a basis of $T_{-1}=s^{-1}V$.

We extend the bracket (see Equations \eqref{brak1}-\eqref{brak4}) to $T_0$ by Equation  \eqref{bracket2} and by setting that the bracket between two generators $\Theta_V\big(s(e_a)\big)$ and $\Theta_V\big(s(e_b)\big)$ of $\mathfrak{h}_V$ satisfies:
\begin{equation}
\Big[\Theta_V(s(e_a)),\Theta_V(s(e_b))\Big]=f_{ab}{}^c\,\Theta_V\big(s(e_c)\big)
\end{equation}
One can now define a differential $\partial$ on $T$ by Equations \eqref{bracket} and \eqref{diff0}. That is to say, $\partial_0=-\Theta_V\circ s$, and:
\begin{equation}
\partial_{-1}(e_I)=-h_I^a\,e_a\hspace{0.1cm},\hspace{0.9cm}\partial_{-2}(e^t)=g^{It}\,e_I\hspace{0.5cm}\text{and}\hspace{0.5cm}\partial_{-3}(e^\alpha)=-k_t^\alpha\,e^t
\end{equation}
Let us take a concrete example to explain how things work precisely: since $\delta_3(X_t)=k^\alpha_t \, X_\alpha$, we deduce that we can set $\delta'_3=k^\alpha_tu_\alpha\frac{\partial}{\partial u_t}$. In that case, since $s^{-1}e^\alpha$ has degree $-5$, we obtain:
\begin{align}
\iota_{s^{-1}(\partial_{-3}(e^\alpha))}&=-\big[\delta'_3,\iota_{s^{-1}(e^\alpha)}\big]\\
&=-\iota_{s^{-1}(e^\alpha)}\circ\delta'_3\\
&=-k^\alpha_t\iota_{s^{-1}e^t}
\end{align}
Moreover, as required by Equation \eqref{differentielledebase}, we indeed have $\partial_{-1}=-s^{-1}\circ \mathrm{d}\circ s^2$ because $\delta_1(X^a)=s\circ \mathrm{d^*}(X^a)=h_I^a X^I$.

%This bracket and this differential form the core ingredients needed to build the beginning of the tensor hierarchy algebra corresponding to the $(1,0)$ superconformal model in six dimensions. We will not go to lower degrees because the fields taking values in these spaces have not been defined in \cite{Henning2011}. %, nor the embedding tensor which is still considered as a spurionic object.
%The fact that,  as a spurionic object, $\Theta$ has not been defined yet,  can be taken care of by using the results of Section \ref{leibnizalgebras}.

 %and \eqref{bracket3}.
 % The space $T_{1}$ is one-dimensional because the action of $\mathfrak{g}$ on the embedding tensor $\Theta$ is trivial.
    These operator satisfy the Jacobi and Leibniz identities that we can compute, i.e. those that take values in $T$, since they are in one to one correspondence with Equations \eqref{lol1}-\eqref{lol8}, and Equations \eqref{lol9}-\eqref{lol12}, and no more.    
    For example, we have:
    \begin{align}
    	\partial_{-3}\big([e_{I},e_{J}]\big)	-\big[\partial_{-1}(e_I),e_J\big]-\big[e_I,\partial_{-1}(e_J)\big]&=	2\,c^{}_{\alpha IJ}k^\alpha_t e^t+2h^a_{[I}[e_a,e_{J]}]\\
	&=2\Big(\,c^{}_{\alpha IJ}k^\alpha_t -h^a_{[I}b^{}_{J]ta}\Big)\, e^t
    \end{align}
  which vanishes by Equation \eqref{lol11}. Another example using $\partial_0$ is:
  \begin{align}
  \partial_{-3}\big([e_a,e^t]\big)-\big[\partial_{0}(e_a),e^t\big]+\big[e_a,\partial_{-2}(e^t)\big]&=-c^t_{\alpha a}k_s^\alpha\,e^s+\big[\Theta_V(s(e_a)),e^t\big]+g^{It}\big[e_a,e_I\big]\\
  &=\Big(-c^t_{\alpha a}k_s^\alpha+X_{as}{}^t-g^{It}b^{}_{Isa}\Big)\,e^s
  \end{align}
    which vanishes by the definition of $X_{as}{}^t=-f_{as}{}^t+d_{as}^Kh^t_K$ and  Equation \eqref{lol12}.

Hence, this turns $\big( T,\partial,[\,.\,,.\,]\big)$ into a truncation at level 4 of a tensor hierarchy algebra.
    In other words, the data of the tensor hierarchy that we have defined so far from the 4-stem $(U,\delta,\pi,\mu)$ is completely contained in this truncation , and carries all the physical information that is needed. This justifies why the tensor hierarchy algebra is the correct object to look at when considering the $(1,0)$ superconformal model in six dimensions. We expect that there is a way of deducing the Lie $3$-algebra structure on $sT_{-1}\oplus sT_{-2}\oplus sT_{-3}$ given in \cite{monpapier} from the tensor hierarchy algebra structure on $T$. This Lie $3$-algebra was obtained by looking at the Bianchi identities satisfied by the field strengths, but in the present case one has to think the other way around : do not assume that the field strength are not given, and deduce them from the data of the tensor hierarchy algebra. This would show that much of the physical information captured in the tensor hierarchy is actually a mere mathematical artifact that can be deduced from straightforward computations.

\appendix

\section{Proof of Equations \eqref{donkey2} and \eqref{donkey1}}\label{appendicite}

The goal of this appendix is to give explicit proofs of Equations \eqref{donkey2} and \eqref{donkey1}. Let us start with the following Lemma:
\begin{lemme}
%If $\mathcal{U}$ is a $i$-stem of some Lie-Leibniz triple $\mathcal{V}$, where $i\in\mathbb{N}^*\cup\{\infty\}$, 
Let $i\in\mathbb{N}^*\cup\{\infty\}$ and let $\mathcal{V}=(\mathfrak{g},V,\Theta)$ be a Lie-Leibniz triple admitting a  $i$-stem $\mathcal{U}=(U,\delta,\pi,\mu)$. Then:
\begin{align}
\delta_1^*&=\mathrm{d}\circ s\label{oukey4}\\
\pi_0^*&=- s^{-1}\circ\Pi_W \label{oukey5}
\end{align}
\end{lemme}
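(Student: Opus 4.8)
The plan is to prove the two duality identities by unwinding the definitions of $\delta_1$ and $\pi_0$ given in Definition \ref{wooo}, item 3., and then applying the definition of the dual map, Equation \eqref{dualitysmooth}. The key point is that $\delta_1 = s\circ\mathrm{d}^*$ and $\pi_0 = -\Pi_W^*\circ s^{-1}$, together with the fact that $\Theta$ has been given the degree-juggling conventions fixed in Section \ref{graded}, so all the work is bookkeeping of suspension operators and signs.

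First I would establish Equation \eqref{oukey4}. Since $\delta_1 = s\circ \mathrm{d}^*$ as a map $U_0\to U_1$, its dual is $\delta_1^* = (\mathrm{d}^*)^*\circ s^*$. The map $\mathrm{d}$ is treated as a degree $0$ endomorphism (recall $V\oplus W$ is seen as a mere vector space in the definition of $\delta_1$), so its double dual $(\mathrm{d}^*)^*$ is just $\mathrm{d}$ itself under the canonical identifications $V^{**}\simeq V$, $W^{**}\simeq W$. The only subtlety is to track the suspension: $s:U_0\to U_1$ carries degree $+1$, and applying Equation \eqref{dualitysmooth} to compute $s^*$ one finds $s^* = s$ up to the sign prescribed by the degree $p$ of $s$. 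Carefully chaining the pairings $\langle \delta_1^*(\omega),x\rangle_{U_0}$ for $\omega\in U_1^*$, $x\in U_0^*$ through Equation \eqref{dualitysmooth} and the definition of $\delta_1$ should collapse to $\langle (\mathrm{d}\circ s)(\omega),x\rangle$, yielding $\delta_1^* = \mathrm{d}\circ s$.

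Next I would treat Equation \eqref{oukey5} analogously. Here $\pi_0 = -\Pi_W^*\circ s^{-1}$, so $\pi_0^* = -(s^{-1})^*\circ (\Pi_W^*)^*$. Under the double-dual identifications $(\Pi_W^*)^* = \Pi_W$, and $(s^{-1})^*$ is a desuspension up to the appropriate sign from Equation \eqref{dualitysmooth}. The goal is to verify that the combined signs and shifts reduce to $\pi_0^* = -s^{-1}\circ\Pi_W$. The cleanest way is again to pair against test elements and push everything through Equations \eqref{dualitysmooth}, \eqref{suspensiondecal}, and the defining Equation \eqref{dualpi} for $\Pi_W^*$, so that both sides agree on a basis.

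The main obstacle, as always in this part of the paper, will be the consistent handling of the suspension and desuspension signs: $s$ and $s^{-1}$ shift degrees by $\pm 1$, and Equation \eqref{dualitysmooth} introduces a factor $(-1)^{p|\alpha|}$ that must be tracked through each composition. The risk is an off-by-a-sign error, so I would verify both identities by evaluating on the explicit dual bases $\{u_a\}$, $\{u^a\}$ of $V$, $V^*$ and the corresponding bases of $W$, $W^*$, using Equation \eqref{works2} to pin down every sign. Once the sign conventions are fixed, the identities follow immediately and can then be fed into the proof of Equations \eqref{donkey2} and \eqref{donkey1}.
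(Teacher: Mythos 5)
Your proposal is correct and follows essentially the same route as the paper: the paper also proves both identities by unwinding $\delta_1=s\circ\mathrm{d}^*$ and $\pi_0=-\Pi_W^*\circ s^{-1}$ and chaining the pairings through Equations \eqref{dualitysmooth}, \eqref{suspensiondecal} and \eqref{dualpi} against test elements, with the degree of $\delta_1$ and of $\alpha\in U_1^*=s^{-1}W$ producing exactly the signs you flag as the main risk. Carrying out the pairing computation explicitly, as you propose in your last paragraph, is precisely what the paper does.
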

\begin{proof}  We start by computing the dual map $\delta_1^*$. Let $u\in U_0=V^*$ and $\alpha\in U_1^*=s^{-1}W$, so that $\delta_1^*(\alpha)\in U_0^*$. Then, by Equation \eqref{dualitysmooth}, we have:
\begin{equation}\label{oukey1}
\big\langle\delta_1^*(\alpha),u\big\rangle_{U_0}=-\big\langle\alpha,\delta_1(u)\big\rangle_{U_1}
\end{equation}
By item 3. of Definition \ref{wooo}, we know that $\delta_1=s\circ \mathrm{d}^*$. Then, using Equation \eqref{suspensiondecal} on the right hand side of Equation \eqref{oukey1}, we have:
\begin{equation}\label{oukey2}
\big\langle\delta_1^*(\alpha),u\big\rangle_{V^*}=\big\langle s(\alpha),\mathrm{d}^*(u)\big\rangle_{W^*}
\end{equation}
Then, since the map $\mathrm{d}$ does not carry any degree, see e.g. Equation \eqref{pouuf}, we obtain by Equation \eqref{dualitysmooth} the following identity:
\begin{equation}\label{oukey3}
\big\langle\delta_1^*(\alpha),u\big\rangle_{V^*}=\big\langle \mathrm{d}\circ s(\alpha),u\big\rangle_{V^*}
\end{equation}
from which  we deduce Equation \eqref{oukey4}.

 Let us now compute the dual of the map $\pi_0$.  Given the definition of $\pi_0$ in item 3. of Definition \ref{wooo}, we can apply Equation \eqref{dualpi} to $u=s^{-1}(v)$, for some $v\in U_1$, to obtain:
\begin{equation}
\big\langle\pi_0(v),x\odot y\big\rangle_{S^2(V)}=-\big\langle s^{-1}(v),\Pi_W(x\odot y)\big\rangle_{W}
\end{equation}
Using Equation \eqref{suspensiondecal}, and recalling that $s^{-1}W=U_1^*$ and that $V=U_0^*$, we have:
\begin{equation}
\big\langle\pi_0(v),x\odot y\big\rangle_{S^2(U_0^*)}=-\big\langle v,s^{-1}\circ\Pi_W(x\odot y)\big\rangle_{U_1^*}
\end{equation}
so that we obtain that the dual of $\pi_0$ is the map $\pi_0^*:S^2(U_0^*)\to U_1^*$ satisfying Equation \eqref{oukey5}.
\end{proof}

Let us now turn to the core statement of this appendix:

\begin{proposition}\label{technicalemma}
%Let $i\in\mathbb{N}^*\cup\{\infty\}$, and $\mathcal{U}=(U,\delta,\pi,\mu)$ be a $i$-stem associated to some Lie-Leibniz triple $\mathcal{V}=(\mathfrak{g},V,\Theta)$.
Let $i\in\mathbb{N}^*\cup\{\infty\}$ and let $\mathcal{V}=(\mathfrak{g},V,\Theta)$ be a Lie-Leibniz triple admitting a  $i$-stem $\mathcal{U}=(U,\delta,\pi,\mu)$.
 Assume that $U_{i+1}$ has been defined through a map $\pi_i:U_{i+1}\to S^2(U)_i$ satisfying item 4. of Definition \ref{wooo}, and assume that there exists a map $\mu_i:U_i\to U_0\otimes U_i$ defined as in item 5. of Definition \ref{wooo}. Then:
\begin{align}
\delta\circ\mu_{i-1}&=\mu_i\circ\delta_{i}\label{diddy1}\\
\pi\circ\mu_{i}&=\mu\circ\pi_{i-1}\label{diddy2}
\end{align}
\end{proposition}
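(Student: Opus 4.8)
The plan is to avoid the coordinate bookkeeping by first recognizing $\mu$ in a structural form. Fix a basis $(e_j)$ of $\mathfrak{g}$ with dual basis $(e^j)$ of $\mathfrak{g}^*$, and consider the operator
\[
M\equiv\sum_j m_{\Theta^*(e^j)}\circ\rho_{e_j}
\]
on $S(U)$, where $\rho_{e_j}$ is the derivation extension of the $\mathfrak{g}$-action and $m_{\Theta^*(e^j)}$ is multiplication by $\Theta^*(e^j)\in U_0$. Since each $\Theta^*(e^j)$ sits in degree $0$, $M$ is itself a derivation of $S(U)$. On a generator $v\in U_k$ one has $M(v)=\sum_j\Theta^*(e^j)\odot\rho_{k,e_j}(v)=(\Theta^*\otimes\mathrm{id})\circ\widetilde{\rho_k}(v)$, and using Equation \eqref{equationmu} together with the contragredient relation \eqref{dualitysmooth2} one checks that this coincides with $\mu_k$ for every $0\le k\le i$ (including $k=0$, where it reproduces $\mu_0=-\{\,.\,,.\,\}^*$). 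Thus $\mu$ and $M$ are two derivations agreeing on all generators, hence $\mu=M$ on $S(U)$. The second ingredient is the reconstruction identity $\sum_j\langle\Theta^*(e^j),x\rangle\,e_j=\Theta(x)$ for $x\in V$, which forces the Lie algebra element produced by contracting the $U_0$-leg of $M$ against $x$ to land in $\mathfrak{h}=\mathrm{Im}(\Theta)$.

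Granting $\mu=M$, Equation \eqref{diddy2} is immediate. For $u\in U_i$, apply the derivation $\pi$ to $\mu_i(u)=\mu(u)=\sum_j\Theta^*(e^j)\odot\rho_{i,e_j}(u)$. Because $\pi$ vanishes on $U_0$ (indeed $\pi_{-1}\colon U_0\to S^2(U)_{-1}=0$) and has degree $-1$ while $\Theta^*(e^j)$ has degree $0$, the Leibniz rule yields $\pi\circ\mu_i(u)=\sum_j\Theta^*(e^j)\odot\pi\bigl(\rho_{i,e_j}(u)\bigr)$. The full $\mathfrak{g}$-equivariance of $\pi$ built into Definition \ref{wooo} lifts to $[\rho_a,\pi]=0$ on $S(U)$, since both are derivations whose graded commutator vanishes on generators; swapping $\pi$ with $\rho_{e_j}$ then gives $\sum_j\Theta^*(e^j)\odot\rho_{e_j}\bigl(\pi_{i-1}(u)\bigr)$, which is exactly $M\bigl(\pi_{i-1}(u)\bigr)=\mu\circ\pi_{i-1}(u)$. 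No restriction to $\mathfrak{h}$ is needed here, precisely because $\pi$ is equivariant for all of $\mathfrak{g}$.

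Equation \eqref{diddy1} is the chain-map property $\delta\circ\mu_{i-1}=\mu_i\circ\delta_i$, and here the fact that $\delta$ is only $\mathrm{Im}(\Theta)$-equivariant — not $\mathfrak{g}$-equivariant — is the main obstacle and must be handled with the reconstruction identity. Writing $\mu_{i-1}(a)=\sum_j\Theta^*(e^j)\odot\rho_{i-1,e_j}(a)$ for $a\in U_{i-1}$ and applying the derivation $\delta$, the term in which $\delta$ hits the $U_0$-leg carries the factor $\delta_1\bigl(\Theta^*(e^j)\bigr)$, which vanishes by the dualized quadratic constraint \eqref{eq:quad}. Hence $\delta\circ\mu_{i-1}(a)=\sum_j\Theta^*(e^j)\odot\delta_i\bigl(\rho_{i-1,e_j}(a)\bigr)$, and comparison with $\mu_i\circ\delta_i(a)=\sum_j\Theta^*(e^j)\odot\rho_{i,e_j}\bigl(\delta_i(a)\bigr)$ reduces the claim to the vanishing of $\Xi\equiv\sum_j\Theta^*(e^j)\otimes[\delta,\rho_{e_j}](a)\in U_0\otimes U_i$. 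As $U_0$ and $U_i$ are concentrated in distinct degrees, $\Xi$ is a genuine element of $U_0\otimes U_i$, so it is zero iff contracting its first leg against every $x\in V=U_0^*$ gives $0$; by the reconstruction identity this contraction equals $[\delta,\rho_{\Theta(x)}](a)$, and since $\Theta(x)\in\mathfrak{h}$ it vanishes by the $\mathfrak{h}$-equivariance of $\delta$. Therefore $\Xi=0$ and \eqref{diddy1} follows.

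I expect the only delicate points to be the verification that $\mu=M$ across the $k=0$ boundary — where $\mu_0=-\{\,.\,,.\,\}^*$ and $\Theta$ acts genuinely, in contrast to the levels $k\ge1$ on which $\Theta$ is declared to be the zero map on $U_k^*$ — and the tracking of Koszul signs when $\delta$ and $\pi$ are applied as derivations; both are routine once the structural identity $\mu=M$ is in place. An alternative, fully computational route would dualize everything and argue directly from the pairing formulas of items 3 and 5 of Definition \ref{wooo}, together with the preliminary duals $\delta_1^*=\mathrm{d}\circ s$ and $\pi_0^*=-s^{-1}\circ\Pi_W$, but the derivation reformulation makes the respective roles of $\mathfrak{g}$-equivariance, of $\mathfrak{h}$-equivariance, and of the quadratic constraint far more transparent.
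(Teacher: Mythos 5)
Your argument is correct, and it takes a genuinely different route from the one in Appendix \ref{appendicite}. The paper proves \eqref{diddy1} and \eqref{diddy2} by brute-force dualization: it pairs both sides against decomposable test elements, splits $\pi_{i-1}$ into its bidegree components $\pi_{i-1}^{(k,l)}$, and runs separate computations for $k,l\geq1$ versus $k=0$ and for $i\geq2$ versus $i=1$ (the $i=1$ cases requiring $\Theta\circ\mathrm{d}=0$ from Proposition \ref{thetainclusion} and the $\mathfrak{h}$-equivariance of $\mathrm{d}$ explicitly). You instead identify $\mu$ once and for all with the derivation $M=\sum_j m_{\Theta^*(e^j)}\circ\rho_{e_j}$ — which is legitimate: the paper itself constructs $\mu_i$ as $(\Theta^*\otimes\mathrm{id})\circ\widetilde{\rho_i}$ in the proof of Theorem \ref{lemmestrand} and notes after Definition \ref{wooo} that Equation \eqref{equationmu} reproduces $\mu_0=-\{\,.\,,.\,\}^*$ at $k=0$ — and then both identities fall out of the derivation calculus: \eqref{diddy2} from $\pi|_{U_0}=0$ plus the full $\mathfrak{g}$-equivariance of $\pi$, and \eqref{diddy1} from $\delta_1\circ\Theta^*=0$ plus the $\mathfrak{h}$-equivariance of $\delta$ channelled through the reconstruction identity. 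This buys uniformity (no case split at $i=1$ or over the bidegree components) and makes visible exactly which hypothesis is responsible for which identity, at the cost of leaning on the identification $U_0\otimes U_k\simeq U_0\odot U_k$ and the factor-of-two pairing convention \eqref{works2}, which the paper's computation verifies by hand. The two delicate points you flag — the $k=0$ boundary of $\mu=M$ and the Koszul signs, both trivial here since $\Theta^*(e^j)$ has degree $0$ — are indeed the only ones, and they check out.
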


\begin{proof}
Let us first show Equation \eqref{diddy1} for $i\geq2$.
We know that $\mu_{i-1}$ takes values in $U_0\odot U_{i-1}$, thus $\delta\circ \mu_i$ takes values in $U_1\odot U_{i-1}\oplus U_0\odot U_i$. Let us show that it actually takes values only in $U_0\odot U_i$. Let $\alpha\in U_1^*$ , $\beta\in U_{i-1}^*$ and $u\in U_{i-1}$. First assume that $i\geq3$ so that $i-1\geq2$. Then the image of $\delta\circ\mu_{i-1}$ in $U_1\odot U_{i-1}$ satisfies:
\begin{align}
\Big\langle\alpha\odot \beta\,,\delta\circ\mu_{i-1}(u)\Big\rangle_{U_1\odot U_{i-1}}&=(-1)^{i}\Big\langle\delta_1^*(\alpha)\odot\beta\,,\mu_{i-1}(u)\Big\rangle_{U_0\odot U_{i-1}}\\
&\hspace{3cm}+(-1)^{i+1}\Big\langle\alpha\odot\delta_{i-1}^*(\beta),\mu_{i-1}(u)\Big\rangle_{U_0\odot U_{i-1}}\nonumber\\
&=(-1)^{i}\Big\langle\mathrm{d}\big(s(\alpha)\big)\odot\beta\,,\mu_{i-1}(u)\Big\rangle_{U_0\odot U_{i-1}}\\
&=\frac{(-1)^i}{2}\Big\langle\beta\,,\rho_{i-1,\Theta(\mathrm{d}(s(\alpha)))}(u)\Big\rangle_{U_{i-1}}\\
&=0
\end{align}
In the first line, we passed from the left hand side of the equal sign  to the right hand side by taking the dual of the map $\delta$, and by using Equation \eqref{dualitysmooth} and the fact that $\mu_{i-1}$ takes values in $U_0\odot U_{i-1}$. We passed from the first line to the second line by noticing that $\alpha\odot\delta_{i-1}^*(\beta)$ is not taking values in $U_0^*\odot U_{i-1}^*$, and by applying Equation \eqref{oukey4}. Then we passed from the second to the third line by applying the very definition of $\mu_{i-1}$ as given in item 5. of Definition \ref{wooo}. The result is zero because $\Theta\circ\mathrm{d}=0$, as proven in Proposition \ref{thetainclusion}.
In the case where $i=2$, we could not get rid of the term $\big\langle\alpha\odot\delta_{i-1}^*(\beta),\mu_{i-1}(u)\big\rangle_{U_1\odot U_0}$ since in that case $\delta_{i-1}^*(\beta)=\mathrm{d}\circ s(\beta)\in U_0^*$. But, still, the action of $\Theta$ on $\mathrm{d}$ in the last line would make this contribution vanish. Hence, we conclude that $\delta\circ\mu_{i-1}$ takes values in $U_0\odot U_{i}$.
%that $i\geq2$, and let

Let us now compute this contribution. Assume that $i\geq2$, and let $x\in U_0^*=V$, $\alpha\in U_{i}^*$ and $u\in U_{i-1}$. Then we have:
\begin{align}
\Big\langle x\odot \alpha,\delta\circ\mu_{i-1}(u)\Big\rangle_{U_0\odot U_i}&=(-1)^i\Big\langle x\odot \delta_i^*(\alpha),\mu_{i-1}(u)\Big\rangle_{U_0\odot U_{i-1}}\\
\footnotesize\text{by definition of $\mu_{i-1}$}\hspace{0.5cm}\normalsize&=\frac{(-1)^i}{2}\,\Big\langle \delta_i^*(\alpha),\rho_{\Theta(x)}(u)\Big\rangle_{U_{i-1}}\\
\footnotesize\text{by Equation \eqref{dualitysmooth}}\hspace{0.72cm}\normalsize&=\frac{1}{2}\,\Big\langle \alpha,\delta_i\big(\rho_{\Theta(x)}(u)\big)\Big\rangle_{U_{i}}\\
\footnotesize\text{by $\mathfrak{h}$-equivariance of $\delta_i$}\hspace{0.45cm}\normalsize&=\frac{1}{2}\,\Big\langle \alpha,\rho_{\Theta(x)}\big(\delta_i(u)\big)\Big\rangle_{U_{i}}\\
\footnotesize\text{by definition of $\mu_i$}\hspace{0.83cm}\normalsize&=\Big\langle x\odot\alpha,\mu_i\circ\delta_i(u)\Big\rangle_{U_0\odot U_{i}}
\end{align}
Hence we have:
\begin{equation}
\Big\langle x\otimes  \alpha,\delta\circ\mu_{i-1}(u)-\mu_i\circ\delta_i(u)\Big\rangle_{U_0\odot U_i}=0
\end{equation}
for every $x\in U_0^*,\alpha\in U_i^*$ and $u\in U_{i-1}$. We conclude that Equation \eqref{diddy1} is true whenever~$i\geq2$.

In the case where $i=1$, we have to show that:
\begin{equation}\label{saddle2}
\delta\circ\mu_0=\mu_1\circ\delta_1
\end{equation}
Both sides take values in $U_0\odot U_1$. 
Let $x\in U_0^*=V$, $\alpha\in U_1^*=s^{-1}W$, then for any $ u\in U_0= V^*$, we have:
\begin{align}
\Big\langle x\odot \alpha\,,\delta\circ\mu_{0}(u)\Big\rangle_{U_0\odot U_1}&=-\Big\langle x\odot \delta_1^*(\alpha),\mu_{0}(u)\Big\rangle_{U_0\odot U_{0}}\\
\footnotesize\text{by Equation \eqref{oukey4}}\hspace{0.803cm}\normalsize&=-\Big\langle x\odot \mathrm{d}\big(s(\alpha)\big),\mu_0(u)\Big\rangle_{U_{0}\odot U_0}\\
\footnotesize\text{by definition of $\mu_0$}\hspace{0.801cm}\normalsize&=\Big\langle \big\{x,\mathrm{d}\big(s(\alpha)\big)\big\},u\Big\rangle_{U_0}\\
\footnotesize\text{by definition of $\{\,.\,,.\,\}$}\hspace{0.5cm}\normalsize&=\frac{1}{2}\,\Big\langle x\bullet\mathrm{d}\big(s(\alpha)\big),u\Big\rangle_{U_0}\\
\footnotesize\text{by $\mathfrak{h}$-equivariance of $\mathrm{d}$}\hspace{0.5cm}\normalsize&=\frac{1}{2}\,\Big\langle \mathrm{d}\circ\eta_{W,\Theta(x)}\big(s(\alpha)\big),u\Big\rangle_{U_0}\\
\footnotesize\text{by Equation \eqref{representation2}}\hspace{0.782cm}\normalsize&=\frac{1}{2}\,\Big\langle \mathrm{d}\circ s\big(\rho^\vee_{1,\Theta(x)}(\alpha)\big),u\Big\rangle_{U_0}\\
\footnotesize\text{by Equations \eqref{oukey4} and \eqref{dualitysmooth}}\hspace{0.2cm}\normalsize&=-\frac{1}{2}\,\Big\langle \rho^\vee_{1,\Theta(x)}(\alpha),\delta_1(u)\Big\rangle_{U_1}\\
\footnotesize\text{by Equation \eqref{equationmu}}\hspace{0.782cm}\normalsize&=\,\Big\langle x\odot \alpha\,,\mu_1\circ\delta_1(u)\Big\rangle_{U_0\odot U_1}
\end{align}
Hence we have:
\begin{equation}
\Big\langle x\otimes  \alpha\,,\delta\circ\mu_0(u)-\mu_1\circ\delta_1(u)\Big\rangle_{U_0\odot U_1}=0
\end{equation}
for every $x\in U_0^*, \alpha\in U_1^*$ and $u\in U_0$, which implies Equation \eqref{saddle2}. Thus, Equation \eqref{diddy1} is true for every $i\geq1$, as can be illustrated in the following commutative diagram:
\begin{center}
\begin{tikzpicture} 
\matrix(a)[matrix of math nodes, 
row sep=4em, column sep=3em, 
text height=1.5ex, text depth=0.25ex] 
{U_{i-1}&U_{i}\\ 
S^{2}(U)_{i-1}&S^{2}(U)_{i}\\}; 
\path[->](a-1-1) edge node[right]{$\mu_{i-1}$}  (a-2-1); 
%\path[->](a-2-1) edge node[left]{$\theta_{0}^*$} (a-1-1);
%\path[->](a-2-1) edge node[above]{$\mu_0$} (a-2-2);
\path[->](a-1-2) edge node[right]{$\mu_i$}  (a-2-2); 
%\path[->](a-1-2) edge node[left]{$\theta_{0}$} (a-2-2);
\path[->](a-2-1) edge node[above]{$\delta$} (a-2-2);
%\path[->](a-2-5) edge node[above]{$\mu_{0}$} (a-2-4);
\path[->](a-1-1) edge node[above]{$\delta_{i}$} (a-1-2);
\end{tikzpicture}
\end{center}

Now let us prove Equation \eqref{diddy2}. First, the derivation properties of $\mu$ and $\pi$ imply the following two identities:
\begin{align}
\Big\langle\alpha\odot\beta\odot\gamma\,,\pi(u\odot v)\Big\rangle_{S^2(U)_{|u|+|v|-1}}&=(-1)^{|\alpha|+|\beta|+|\gamma|}\Big\langle\pi^*(\alpha,\beta)\odot\gamma\,,u\odot v\Big\rangle_{U_{|u|}\odot U_{|v|}}+\circlearrowleft\label{SJW1}\\
2\,\Big\langle\alpha\odot\beta\odot\gamma\,,\mu(u\odot v)\Big\rangle_{U_{|u|}\odot U_{|v|}}&=-\Big\langle\rho^\vee_{\Theta(\alpha)}(\beta)\odot\gamma+\beta\odot\rho^\vee_{\Theta(\alpha)}(\gamma),u\odot v\Big\rangle_{U_{|u|}\odot U_{|v|}}+\circlearrowleft\label{SJW2}
\end{align}
for any homogeneous elements $\alpha,\beta,\gamma\in U^*$ and $u,v\in U$, and where $\circlearrowleft$ indicates that we perform a (graded) circular permutation of $\alpha,\beta,\gamma$. Here as well, $\Theta$ is considered as the zero function on $U_k$, as soon as $k\geq1$.

Now let assume that $i\geq2$. The map $\pi_{i-1}$ takes values in: \begin{equation*}
S^2(U)_{i-1}\,= \, \underset{k+l\,=\,i-1}{\bigoplus_{k,l\,\geq\, 0}}U_k\odot U_l
\end{equation*}
%We call $S_{i-1}$ the term of the right on the right hand side:
%\begin{equation*}
%S_{i-1}\equiv\underset{k+l\,=\,i-1}{\bigoplus_{k,l\,\geq\, 1}}U_k\odot U_l
%\end{equation*}
For any $k,l\geq0$ such that $k+l=i-1$, we define $\pi^{(k,l)}_{i-1}$ to be the co-restriction of $\pi_{i-1}$ to the subspace $U_k\odot U_{l}$. We extend it to all $S(U)$ by derivation. In particular, it is symmetric in the $k,l$ indices. Since the map $\mu_n$ takes values in $U_0\odot U_n$ for every $1\leq n\leq i$, both the map $\mu\circ \pi^{(k,l)}_{i-1}$ and the map $\pi^{(k,l)}_{i-1}\circ\mu_i$ take values in $U_0\odot U_k\odot U_l$. Now, assume that $k,l\geq1$ and let $x\in U_0^*=V$, $\alpha\in U_k^*$, $\beta\in U_l^*$ and $u\in U_{i}$. Thus, using Equation \eqref{SJW2}, we have:
\begin{align}
2\,\Big\langle x\odot \alpha\odot \beta\,, \mu\circ\pi^{(k,l)}_{i-1}(u)\Big\rangle_{U_0\odot U_k\odot U_l}&=-\Big\langle\rho^\vee_{k,\Theta(x)}(\alpha)\odot\beta+\rho^\vee_{l,\Theta(x)}(\beta)\odot\alpha\,,\pi_{i-1}^{(k,l)}(u)\Big\rangle_{U_k\odot U_l}\\
\footnotesize\text{by the derivation property of $\rho^\vee$}\hspace{0.2cm}\normalsize&=-\Big\langle\rho^\vee_{\Theta(x)}(\alpha\odot\beta)\,,\pi_{i-1}^{(k,l)}(u)\Big\rangle_{U_k\odot U_l}\\
\footnotesize\text{by definition of $\rho^\vee$}\hspace{0.803cm}\normalsize&=\Big\langle\alpha\odot\beta\,,\rho_{\Theta(x)}\big((\pi_{i-1}^{(k,l)}(u)\big)\Big\rangle_{U_k\odot U_l}\\
\footnotesize\text{by $\mathfrak{g}$-equivariance of $\pi$}\hspace{0.6cm}\normalsize&=\Big\langle\alpha\odot\beta\,,\pi_{i-1}^{(k,l)}\big(\rho_{i,\Theta(x)}(u)\big)\Big\rangle_{U_k\odot U_l}\\
\footnotesize\text{by Equation \eqref{dualitysmooth}}\hspace{0.75cm}\normalsize&=(-1)^{k+l}\Big\langle\big(\pi_{i-1}^{(k,l)}\big)^*(\alpha,\beta)\,,\rho_{i,\Theta(x)}(u)\Big\rangle_{U_i}\\
\footnotesize\text{by definition of $\mu_i$}\hspace{0.815cm}\normalsize&=(-1)^{k+l}2\,\Big\langle x\odot \big(\pi_{i-1}^{(k,l)}\big)^*(\alpha,\beta)\,,\mu_i(u)\Big\rangle_{U_0\odot U_i}\\
\footnotesize\text{because $x\in U_0^*$}\hspace{1cm}\normalsize&=2\,\Big\langle x\odot \alpha\odot\beta\,,\pi_{i-1}^{(k,l)}\circ\mu_i(u)\Big\rangle_{U_0\odot U_k\odot U_l}
\end{align}
Thus we have proven that
\begin{equation}\label{saddle}
\Big\langle x\odot \alpha\odot \beta\,, \mu\circ\pi^{(k,l)}_{i-1}(u)-\pi_{i-1}^{(k,l)}\circ\mu_i(u)\Big\rangle_{U_0\odot U_k\odot U_l}=0
\end{equation} for every $k,l\geq1$ such that $k+l=1$, and for every $x\in U_0^*=V$, $\alpha\in U_k^*$, $\beta\in U_l^*$ and $u\in U_{i}$.

Now assume that either $k=0$ or $l=0$. They cannot be equal to zero at the same time, for we have chosen $i\geq2$. Assume for example that $k=0$, then necessarily $l=i-1$. Let $x,y\in U_0^*=V$, $\alpha\in U_{i-1}^*$ and $u\in U_i$, and for some clarity set $\widehat{\pi}_{i-1}\equiv \pi^{(0,i-1)}_{i-1}$. Then, using Equation \eqref{SJW2}, we have:
\begin{align}
2\Big\langle x\odot y\odot\alpha\,,\mu\circ\widehat{\pi}_{i-1}(u)\Big\rangle_{U_0\odot U_0\odot U_{i-1}}
&=-\Big\langle y\odot\rho^\vee_{i-1,\Theta(x)}(\alpha)\,,\widehat{\pi}_{i-1}(u)\Big\rangle_{U_0\odot U_{i-1}}-\, x\leftrightarrow y\nonumber\\
&\hspace{1.5cm}-2\,\Big\langle \{x,y\}\odot\alpha\,,\widehat{\pi}_{i-1}(u)\Big\rangle_{U_0\odot U_{i-1}}\\
\footnotesize\text{by the derivation property of $\rho^\vee$}\hspace{0.3cm}\normalsize&=-\Big\langle \rho^\vee_{\Theta(x)}(y\odot\alpha)\,,\widehat{\pi}_{i-1}(u)\Big\rangle_{U_0\odot U_{i-1}}-\, x\leftrightarrow y\\
\footnotesize\text{by definition of $\rho^\vee$}\hspace{0.565cm}\normalsize&=\Big\langle y\odot\alpha\,,\rho_{\Theta(x)}\big(\widehat{\pi}_{i-1}(u)\big)\Big\rangle_{U_0\odot U_{i-1}}+\, x\leftrightarrow y\\
\footnotesize\text{by $\mathfrak{g}$-equivariance of $\pi$}\hspace{0.4cm}\normalsize&=\Big\langle y\odot\alpha\,,\widehat{\pi}_{i-1}\big(\rho_{i,\Theta(x)}(u)\big)\Big\rangle_{U_0\odot U_{i-1}}+\, x\leftrightarrow y\\
\footnotesize\text{by Equation \eqref{dualitysmooth}}\hspace{0.52cm}\normalsize&=(-1)^{(i-1)}\Big\langle \big(\widehat{\pi}_{i-1}\big)^*(y,\alpha)\,,\rho_{i,\Theta(x)}(u)\Big\rangle_{U_0\odot U_{i-1}}\nonumber\\
&\hspace{3cm}+(-1)^{(i-1)}\, x\leftrightarrow y\\
\footnotesize\text{by definition of $\mu_i$}\hspace{0.577cm}\normalsize&=(-1)^{(i-1)}2\,\Big\langle x\odot\big(\widehat{\pi}_{i-1}\big)^*(y,\alpha)\,,\mu_i(u)\Big\rangle_{U_0\odot U_{i-1}}\nonumber\\
&\hspace{3cm}+(-1)^{(i-1)}\, x\leftrightarrow y\\
\footnotesize\text{by Equation \eqref{dualitysmooth}}\hspace{0.52cm}\normalsize&=2\,\Big\langle x\odot y\odot\alpha\,,\widehat{\pi}_{i-1}\circ\mu_i(u)\Big\rangle_{U_0\odot U_0 \odot U_{i-1}}
\end{align}
Thus we have proven that: 
\begin{equation}
\Big\langle x\odot y\odot\alpha\,,\mu\circ\widehat{\pi}_{i-1}(u)-\widehat{\pi}_{i-1}\circ\mu_i(u)\Big\rangle_{U_0\odot U_0 \odot U_{i-1}}=0
\end{equation}
$x,y\in U_0^*=V$, $\alpha\in U_{i-1}^*$ and $u\in U_i$ (recall that $i\geq2$). Thus, merging this result with the one of Equation \eqref{saddle}, we finally obtain Equation \eqref{diddy2} for $i\geq2$.

When $i=1$, we have to show the following identity:
\begin{equation}\label{saddle3}
\pi\circ\mu_1=\mu\circ\pi_0
\end{equation}
Both sides of the equality take values in $S^3(U_0)$.
However, since $\mu_1$ takes values in $U_0\otimes U_1$, and that $\pi|_{U_0}=0$, the map $\pi$ only on the $U_1$ components of $\mathrm{Im}(\mu_1)$.
Let $x,y,z\in U_0^*=V$ and $u\in U_1=s(W^*)$, then we have:
\begin{align}
2\,\Big\langle x\odot y\odot z\,,\mu\circ\pi_0(u)\Big\rangle_{S^3(U_0)}&=-2\,\Big\langle\{x,y\}\odot z\,,\pi_0(u)\Big\rangle_{S^2(U_0)}\,+\circlearrowleft\\
\footnotesize\text{by Equation \eqref{oukey5}}\hspace{0.728cm}\normalsize&=2\,\Big\langle\Pi_W\big(\{x,y\},z\big),s^{-1}(u)\Big\rangle_{W^*}\,+\circlearrowleft\\
\footnotesize\text{by definition of $\{\,.\,,.\,\}$}\hspace{0.5cm}&=\Big\langle\Pi_W\big(x\bullet y,z\big)+\Pi_W\big(y\bullet x,z\big),s^{-1}(u)\Big\rangle_{W^*}\,+\circlearrowleft\\
\footnotesize\text{by using the permutation}\hspace{0.224cm}&=\Big\langle\Pi_W\big(x\bullet y,z\big)+\Pi_W\big(y, x\bullet z\big),s^{-1}(u)\Big\rangle_{W^*}\,+\circlearrowleft\\
\footnotesize\text{by $\mathfrak{g}$-equivariance of $\Pi_W$}\hspace{0.35cm}\normalsize &=\Big\langle \eta_{W,\Theta(x)}\big( \Pi_W(y,z)\big),s^{-1}(u)\Big\rangle_{W^*}\, +\circlearrowleft\\
\footnotesize\text{by definition of $\eta_W^\vee$}\hspace{0.609cm}\normalsize&=-\Big\langle \Pi_W(y,z),\eta_{W,\Theta(x)}^\vee\big(s^{-1}(u)\big)\Big\rangle_{W^*}\, +\circlearrowleft\\
\footnotesize\text{by Equation \eqref{representationshifted}}\hspace{0.647cm}\normalsize&=-\Big\langle \Pi_W(y,z),s^{-1}\big(\rho_{1,\Theta(x)}(u)\big)\Big\rangle_{W^*}\, +\circlearrowleft\\
\footnotesize\text{by Equation \eqref{suspensiondecal}}\hspace{0.647cm}\normalsize&=-\Big\langle s^{-1}\circ\Pi_W(y,z),\rho_{1,\Theta(x)}(u)\Big\rangle_{U_1}\, +\circlearrowleft\\
\footnotesize\text{by Equation \eqref{oukey5}}\hspace{0.728cm}\normalsize&=\Big\langle \pi_0^*(y,z),\rho_{1,\Theta(x)}(u)\Big\rangle_{U_1}\, +\circlearrowleft\\
\footnotesize\text{by definition of $\mu_1$}\hspace{0.726cm}\normalsize&=2\,\Big\langle x\odot\pi_0^*(y,z),\mu_1(u)\Big\rangle_{U_0\odot U_1}\, +\circlearrowleft\\
\footnotesize\text{by Equation \eqref{SJW1}}\hspace{0.565cm}\normalsize&=2\,\Big\langle x\odot y\odot z\,,\pi\circ\mu_1(u)\Big\rangle_{S^3(U_0)}
\end{align}
where $\circlearrowleft$ symbolizes a circular graded permutation of the elements $x,y,z$.
Hence we have:
\begin{equation}
\Big\langle x\odot y\odot z\,,\mu\circ\pi_0(u)-\pi\circ\mu_1(u)\Big\rangle_{S^3(U_0)}=0
\end{equation}
for every $u\in U_1$ and $ x,y,z\in U_0^*$, which implies Equation \eqref{saddle3}. Thus, Equation \eqref{diddy2} is true for every $i\geq1$, as can be illustrated in the following commutative diagram:
\begin{center}
\begin{tikzpicture} 
\matrix(a)[matrix of math nodes, 
row sep=4em, column sep=3em, 
text height=1.5ex, text depth=0.25ex] 
{&U_{i}\\
S^{2}(U)_{i-1}&S^2(U)_{i}\\ 
S^{3}(U)_{i-1}&\\}; 
\path[left hook->](a-1-2) edge node[above left]{$\pi_{i-1}$}  (a-2-1); 
%\path[->](a-2-1) edge node[left]{$\theta_{0}^*$} (a-1-1);
%\path[->](a-2-1) edge node[above]{$\mu_0$} (a-2-2);
\path[->](a-2-2) edge node[above left]{$\pi$}  (a-3-1); 
%\path[->](a-1-2) edge node[left]{$\theta_{0}$} (a-2-2);
\path[->](a-1-2) edge node[right]{$\mu_i$} (a-2-2);
%\path[->](a-2-5) edge node[above]{$\mu_{0}$} (a-2-4);
\path[->](a-2-1) edge node[left]{$\mu$} (a-3-1);
\end{tikzpicture}
\end{center}
\end{proof}

\section{Compatibility between $[\,.\,,.\,]$ and $\partial$}\label{appendicite2}

In this appendix, we show that the bracket and the differential defined in the proof of Theorem \ref{prop:tensorhierarchy} are compatible, in the sense that they define a differential graded Lie algebra structure on $T=\mathfrak{h}\oplus T'$, where $T'=s^{-1}(U^*)$.
We split this proof in four steps: first, we prove that $\partial$ is compatible with the restriction of the bracket to $T_{-k}\wedge T_{-l}$, for $k,l\geq2$. Second, we prove the compatibility when either $k$ or $l$ is equal to $1$, and then when $k=l=1$. Eventually, we prove that the differential and the bracket are compatible on $\mathfrak{h}\wedge T$ (the case $\mathfrak{h}\wedge\mathfrak{h}$ being trivial since $\partial|_{T_0}=0$).

First, let us set $\widetilde{U}\equiv\bigoplus_{1\leq k<\infty}U_{k}$, and let its shifted dual be $\widetilde{T}\equiv s^{-1}(\widetilde{U}^*)=\bigoplus_{2\leq k<\infty}T_{-k}$. The co-restriction of the map $\mu$ to $S^2(\widetilde{U})$ is identically zero, because $\mu$ takes values in $U_0\odot U_{k}$. Since $\mu$ is null-homotopic (see item 6. of Definition \ref{wooo}), we deduce that the co-restriction of the map $\pi\circ\delta+\delta\circ\pi: U_k\to S^2(U)$ to $S^2(\widetilde{U})$ is zero:
\begin{equation}\label{nullhomotopy1}
\big(\pi\circ\delta+\delta\circ\pi\big)\big|^{S^2(\widetilde{U})}=0
\end{equation}
From this identity, together with Equations \eqref{bonjouuurq} and \eqref{bonjouuurd}, we deduce that:
\begin{equation}\label{nullhomotopy2}
\big(Q_\pi\circ\delta'+\delta'\circ Q_\pi\big)\big|^{S^2((s^{-1}\widetilde{T})^*)}=0
\end{equation}
Let us check how this identity translates on $\widetilde{T}$. Let $k,l\geq2$ and let $x\in T_{-k}, y\in T_{-l}$, then we will show that Equation \eqref{nullhomotopy2} implies the vanishing of the following quantity:
\begin{equation}\label{chourave}
\Delta_{k,l}(x,y)\equiv\partial_{-k-l+1}\big([x,y]\big)-\big[\partial_{-k+1}(x),y\big]-(-1)^k\big[x,\partial_{-l+1}(y)\big]
\end{equation} 
We will compute each term on the right hand side, one after the other,  using alternatively Equations \eqref{voronov2} and \eqref{bracket}:
\begin{align}
\iota_{s^{-1}[\partial_{-k+1}(x),y]}&=(-1)^{k-1}\big[[Q_\pi,\iota_{s^{-1}(\partial_{-k+1}(x))}],\iota_{s^{-1}(y)}\big]\\
&=(-1)^{(k+1)(l+1)}\iota_{s^{-1}(y)}\circ\big[\delta'_{k-1},\iota_{s^{-1}(x)}\big]\circ Q_\pi\\
&=(-1)^{(k+1)(l+1)}\iota_{s^{-1}(y)}\circ\Big(\delta'\circ\iota_{s^{-1}(x)}-(-1)^{k-1}\iota_{s^{-1}(x)}\circ\delta'\Big)\circ Q_\pi\label{liberal1}
%s\left(\big[[\pi,s^{-1}(x)]_{T(s^{-1})^2 V},s^{-1}(y)\big]_{T(s^{-1})^2 V}\right)
\end{align}
We also have:
\begin{align}
\iota_{s^{-1}[x,\partial_{-l+1}(y)]}&=(-1)^{k}\big[[Q_\pi,\iota_{s^{-1}(x)}],\iota_{s^{-1}(\partial_{-l+1}(y))}\big]\label{liberal2bis}\\
&=-(-1)^{kl}\iota_{s^{-1}(\partial_{-l+1}(y))}\iota_{s^{-1}(x)}\circ Q_\pi\\
&=(-1)^{kl}\big[\delta'_{l-1},\iota_{s^{-1}(y)}\big]\circ\iota_{s^{-1}(x)}\circ Q_\pi\\
&=(-1)^{(k+1)l}\iota_{s^{-1}(y)}\circ\delta'\circ\iota_{s^{-1}(x)}\circ Q_\pi\label{liberal2}
\end{align}
And finally:
\begin{align}
\iota_{s^{-1}(\partial_{-k-l+1}[x,y])}&=-[\delta'_{k+l-1},\iota_{s^{-1}[x,y]}]\label{liberal3bis}\\
&=(-1)^{k+1}\Big[\delta'_{k+l-1},\big[[Q_\pi,\iota_{s^{-1}(x)}],\iota_{s^{-1}(y)}\big]\Big]\\
&=(-1)^{k(l+1)}\big[\delta',\iota_{s^{-1}(y)}\iota_{s^{-1}(x)}\circ Q_\pi\big]\\
&=(-1)^{(k+1)l}\iota_{s^{-1}(y)}\iota_{s^{-1}(x)}\circ Q_\pi\circ\delta'\label{liberal3}
\end{align}
Thus, by substracting Line \eqref{liberal1} and $(-1)^k$ times Line \eqref{liberal2} to Line \eqref{liberal3}, we obtain:
\begin{equation}\label{peterpeter}
\iota_{s^{-1}(\Delta_{k,l}(x,y))}=(-1)^{(k+1)l}\iota_{s^{-1}(y)}\iota_{s^{-1}(x)}\circ\big( Q_\pi\circ\delta'+\delta'\circ Q_\pi\big)
\end{equation}
Since $s^{-1}(x)$ and $s^{-1}(y)$ belongs to $s^{-1}\widetilde{T}$, the bracket on the right hand side is implicitely co-restricted to $S^2\big((s^{-1}\widetilde{T})^*\big)$. This bracket vanishes by Equation \eqref{nullhomotopy2}, so that we deduce that the left hand side of Equation \eqref{peterpeter} vanishes for every $x,y\in \widetilde{T}$. Thus, $\Delta_{k,l}(x,y)=0$ for every $k,l\geq2$ and for every $x\in T_{-k}, y\in T_{-l}$, which implies, by Equation \eqref{chourave}, that the differential $\partial$ is compatible with the Lie bracket $[\,.\,,.\,]$ on $\widetilde{T}\wedge\widetilde{T}$:
\begin{equation}\label{lignebleue1}
\partial_{-k-l+1}\big([x,y]\big)=\big[\partial_{-k+1}(x),y\big]+(-1)^k\big[x,\partial_{-l+1}(y)\big]
\end{equation}

Let us now turn ourselves to the case where either $x\in T_{-1}$ or $y\in T_{-1}$ but not both at the same time. From item 5. of Definition \ref{wooo}, we know that for every $1\leq k<i$ the map $\mu_k$ takes values in $U_0\odot U_k$. From item 6. we deduce that the co-restriction of $\delta\circ\pi+\pi\circ\delta:U_k\to S^2(U)$ to $U_0\odot U_k$ is equal to $\mu_k$:
\begin{equation}\label{nullhomotopy3}
\big(\pi\circ\delta+\delta\circ\pi\big)\big|^{U_0\odot U_k}=\mu_k
\end{equation}
%From this identity, together with Equations \eqref{bonjouuurq} and \eqref{bonjouuurd}, we deduce that for every $1\leq k<i$:
%\begin{equation}\label{nullhomotopy4}
%\big(Q_\pi\circ\delta'+\delta'\circ Q_\pi\big)\big|^{(s^{-1}T_{-1})^*\odot(s^{-1}T_{-k-1})^*}=s^2\circ\mu_k\circ s^{-2}
%\end{equation}
Let us check how this identity translates on $T$. Let $k\geq2$ and let $x\in T_{-1}=s^{-1}V$ and $y\in T_{-k}=s^{-1}(U_{k-1}^*)$, then we will show that Equation \eqref{nullhomotopy3} implies the vanishing of the following quantity:
\begin{equation}\label{chourave2}
\Xi_{k}(x,y)\equiv\partial_{-k}\big([x,y]\big)+\eta_{-k,\Theta(s(x))}(y)+\big[x,\partial_{-k+1}(y)\big]
\end{equation} 
We will compute each term on the right hand side, one after another, and sum them up afterwards. Let $\alpha\in T_{-k}^*=sU_{k-1}$, then by Equation \eqref{representationshifted}, we have:
\begin{align}
\Big\langle \alpha\,,\eta_{-k,\Theta(s(x))}(y)\Big\rangle_{T_{-k}}&=\Big\langle \alpha\,,s^{-1}\circ\rho^*_{k-1,\Theta(s(x))}\circ s(y)\Big\rangle_{s^{-1}(U_{k-1}^\vee)}\label{liberal6bis}\\
\footnotesize\text{by Equation \eqref{suspensiondecal}}\hspace{0.647cm}\normalsize&=\Big\langle s^{-1}(\alpha),\rho^\vee_{k-1,\Theta(s(x))}\circ s(y)\Big\rangle_{U_{k-1}^*}\\
%\footnotesize\text{by Definition of $\rho^*$}\hspace{0.647cm}\normalsize&=-\Big\langle  \rho_{k-1,\Theta(s(x))}\big(s^{-1}(\alpha)\big),s(y)\Big\rangle_{U_{k-1}^*}\\
%&=-\Big\langle  s(y),\rho_{k-1,\Theta(s(x))}\big(s^{-1}(\alpha)\big)\Big\rangle_{U_{k-1}}\\
\footnotesize\text{by Equation \eqref{equationmu}}\hspace{0.647cm}\normalsize&=-2\,\Big\langle  \mu_{k-1}\big(s^{-1}(\alpha)\big),s(x)\odot s(y)\Big\rangle_{U_0^*\odot U_{k-1}^*}\label{liberal6}
\end{align}
We also have: 
\begin{align}
\Big\langle\alpha\,,\big[x,\partial_{-k+1}(y)\big]\Big\rangle_{T_{-k}}&=\Big\langle s(\alpha),s^{-1}\Big(\big[x,\partial_{-k+1}(y)\big]\Big)\Big\rangle_{s^{-1}T_{-k}}\label{liberal7bis}\\
\footnotesize\text{by Equation \eqref{party}}\hspace{0.647cm}\normalsize&=\iota_{s^{-1}([x,\partial_{-k+1}(y)])}\big(s(\alpha)\big)\\
\footnotesize\text{by Eq. \eqref{liberal2bis}--\eqref{liberal2}}\hspace{0.57cm}\normalsize&=\iota_{s^{-1}(y)}\circ\delta'\circ\iota_{s^{-1}(x)}\circ Q_\pi \big(s(\alpha)\big)\\
\footnotesize\text{because $x\in T_{-1}$}\hspace{0.85cm}\normalsize&=\iota_{s^{-1}(y)}\iota_{s^{-1}(x)}\circ\delta'\circ Q_\pi \big(s(\alpha)\big)\\
\footnotesize\text{by Equation \eqref{party2}}\hspace{0.647cm}\normalsize&=2\,\Big\langle\delta'\circ Q_\pi \big(s(\alpha)\big),s^{-1}(x)\odot s^{-1}(y)\Big\rangle_{s^{-1}T_{-1}\odot s^{-1}T_{-k}}\\
\footnotesize\text{by Eq. \eqref{bonjouuurq} and \eqref{bonjouuurd}}\hspace{0.2cm}\normalsize&=2\,\Big\langle \big(s^2\odot s^2\big)\circ \delta\circ\pi\big(s^{-1}(\alpha)\big),s^{-1}(x)\odot s^{-1}(y)\Big\rangle_{s^{-2}(U_{0}^*)\odot s^{-2}(U_{k-1}^*)}\\
\footnotesize\text{by Equation \eqref{suspensiondecal2}}\hspace{0.647cm}\normalsize&=2\,\Big\langle \delta\circ\pi\big(s^{-1}(\alpha)\big),s(x)\odot s(y)\Big\rangle_{U_{0}^*\odot U_{k-1}^*}\label{liberal7} %\\
%&=\Big\langle s(x)\odot s(y), \delta\circ\pi\big(s^{-1}(\alpha)\big)\Big\rangle_{U_{0}\odot U_{k-1}}
\end{align}
And finally:
\begin{align}
\Big\langle\alpha\,,\partial_{-k}\big([x,y]\big)\Big\rangle_{T_{-k}}&=\Big\langle s(\alpha),s^{-1}\circ\partial_{-k}\big([x,y]\big)\Big\rangle_{s^{-1}T_{-k}}\label{liberal8bis}\\
\footnotesize\text{by Equation \eqref{party}}\hspace{0.647cm}\normalsize&=\iota_{s^{-1}(\partial_{-k}[x,y])}\big(s(\alpha)\big)\\
\footnotesize\text{by Eq. \eqref{liberal3bis}--\eqref{liberal3}}\hspace{0.47cm}\normalsize&=\iota_{s^{-1}(y)}\iota_{s^{-1}(x)}\circ Q_\pi\circ\delta'\big(s(\alpha)\big)\\
\footnotesize\text{by Equation \eqref{party2}}\hspace{0.647cm}\normalsize&=2\,\Big\langle Q_\pi\circ\delta'\big(s(\alpha)\big),s^{-1}(x)\odot s^{-1}(y)\Big\rangle_{s^{-1}T_{-1}\odot s^{-1}T_{-k}}\\
\footnotesize\text{by Eq. \eqref{bonjouuurq} and \eqref{bonjouuurd}}\hspace{0.2cm}\normalsize&=2\,\Big\langle\big(s^2\odot s^2\big)\circ \pi\circ\delta\big(s^{-1}(\alpha)\big),s^{-1}(x)\odot s^{-1}(y)\Big\rangle_{s^{-2}(U_{0}^*)\odot s^{-2}(U_{k-1}^*)}\\
\footnotesize\text{by Equation \eqref{suspensiondecal2}}\hspace{0.647cm}\normalsize&=2\,\Big\langle \pi\circ\delta\big(s^{-1}(\alpha)\big),s(x)\odot s(y)\Big\rangle_{U_{0}^*\odot U_{k-1}^*}\label{liberal8} %\\
%&=\Big\langle s(x)\odot s(y),\pi\circ\delta\big(s^{-1}(\alpha)\big)\Big\rangle_{U_{0}\odot U_{k-1}}\label{liberal8}
\end{align}

Summing Lines \eqref{liberal6}, \eqref{liberal7} and \eqref{liberal8}, and dividing by 2, one obtains the following quantity:
\begin{equation}
\Big\langle  \big(\delta\circ\pi+\pi\circ\delta-\mu_{k-1}\big)\big(s^{-1}(\alpha)\big),s(x)\odot s(y)\Big\rangle_{U_{0}^*\odot U_{k-1}^*}
\end{equation}
that vanishes by using Equation \eqref{nullhomotopy3} at level $k-1$.
From this, by adding the left hand sides of Lines \eqref{liberal6bis}, \eqref{liberal7bis} and \eqref{liberal8bis}, we deduce the following identity: %that for every $2\leq k<i$ and for every $\alpha\in T_{-k}^*$, $x\in T_{-1}$ and $y\in T_{-k}$, we have 
\begin{equation}
\big\langle \alpha,\Xi_k(x,y)\big\rangle=0
\end{equation}
Since it holds for every $2\leq k<i$ and for every $\alpha\in T_{-k}^*$, $x\in T_{-1}$ and $y\in T_{-k}$, it implies that the quantity $\Xi_k(x,y)$ is identically zero. Then, using Equations \eqref{representationshifted}, \eqref{bracket2} and \eqref{diff0} in Equation \eqref{chourave2}, it implies that the following identity holds:
\begin{equation}\label{lignebleue2}
\partial_{-k}\big([x,y]\big)=\big[\partial_0(x),y\big]-\big[x,\partial_{-k+1}(y)\big]
\end{equation}
for every $x\in T_{-1}$ and $y\in T_{-k}$, where $2\leq k<i$. We have thus proven that the differential and the bracket are compatible on $T_{-1}\wedge \widetilde{T}$.

Now, let us turn to the case where both $x$ and $y$ are elements of $T_{-1}$. Item 6. of Definition \ref{wooo} induces the following identity:
\begin{equation}\label{nullhomotopy4}
\mu_0=\pi_0\circ\delta_1
\end{equation}
where every map is defined in item 5. In particular, $\mu_0$ takes values in $S^2(U_0)$. Let us check how this identity translates to $T_{-1}\wedge T_{-1}$. We will show that Equation \eqref{nullhomotopy4} implies the vanishing of the following quantity:
\begin{equation}\label{chourave3}
\Omega(x,y)\equiv\partial_{-1}\big([x,y]\big)+\eta_{-1,\Theta(s(x))}(y)+\eta_{-1,\Theta(s(y))}(x)
\end{equation}
where $x,y\in T_{-1}$. We will compute each term on the right hand side one by one, and sum them up afterwards. Let $x,y\in T_{-1}=s^{-1}V$, and $\alpha\in T_{-1}^*=s(V^*)$, then by Equation \eqref{representationshifted}, we have (aussi par equationmu):
%\begin{align}
%\Big\langle\alpha\,,\eta_{-1,\Theta(s(x))}(y)+\eta_{-1,\Theta(s(y))}(x)\Big\rangle_{T_{-1}}&=\Big\langle\alpha\,,s^{-1}\circ\rho^*_{0,\Theta(s(x))}\big(s(y)\big)+s^{-1}\circ\rho^*_{0,\Theta(s(y))}\big(s(x)\big)\Big\rangle_{s^{-1}(U_0^*)}\label{liberal9bis}\\
%&=\Big\langle s^{-1}(\alpha),\rho^*_{0,\Theta(s(x))}\big(s(y)\big)+s^{-1}\circ\rho^*_{0,\Theta(s(y))}\big(s(x)\big)\Big\rangle_{U_0^*}\\
%&=\Big\langle s^{-1}(\alpha),\rho^*_{0,\Theta(s(x))}\big(s(y)\big)+s^{-1}\circ\rho^*_{0,\Theta(s(y))}\big(s(x)\big)\Big\rangle_{V}\\
%&=-\Big\langle \mu_0\big(s^{-1}(\alpha)\big),s(x)\odot s(y)\Big\rangle_{S^2(U_0^*)}\label{liberal9}
%\end{align}
\begin{align}
\Big\langle\alpha\,,\eta_{-1,\Theta(s(x))}(y)+\eta_{-1,\Theta(s(y))}(x)\Big\rangle_{T_{-1}}&=\Big\langle\alpha\,,s^{-1}\circ\rho^\vee_{0,\Theta(s(x))}\big(s(y)\big)+x\leftrightarrow y\Big\rangle_{s^{-1}(U_0^*)}\label{liberal9bis}\\
\footnotesize\text{by Equation \eqref{suspensiondecal}}\hspace{0.95cm}\normalsize&=\Big\langle s^{-1}(\alpha),\rho^\vee_{0,\Theta(s(x))}\big(s(y)\big)+x\leftrightarrow y\Big\rangle_{U_0^*}\\
\footnotesize\text{by definition of $\rho_0$}\hspace{1.055cm}\normalsize&=\Big\langle s^{-1}(\alpha),\eta_{V,\Theta(s(x))}\big(s(y)\big)+x\leftrightarrow y\Big\rangle_{U_0^*}\\
\footnotesize\text{by definition of $\mathcal{V}$}\hspace{1.143cm}\normalsize&=\Big\langle s^{-1}(\alpha),s(x)\bullet s(y)+ s(y)\bullet s(x)\Big\rangle_{V}\\
\footnotesize\text{by definition of $\{\,.\,,.\,\}$}\hspace{0.84cm}\normalsize&=\Big\langle s^{-1}(\alpha),2\,\big\{s(x),s(y)\big\}\Big\rangle_{V}\\
\footnotesize\text{by definition of $\mu_0$}\hspace{1.027cm}\normalsize&=-2\,\Big\langle \mu_0\big(s^{-1}(\alpha)\big),s(x)\odot s(y)\Big\rangle_{S^2(U_0^*)}\label{liberal9}
\end{align}
On the other hand, we have by Equations \eqref{liberal8bis}--\eqref{liberal8}, for $k=1$:
\begin{equation}
\Big\langle\alpha\,,\partial_{-1}\big([x,y]\big)\Big\rangle_{T_{-1}}=2\,\Big\langle \pi_0\circ\delta_1\big(s^{-1}(\alpha)\big),s(x)\odot s(y)\Big\rangle_{S^2(U_0^*)}\label{liberal10}
\end{equation}
This result could also have been obtained by using the explicit definitions of the bracket and of the differential given in Equations \eqref{bracketdebase} and \eqref{differentielledebase}, and their relationship to $\pi_0$ and $\delta_1$.

Summing Line  \eqref{liberal9} and the right hand side of \eqref{liberal10}, and dividing by 2, one obtain the following quantity:
\begin{equation}
\Big\langle  \big(\pi_0\circ\delta_1-\mu_{0}\big)\big(s^{-1}(\alpha)\big),s(x)\odot s(y)\Big\rangle_{S^2(U_{0}^*)}
\end{equation}
that vanishes by using Equation \eqref{nullhomotopy4}.
From this, since Line \eqref{liberal9} is equal to the left hand side of Line \eqref{liberal9bis}, then by summing the left hand sides of Line \eqref{liberal9bis}, and of Equation \eqref{liberal10}, we deduce the following identity: %that for every $2\leq k<i$ and for every $\alpha\in T_{-k}^*$, $x\in T_{-1}$ and $y\in T_{-k}$, we have 
\begin{equation}
\big\langle \alpha\,,\Omega(x,y)\big\rangle_{T_{-1}}=0
\end{equation}
Since it holds for every $\alpha\in T_{-1}^*$ and every $x,y\in T_{-1}$, it implies that the quantity $\Omega(x,y)$ is identically zero. Then, using Equations \eqref{representationshifted}, \eqref{bracket2} and \eqref{diff0} in Equation \eqref{chourave3}, it implies that the following identity holds:
\begin{equation}\label{lignebleue3}
\partial_{-1}\big([x,y]\big)=\big[\partial_0(x),y\big]-\big[x,\partial_{0}(y)\big]
\end{equation}
for every $x,y\in T_{-1}$. We have thus proven that the differential and the bracket are compatible on $T_{-1}\wedge T_{-1}$.

Let us now turn to the last case, i.e. the compatibility of the bracket and the differential on $\mathfrak{h}\wedge T'$, since on $\mathfrak{h}\wedge \mathfrak{h}$ it is trivial. We have to show the following identity:
\begin{equation}\label{lignebleue4}
\partial_{-k+1}\big([a,x]\big)=\big[a,\partial_{-k+1}(x)\big]
\end{equation}
for every $a\in\mathfrak{h}$ and $x\in T_{-k}=s^{-1}(U_{k-1}^*)$, where $1\leq k<i$. By Equation \eqref{bracket2}, this is equivalent to showing that $\partial_{-k+1}$ is $\mathfrak{h}$-equivariant. Let us first assume that $k\geq2$ and let $a\in \mathfrak{h}$, $x\in T_{-k}$ and $\alpha\in T_{-k+1}^*=s(U_{k-2})$.
 Then, we have:
\begin{align}
\Big\langle\alpha\,,\partial_{-k+1}\big(\eta_{-k,a}(x)\big)\Big\rangle_{T_{-k+1}}&=\Big\langle s(\alpha)\,,s^{-1}\circ\partial_{-k+1}\big(\eta_{-k,a}(x)\big)\Big\rangle_{s^{-1}T_{-k+1}}\label{varphi3bis}\\
\footnotesize\text{by Equation \eqref{party}}\hspace{0.649cm}\normalsize&=\iota_{s^{-1}(\partial_{-k+1}(\eta_{-k,a}(x)))}\big(s(\alpha)\big)\\
\footnotesize\text{by Equation \eqref{bracket}}\hspace{0.649cm}\normalsize&=-\Big[\delta'_{k-1},\iota_{s^{-1}(\eta_{-k,a}(x))}\Big]\big(s(\alpha)\big)\\
\footnotesize\text{by Equation \eqref{party}}\hspace{0.649cm}\normalsize&=(-1)^k\Big\langle\delta'_{k-1}\big(s(\alpha)\big)\,,s^{-1}\big(\eta_{-k,a}(x)\big)\Big\rangle_{s^{-1}T_{-k}}\\
\footnotesize\text{by Eq. \eqref{bonjouuurd} and \eqref{representationshifted}}\hspace{0.3cm}\normalsize&=(-1)^k\Big\langle s^2\delta_{k-1}\big(s^{-1}(\alpha)\big)\,,s^{-2}\circ\rho^\vee_{k-1,a}\big(s(x)\big)\Big\rangle_{s^{-2}(U_{k-1})^*}\\
\footnotesize\text{by Equation \eqref{suspensiondecal}}\hspace{0.649cm}\normalsize&=(-1)^k\Big\langle \delta_{k-1}\big(s^{-1}(\alpha)\big)\,,\rho^\vee_{k-1,a}\big(s(x)\big)\Big\rangle_{U_{k-1}^*}\\
\footnotesize\text{by definition of $\rho^\vee_{k-1}$}\hspace{0.6cm}\normalsize&=-(-1)^k\Big\langle \rho_{k-1,a}\circ\delta_{k-1}\big(s^{-1}(\alpha)\big)\,,s(x)\Big\rangle_{U_{k-1}^*}\\
\footnotesize\text{by $\mathfrak{h}$-equivariance of $\delta$}\hspace{0.55cm}\normalsize&=-(-1)^k\Big\langle \delta_{k-1}\circ\rho_{k-2,a}\big(s^{-1}(\alpha)\big)\,,s(x)\Big\rangle_{U_{k-1}^*}\\
\footnotesize\text{by Equation \eqref{representationshifted}}\hspace{0.649cm}\normalsize&=-(-1)^k\Big\langle \delta_{k-1}\circ s^{-2}\circ s\big(\eta^\vee_{-k+1,a}(\alpha)\big)\,,s(x)\Big\rangle_{U_{k-1}^*}
\end{align}
\begin{align}
\footnotesize\text{by Equation \eqref{bonjouuurd}}\hspace{0.649cm}\normalsize&=-(-1)^k\Big\langle \delta'_{k-1}\big(s\circ\eta^\vee_{-k+1,a}(\alpha)\big)\,,s^{-1}(x)\Big\rangle_{s^{-2}(U_{k-1}^*)}\\
\footnotesize\text{by Eq. \eqref{party} and \eqref{bracket}}\hspace{0.3cm}\normalsize %&=\Big[\delta'_{k-1},\iota_{s^{-1}(x)}\Big]\big(s\circ\eta^*_{-k+1,a}(\alpha)\big)\\
&=-\iota_{s^{-1}(\partial_{-k+1}((x))}\big(s\circ\eta^\vee_{-k+1,a}(\alpha)\big)\\
\footnotesize\text{by Equation \eqref{party}}\hspace{0.649cm}\normalsize&=-\Big\langle s\big(\eta^\vee_{-k+1,a}(\alpha)\big),s^{-1}\circ\partial_{-k+1}(x)\Big\rangle_{s^{-1}T_{-k+1}}\\
\footnotesize\text{by Equation \eqref{suspensiondecal}}\hspace{0.649cm}\normalsize&=-\Big\langle \eta^\vee_{-k+1,a}(\alpha),\partial_{-k+1}(x)\Big\rangle_{T_{-k+1}}\\
\footnotesize\text{by definition of $\eta^\vee_{-k+1}$}\hspace{0.55cm}\normalsize&=\Big\langle \alpha\,,\eta_{-k+1,a}\big(\partial_{-k+1}(x)\big)\Big\rangle_{T_{-k+1}}\label{varphi3}
\end{align}
Hence we conclude that $\partial_{-k+1}$ is $\mathfrak{h}$-equivariant. This result holds for every $k\geq2$. In the case where $k=1$, we know by Equation \eqref{diff0} that $\partial_{0}=-\Theta\circ s$. But this map is obviously $\mathfrak{h}$-equivariant, because $\Theta$ is by construction. This discussion hence proves that Equation \eqref{lignebleue4} is satisfied for every $1\leq k<i$. Thus, from Equation \eqref{lignebleue1}, \eqref{lignebleue2}, \eqref{lignebleue3} and \eqref{lignebleue4}, we deduce that  the bracket $[\,.\,,.\,]$ and the differential $\partial$ are compatible on the whole of $T$, hence concluding the desired part of the proof of Theorem \ref{prop:tensorhierarchy}.

%
%
%\phantomsection
%
%\addcontentsline{toc}{section}{References}
\bibliographystyle{naturemag}
\bibliography{mabibliographie}

\end{document}